\begin{document}

\bibliographystyle{apsrev4-2}

\title{A Unified Graph-Theoretic Framework for Free-Fermion Solvability}

\author{Adrian Chapman}
\email{adrian.chapman@materials.ox.ac.uk}
\affiliation{Department of Materials, University of Oxford, Parks Road, Oxford OX1 3PH, United Kingdom}

\author{Samuel J. Elman}
\email{samuel.elman@uts.edu.au}
\affiliation{Centre for Quantum Software and Information, School of Computer Science, Faculty of Engineering \& Information Technology, University of Technology Sydney, NSW 2007, Australia}
\affiliation{Centre for Engineered Quantum Systems, School of Physics, The University of Sydney, Sydney, Australia}

\author{Ryan L. Mann}
\email{mail@ryanmann.org}
\homepage{http://www.ryanmann.org}
\affiliation{Centre for Quantum Software and Information, School of Computer Science, Faculty of Engineering \& Information Technology, University of Technology Sydney, NSW 2007, Australia}
\affiliation{Centre for Quantum Computation and Communication Technology}
\affiliation{School of Mathematics, University of Bristol, Bristol, BS8 1UG, United Kingdom}

\begin{abstract}
    We show that a quantum spin system has an exact description by non-interacting fermions if its frustration graph is claw-free and contains a simplicial clique. The frustration graph of a spin model captures the pairwise anticommutation relations between Pauli terms of its Hamiltonian in a given basis. This result captures a vast family of known free-fermion solutions. In previous work, it was shown that a free-fermion solution exists if the frustration graph is either a line graph, or (even-hole, claw)-free. The former case generalizes the celebrated Jordan-Wigner transformation and includes the exact solution to the Kitaev honeycomb model. The latter case generalizes a non-local solution to the four-fermion model given by Fendley. Our characterization unifies these two approaches, extending generalized Jordan-Wigner solutions to the non-local setting and generalizing the four-fermion solution to models of arbitrary spatial dimension. Our key technical insight is the identification of a class of cycle symmetries for all models with claw-free frustration graphs. We prove that these symmetries commute, and this allows us to apply Fendley's solution method to each symmetric subspace independently. Finally, we give a physical description of the fermion modes in terms of operators generated by repeated commutation with the Hamiltonian. This connects our framework to the developing body of work on operator Krylov subspaces. Our results deepen the connection between many-body physics and the mathematical theory of claw-free graphs.
\end{abstract}

\maketitle

{
\hypersetup{linkcolor=black}
\tableofcontents
}

\section{Introduction}
\label{section:introduction}

The Jordan-Wigner transformation represents a fascinating insight into the physics of quantum many-body spin systems. It identifies collective spin degrees of freedom with those of fermions, resulting in a fermionic model with corresponding properties to the spin system of interest~\cite{jordan1928uber}. It is perhaps best-known for its application to models where the effective fermions are non-interacting, allowing for an exact solution to these otherwise non-trivial systems~\cite{lieb1961two}. Since its discovery, the Jordan-Wigner transformation has been generalized to an entire family of exact free-fermion solutions~\cite{fradkin1989jordan, wang1991ground, huerta1993bose, batista2001generalized, bravyi2002fermionic, verstraete2005mapping, nussinov2012arbitrary, chen2018exact, chen2019bosonization, backens2019jordan, tantivasadakarn2020jordan}, yielding new understanding for a wide class of spin models.

Free fermions have a rich connection to combinatorics and quantum information. Quantum circuits describing the time evolution of free-fermion systems under the Jordan-Wigner transform are the focus of fermionic linear optics, where they are also known as matchgate circuits. These circuits were initially proposed by Valiant as an instance of a \emph{holographic algorithm}~\cite{valiant2008holographic}, inspired by the Fisher-Kastelyn-Temperley algorithm~\cite{temperley1961dimer, kasteleyn1963dimer, kasteleyn1967graph} for counting weighted perfect matchings in a graph. They illustrate the deep connection between fermions and combinatorial structures. While matchgate circuits can be efficiently simulated classically in a fixed basis, simple changes to this setting make them classically intractable or even universal for quantum computation~\cite{knill2001fermionic, terhal2002classical, bravyi2006universal, brod2011extending, hebenstreit2019all}. They are thus a useful setting for understanding the transition from classical to quantum computational power. Furthermore, efficient classical algorithms are often reflected in the exact solvability of quantum models, as with Valiant's original proposal for matchgates.

The application of combinatorial tools for describing effective fermions has found renewed interest in quantum chemistry~\cite{babbush2016exponentially, mcardle2020quantum, wang2021resource}, where efficient fermion-to-qubit mappings are necessary for simulating interacting fermions on a quantum computer~\cite{bravyi2002fermionic, setia2019superfast, ball2005fermions, bravyi2017tapering, steudtner2018fermion, jiang2019majorana, havlivcek2017operator, jiang2020optimal, chiew2021optimal, derby2021acompact, derby2021bcompact, chen2022equivalence}. These mappings can, in some sense, be considered the reverse problem of finding a free-fermion solution to a spin model.

Concretely, the Jordan-Wigner transformation and its generalizations map many-qubit Pauli observables directly to fermionic operators: the \emph{Majorana modes}. These mappings are generator-to-generator, as they identify a direct correspondence between Hamiltonian terms in the spin system and terms in its dual fermion model. They are also \emph{generic} in that the solution method applies for all values of the Hamiltonian couplings.  In Ref.~\cite{chapman2020characterization}, a connection was shown between the solvability of a system by this method and its \emph{frustration graph}. This is the graph whose vertices correspond to terms in the spin Hamiltonian, written in a given Pauli basis, and are neighboring if the associated Pauli operators anticommute. It was shown in Ref.~\cite{chapman2020characterization} that a generator-to-generator free-fermion solution is possible if the frustration graph is a line graph. This property corresponds to the absence of certain \emph{forbidden induced subgraphs} of the Hamiltonian frustration graph: anticommutation structures among subsets of Hamiltonian terms that obstruct a free-fermion solution. Additionally, solutions captured by the line-graph characterization generally include a set of Hamiltonian symmetries associated to induced cycles---or holes---of the frustration graph.

\begin{figure}[ht!]
    \centering
    \begin{tikzpicture}[scale=1.8]
        \draw[thick, blue] (2,1.82) ellipse (2cm and 0.9cm);
        \node (a) at (2.95,2) {(b)};
        \node (a) at (3.00,1.8) {(even-hole, claw)-free};
        \node (a) at (2.95,1.6) {\emph{\small e.g. four-fermion}};
        \draw[thick, red] (1.02,2.4) ellipse (1.2cm and 1.25cm);
        \node (a) at (1.1,1.80) {$\begin{array}{c}\text{Line graphs of} \\ \text{even-cycle-free} \\ \text{graphs}\end{array}$};
        \node (a) at (1.01,3.4) {(a)};
        \node (a) at (1.01,3.2) {Line graphs};
        \node (a) at (.95,3.0) {\emph{\small e.g. \emph{XY} chain,}};
        \node (a) at (.95,2.8) {\emph{\small  Kitaev honeycomb}};
        \draw[thick, purple] (2,2.4) ellipse (2.2cm and 1.65cm);
        \node (a) at (2.95,3.4) {(c)};
        \node (a) at (2.95,3.2) {\textbf{Simplicial, claw-free}};
        \node (a) at (2.95,3.0) {\small \textbf{(this work)}};
        \draw[thick] (2,2.75) ellipse (2.3cm and 2.15cm);
        \node (a) at (2,4.4) {Free-fermion solvable models};
    \end{tikzpicture}
    \caption{Summary of this work in relation to earlier results. 
    (a) A generalized Jordan-Wigner solution exists if the frustration graph of the given spin model is a line graph \cite{chapman2020characterization}. (b) A solution of the type given in Ref.~\cite{fendley2019free} holds when the graph is (even-hole, claw)-free \cite{elman2021free}. Though these two graph classes intersect, neither class contains the other. (c) We unify these methods to show that a free-fermion structure exists for simplicial, claw-free frustration graphs. We expect that there are more free-fermion-solvable models beyond this characterization, such as models with non-generic solutions.}
    \label{figure:solvablitycharacterization}
\end{figure}
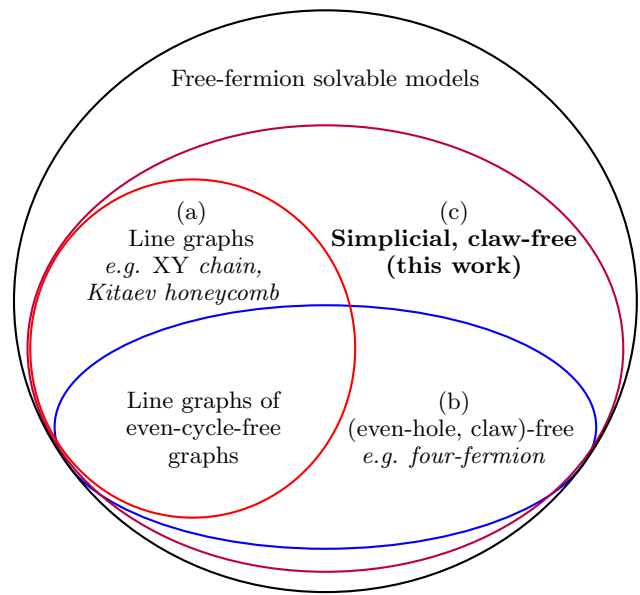

More recently, a free-fermion solvable model outside of the generalized Jordan-Wigner framework, called the four-fermion model, was given in a remarkable result by Fendley~\cite{fendley2019free}. Here, the fermions correspond to non-linear polynomials in the Pauli terms of the spin Hamiltonian, rather than individual terms. This solution holistically maps the spin Hamiltonian onto the free-fermion Hamiltonian and is generic, despite apparently transcending the generator-to-generator structure. Surprisingly, a solution of this form is also revealed by the absence of certain forbidden induced subgraphs of the Hamiltonian frustration graph~\cite{elman2021free}. These forbidden subgraphs include the claw ($K_{1, 3}$) as well as all holes of even length. As the claw is also a forbidden subgraph for line graphs, the family of line graphs and that of (even-hole, claw)-free graphs share some overlap, but also each include graphs not present in the other, as shown in Fig.~\ref{figure:solvablitycharacterization}. In a generalized Jordan-Wigner solution, even holes correspond to the aforementioned Pauli symmetries. This suggests that a set of generalized cycle symmetries exists to unify these two methods under one framework.

In this work, we give a graph-theoretic characterization of this framework. Our main result is summarized in Fig.~\ref{figure:solvablitycharacterization}. We show that if the frustration graph is claw-free and contains a structure called a simplicial clique, then it admits an exact free-fermion solution. The existence of this structure can be efficiently determined in claw-free graphs via the algorithm of Ref~\cite{chudnovsky2012growing}. We refer to this set of graphs as \emph{simplicial, claw-free} (SCF). Both graph classes of Refs.~\cite{chapman2020characterization, elman2021free} have this property \footnote{This was shown recently for (even-hole, claw)-free graphs in Ref.~\cite{chudnovsky2021note} in a stronger result motivated by the problem of Ref.~\cite{elman2021free}}. It is an interesting feature of our characterization that free-fermion solutions are generalized much in the same way as the graphs that describe them. Importantly, our result removes the even-hole-free assumption of Ref.~\cite{elman2021free}, and so extends the non-local solution method given by Fendley to systems of arbitrary spatial dimension. We are able to relax this assumption precisely by identifying a class of cycle-like symmetries, which generalize the cycle symmetries of Jordan-Wigner-type solutions. This identification can be seen as our main technical insight.

Our paper is structured as follows. In the remainder of the introduction, we summarize our main results and apply them to a small example application. In Section~\ref{section:FrustrationGraphs}, we give some background on frustration graphs and standardize our graph-theoretic notation. In Section~\ref{section:FreeFermionModels}, we review free-fermion models and motivate the use of graph theory to find free-fermion solutions. In Section~\ref{section:ClawFreeGraphs}, we refine our focus to the discussion of more technical topics surrounding claw-free graphs. We prove our main results in the following sections. We prove Theorem~\ref{theorem:conservedcharges} in Section~\ref{section:ConservedQuantities}. We show how this extends previous proof techniques to allow us to prove Theorem~\ref{theorem:freefermionsolution} in Section~\ref{section:ExactSolutions}. In Section~\ref{section:KrylovSubspaces}, we prove Theorem~\ref{theorem:polynomialdivisibility} by using a complementary set of tools to give an operational picture for the effective fermion modes in terms of operator Krylov subspaces. Finally, we give a numerical example of an explicit two-dimensional model whose free-fermion solution lies outside the generator-to-generator formalism in Section~\ref{section:ApplicationResults}. We conclude with a discussion of open questions in Section~\ref{section:Discussion}.

\subsection{Summary of Results}
\label{subsection:SummaryResults}

We consider many-body spin systems on $n$ qubits with Hamiltonians written in the Pauli basis
\begin{equation}
    \label{equation:paulihamiltonian}
    H = \sum_{\bsj\in V} b_{\bsj}\sigma^{\bsj} = \sum_{\bsj \in V} h_{\bsj}, 
\end{equation}
where $V\subseteq\{I,x,y,z\}^{\times n}$ is a set of strings labeling the $n$-qubit Pauli operators in the natural way, and $h_{\bsj}=b_{\bsj}\sigma^{\bsj}$ with $b_{\bsj}\in\mathbb{R}{\setminus}\{0\}$.

The \emph{frustration graph} $G=(V,E)$ of $H$ is the graph with vertices given by the non-zero Pauli terms in $H$, neighboring if the corresponding Pauli terms anticommute. Our main result extends the class of free-fermion solvable spin Hamiltonians $H$ based on their frustration graphs $G$.

\begin{result}[Theorem~\ref{theorem:conservedcharges} and Theorem~\ref{theorem:freefermionsolution}]
    \label{result:freefermionsolution}
    Let $H$ be a Hamiltonian whose frustration graph $G$ is connected, claw-free, and contains a simplicial clique. There exist commuting symmetries $\{\jkg{C_0}{G}\}_{\avg{C_0}}$, defined in terms of even holes in $G$, such that each symmetric subspace, labeled by $\mcj$, with projector $\Pi_{\mcj}$, admits a free-fermion solution,
    \begin{align}
        H = \sum_{\mcj}\left(\sum_{j=1}^{\alpha(G)}\varepsilon_{\mcj, j}[\psi_{\mcj,j},\psi_{\mcj, j}^{\dagger}]\right)\Pi_{\mcj}, \label{equation:firstresult}
    \end{align}
    where $\alpha(G)$ denotes the independence number of $G$. The fermionic ladder operators $\{\psi_{\mcj,j}\}_{\mcj,j}$ are constructed from another set of symmetries $\{\qkg{k}{G}\}_{k=0}^{\alpha(G)}$, defined in terms of independent sets of $G$. The symmetries $\{\qkg{k}{G}\}_{k=0}^{\alpha(G)}$ commute with each other and with the symmetries $\{\jkg{C_0}{G}\}_{\avg{C_0}}$. The single-particle energies $\{\varepsilon_{\mcj,j}\}_{\mcj,j}$ can be calculated from the roots of a generalized characteristic polynomial $Z_{G,\mcj}(-u^2)$ over each symmetric subspace specified by the projector $\Pi_{\mcj}$.
\end{result}

When the frustration graph $G$ is not connected, we have an independent solution for each connected component of $G$. The precise definitions of the fermionic ladder operators, single-particle energies, and the generalized characteristic polynomial are given in Section~\ref{section:FreeFermionModels}. Theorem~\ref{theorem:conservedcharges} shows that the symmetry operators $\{\qkg{k}{G}\}_{k=0}^{\alpha(G)}$ and $\{\jkg{C_0}{G}\}_{\avg{C_0}}$ are commuting. We use this theorem to apply the solution method of Refs.~\cite{fendley2019free, elman2021free} independently to each symmetric subspace, thus proving Eq.~(\ref{equation:firstresult}) as Theorem~\ref{theorem:freefermionsolution}. When there are no even holes, there is only a single such subspace, and we recover the result proven in Ref.~\cite{elman2021free}.

While our result gives an exact, explicit means to solve the Hamiltonian in Eq.~(\ref{equation:paulihamiltonian}), we would also like to extract a physical picture for the fermion modes from the solution. We address this with our second main result.

\begin{result}[Theorem~\ref{theorem:polynomialdivisibility} and Corollary~\ref{corollary:modifiedanticommutationrelation}]
    \label{result:inducedpath}
    Given a Hamiltonian $H$ with a connected, simplicial, claw-free frustration graph $G$. Let $\chi=\sigma^{\js}$ be a Pauli operator such that $\js$ is not in $V$, and $\chi$ anticommutes only with operators corresponding to the vertices in a simplicial clique of $G$. The operator $\chi$ commutes with each of the generalized cycle symmetries $\{\jkg{C_0}{G}\}_{\avg{C_0}}$. For each symmetric space $\mcj$, as defined in Result~\ref{result:freefermionsolution}, there exists a real matrix $\mathbf{A}_{G,\mcj}$, whose elements are indexed by induced paths in a graph $\gs$, such that 
    \begin{equation}
        \Pi_{\mcj} \adh{\chi}{k} = (-2i)^k\sum_{P} \left(\mathbf{A}_{G,\mcj}^k\right)_{\{\js\},P}h_P,
        \label{equation:krylovmatrix}
    \end{equation}
    where $\adh{\chi}{}=[iH,\chi]$. The matrix $\mathbf{A}_{G,\mcj}$ is the weighted adjacency matrix of a directed bipartite graph, with weights specified by the subspace $\mcj$. The operators $\{\adh{\chi}{k}\}_k$ satisfy
    \begin{align}
        \Pi_{\mcj} \{\adh{\chi}{j}, \adh{\chi}{k}\}
        &= 2(-2i)^{j+k}\left(\mathbf{A}_{G, \mcj}^{j+k}\right)_{\{\js\},\{\js\}}\Pi_{\mcj} \notag \\
        &= 2\left(\mathbf{M}_{G,\mcj}\right)_{jk}\Pi_{\mcj},
        \label{equation:modifiedcommutationrelation}
    \end{align}
    where the real matrix $\mathbf{M}_{G,\mcj}$ is positive definite for all $\mcj$.
\end{result}

Result~\ref{result:inducedpath} essentially says that we can define effective Majorana fermion modes by acting on $\chi$ by repeated commutation with $H$. Over each symmetric subspace $\mcj$, the matrix $\mathbf{A}_{G,\mcj}$ is a weighted unpacking of the frustration graph $G$, and we utilize the fact that the operators $\{\adh{\chi}{k}\}_k$ become linearly dependent for $k$ larger than a certain rank to refold them into effective Majorana modes by diagonalizing the matrix $\mathbf{M}_{G,\mcj}$. In this picture, the matrix $\mathbf{A}_{G,\mcj}$ acts as an effective single-particle Hamiltonian for these Majorana fermions. Before we prove these results, we first consider an example application.

\subsection{Example Application}
\label{subsection:ExampleApplication}

We consider a model defined on four qubits by the following Hamiltonian
\begin{equation}
    \begin{split}
        H &= \sigma_1^x+\sigma_1^z+\sigma_1^x\sigma_2^x+\sigma_1^z\sigma_2^x\sigma_3^x+\sigma_1^y\sigma_2^z+\sigma_1^z\sigma_3^z \\
        &\quad+ \sigma_1^z\sigma_2^x\sigma_3^y\sigma_4^x+\sigma_1^y\sigma_2^y\sigma_3^y\sigma_4^z,   
    \end{split}
    \label{equation:exampleapplicationhamiltonian}
\end{equation}
where, for succinctness, we have set $b_{\bsj}=1$ for all ${\bsj \in V}$. We denote a given Hamiltonian term by $h_j$, for $j\in\{1,2,\dots,8\}$, based on the order in which the term appears as written in Eq.~(\ref{equation:exampleapplicationhamiltonian}).

\begin{figure}[ht!]
    \centering
    \includegraphics{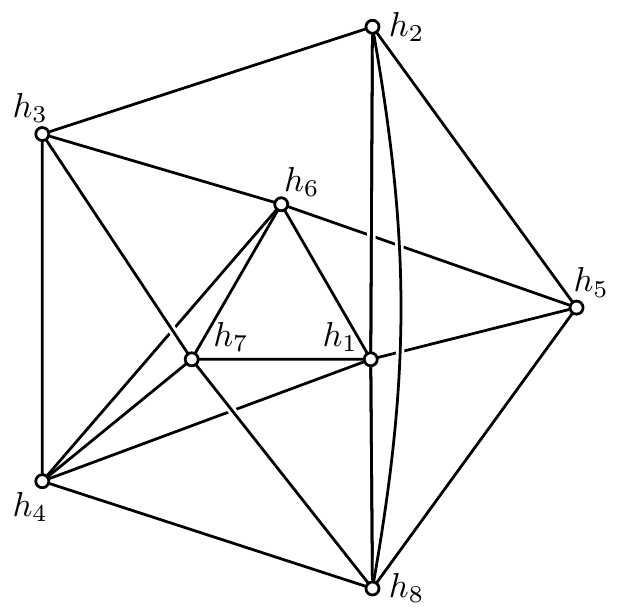}
    \caption{The frustration graph of the model given in Eq.~(\ref{equation:exampleapplicationhamiltonian}). The graph is not a line graph, but it is claw-free and contains a simplicial clique. The model therefore admits a free-fermion solution.}
    \label{figure:exampleapplicationfrustrationgraph}
\end{figure}

The frustration graph of this model, shown in Fig.~\ref{figure:exampleapplicationfrustrationgraph}, is claw-free and contains the simplicial cliques 
\begin{align}
    K_{s, 1} &= \{1,2,5,8\}, \notag \\
    K_{s, 2} &= \{2,3\}, \notag \\
    K_{s, 3} &= \{3,4,6,7\}. \notag
\end{align}
We define the following products of operators over the even holes in the frustration graph
\begin{align}
    h_{C_1} &= h_1h_3h_2h_4 = \sigma_3^x, \notag \\
    h_{C_2} &= h_1h_3h_2h_6 = \sigma_2^x\sigma_3^z, \notag \\
    h_{C_3} &= h_1h_3h_2h_7 = \sigma_3^y\sigma_4^x, \notag \\
    h_{C_4} &= h_8h_3h_2h_4 = -\sigma_1^z\sigma_2^y\sigma_3^z\sigma_4^z, \notag \\
    h_{C_5} &= h_5h_3h_2h_6 = \sigma_1^z\sigma_2^y\sigma_3^z, \notag \\
    h_{C_6} &= h_8h_3h_2h_7 = \sigma_1^z\sigma_2^y\sigma_4^y, \notag \\
    h_{C_7} &= h_8h_6h_4h_5 = \sigma_4^z, \notag \\
    h_{C_8} &= h_8h_6h_7h_5 = -\sigma_3^z\sigma_4^y, \notag
\end{align}
where the operator ordering for each even hole is such that operators are grouped into the coloring classes of the hole. We collect these operators into sums to give the generalized cycle symmetries
\begin{align}
    J_1 = \jkg{C_1}{G} &= \sum_{k=1}^6h_{C_k}, \notag \\
    J_2 = \jkg{C_7}{G} &= \sum_{k=7}^8h_{C_k}. \notag
\end{align}

It can be straightforwardly verified that the operators $J_1$ and $J_2$ commute with the Hamiltonian, even though the individual terms in $\{h_{C_k}\}_{k=1}^8$ do not. Notably, the cycle symmetry $J_1$ does not square to an operator proportional to the identity. Thus, it is not proportional to a Pauli operator in any basis, in contrast to the setting where the frustration graph of the model is a line graph. Rather, we have
\begin{align}
    J_1^2 = 6I-2J_2, &\quad \mcj_1 = \pm\sqrt{6-2\mcj_2}, \notag \\
    J_2^2 = 2I, &\quad \mcj_2 = \pm\sqrt{2}, \notag
\end{align}
and $J_1$ and $J_2$ commute, as we expect from Result~\ref{result:freefermionsolution}. We restrict to their mutual eigenspace by choosing a sign configuration in the expressions for $\mcj_1$ and $\mcj_2$. We now apply Result~\ref{result:inducedpath}. First, we choose
\begin{equation}
    \chi = \sigma_1^y\sigma_2^y\sigma_3^x\sigma_4^z, \notag
\end{equation}
as the Pauli operator corresponding to vertex $\js$. This operator anticommutes only with the Hamiltonian terms in the simplicial clique $K_{s,1}$. Next, we consider products of Hamiltonian terms along induced paths of the frustration graph with one endpoint at $\js$, and we let $K^{(\ell)}$ be the sum of all such products over paths with length $\ell$, taking the convention that $K^{(0)}=\chi$. Acting on $\chi$ with repeated commutators by the Hamiltonian gives 
\begin{align}
    \adh{\chi}{0} &= K^{(0)}, \notag \\
    \adh{\chi}{1} &= -2iK^{(1)}, \notag \\
    \adh{\chi}{2} &= (-2i)^2\left(4K^{(0)}+K^{(2)}\right), \notag \\
    \adh{\chi}{3} &= (-2i)^3\Big[\left(4K^{(1)}+K^{(3)}\right) \notag \\
    &\quad+ \chi\left(3h_1+h_2+h_5+2h_8\right)\Big], \notag \\
    \adh{\chi}{4} &= (-2i)^4\left\{\left[23-2(J_1+J_2)\right]K^{(0)}+8K^{(2)}\right\}. \notag
\end{align}
The fourth nested commutator is a linear combination of the previous ones over each mutual eigenspace of generalized cycle symmetries in the set $\{J_1,J_2\}$,
\begin{equation}
    \adh{\chi}{4} = -16\left(9+2J_1+2J_2\right)\left(\adh{\chi}{0}\right)-32\left(\adh{\chi}{2}\right). \notag
\end{equation}
As a linear map on the cyclic subspace generated by $\chi$, $\operatorname{ad}_{iH}$ thus satisfies the following characteristic polynomial over the subspace labeled by $\mcj=(\mcj_1,\mcj_2)$,
\begin{equation}
    f_{\mcj}(u) = u^4+32u^2+16(9+2\mcj_1+2\mcj_2). \notag
\end{equation}
This coincides with the generalized characteristic polynomial (Lemma~\ref{lemma:generalizedcharacteristicpolynomial}),
\begin{equation}
    Z_{G,\mcj}(x) = 1+8x+9x^2+2(\mcj_1+\mcj_2)x^2, \notag
\end{equation}
as
\begin{align}
    f_{\mcj}(u) = u^4Z_{G,\mcj}\left[(2/u)^2\right], \notag
\end{align}
so the polynomial $Z_{G,\mcj}(x)$ is equivalent to $f_{\mcj}$ up to a polynomial transformation. As will be defined, the single-particle energies are the reciprocals of the positive roots of $Z_{G,\mcj}(-u^2)$. These are the positive roots of $f_{\mcj}(2iu)$ and are related to the eigenvalues of $\operatorname{ad}_{iH}$ as a linear map on the aforementioned vector space of operators. We have
\begin{equation}
    \varepsilon_{\mcj,\pm} = \sqrt{4\pm\sqrt{7-2(\mcj_1+\mcj_2)}}. \notag
\end{equation}
This gives the spectrum of the model as
\begin{equation}
    \mce_{\mcj,\mathbf{x}} = (-1)^{x_+}\varepsilon_{\mcj,+}+(-1)^{x_-}\varepsilon_{\mcj,-}, \notag
\end{equation}
for $\mathbf{x}=(x_+,x_-)\in\{0,1\}^{\times 2}$. An explicit description of the physical Majorana modes can be found by taking the linear combinations of $\{\adh{\chi}{k}\}_{k=0}^3$ which diagonalize the matrix $\mathbf{M}_{G,\mcj}$, as defined in Eq.~(\ref{equation:modifiedcommutationrelation}), and normalizing. We see that any such linear combinations are preserved under commutation with $H$ by their definition.

Formally, we have found the Krylov subspace of $\operatorname{ad}_{iH}$ generated by $\chi$ on a particular symmetric subspace, so our treatment is completely general in that sense. Operationally, free fermions only enter into the description through Eq.~(\ref{equation:modifiedcommutationrelation}), which can be viewed as a generalized canonical anticommutation relation. We see that, even without this relation, the presence of symmetries is necessary for restricting the rank of the Krylov subspace, so Krylov subspace methods may provide a route to applying graph theory to more general models. We now describe the background of our framework in detail and prove our main results. 

\section{Frustration Graphs}
\label{section:FrustrationGraphs}

In this section we explain frustration graphs and standardize our graph-theoretic notation. A graph $G=(V,E)$ is a set $V$ of vertices, together with a set $E \subset V^{\times 2}$ of edges. Two vertices $\bsj,\bsk \in V$ are said to be \emph{neighboring} if there is an edge $\{\bsj,\bsk\} \in E$. Two edges $\{\bsj,\bsk\},\{\bsu,\bsv\} \in E$ are said to be \emph{incident} if they share a vertex, i.e., $\abs{\{\bsj,\bsk\}\cap\{\bsu, \bsv\}}=1$. A vertex $\bsj \in V$ and edge $\{\bsu,\bsv\} \in E$ are similarly incident if $\bsj \in \{\bsu,\bsv\}$. The \emph{order} of a graph is the cardinality $\abs{V}$ of its vertex set. The \emph{size} of a graph is the cardinality $\abs{E}$ of its edge set.

Since Pauli operators either commute or anticommute, it is convenient to describe the relations between terms in a spin Hamiltonian by a graph.
\begin{definition}[Frustration graph]
    \label{definition:frustrationgraph}
    The frustration graph of the Hamiltonian $H$ in Eq.~(\ref{equation:paulihamiltonian}) is the graph $G=(V,E)$ with 
    \begin{equation}
        E = \{\{\bsj,\bsk\} \mid h_{\bsj}h_{\bsk} = -h_{\bsk}h_{\bsj}\}.
    \end{equation}
\end{definition}

The frustration graph is always simple. There are no self loops because every Hamiltonian term commutes with itself, and there is at most one edge between each pair of terms. As stated previously, terms in $H$ from distinct connected components of $G$ will commute, so we have an independent solution for each such component. For this work, we additionally assume that all models have finitely many terms, so the frustration graphs we consider are finite. Without loss of generality, we assume that distinct vertices in the frustration graph correspond to distinct Pauli terms in $H$. (We can always collect repeated Pauli terms by adding their coefficients $b_{\bsj}$.) We shall often refer to terms in the Hamiltonian interchangeably with their vertices in the frustration graph. The commutation relation between Pauli terms is clearly unchanged by including the coefficients in their definitions, so we prefer to give statements in terms of the $h_{\bsj}$ rather than the $\sigma^{\bsj}$, with the understanding that $h_{\bsj}^2=b_{\bsj}^2I$. The frustration graph thus naturally captures properties of the spin model that do not depend on the coefficients. We refer to such properties as \emph{generic}.

We next consider subsets of Hamiltonian terms and the associated induced subgraphs of the frustration graph. To this end, we define our labeling scheme in Eq.~(\ref{equation:paulihamiltonian}) more precisely. We can equivalently describe a Pauli string $\bsj\in\{I,x,y,z\}^{\times n}$ by a binary string on $2n$ bits by associating each single-qubit Pauli label to a 2-bit string as $I\mapsto(0,0)$, $x\mapsto(1,0)$, $y\mapsto(1,1)$, and $z\mapsto(0,1)$. Let $\bsj=(\bsj_x,\bsj_z)\in\{0,1\}^{\times 2n}$ be the binary vector such that the $k$th component $j_{x,k}$ of $\bsj_{x}$ is the first bit of the $k$th qubit label according to this association, and, similarly, $j_{z,k}$ is the second bit of the $k$th qubit label. This gives
\begin{equation}
    \sigma^{\bsj} = i^{\bsj_x\cdot\bsj_z} \left[\bigotimes_{k=1}^n \left(\sigma_k^x\right)^{j_{x,k}}\right]\left[\bigotimes_{k=1}^n \left(\sigma_k^z\right)^{j_{z,k}}\right],
\end{equation}
where $\bsj_x\cdot\bsj_z = \sum_{k=1}^nj_{x,k}j_{z,k}$ denotes the Euclidean inner product.

The scalar commutator between Pauli terms is defined implicitly via
\begin{equation}
    \sigma^{\bsj} \sigma^{\bsk} = \scomm{\sigma^{\bsj}}{\sigma^{\bsk}} \sigma^{\bsk}\sigma^{\bsj}.
\end{equation}
Since Pauli operators either commute or anticommute, we have $\scomm{\sigma^{\bsj}}{\sigma^{\bsk}}=\pm1$. The sign factor is given by
\begin{equation}
    \scomm{\sigma^{\bsj}}{\sigma^{\bsk}} = (-1)^{\langle\bsj,\bsk\rangle},
\end{equation}
where $\langle\bsj,\bsk\rangle=\sum_{m=1}^n(j_{x, m}k_{z, m}+j_{z,m}k_{x,m}) \pmod 2$ is the binary symplectic inner product. The scalar commutator distributes over multiplication as
\begin{equation}
    \scomm{A}{BC} = \scomm{A}{B}\scomm{A}{C}, \label{equation:commutatordistribution}
\end{equation}
and, accordingly, the binary symplectic inner product is linear in $\bsj$ and $\bsk$. The scalar commutator and binary symplectic inner product are similarly well defined for the operators $\{h_{\bsj}\}_{\bsj \in V}$ as well.

For a subset $U \subseteq V$, the \emph{induced subgraph} $G[U]$ is the subgraph of $G$ whose vertex set is $U$ and whose edge set $E[U] = E \cap U^{\times 2}$ consists of all edges in $G$ which have both endpoints in $U$. We shall often refer to the vertex subset interchangeably with the subgraph it induces. Similarly, we shall use set-theoretic notation to denote the exclusion of vertices, e.g., $G{\setminus}U = G[V{\setminus}U]$.

An important family of vertex subsets is given by the neighborhoods of vertices in the graph.
\begin{definition}[Open and closed neighborhood]
    \label{definition:neighborhood}
    The \emph{open neighborhood} of $\bsj \in V$ is the set given by
    \begin{equation}
        \Gamma(\bsj) = \{\bsk \mid \{\bsj,\bsk\} \in E\},
    \end{equation}
    and the \emph{closed neighborhood} of $\bsj$ is the set given by 
    \begin{equation}
        \Gamma[\bsj] = \Gamma(\bsj)\cup\{\bsj\}.
    \end{equation}
    The \emph{degree} $\Delta(\bsj)=\abs{\Gamma(\bsj)}$ of $\bsj$ is the order of its open neighborhood.
\end{definition}
We often refer to the open or closed neighborhood of a vertex $\bsj$ in a subset of vertices $U \subseteq V$ by $\Gamma_U(\bsj) = \Gamma(\bsj) \cap U$. Similarly, $\Gamma_U[\bsj]=\Gamma_U(\bsj)\cup\{\bsj\}$. Note that we do not necessarily assume $\bsj \in U$ for this definition. Accordingly, we refer to the degree in a subset by $\Delta_U(\bsj)=\abs{\Gamma_U(\bsj)}$. We shall also refer to the closed neighborhood of a subset $U \subseteq V$ by $\Gamma[U] = \bigcup_{\bsj \in U}\Gamma[\bsj]$, and similarly for the open neighborhood when there is no ambiguity.

One useful feature of the binary linear structure of commutation relations between Hamiltonian terms is that it allows us to talk about commutation relations between products over subsets of terms. We can thus extract such commutation relations from the frustration graph. In general, we let $h_U=\prod_{\bsj \in U}h_{\bsj}$ denote the product of all operators whose vertices in $G$ are members of a particular subset $U \subseteq V$. Since reordering the operators in this product contributes an overall sign factor to $h_U$, the operator ordering is irrelevant to the commutation relations involving $h_U$ and other products of Hamiltonian terms. We define the operator ordering for specific families of vertex subsets on a case-by-case basis. With these definitions, we have
\begin{equation}
    \scomm{h_{\bsj}}{h_U} = (-1)^{\Delta_U(\bsj)}, \label{equation:degreephase}
\end{equation}
for any $\bsj \in V$. That is, $\langle\bsj,\sum_{\bsk \in U}\bsk\rangle = \Delta_U(\bsj) \pmod 2$ as we expect. We denote the symmetric difference between vertex subsets $U,W \subseteq V$ by $U{\oplus}W=(U{\setminus}W) \cup (W{\setminus}U)$. Applying the constraint that $h_{\bsj}^2 \propto I$ gives
\begin{equation}
    \scomm{h_{\bsj}}{h_{U{\oplus}W}} = \scomm{h_{\bsj}}{h_{U}} \scomm{h_{\bsj}}{h_{W}}, \label{equation:symmetricdifference}
\end{equation}
so that the commutation relation between $h_{\bsj}$ and $h_U$ is only changed by taking the symmetric difference with $W$ if $h_{\bsj}$ and $h_W$ anticommute.

We proceed to define the relevant graph-theoretic structures for this work. An \emph{independent set} $S$ in $G$ is a subset of vertices with no edges between them. For the corresponding operator $h_S$ the ordering of the factors in the product is irrelevant, as these factors commute with one another. The independence number $\alpha(G)$ is the order of the largest independent set in $G$. We denote $\mcs_{G}$ as the collection of all independent sets in $G$ and let $\mcs^{(k)}_{G}$ denote the collection of all independent sets of order $k$ from $G$. A \emph{matching} $M$ in $G$ is a subset of edges such that no two edges in $M$ are incident. A \emph{perfect matching} is a matching such that every vertex in $V$ is incident to exactly one edge in the matching. Clearly, a graph can only have a perfect matching if its order is even. We denote $\mcm_G$ as the collection of all matchings in $G$ and let $\mcm^{(k)}_{G}$ denote the collection of all matchings of $k$ edges from $G$.

The \emph{claw} is the graph consisting of a central vertex neighboring to every vertex in an independent set of order three (see Table~\ref{table:simplicialclaw}). That is, it is the complete bipartite graph $K_{1, 3}$. The vertices in the three-vertex independent set are called the \emph{leaves} of the claw. A graph is \emph{claw-free} if it does not include the claw as an induced subgraph. When we list a subset of vertices that induces a claw, we shall generally order the list by the central vertex followed by the leaves.

\begin{table}[ht!]
    \centering
    \begin{tabular}{cc}
        \hline\hline
        \emph{Forbidden} & \emph{Includes} \\
        Claw $K_{1,3}$ & Simplicial Clique \\
        \hline
        \includegraphics[width=0.215\textwidth]{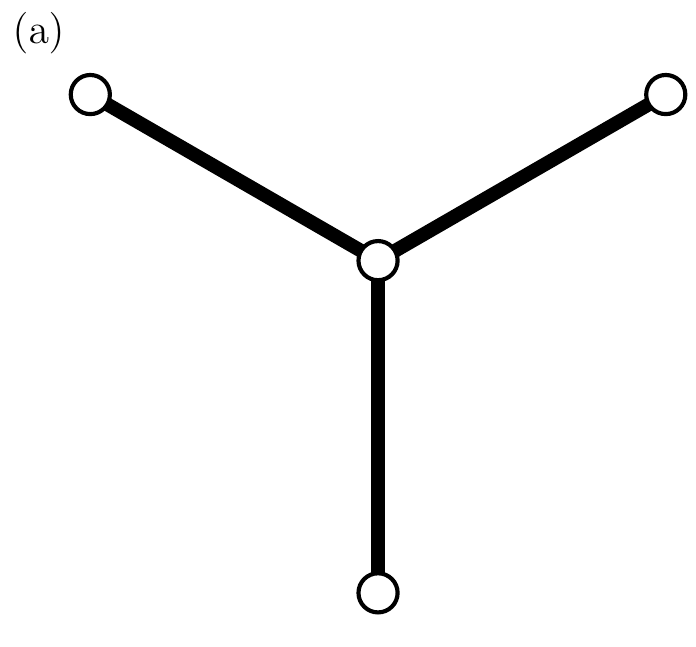} & \includegraphics[width=0.225\textwidth]{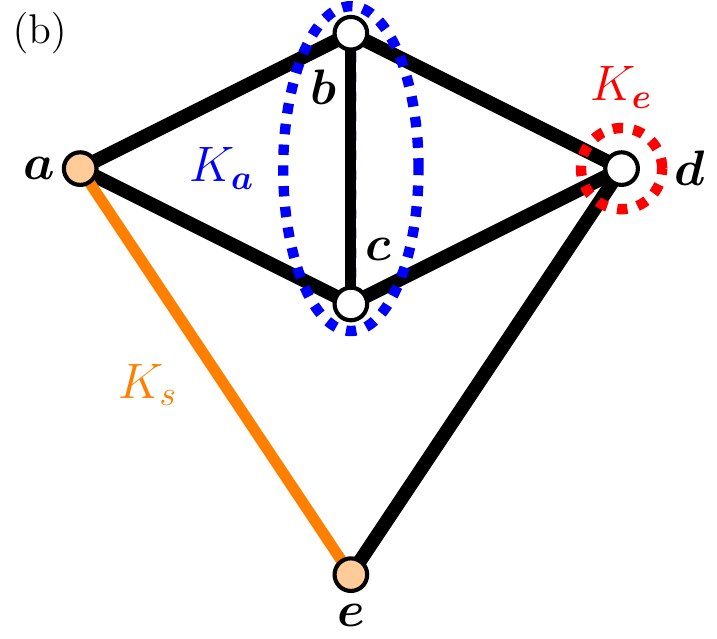}
        \\
        \hline\hline
    \end{tabular}
    \caption{(a) A graph is claw-free if no subset of its vertices induces the claw $K_{1,3}$. (b) A simplicial clique $K_s$ (orange) is a clique such that the neighborhood of each vertex in $K_s$ induces a clique in the graph $G{\setminus}K_s$. The depicted simplicial clique consists of the vertices $\{\bsa,\bse\}$. The neighborhood of the vertex $\bsa$ in $G{\setminus}K_s$ is the clique $K_{\bsa}=\{\bsb,\bsc\}$. Similarly, the neighborhood of $\bse$ in $G{\setminus}K_s$ is the single vertex $K_{\bse}=\bsd$. Crucially, our result captures frustration graphs containing even holes.}
    \label{table:simplicialclaw}
\end{table}

A \emph{path} is a set of distinct vertices $P=\{\bsj_i\}_{i=0}^{\ell}$ with $\bsj_i$ neighboring to $\bsj_{i+1}$ for $i \in \{0,\dots,\ell-1\}$. The vertices $\{\bsj_0,\bsj_{\ell}\}$ are the \emph{endpoints} of the path, and the quantity $\ell$ is the \emph{length} of the path. When $P \subseteq V$ is a subset of vertices such that $E[P]=\{\{\bsj_i,\bsj_{i+1}\}\}_{i=0}^{\ell-1}$, we say $P$ is an \emph{induced path}. That is, there are no edges in $E[P]$ other than those between vertices with consecutive indices in the path, of which there are $\ell$. We refer to the index $i$ as the \emph{distance} from $\bsj_i$ to $\bsj_0$ along $P$ in this case, and we give the vertex labeling for $P$ as
\begin{equation}
    P = \bsj_0\md\bsj_1\md\dots\md\bsj_{\ell}.
    \label{equation:pathlabel}
\end{equation}
For the corresponding operator $h_P$, we similarly order the factors from left to right by their distance from an endpoint in accordance with the labeling. We denote the set of all induced paths in $G$ by $\mcp_{G}$ and the set of all induced paths in $G$ of length $\ell$ by $\mcp_{G}^{(\ell)}$. We also define $P_k$ as the subpath of $P$ induced by the vertices up to $\bsj_k$, i.e., $P_k=\{\bsj_i\}^k_{i=0}$.

A \emph{cycle} is a set of distinct vertices $C=\{\bsj_i\}_{i=0}^{\ell-1}$ with $\bsj_i$ neighboring to $\bsj_{ i+1}$ for $i\in\{0,\dots\ell-1\}$ with index addition taken modulo $\ell$. The quantity $\ell$ is the length of the cycle. When $C \subseteq V$ is a subset of vertices such that $E[C]=\{\{\bsj_i,\bsj_{i+1\pmod\ell}\}\}_{i=0}^{\ell-1}$ for $\ell \geq 4$, we say $C$ is an \emph{induced cycle} or \emph{hole}. In this case, $\ell=\abs{C}$ is the number of vertices and edges of the hole.

An \emph{even hole} is a hole with an even number of vertices and edges. If $C$ is an even hole with length $\ell=2k$, then there are two unique independent sets of size $k>1$ in $C$, which are the coloring classes of $C$. Let these coloring classes be $C_a$ and $C_b$. We give the vertices of $C$ in each coloring class distinct labels, as
\begin{equation}
    C = \bsb_0\md\bsa_0\md\bsb_1\md\bsa_1\md\dots\md\bsb_{k-1}\md\bsa_{k-1}\md\bsb_{0}, \label{equation:evenholelabel}
\end{equation}
where $C_a=\{\bsa_j\}_{j=0}^{k-1}$ and $C_b=\{\bsb_j\}_{j=0}^{k-1}$. Here, we take addition in the indices of these vertices modulo $k$. We choose the factor ordering for the operator corresponding to $C$ as $h_C=h_{C_a}h_{C_b}$, where the ordering within each of the coloring-class factors is again irrelevant as these are independent sets. Furthermore, we are free to exchange the ordering of $h_{C_a}$ and $h_{C_b}$, as these operators commute (we occasionally group factors in $h_P$ this way as well). We denote the set of all even holes in $G$ by $\mcc^{(\text{even})}_{G}$ and the set of all even holes in $G$ of length $\ell$ by $\mcc_{G}^{(\ell)}$. We say two even holes $C,C' \in \mcc^{\text{(even)}}_{G}$ are compatible if and only if $G[C \cup C']$ is a disconnected graph whose components are $G[C]$ and $G[C']$. We let $\mathscr{C}^{\text{(even)}}_{G}$ denote the collection of all subsets of $\mcc^{\text{(even)}}_{G}$ that are pairwise compatible. For a given such subset $\mcx \in \mathscr{C}^{\text{(even)}}_{G}$, i.e., $\mcx \subseteq \mcc^{\text{(even)}}_{G}$, we let
\begin{align}
    \partial \mcx = \bigcup_{C\in\mcx}C
\end{align}
denote the set of vertices in $\mcx$. Let $\abs{\mcx}$ denote the number of even-hole components in $\mcx$, and $\abs{\partial\mcx}$ be the total length of all the elements of $\mcx$.

A \emph{clique} $K$ in $G$ is a subset of vertices such that every pair of vertices in $K$ is neighboring. A simplicial clique $K_s$ is a clique such that, for every vertex $\bsj \in K_s$, $\Gamma(\bsj)$ induces a clique in $G{\setminus}K_s$ (see Table~\ref{table:simplicialclaw}). A graph is \emph{simplicial} if it contains a simplicial clique. We say that a Hamiltonian is simplicial, claw-free (SCF) if its frustration graph is claw-free and contains a simplicial clique. Together, the aforementioned graph structures play an important role in the free-fermion solvability of a Hamiltonian with frustration graph $G$.

\section{Free-Fermion Models}
\label{section:FreeFermionModels}

\subsection{Exact Solutions}
\label{subsection:ExactSolutions}

A free-fermion Hamiltonian has the form
\begin{align}
    H_{\text{f}} &= i \sum_{(j,k) \in E_{\text{f}}}h_{jk}\gamma_{j}\gamma_{k} \label{equation:freefermiondefinition} \\
    &= \frac{i}{2}\bs{\gamma}^{\text{T}}\cdot\mathbf{h}\cdot\bs{\gamma}, \label{equation:freefermionmatrixdefinition}
\end{align}
where we collect the set of \emph{Majorana fermion modes} in the vector $\bs{\gamma}=(\gamma_j)$ and the Hamiltonian coefficients in the \emph{single-particle Hamiltonian} $\mathbf{h}=(h_{ij})$. We denote the set of Majorana indices by $V_{\text{f}}$ and the set of pairs $\{j,k\}$ of distinct elements of $V_{\text{f}}$ for which $h_{jk}$ is non-zero as $E_{\text{f}}$. The graph $R=(V_{\text{f}},E_{\text{f}})$ is the \emph{Majorana hopping graph}, and $\mathbf{h}$ is a weighted skew-adjacency matrix for $R$.

The Majorana modes satisfy the canonical anticommutation relations
\begin{equation}
    \{\gamma_j,\gamma_k\} = \gamma_j\gamma_k+\gamma_k\gamma_j = 2\delta_{jk}I. \label{equation:anticommutationrelation}
\end{equation}
Products of Majorana operators only commute or anticommute and square to $\pm I$. Without loss of generality, we take $\mathbf{h}$ to be antisymmetric, since any symmetric part of $\mathbf{h}$ will only contribute a physically irrelevant identity term to $H_{\text{f}}$ by Eq.~(\ref{equation:anticommutationrelation}).

The canonical anticommutation relations imply that linear combinations of the Majorana modes are preserved under commutation with the Hamiltonian.
\begin{equation}
    \operatorname{ad}_{iH_{\text{f}}}\gamma_j = [iH_{\text{f}},\gamma_j] = 2\left(\mathbf{h}\cdot\bs{\gamma}\right)_{j}, \label{equation:fermiongammacommutation}
\end{equation}
This gives 
\begin{equation}
    e^{iH_{\text{f}}t}\gamma_je^{-iH_{\text{f}}t} = \left(e^{2\mathbf{h} t}\cdot\bs{\gamma}\right)_j,
\end{equation}
and $e^{2\mathbf{h}t}$ is called the \emph{single-particle transition matrix}. Since $\mathbf{h}$ is antisymmetric, $e^{2\mathbf{h}t}$ is an orthogonal matrix in the group $SO(\abs{V_{\text{f}}})$. Thus, conjugation by free-fermion unitary evolution preserves the canonical anticommutation relations.

Similarly, we can find an orthogonal matrix $e^{\mathbf{w}}$ that block-diagonalizes $\mathbf{h}$ as
\begin{equation}
    e^{-\mathbf{w}} \cdot \mathbf{h} \cdot e^{\mathbf{w}} = \bigoplus_{j=1}^{\abs{V_\text{f}}/2}
    \begin{pmatrix}
        0 & -\varepsilon_j \\
        \varepsilon_j & 0
    \end{pmatrix},
    \label{equation:blockdiagonalization}
\end{equation}
if $\abs{V_{\text{f}}}$ is even. If $\abs{V_{\text{f}}}$ is odd, then the tensor sum runs to $\lfloor\abs{V_{\text{f}}}/2\rfloor$, and $\mathbf{h}$ has an additional zero eigenvalue. Choosing $W = e^{\frac{1}{4}\left(\bs{\gamma}^{\text{T}}\cdot\mathbf{w}\cdot\mathbf{\gamma}\right)}$ gives
\begin{equation}
    W^{\dagger}H_{\text{f}}W = -i\sum_{j=1}^{\lfloor\abs{V_{\text{f}}}/2\rfloor}\varepsilon_j \gamma_{2j-1}\gamma_{2j}. \label{equation:freefermiondiagonal}
\end{equation}
It is convenient to pair the Majorana modes to define the \emph{fermionic eigenmodes} $\{\psi_j\}_{j=1}^{\lfloor\abs{V_{\text{f}}}/2\rfloor}$ as
\begin{equation}
    \psi_j = \frac{1}{2}W\left(\gamma_{2j-1}+i\gamma_{2j}\right)W^{\dagger},
\end{equation}
for $j\in\{1,\dots,\lfloor\abs{V_{\text{f}}}/2\rfloor\}$. These operators satisfy the canonical anticommutation relations for fermionic ladder operators
\begin{equation}
    \{\psi_j,\psi_k\} = 0, \quad \{\psi_j,\psi^{\dagger}_k\} = \delta_{jk}I,
\end{equation}
and are defined such that 
\begin{equation}
    H_{\text{f}} = \sum_{j=1}^{\lfloor\abs{V_\text{f}}/2\rfloor}\varepsilon_j[\psi_j,\psi^{\dagger}_j]. \label{equation:freefermioneigen}
\end{equation}
The linear map $\operatorname{ad}_{H_{\text{f}}}$ satisfies the conventional eigenvector relation with respect to the eigenmodes
\begin{equation}
    [H_{\text{f}},\psi_j] = 2\varepsilon_j\psi_j. \label{equation:eigenrelation}
\end{equation}
From Eq.~(\ref{equation:freefermiondiagonal}), we see that the free-fermion Hamiltonian $H_{\text{f}}$ has spectrum given by
\begin{equation}
    \mce_{\mathbf{x}} = \sum_{j=1}^{\lfloor\abs{V_{\text{f}}}/2\rfloor}(-1)^{x_j}\varepsilon_j,
\end{equation}
for $\mathbf{x}\in\{0,1\}^{\times\lfloor\abs{V_{\text{f}}}/2\rfloor}$. The real quantities $\{\varepsilon_j\}_{j=1}^{\lfloor\abs{V_{\text{f}}}/2\rfloor}$, called the \emph{single-particle energies}, are defined to satisfy $\varepsilon_j\geq0$ for all $j$. We can generate eigenstates of $H_{\text{f}}$ by applying $W$ to the mutual eigenstates of the operators $\{-i\gamma_{2j-1}\gamma_{2j}\}_{j=1}^{\lfloor\abs{V_{\text{f}}}/2\rfloor}$.

From the block-diagonal form of $\mathbf{h}$ in Eq.~(\ref{equation:blockdiagonalization}), we have that the single-particle energies are the reciprocals of the non-negative roots of the characteristic polynomial
\begin{align}
    g_{\mathbf{h}}(u) &= \det(\mathbf{I}-iu\mathbf{h}) \\
    &= \sum_{U \subseteq V_{\text{f}}}(-iu)^{\abs{U}}\det\left(\mathbf{h}_{UU}\right). \label{equation:freefermioncharacteristicpolynomial}
\end{align}
Here, $\mathbf{h}_{UU}$ denotes the principal submatrix of $\mathbf{h}$ with rows and columns indexed by the elements of $U \subseteq V_{\text{f}}$. We shall express this polynomial in terms of the graph structures of $R$. Since $\mathbf{h}_{UU}$ is antisymmetric, its determinant is the square of the Pfaffian, i.e.,
\begin{equation}
    \det\left(\mathbf{h}_{UU}\right) = \text{Pf}\left(\mathbf{h}_{UU}\right)^2.
\end{equation}
The Pfaffian is defined to be
\begin{equation}
    \text{Pf}\left(\mathbf{h}_{UU}\right) = \sum_{M\in\mcm^{(\abs{U}/2)}_{R[U]}}(-1)^{\pi(M)}\prod_{\substack{(j,k) \in M \\ j < k}}h_{jk},
\end{equation}
if $\abs{U}$ is even, and zero if $\abs{U}$ is odd. The sign factor $(-1)^{\pi(M)}$ is defined implicitly as the factor incurred upon sorting the individual Majorana-mode factors in $\prod_{\{(j,k) \in M \mid j < k\}}\gamma_j\gamma_k$ such that indices are ascending from left to right. This gives 
\begin{equation}
    \begin{split}
        g_{\mathbf{h}}(u) &= \sum_{\substack{M,M'\in\mcm_R \\ \abs{M}=\abs{M'}}}(-u^2)^{\abs{M}}(-1)^{\pi(M)+\pi(M')} \\
        &\quad\times \left(\prod_{\substack{(j,k) \in M \\ j < k}}h_{jk}\right) \left(\prod_{\substack{(j,k) \in M' \\ j < k}}h_{jk}\right).
    \end{split}
    \label{equation:characteristicpolynomial}
\end{equation}
We return to this explicit expression in the following sections.

Let us now consider nesting commutators by repeated application of Eq.~(\ref{equation:fermiongammacommutation}),
\begin{equation}
    \operatorname{ad}_{iH_\text{f}}^{k}\gamma_j = 2^k(\mathbf{h}^k\cdot\bs{\gamma})_j.
\end{equation}
Note that we now include a factor of $i$ with the Hamiltonian to ensure that the resulting linear map preserves Hermiticity as in Theorem~\ref{theorem:polynomialdivisibility}. We see that $\operatorname{ad}_{iH_\text{f}}$ as a linear map satisfies the same characteristic polynomial as $2\mathbf{h}$, as we expect from Eq.~(\ref{equation:eigenrelation}). Furthermore, we have
\begin{equation}
    \{\operatorname{ad}_{iH_{\text{f}}}^k\gamma_j,\operatorname{ad}_{iH_{\text{f}}}^{\ell}\gamma_j\} =2(-1)^k2^{k+\ell}\left(\mathbf{h}^{k+\ell}\right)_{jj}I.
\end{equation}
This relation is symmetric in $k$ and $\ell$, as we expect, since the diagonal element of $\mathbf{h}^{k+\ell}$ vanishes if $k$ and $\ell$ have opposite parity due to the antisymmetry of $\mathbf{h}$. Thus, $(-1)^k=(-1)^{\ell}$ if the expression does not vanish. Analogs to these relations are indicative of the existence of a free-fermion solution, as we shall see in Section~\ref{section:KrylovSubspaces}.

\subsection{Generalized Jordan-Wigner Solutions}
\label{subsection:GeneralizedJordanWignerSolutions}

Remarkably, the exact solvability of free-fermion models can be leveraged to find exact solutions to spin models, which are not explicitly given in terms of free fermions. If one can find the proper identification of spin and fermionic degrees of freedom, one can treat the spin model as an effective free-fermion model and apply the exact solution method. One way to do this is to identify each term in a given spin model $H$ with a corresponding term in $H_{\text{f}}$ such that commutation relations between terms are preserved. Such a solution is called a \emph{generator-to-generator mapping}. This family of solutions includes the Jordan-Wigner transformation as well as the exact solution to the Kitaev honeycomb model. For this reason, it is also called a \emph{generalized Jordan-Wigner solution}.

The Jordan-Wigner transformation of a 1D spin system on $n$ qubits defines $2n$ effective Majorana modes via
\begin{align}
    \gamma_{2j-1} &= \sigma^z_1\otimes\dots\otimes\sigma^z_{j-1}\otimes\sigma^{x}_j, \\ 
    \gamma_{2j} &= \sigma^z_1\otimes\dots\otimes\sigma^z_{j-1}\otimes\sigma^{y}_j, 
    \label{equation:jordanwignertransformation}
\end{align}
for $j\in\{1,\dots,n\}$. It is well-known for its application to the 1D XY model~\cite{lieb1961two},
\begin{align}
    H_{\text{XY}} &= \sum_{j=1}^{n-1}\left[(1-\delta)\sigma^y_j\sigma^y_{j+1}+(1+\delta)\sigma^x_j\sigma^x_{j+1}\right] \\
    \mapsto &i\sum_{j=1}^{n-1}\left[(1-\delta)\gamma_{2j-1}\gamma_{2j+2}-(1+\delta)\gamma_{2j}\gamma_{2j+1}\right]. \label{equation:1dxy}
\end{align}
This defines an effective single-particle matrix for $H_{\text{XY}}$ and allows us to solve the model in the fermionic picture according to the procedure of the Section~\ref{subsection:ExactSolutions}.

Graph theory allows us to make this procedure systematic using the frustration graph of $H$. For a given free-fermion Hamiltonian $H_{\text{f}}$, the frustration graph has a particular structure due to the canonical anticommutation relations. It is the graph whose vertices are the edges of the fermion hopping graph $R$ with vertices adjacent in the frustration graph if and only if the corresponding edges of $R$ are incident. That is, the frustration graph of $H_{\text{f}}$ is the \emph{line graph} of $R$.
\begin{definition}[Line graph]
    \label{definition:linegraph}
    Given a graph $R=(V,E)$, the \emph{line graph} $L(R)=(E,F)$ of $R$ is the graph whose vertices correspond to the edges of $R$ and
    \begin{equation}
        F = \{\{e,f\} \mid e,f\in E,\ \abs{e \cap f}=1\}.
    \end{equation}
    $R$ is called the \emph{root graph} of $L(R)$.
\end{definition}
Not every graph can be realized as the line graph of another graph. In particular, note that a line graph is always claw-free, as three edges cannot all be incident to a fourth edge without at least two of those edges being incident to each other. In fact, line graphs are characterized by a complete set of nine forbidden subgraphs which includes the claw~\cite{beineke1970characterizations}. Using the notation of Def.~\ref{definition:linegraph}, a vertex $j$ in $R$ is mapped to the clique $K_j=\{\{j,k\}\}_{k\in\Gamma(j)}$ in $L(R)$ under the line graph mapping. Since every edge has two vertices, it is an equivalent characterization of line graphs that the edges of a line graph $L(R)$ can be partitioned into cliques such that every vertex in $L(R)$ is a member of at most two cliques~\cite{krausz1943demonstration}. Furthermore, every such clique $K_j$ in $L(R)$ is \emph{simplicial} since $\Gamma_{L(R){\setminus}K_j}[e] = K_k$ for $e=\{j,k\} \in E$. Thus, line graphs are simplicial and claw-free. A path of $\ell+1$ vertices in $R$ is mapped to an induced path of length $\ell$ in $L(R)$, and a cycle of $\ell$ vertices in $R$ is mapped to a hole of length $\ell$ in $L(R)$. Finally, a matching of size $k$ in $R$ is mapped to an independent set of order $k$ in $L(R)$.

In Ref.~\cite{chapman2020characterization}, it was shown that an injective generator-to-generator mapping from $H$ to a free-fermion Hamiltonian $H_{\text{f}}$ exists if and only if its frustration graph is a line graph. This allows us to associate each vertex $\bsj \in V$ of the frustration graph $G$ with an edge $\varphi(\bsj)=\{\varphi_1(\bsj),\varphi_2(\bsj)\}$ of the fermion hopping graph $R$. We choose an ordering on the vertices $V_{\text{f}}$, and our convention on $\varphi$ is such that $\varphi_1(\bsj)<\varphi_2(\bsj)$. We then define a free-fermion solution to $H$ via this mapping as
\begin{equation}
    h_{\bsj} \mapsto i\wt{b}_{\bsj}\gamma_{\varphi_1(\bsj)}\gamma_{\varphi_2(\bsj)},
\end{equation}
for all $\bsj \in V$. Here, $\wt{b}_{\bsj}\in\mathbb{R}$ is an as yet unspecified coupling strength for the effective free-fermion Hamiltonian. Since the coupling strengths $\{\wt{b}_{\bsj}\}_{\bsj \in V}$ can be varied without changing the commutation relations between terms, we cannot determine them from frustration graph alone. Instead, the values of these coefficients are determined by constraints on products of the spin Hamiltonian terms. The constraint that $h^2_{\bsj}=b^2_{\bsj}I$ guarantees that we must have $\wt{b}_{\bsj}=\pm b_{\bsj}$ for all $\bsj \in V$. This gives
\begin{align}
    h_{\bsj} \mapsto i(-1)^{\tau(\bsj)}\abs{b_{\bsj}}\gamma_{\varphi_1(\bsj)}\gamma_{\varphi_2(\bsj)}, \mbox{\hspace{5mm}} 
    \label{equation:generalizedjordanwigner}
\end{align}
for all $\bsj \in V$ and where $\tau:V\to\{0,1\}$. We fix this remaining sign freedom with the constraints given by restricting $H$ to a mutual eigenspace of its commuting \emph{cycle symmetries}.

Let $C_{\text{f}}=\{j_i\}_{i=0}^{\ell-1} \subseteq V_{\text{f}}$ be a cycle in $R$, and denote $C=\{\bsj_i\}_{i=0}^{\ell-1} \subseteq V$ as the \emph{hole} in $G$ such that $\varphi(\bsj_i)=\{j_i,j_{i+1}\} \in E_{\text{f}}$, with index addition taken modulo $\ell$. (We take set equality in this assumption on $\varphi(\bsj_i)$, so we do not assume anything about the vertex ordering of $j_i$ and $j_{i+1}$ here.) We have
\begin{equation}
    \prod_{i=0}^{\ell-1}\left(\gamma_{j_i}\gamma_{j_{i+1}}\right) = \prod_{j \in C_{\text{f}}}\gamma_j^2 = I.
\end{equation}
Thus, we have that the corresponding \emph{cycle symmetry} $h_C$ commutes with every term in $H$ by the requirement that the generator-to-generator mapping preserve commutation relations. Furthermore, cycle symmetry operators $h_C$ and $h_{C'}$ mutually commute for distinct holes $C,C' \subset V$ because they themselves are products of terms from $H$. While $h_C$ is not proportional to the identity in general, $h_C^2$ is. Under the free-fermion solution, $h_C$ maps to 
\begin{equation}
    \prod_{\bsj \in C} \left(i(-1)^{\tau(\bsj)}\abs{b_{\bsj}}\gamma_{\varphi_1(\bsj)}\gamma_{\varphi_2(\bsj)}\right) = \pm i^{\abs{C}}\prod_{j \in C}\abs{b_{\bsj}}.
\end{equation}
Thus, choosing a sign assignment $\tau$ is equivalent to fixing a mutual set of symmetry eigenvalues, and we have a unique free-fermion solution for every such restriction. It is explained in Ref.~\cite{chapman2020characterization} that this choice of $\tau$ naturally corresponds to an orientation on $R$, and it was shown that there exists an orientation satisfying every set of independent cycle-symmetry eigenvalues. Beyond this choice, there is no additional freedom in the free-fermion mapping.

Finally, it may be the case that there are additional states in the free-fermion model that do not correspond to physical states for the spin model, and we must restrict to a subspace of the free-fermion model to recover the spin-model solution. These additional states correspond to a fixed eigenspace of the parity operator
\begin{equation}
    \text{P} = i^{\frac{1}{2}\abs{V_{\text{f}}}(\abs{V_{\text{f}}}-1)}\prod_{j \in V_{\text{f}}}\gamma_j,
\end{equation}
which is always a symmetry of $H_{\text{f}}$. When $\abs{V_{\text{f}}}$ is even, this operator can be constructed from terms in $H_{\text{f}}$ using edges from a structure called a \emph{T-join} of $R$. A perfect matching is a special case of a $T$-join, and a graph with even order has a $T$-join even if it does not have a perfect matching. However, the corresponding product of terms from $H$ may give the identity (up to cycle symmetries), so we must only consider a fixed-parity restriction of $H_{\text{f}}$ as a solution to the spin model in this case.

Unlike the conventional Jordan-Wigner transform of Eq.~(\ref{equation:jordanwignertransformation}), the generalized Jordan-Wigner solutions given by Eq.~(\ref{equation:generalizedjordanwigner}) treat Majorana quadratics as fundamental, without explicitly defining the individual Majorana modes. We can in fact recover Eq.~(\ref{equation:jordanwignertransformation}) by considering the line-graph mapping on the 1D XY model in Eq.~(\ref{equation:1dxy}). First, note that the frustration graph of $H_{\text{XY}}$ consists of two disconnected path components corresponding to the collections of terms
\begin{align}
    H_1 &= (1-\delta)\sum_{j=1}^{\lfloor\frac{n}{2}\rfloor}\sigma^y_{2j-1}\sigma^y_{2j}+(1+\delta)\sum_{j=1}^{\lfloor\frac{n-1}{2}\rfloor}\sigma^x_{2j}\sigma^x_{2j+1}, \notag \\
    H_2 &= (1+\delta)\sum_{j=1}^{\lfloor\frac{n}{2}\rfloor}\sigma^x_{2j-1}\sigma^x_{2j}+(1-\delta)\sum_{j=1}^{\lfloor\frac{n-1}{2}\rfloor} \sigma^y_{2j}\sigma^y_{2j+1}. \notag
\end{align}
We shall only recover the Jordan-Wigner transformation for $H_1$ as $H_2$ can be handled similarly. Let $P$ be the path component of $G$ corresponding to the terms in $H_1$. The operator $\sigma_1^y\sigma_2^y$ corresponds to an endpoint of $P$, so it is a simplicial clique in $P$. We thus introduce the simplicial mode $\chi=\sigma^{\js}=\sigma^x_1$ (defined formally in Section~\ref{subsection:DisguisedFreeFermionModes} as Eq.~\ref{definition:simplicialmode}), which gives that $P\cup\{\js\}$ is again a path component of $\gs$. Label vertices in $P\cup\{\js\}$ (and corresponding terms in $H_1$) according to their distance from $\bsj_0=\js$ as in Eq.~(\ref{equation:pathlabel}) with $\ell=n-1$. Using the fact that $P\cup\{\js\}$ is a line graph, we apply the generalized Jordan-Wigner solution as
\begin{equation}
    b_{\bsj_k}^{-1}h_{\bsj_k} \mapsto
    \begin{cases}
        i\gamma_0\gamma_1 & k=0, \\
        i\gamma_{2k-1}\gamma_{2k+2} & k>0,\ \text{odd}, \\
        -i\gamma_{2k}\gamma_{2k+1} & k>0,\ \text{even},
    \end{cases}
\end{equation}
where we take $b_{\bsj_0}=1$. Since there are no holes in $\gs$, we are free to choose any sign configuration $\tau$ for the free-fermion terms, so we choose this and the Majorana-mode labeling to be consistent with the Jordan-Wigner transformation. We then take
\begin{align}
    &(-i)^{(k\bmod2)}(\Pi_{m=0}^kb_{\bsj_m})^{-1}h_{P_k} \notag \\ 
    &\quad=
    \begin{cases}
        \sigma^z_1\otimes\dots\otimes\sigma^z_{k}\otimes\sigma^{y}_{k+1} & k>0,\ \text{odd}, \\ 
        \sigma^z_1\otimes\dots\otimes\sigma^z_{k}\otimes\sigma^{x}_{k+1} & k\geq0,\ \text{even},
    \end{cases} \\
    &\quad\mapsto 
    \begin{cases}
        i\gamma_0 \gamma_{2(k+1)} & k>0,\ \text{odd}, \\
        i\gamma_0 \gamma_{2(k+1)-1} & k\geq0,\ \text{even}.
    \end{cases}
\end{align}
We have chosen phases to be consistent with the Jordan-Wigner transformation such that the resulting operators are Hermitian, and we have canceled interior Majorana factors under the free-fermion mapping. Finally, we note that the operators $\{i\gamma_{0}\gamma_{k}\}$ also satisfy the canonical anticommutation relations
\begin{align}
    \{i\gamma_{0}\gamma_{j},i\gamma_{0}\gamma_{k}\} = 2\delta_{jk}I,
\end{align}
so we are free to ignore the factor of $i\gamma_0$ in our definition of the Majorana modes.
This factor does not appear in any quadratic products $h_{P_j}h_{P_k}$ from which we can generate all products of Hamiltonian terms in $H_1$. We can similarly recover the remaining definitions for the Majorana modes from $H_2$.

From this example, we see that the Jordan-Wigner transform can be defined operationally in terms of the specific model under consideration. Moreover, we can interpret individual fermion modes as being localized at the endpoints of path operators in the frustration graph. When holes are present in $G$, the interiors of these paths are defined up to symmetries of the model. An analogous structure holds in the general case.

\subsection{Disguised Free-Fermion Modes}
\label{subsection:DisguisedFreeFermionModes}

When $G$ is not a line graph, there is no direct mapping from terms in $H$ to terms in \emph{any} free-fermion Hamiltonian $H_\text{f}$ of the form in Eq.~(\ref{equation:freefermiondefinition}). However, it may still be possible to find a free-fermion solution to $H$. The essential idea is to treat independent sets, cliques, induced paths, and holes algebraically as though they result from a line graph mapping, even though there is no associated root graph. Our foray into this problem is the machinery of transfer matrices. We define the following set of operators, related to independent sets of the frustration graph $G$.

\begin{definition}[Independent set charges]
    \label{definition:independentsetcharges}
    The independent set charges $\{\qkg{k}{G}\}_{k=0}^{\alpha(G)}$ are defined as the sum over independent sets of order $k$ in $G$ by
    \begin{equation}
        \qkg{k}{G} \coloneqq \sum_{S\in\mcs^{(k)}_G}h_S. \label{equation:independentsetcharges}
    \end{equation}
    Note that $\qkg{0}{G}=I$ and $\qkg{1}{G}=H$.
\end{definition}

It was shown in Ref.~\cite{elman2021free} that the independent set charges commute with each other when $G$ is claw-free. Since $\qkg{1}{G}=H$, the independent set charges are conserved as well.

The independent set charges satisfy a recursion relation due to the fact that no independent set can have more than one vertex in a clique of $G$. For a given clique $K \subseteq V$, we partition terms in $\qkg{k}{G}$ according to independent sets with no vertices in $K$ and those with exactly one vertex in $K$. This gives
\begin{equation}
    \qkg{k}{G} = \qkg{k}{G{\setminus}K}+\sum_{\bsj \in K}h_{\bsj}\qkg{k-1}{G{\setminus}\Gamma[\bsj]}. \label{equation:chargecliquerecursion}
\end{equation}
We next define the transfer operator.
\begin{definition}[Transfer operator~\cite{fendley2019free}]
    \label{definition:transferoperator}
    Let $H$ be a Hamiltonian with frustration graph $G$. The \emph{transfer operator} $T_{G}(u)$ is defined as the generating function of the independent set charges
    \begin{align}
        T_{G}(u) &\coloneqq \sum_{S\in\mcs_{G}}(-u)^{\abs{S}}h_S \\
        &= \sum_{k=0}^{\alpha(G)}(-u)^k\qkg{k}{G},
    \end{align}
    where $u\in \mathbb{R}$ is the \emph{spectral parameter}.
\end{definition}
The transfer operator can be viewed as the operator-valued analogue for the independence polynomial of a graph.
\begin{definition}[Weighted independence polynomial]
    \label{definition:independencepolynomial}
    \begin{equation}
        I_{G}(x) \coloneqq \sum_{S\in\mcs_{G}}x^{\abs{S}}\prod_{\bsj \in S}b_{\bsj}^2.
    \end{equation}
\end{definition}
Both the transfer operator and the independence polynomial satisfy similar recursion relations to Eq.~(\ref{equation:chargecliquerecursion}), i.e.,
\begin{align}
    T_G(u) &= T_{G{\setminus}K}(u)-u\sum_{\bsj \in K}h_{\bsj}T_{G{\setminus}\Gamma[\bsj]}(u), \label{equation:transfermatrixrecursion} \\
    I_G(x) &= I_{G{\setminus}K}(x)+x\sum_{\bsj \in K}b_{\bsj}^2I_{G{\setminus}\Gamma[\bsj]}(x). \label{equation:independencepolynomialrecursion}
\end{align}
A special case of these recursion relations is that for which $K$ consists of a single vertex.

The transfer operator bears an interesting relation to the characteristic polynomial for the free-fermion model.
\begin{definition}[Generalized characteristic polynomial]
    \label{definition:generalizedcharacteristicpolynomial}
    Let $H$ be an SCF Hamiltonian with frustration graph $G$. The \emph{generalized characteristic polynomial} $Z_{G}(-u^2)$ of $G$ is defined as
    \begin{equation}
        Z_{G}(-u^2) \coloneqq T_G(u)T_G(-u).
    \end{equation}
\end{definition}
Strictly speaking, $Z_{G}(-u^2)$ is an operator-valued polynomial, but reduces to an ordinary polynomial when restricted to a symmetric subspace as we explain. In Appendix~\ref{section:GeneralizedCharacteristicPolynomialProof}, we show that when $G$ is claw-free, we have
\begin{align}
    Z_{G}(-u^2) &= \!\!\!\!\sum_{\substack{S,S'\in\mcs_{G} \\ S{\oplus}S'=\partial\mcx \\ \mcx\in\mathscr{C}^{\text{(even)}}_G}}(-u^2)^{\abs{S}}h_Sh_{S'}
    \label{equation:setsymmetricrepresentation} \\
    &= \!\!\!\!\sum_{\substack{S,S'\in\mcs_{G} \\ S{\oplus}S'=\partial\mcx \\ \mcx\in\mathscr{C}^{\text{(even)}}_G}}(-u^2)^{\abs{S}}\left(\prod_{\bsj \in S \cap S'} b_{\bsj}^2\right)\prod_{C\in\mcx}h_C. \label{equation:cyclesymmetries}
\end{align}
The reason for calling $Z_{G}(-u^2)$ the generalized characteristic polynomial becomes clear when we consider the case where $G$ is a line graph $L(R)$. Suppose this is the case. Since the $\{h_C\}_{C\in\mcc_{G}^{(\text{even})}}$ are symmetries of the Hamiltonian, we can restrict to a mutual eigenspace of these symmetries through the orientation $\tau$ of $R$ implicit in the choice of sign configuration for $\mathbf{h}$. Denote the restriction of $Z_{G}(-u^2)$ to this subspace by 
\begin{equation}
    Z_{G,\tau}(-u^2) \coloneqq Z_{G}(-u^2)\Pi_{\tau},
\end{equation}
where $\Pi_{\tau}$ is the projector onto the subspace. Furthermore, denote $M\in\mcm_{R}^{(\abs{S})}$ as the matching of $R$ corresponding to the independent set $S$ in $L(R)$ and, similarly, denote $M'\in\mcm_{R}^{(\abs{S})}$ as the matching corresponding to $S'$. Under the free-fermion solution, the operator $h_Sh_{S'}$ maps to
\begin{equation}
    \begin{split}
        &\left[\prod_{\substack{(j,k) \in M \\ j<k}}(ih_{jk}\gamma_{j}\gamma_{k})\right] \left[\prod_{\substack{(j,k) \in M' \\ j<k}}(ih_{jk}\gamma_{j}\gamma_{k})\right] \\
        &= (-1)^{\pi(M)+\pi(M')}\left(\prod_{\substack{(j,k) \in M \\ j<k}}h_{jk}\right) \left(\prod_{\substack{(j,k) \in M' \\ j<k}}h_{jk}\right),
    \end{split}
    \label{equation:matchingproduct}
\end{equation}
since every Majorana mode is either included zero times or twice in this product by the constraint on $S{\oplus}S'$. The phase factor is calculated by sorting the Majorana modes in each matching individually, giving a factor of $(-1)^{\pi(M)+\pi(M')}$. Squaring the sorted operator gives a factor of $i^{2\abs{M}}(-1)^{\frac{1}{2}(2\abs{M})(2\abs{M}-1)}=1$. Comparing Eq.~(\ref{equation:setsymmetricrepresentation}) and Eq.~(\ref{equation:matchingproduct}) to Eq.~(\ref{equation:characteristicpolynomial}) gives
\begin{equation}
    Z_{L(R), \tau}(-u^2) = g_{\mathbf{h}}(u).
\end{equation}
In the case where $G$ is a general SCF graph, Eq.~(\ref{equation:setsymmetricrepresentation}) and Eq.~(\ref{equation:cyclesymmetries}) still hold, but the $h_C$ are not generally symmetries of the Hamiltonian. To recover a set of cycle-like symmetries, we sum the $h_C$ over subsets, denoted $\avg{C_0}$, of even holes with the same neighborhood, and the solution follows analogously. We denote these generalized cycle symmetries as $\jkg{C_0}{G}$, and we formally define them in Section~\ref{section:ClawFreeGraphs}. We label the mutual eigenspace of these symmetries by $\mcj$.

To finally give the full solution, we describe the fermionic eigenmodes. For each mutual eigenspace $\mcj$ of the generalized cycle symmetries we have a separate set of fermionic eigenmodes given in terms of the transfer operator and a \emph{simplicial mode}.
\begin{definition}[Simplicial mode]
    \label{definition:simplicialmode}
    Let $H$ be an SCF Hamiltonian with frustration graph $G$. A \emph{simplicial mode} with respect to a simplicial clique $K_s$ is a Pauli operator $\chi=\sigma^{\js}$ such that $\js$ is not in $V$, and $\chi$ satisfies 
    \begin{equation}
        \langle\js,\bsk\rangle = \delta_{\bsk \in K_s}.
    \end{equation}
    That is, $\chi$ only anticommutes with terms in $H$ whose vertices in $G$ are in the simplicial clique $K_s$.
\end{definition}
The eigenmodes are then given by the \emph{incognito modes}.
\begin{definition}[Incognito modes]
    With the preceding definitions, define the incognito modes on the $\mcj$-eigenspace by
    \begin{equation}
        \psi_{\mcj,j} \coloneqq N_{\mcj,j}^{-1}\Pi_{\mcj}T(-u_{\mcj,j}) \chi T(u_{\mcj,j}),
    \end{equation}
    where $u_{\mcj,j}$ satisfies $Z_{G,\mcj}(-u_{\mcj,j}^2)=0$.
\end{definition}
We prove that the incognito modes are the fermionic eigenmodes as Theorem~\ref{theorem:freefermionsolution}. In Ref.~\cite{elman2021free}, the corresponding result is proven in the special case that $G$ is even-hole-free and claw-free. It was shown in Ref.~\cite{chudnovsky2021note} that these graphs are always simplicial. In this case, there is only one symmetry sector $\mcj$, for which $\Pi_{\mcj}=I$, so the mapping to the free-fermion model is direct in this sense. Additionally, considering Eq.~(\ref{equation:cyclesymmetries}) when there are no even holes in $G$, we see that the only non-vanishing terms in Eq.~(\ref{equation:setsymmetricrepresentation}) are those for which $S=S'$, and the generalized characteristic polynomial coincides with the weighted independence polynomial
\begin{equation}
    Z_{G}(-u^2) = I_G(-u^2).
\end{equation}
It is a well-known result that the independence polynomial of a claw-free graph has real negative roots~\cite{chudnovsky2007roots}, and the generalization to the vertex-weighted case can be seen from Ref.~\cite{leake2019generalizations}.

Before turning to the proofs of our main results, we review some technical properties of claw-free graphs that are important for our purposes and formalize our definitions.

\section{Claw-Free Graphs}
\label{section:ClawFreeGraphs}

Claw-free graphs generalize line graphs in a natural way, and they have a rich characterization~\cite{chudnovsky2005structure}. In this section we describe some structural relations that claw-free graphs satisfy. We also describe some more technical properties satisfied by claw-free graphs containing a simplicial clique.

\subsection{Neighboring Relations}
\label{subsection:NeighboringRelations}

The following is a well-known fact about claw-free graphs, which we set apart by stating as a lemma and briefly prove for completeness.
\begin{lemma}
    \label{lemma:clawfreesymmetricdifference}
    Let $S$ and $S'$ be independent sets in a claw-free graph $G$. The graph $G[S{\oplus}S']$ induced by the symmetric difference of $S$ and $S'$ is a bipartite graph of maximum degree at most two.
\end{lemma}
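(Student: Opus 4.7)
The plan is to exploit the fact that $S \oplus S'$ naturally splits into two pieces, each of which inherits independence from one of $S$ or $S'$, and then rule out high-degree vertices via the claw constraint. Concretely, I would set $A = S \setminus S'$ and $B = S' \setminus S$, so that $S \oplus S' = A \sqcup B$ as a disjoint union. Since $A \subseteq S$ and $B \subseteq S'$, both $A$ and $B$ are independent sets in $G$. Consequently no edge of $G[S \oplus S']$ has both endpoints in $A$, and no edge has both endpoints in $B$, so every edge of $G[S \oplus S']$ crosses between $A$ and $B$. This immediately gives the bipartiteness claim.

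For the maximum-degree bound, I would argue by contradiction. Suppose some vertex $v \in A$ has three distinct neighbors $b_1, b_2, b_3 \in B$ inside $G[S \oplus S']$. Because $\{b_1, b_2, b_3\} \subseteq B \subseteq S'$ and $S'$ is independent, the three vertices $b_1, b_2, b_3$ are pairwise non-adjacent in $G$. Together with $v$, which is adjacent to each $b_i$, the set $\{v, b_1, b_2, b_3\}$ therefore induces a $K_{1,3}$ in $G$, contradicting the claw-freeness hypothesis. The symmetric argument applied to a hypothetical vertex of $B$ with three neighbors in $A$ yields the same contradiction, so every vertex of $G[S \oplus S']$ has degree at most two.

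I do not expect any serious obstacle: the claim is essentially a one-line application of the definition of the claw, with the bipartite structure handed to us for free by the fact that both $S$ and $S'$ are independent. The only care needed is to make sure that when we restrict the degree count to the induced subgraph $G[S \oplus S']$ (rather than $G$), the three neighbors of $v$ really do lie in $B$, which is automatic since $A$ is independent and so cannot contribute neighbors of $v$ in the induced subgraph.
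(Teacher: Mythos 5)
Your proof is correct and follows essentially the same argument as the paper: bipartiteness from the independence of $S{\setminus}S'$ and $S'{\setminus}S$, and the degree bound by observing that a vertex with three neighbors in the induced subgraph would form a claw with them. The paper states this more tersely, but the content is identical.
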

\begin{proof}
    Clearly $G[S{\oplus}S']$ is bipartite with coloring classes $S{\setminus}S'$ and $S'{\setminus}S$, which are both independent sets by definition. If any vertex $\bsj \in S{\oplus}S'$ has degree greater than two in $G[S{\oplus}S']$, then $\bsj$, together with any three of its neighbors in $G[S{\oplus}S']$, induce a claw in $G$.
\end{proof}
Lemma~\ref{lemma:clawfreesymmetricdifference} implies that $G[S{\oplus}S']$ is a union of disjoint isolated vertices, induced paths, and even holes (odd holes are not bipartite).

\begin{table*}[t]
    \centering
    \setcellgapes{3pt}
    \makegapedcells
    \begin{tabular}{cccccccc}
        \hline\hline
        \multicolumn{4}{c}{Commuting} & \multicolumn{4}{c}{Anticommuting} \\
        \cmidrule(lr){1-4} \cmidrule(lr){5-8} \\
        \multicolumn{4}{c}{\makecell{(a.i)\vspace{10mm}} \makecell{\includegraphics[width=0.35\textwidth]{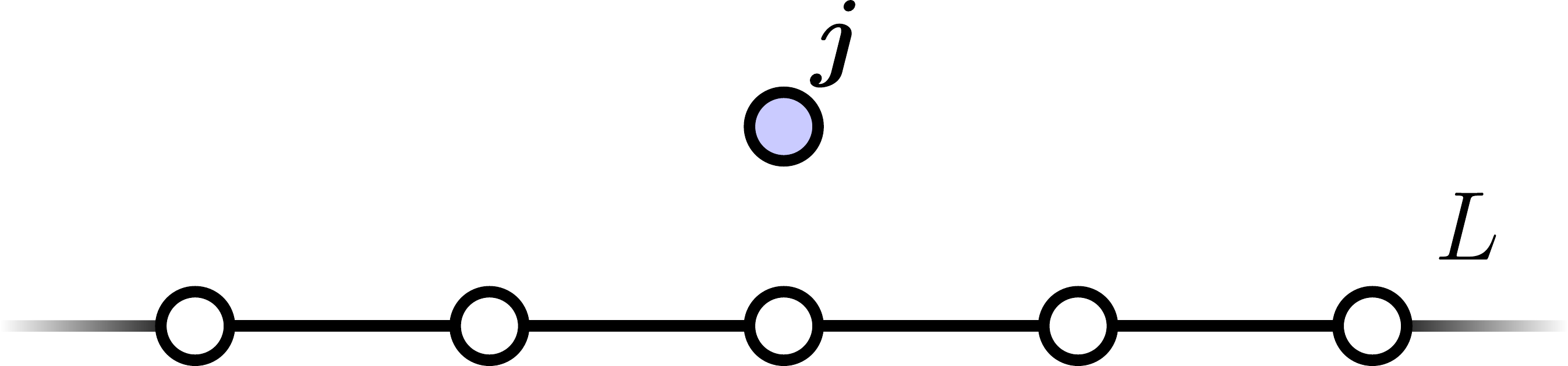}}} & \multicolumn{4}{c}{\makecell{(b.i)\vspace{10mm}} \makecell{\includegraphics[width=0.35\textwidth]{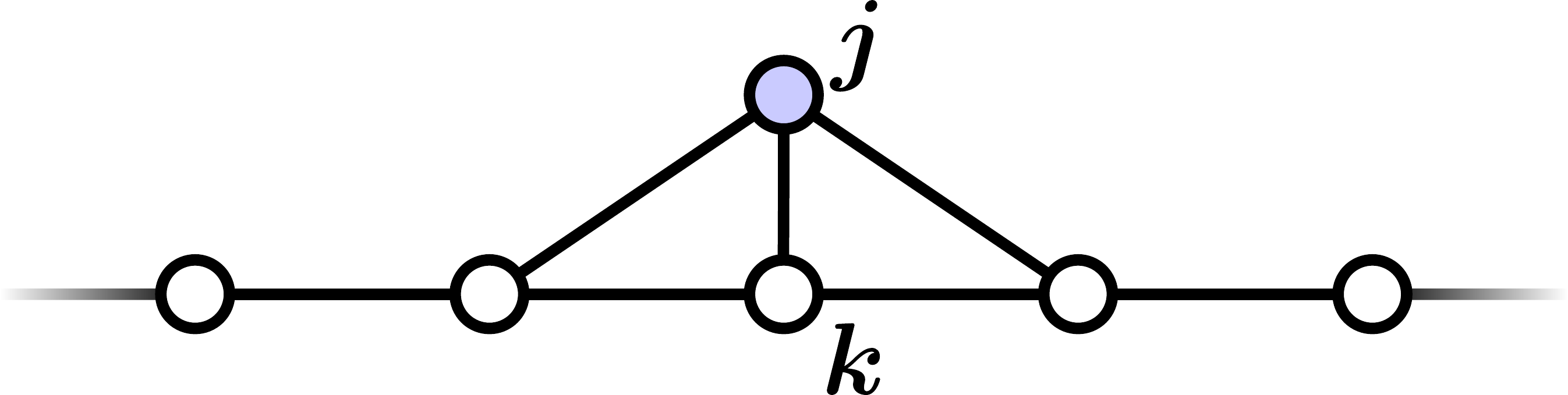}}} \\
        \multicolumn{4}{c}{\makecell{(a.ii)\vspace{10mm}} \makecell{\includegraphics[width=0.35\textwidth]{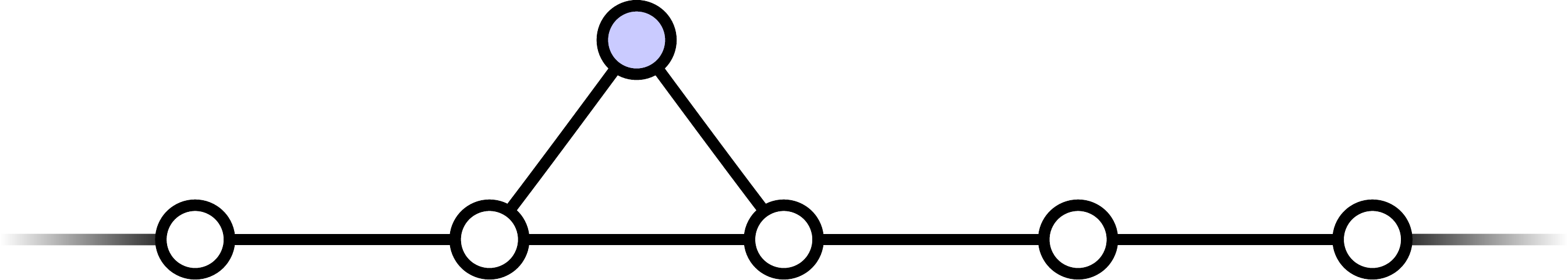}}} & \multicolumn{4}{c}{} \\
        \multicolumn{4}{c}{\makecell{(a.iii)\vspace{10mm}} \makecell{\includegraphics[width=0.35\textwidth]{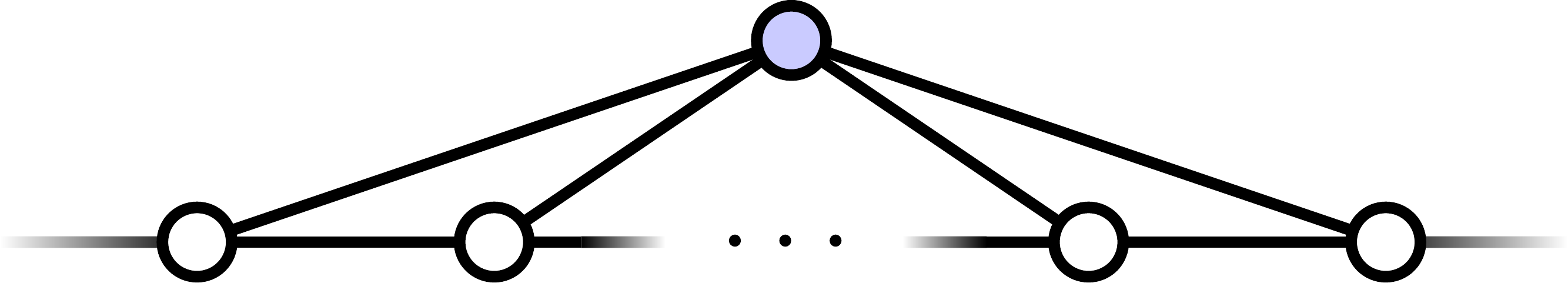}}} & & & & \\
        \multicolumn{4}{c}{} & & & & \\
        \cmidrule(lr){1-4} \cmidrule(lr){5-8}  \\
        \multicolumn{2}{c}{Hole} & \multicolumn{2}{c}{Path} & \multicolumn{2}{c}{Hole} & \multicolumn{2}{c}{Path} \\
        \cmidrule(lr){1-4} \cmidrule(lr){5-8} \\
        \multicolumn{2}{c}{\makecell{(a.iv)\vspace{10mm}} \makecell{\includegraphics[width=0.12\textwidth]{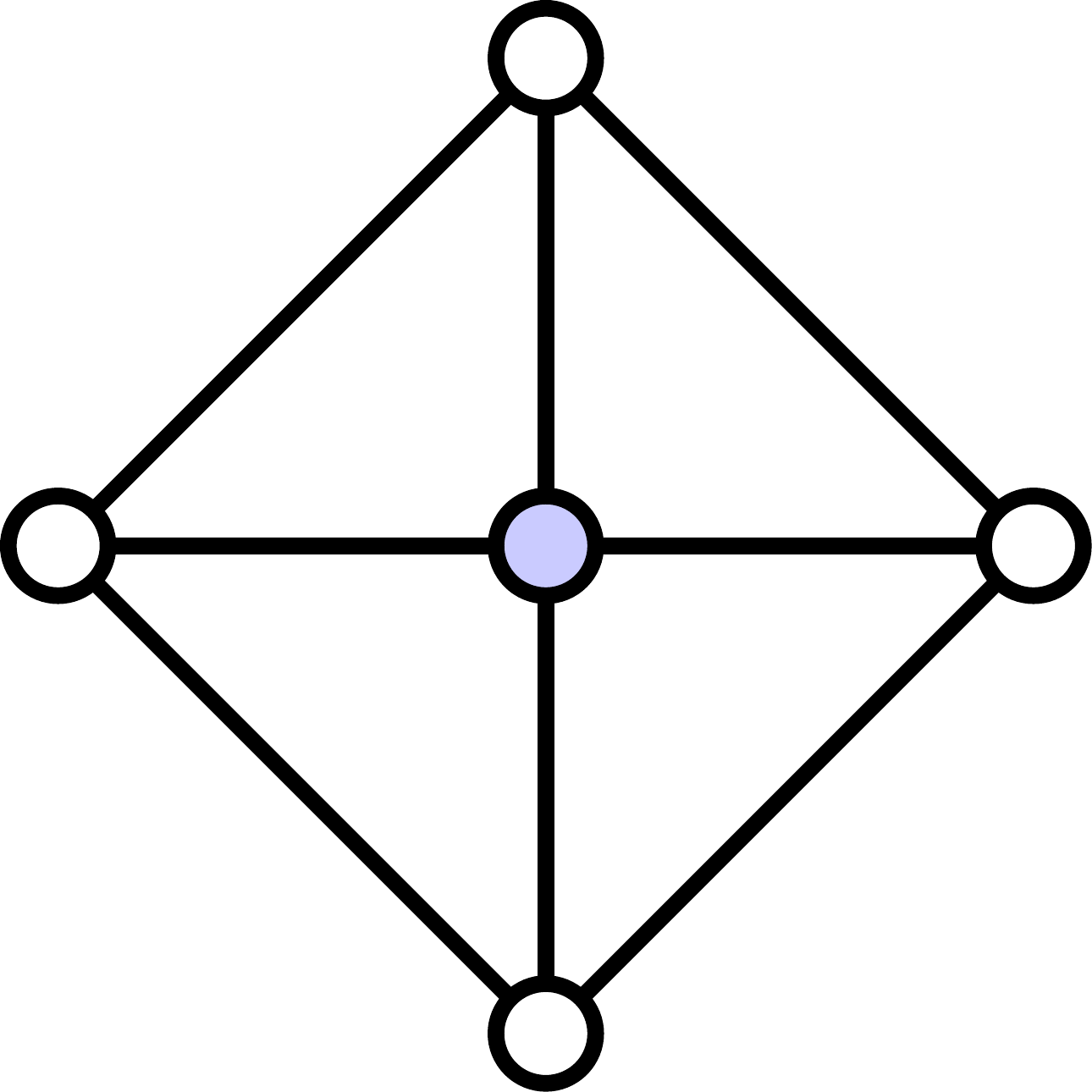}}} & \multicolumn{2}{c}{\makecell{(a.v)\vspace{10mm}} \makecell{\includegraphics[width=0.25\textwidth]{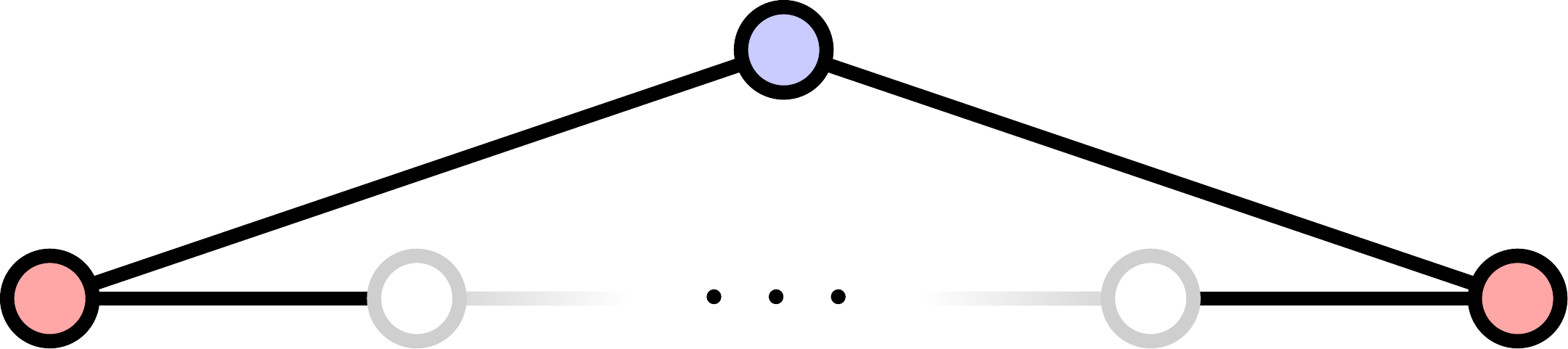}}} & \multicolumn{2}{c}{\makecell{(b.ii)\vspace{10mm}} \makecell{\includegraphics[width=0.12\textwidth]{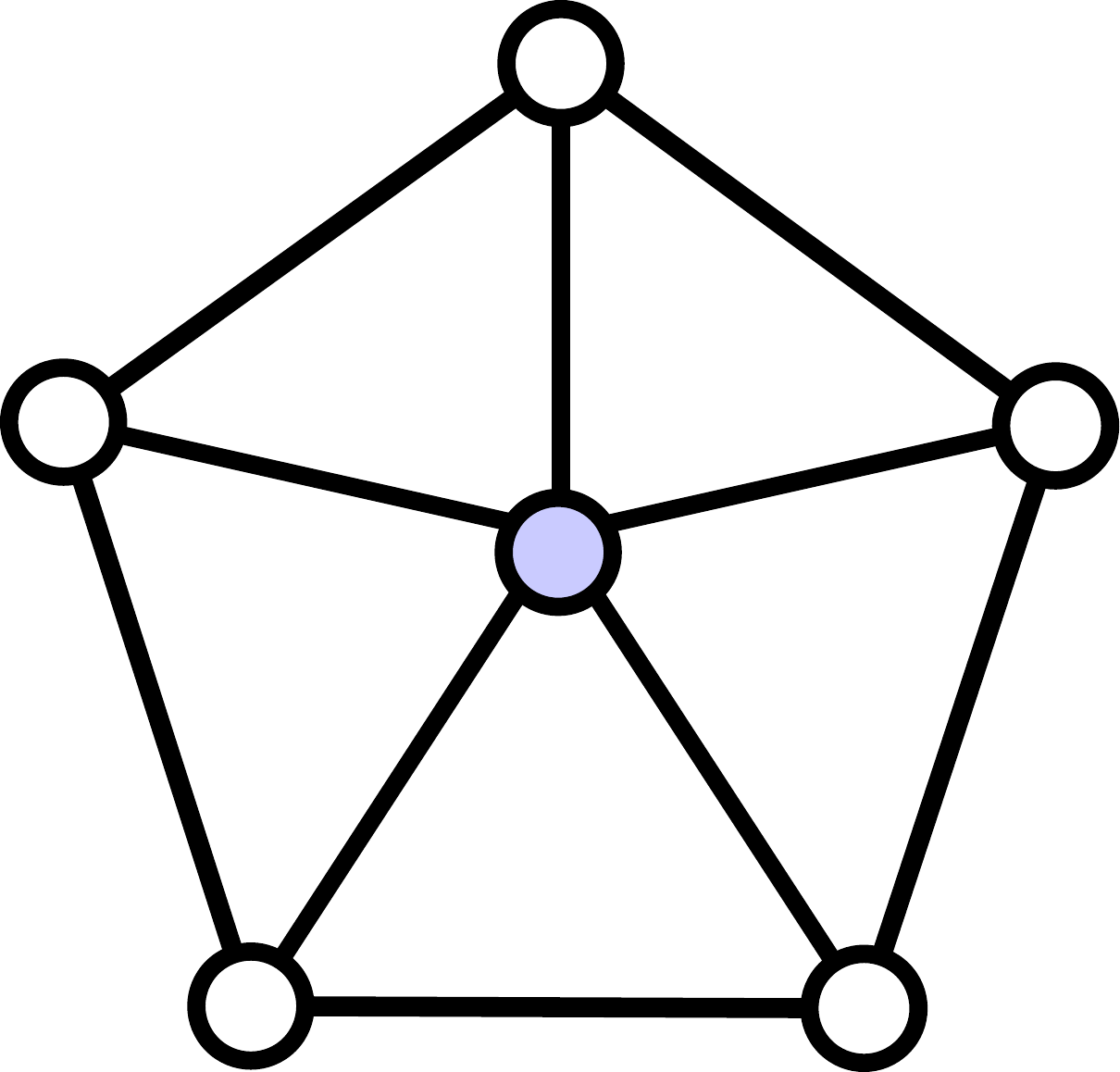}}} & \multicolumn{2}{c}{\makecell{(b.iii)\vspace{10mm}} \makecell{\includegraphics[width=0.25\textwidth]{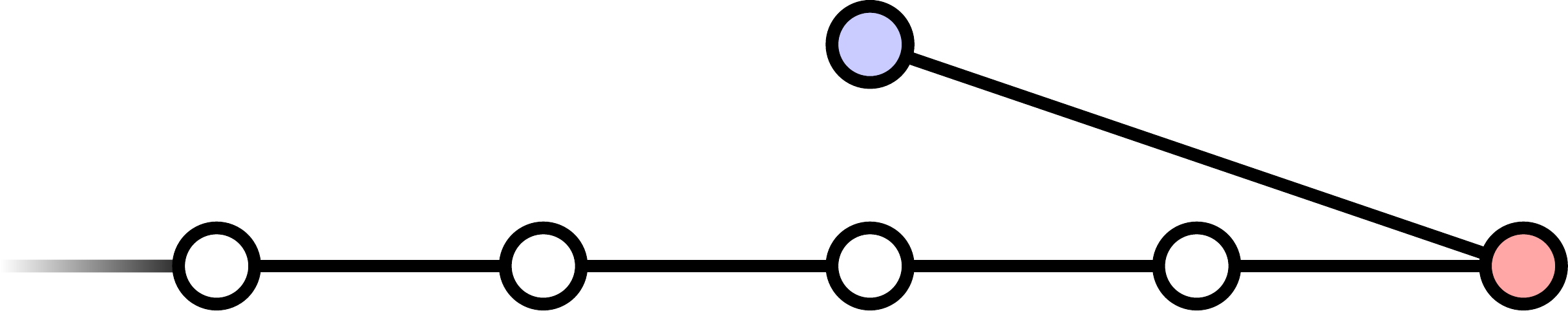}}} \\
        & & & & & & \multicolumn{2}{c}{\makecell{(b.iv)\vspace{10mm}} \makecell{\includegraphics[width=0.25\textwidth]{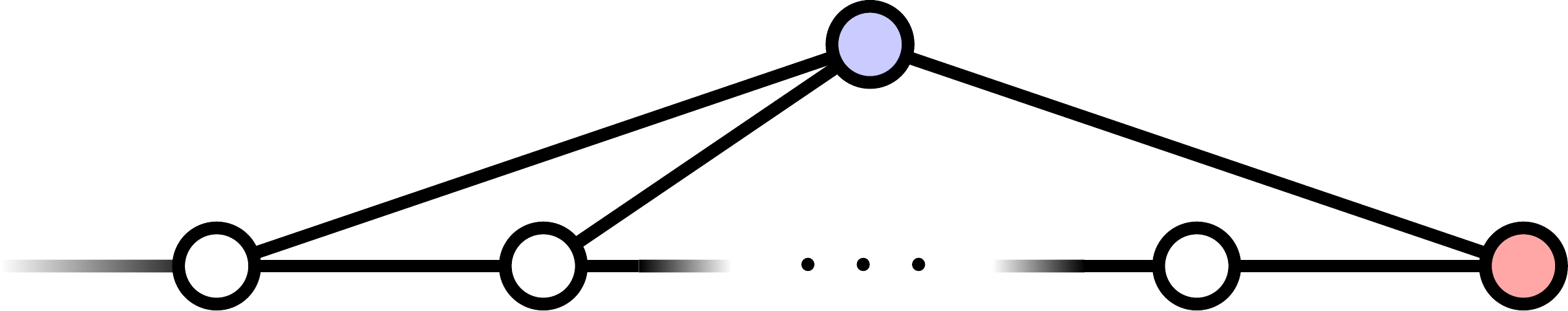}}}    
    \end{tabular}
    \caption{Possible neighboring relations between a hole or induced path $L$ and a vertex $\bsj$ not in $L$ in a claw-free graph. (a) The Hamiltonian term $h_{\bsj}$ commutes with $h_L$ only if $\bsj$ has even-many neighbors in $L$, and $\bsj$ has at most four neighbors in $L$ in this case. If $\bsj$ has two neighbors in $L$, they must be (a.ii) neighboring, or (a.v) the endpoints of an induced path of at least one edge. If $\bsj$ has four neighbors in $L$, they may induce (a.iii) two disjoint edges, a path of length three, or (a.iv) a hole of length four in $L$. (b) The Hamiltonian term $h_{\bsj}$ anticommutes with $h_L$ only if $\bsj$ has at most five neighbors in $L$. If $\bsj$ has three neighbors in $L$, they must (b.i) induce a path of length two in $L$, unless (b.iv) $L$ is an induced path, in which case $\bsj$ can neighbor an endpoint and any pair of neighboring vertices in $L$. (b.iii) The only possibility for $\bsj$ to have one neighbor in $L$ is if $L$ is an induced path, and $\bsj$ is neighboring its endpoint. (b.ii) The only possibility for $\bsj$ to have five neighbors in $L$ is if $L$ is a hole of length five. In case (b.i), we can define a unique additional hole or induced path by the single-vertex deformation $(L{\setminus}\{\bsk\})\cup\{\bsj\}$.}
    \label{table:vertexcyclerelations}
\end{table*}

Induced paths and even holes are \emph{triangle-free}, i.e., they do not contain a clique of three vertices (we define a hole to have length greater than three). As we might expect, the tension between this triangle-free constraint and the claw-free constraint on the entire graph tightly restricts the neighboring relations between these structures and other vertices in the graph.
\begin{lemma}
    \label{lemma:vertexcyclerelations}
    Let $L$ be an induced path or hole in a claw-free graph $G$ and let $\bsj$ be a vertex not in $L$, then all possible neighboring relations between $\bsj$ and $L$ are given in Table~\ref{table:vertexcyclerelations}.
\end{lemma}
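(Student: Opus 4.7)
My plan is to exploit the claw-free condition in two complementary ways: no claw centered at $\bsj$, which bounds $|N|$ and the structure of $L[N]$; and no claw centered at a vertex of $N$ lying in $L$, which pins down where these neighbors can sit inside $L$. Throughout, set $N = \Gamma_L(\bsj)$ and work inside the induced (and still claw-free) subgraph $G[L \cup \{\bsj\}]$.

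Because $L$ is an induced path or hole and therefore triangle-free, the induced subgraph $L[N]$ is a disjoint union of induced subpaths of $L$, except when $L$ is a hole with $N = L$, in which case $L[N] = L$ is itself a cycle. A claw at $\bsj$ with three leaves in $L$ is precisely an independent set of size $3$ in $L[N]$, so claw-freeness forces $\alpha(L[N]) \leq 2$. For a disjoint union of subpaths with vertex counts $k_1,\ldots,k_r$, this gives $\sum_i \lceil k_i/2\rceil \leq 2$, leaving only a single subpath on at most $4$ vertices or two subpaths each on at most $2$ vertices. In the cyclic case, $\alpha(C_k)=\lfloor k/2\rfloor\leq 2$ restricts $L$ to a hole of length $4$ or $5$ with $N=L$. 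In particular, $|N|\leq 5$.

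The second constraint rules out the remaining configurations. Let $v \in N$ be interior to $L$, so $v$ has two $L$-neighbors $v_-, v_+$, which are mutually non-adjacent since $L$ is induced with no triangles. Then $\{\bsj, v_-, v_+\}$ would be the leaves of a claw at $v$ unless at least one of $v_\pm$ also lies in $N$. Equivalently, $v$ cannot be isolated in $L[N]$ unless $v$ is an endpoint of $L$, which in turn requires $L$ itself to be a path. Enumerating by $|N|$ now yields each case of Table~\ref{table:vertexcyclerelations}: $|N|=0$ gives (a.i); $|N|=1$ forces the unique neighbor to be an endpoint of a path $L$, giving (b.iii); $|N|=2$ is either two $L$-consecutive vertices (a.ii) or two vertices isolated in $L[N]$, which must then be the two endpoints of a path $L$ (a.v); $|N|=3$ is either three $L$-consecutive vertices (b.i) or an edge plus an isolated vertex, the latter forced to be a path endpoint (b.iv); $|N|=4$ consists of four $L$-consecutive vertices or two disjoint edges (the two options in (a.iii)), or $L$ a $4$-hole with $N=L$ (a.iv); and $|N|=5$ forces $L$ to be a $5$-hole with $N=L$ (b.ii). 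The commuting/anticommuting split is exactly the parity of $|N|$ by Eq.~(\ref{equation:degreephase}).

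The step requiring the most care is the bookkeeping in this final enumeration: verifying that all disjoint unions of subpaths with $\alpha\leq 2$ have been listed, and checking case-by-case that the isolated-vertex rule forces precisely the positions drawn in Table~\ref{table:vertexcyclerelations}. Everything else follows directly from the two claw arguments above, with no further structural input from claw-freeness of $G$ needed.
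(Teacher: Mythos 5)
Your proof is correct and takes essentially the same approach as the paper: both arguments rest on the two claw types (centered at $\bsj$, forcing $\alpha(L[\Gamma_L(\bsj)])\leq 2$, and centered at a neighbor of $\bsj$ interior to $L$, forbidding isolated vertices of $L[\Gamma_L(\bsj)]$ except at path endpoints) followed by an enumeration over $|\Gamma_L(\bsj)|$. Your packaging of the first condition as $\sum_i\lceil k_i/2\rceil\leq 2$ over the subpath components is a somewhat tidier bookkeeping than the paper's vertex-by-vertex analysis of the four-neighbor case, but the underlying argument is the same.
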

Rather than list the cases here, we give their definitions in Table~\ref{table:vertexcyclerelations}. These cases are not necessarily mutually exclusive. For example, cases (a.ii) and (a.iv) coincide for $L$ an induced path of length one.
\begin{proof}
    Any induced subgraph of $L$ must either be bipartite (an even hole, or a disjoint union of induced paths) or an odd hole. Since any bipartite graph of at least five vertices or odd hole of more than five vertices contains an independent set of at least three vertices, $\bsj$ cannot have five or more neighbors in $L$ unless $L$ is a hole of five vertices. This gives case (b.ii). Clearly, there are no claws in $G[\{\bsj\} \cup L]$ if $\bsj$ has no neighbors in $L$, so this gives case (a.i). We consider each additional case according to the number of neighbors to $\bsj$ in $L$.

    Suppose $\bsj$ has exactly one neighbor $\bsk \in L$. If $\bsk$ has two neighbors $\bsu,\bsv \in L$, then $\{\bsk,\bsj,\bsu,\bsv\}$ induces a claw in $G$. Thus, the only possibility is for $\bsk$ to have exactly one neighbor in $L$. This gives case (b.iii), where $L$ is an induced path, and $\bsk$ is an endpoint of $L$. 
    
    More generally, if $\bsk \in L$ is a neighbor to $\bsj$ not in $L$, and $\bsk$ has two neighbors $\bsu,\bsv \in L$, then at least one of $\bsu$ and $\bsv$ must be a neighbor to $\bsj$ as well. If $\bsj$ has exactly two neighbors, $\Gamma_L(\bsj)=\{\bsk_0,\bsk_1\} \subseteq L$, this gives case (a.ii) when at least one of $\bsk_0$ and $\bsk_1$ has two neighbors in $L$. If both of $\bsk_0$ and $\bsk_1$ have exactly one neighbor in $L$, this gives case (a.v).
    
    If $\bsj$ has exactly three neighbors $\Gamma_L(\bsj)=\{\bsk_0,\bsk_1,\bsk_2\} \subseteq L$, then at least two of these vertices must be neighboring in $L$. This gives case (b.i) when all of $\bsk_0$, $\bsk_1$, and $\bsk_2$ have two neighbors in $L$. If at least one of $\bsk_0$, $\bsk_1$, or $\bsk_2$ has one neighbor in $L$, then we have case (b.iv).
    
    If $\bsj$ has four neighbors $\Gamma_L(\bsj)=\{\bsk_0,\bsk_1,\bsk_2,\bsk_3\} \subseteq L$, then we have case (a.iv) if $L$ is a hole of length four. In general, any subset of three vertices in $L$ has an independent set of at least two vertices, since $L$ does not contain triangles. Thus, $\Gamma_L(\bsj)$ cannot contain an isolated vertex. Assuming $\Gamma_L(\bsj)$ not to be a hole, there is a pair of vertices in $\Gamma_L(\bsj)$ with only one neighbor in $\Gamma_L(\bsj)$. Without loss of generality, suppose these vertices are $\{\bsk_0,\bsk_3\}$. If these vertices are neighboring, then $\bsk_1$ and $\bsk_2$ must be neighboring, so as not to be isolated. This gives case (a.iii). If $\bsk_0$ and $\bsk_3$ have the same unique neighbor, say $\bsk_1$, then $\bsk_2$ must also neighbor $\bsk_1$ since it again cannot be isolated in $\Gamma_L(\bsj)$, and we have assumed $\bsk_0$ and $\bsk_3$ each only have one neighbor in $\Gamma_L(\bsj)$. However, this gives that $\bsk_1$ has degree three in $L$ (and accordingly $\{\bsk_1,\bsk_0,\bsk_2,\bsk_3\}$ induces a claw in $G$), so we have that $\bsk_0$ and $\bsk_3$ have distinct unique neighbors in $\Gamma_L(\bsj)$ if they are not neighboring. Suppose $\bsk_1$ is the unique neighbor to $\bsk_0$, and $\bsk_2$ is the unique neighbor to $\bsk_3$. This again gives case (a.iii), completing the proof.
\end{proof}

Lemma~\ref{lemma:vertexcyclerelations} will be important in the forthcoming proofs as it allows us to infer neighboring relations based on partial information. We state this explicitly as two useful corollaries.
\begin{corollary}
    \label{corollary:vertexcycleneighboring}
    If $\bsj$ is a neighbor to $\bsk$ in an even hole $C$, then $\bsj$ is also neighboring to a neighbor of $\bsk$ in $C$.
\end{corollary}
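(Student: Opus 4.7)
The plan is a short proof by contradiction using only claw-freeness and the chordlessness of $C$. Let $\bsu$ and $\bsv$ denote the two cyclic neighbors of $\bsk$ in $C$. Since $C$ is an induced cycle of length at least $4$ (even holes have length $\geq 4$), the vertices $\bsu$ and $\bsv$ sit at cyclic distance two around $C$ and are therefore non-adjacent in $G$; otherwise $\{\bsk, \bsu, \bsv\}$ would form a triangle, giving a chord of $C$ and contradicting that $C$ is induced.

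Now suppose, toward a contradiction, that $\bsj$ is adjacent to $\bsk$ but to neither $\bsu$ nor $\bsv$ (taking, consistent with the setup of Lemma~\ref{lemma:vertexcyclerelations}, $\bsj \notin C$ so that $\bsj \neq \bsu, \bsv$; if $\bsj$ itself is one of $\bsu,\bsv$ the conclusion is automatic). Then $\{\bsj, \bsu, \bsv\}$ is an independent set of $G$, and each of these three vertices is adjacent to $\bsk$, so $\{\bsk, \bsj, \bsu, \bsv\}$ induces a claw $K_{1,3}$ centered at $\bsk$. This contradicts the claw-freeness of $G$, so $\bsj$ must be adjacent to at least one of $\bsu$ and $\bsv$, each of which is a neighbor of $\bsk$ in $C$.

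I do not anticipate any real obstacle; the corollary is essentially isolating the forbidden single-neighbor case from Lemma~\ref{lemma:vertexcyclerelations}. An alternative route would be to cite that lemma directly and observe that, for $L = C$ an even hole, cases (a.v) and (b.iii) are excluded because $L$ is not a path, case (b.ii) requires an odd $5$-cycle, and case (a.i) is incompatible with $\bsj$ neighboring $\bsk$; each of the remaining cases (a.ii), (a.iii), (a.iv), (b.i) has its enumerated neighbor pattern forcing $\bsj$ to touch some cyclic neighbor of $\bsk$. The direct claw argument above simply packages the essential reason into a single sentence.
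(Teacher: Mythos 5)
Your proof is correct and is essentially the paper's own argument: the corollary is presented as a direct consequence of Lemma~\ref{lemma:vertexcyclerelations}, whose proof contains exactly your key step — if $\bsk$ has two neighbors $\bsu,\bsv$ in $L$ and a vertex $\bsj\notin L$ neighbors $\bsk$ but neither of them, then $\{\bsk,\bsj,\bsu,\bsv\}$ induces a claw. The only quibble is your parenthetical that the conclusion is ``automatic'' when $\bsj\in\{\bsu,\bsv\}$: in that case $\bsj$ lies at cyclic distance two from the other neighbor of $\bsk$ and is adjacent to no neighbor of $\bsk$ in $C$, so the corollary genuinely requires $\bsj\notin C$ — which you correctly assume for the main argument anyway.
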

\begin{corollary}
    \label{corollary:vertexcycleanother}
    Let $\bsj$ be such that $\bsk_0\md\bsk_1\md\bsk_2\subseteq\Gamma_L(\bsj)$. If $\bsj$ has an additional neighbor $\bsu \in L$, then $\bsu$ must neighbor at least one of $\bsk_0$ or $\bsk_2$.
\end{corollary}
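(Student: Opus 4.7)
The plan is to invoke Lemma~\ref{lemma:vertexcyclerelations} on $\bsj$. The hypothesis furnishes $\bsj$ with at least four neighbors in $L$, namely $\bsk_0,\bsk_1,\bsk_2,\bsu$, so from Table~\ref{table:vertexcyclerelations} only cases (a.iii), (a.iv), and (b.ii) are admissible; every other case permits at most three neighbors of $\bsj$ in $L$. After this reduction, the argument is a short case analysis.

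In case (a.iv), $L$ is itself a hole of length four and $\bsj$ neighbors every vertex of $L$, so $\bsu$ is the unique vertex of $L\setminus\{\bsk_0,\bsk_1,\bsk_2\}$, and the cyclic structure of $L$ forces $\bsu$ to neighbor both $\bsk_0$ and $\bsk_2$. In case (b.ii), $L$ is a hole of length five and $\bsj$ neighbors every vertex of $L$; the two vertices of $L\setminus\{\bsk_0,\bsk_1,\bsk_2\}$ are precisely the $L$-neighbors of $\bsk_0$ and of $\bsk_2$ distinct from $\bsk_1$, so whichever of them plays the role of $\bsu$ must neighbor $\bsk_0$ or $\bsk_2$. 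In case (a.iii), the four neighbors of $\bsj$ induce either two disjoint edges or a path of length three. The two-disjoint-edges sub-case is immediately excluded, since $\bsk_0\bsk_1$ and $\bsk_1\bsk_2$ are edges of the induced subgraph that share $\bsk_1$. The remaining path-of-length-three sub-case must have $\bsk_0\md\bsk_1\md\bsk_2$ as an induced subpath, and so $\bsu$ can only appear at one of its endpoints, yielding $\bsu\md\bsk_0\md\bsk_1\md\bsk_2$ or $\bsk_0\md\bsk_1\md\bsk_2\md\bsu$; in either orientation $\bsu$ is adjacent to $\bsk_0$ or $\bsk_2$.

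All the structural heavy lifting is already carried out in Lemma~\ref{lemma:vertexcyclerelations}, so the only obstacle here is organizational: the corollary reduces to a finite case check, and the single subtlety is remembering to treat case (b.ii) alongside the more obvious cases (a.iii) and (a.iv), since $\bsj$ could in principle have five rather than exactly four neighbors in $L$.
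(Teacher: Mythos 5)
Your proof is correct and is exactly the argument the paper intends: the corollary is stated without explicit proof as an immediate consequence of Lemma~\ref{lemma:vertexcyclerelations}, and your finite case check over cases (a.iii), (a.iv), and (b.ii) of Table~\ref{table:vertexcyclerelations} is the direct way to read it off. The one subtlety you flag --- not forgetting (b.ii), since $\bsj$ may have a fifth neighbor when $L$ is a $5$-hole --- is indeed the only non-obvious part, and you handle it correctly.
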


We make the distinction between the case where $h_{\bsj}$ commutes with $h_L$ and the case where $h_{\bsj}$ anticommutes with $h_L$ in Table~\ref{table:vertexcyclerelations}, as the latter is especially important from a physical perspective. Interestingly, there is only one possibility for $h_{\bsj}$ to anticommute with $h_C$ when $C$ is an even hole, which is (b.i) in Table~\ref{table:vertexcyclerelations}. When this case holds, and $L$ is either an induced path or even hole, there is a unique additional induced path or even hole defined as a \emph{rerouting} of $L$ through $\bsj$.

\begin{definition}[Single-vertex deformation]
    \label{definition:singlevertexdeformation}
    Let $L$ be a hole or an induced path, and let $\bsj$ be a vertex not in $L$ with neighborhood $\Gamma_L(\bsj)=\bsu\md\bsk\md\bsv$ as in case (b.i) of Table~\ref{table:vertexcyclerelations}. The \emph{single-vertex deformation} $L'$ of $L$ by $\bsj$ is defined by
    \begin{equation}
        L' = (L{\setminus}\{\bsk\})\cup\{\bsj\}.
    \end{equation}
    The vertex $\bsk$ is called the \emph{clone} to $\bsj$ in $L$, and we denote this relationship by $\bsj\prec_L\bsk$.
\end{definition}

Note that single-vertex deformations are reversible, i.e., if $\bsj\prec_L\bsk$, then $\bsk\prec_{L'}\bsj$. There is a kind of generalization of a deformation that we need for our proof of Result~\ref{result:inducedpath}.
\begin{definition}[Bubble wand, handle, hoop]
    \label{definition:bubblewandhandlehoop}
    Suppose $\bsj$ neighbors a path $P$ as in case (b.iv) with $\Gamma_P(\bsj)=\{\bsj_i\md\bsj_{i+1},\bsj_{\ell}\}$ for $i<\ell-2$, as labeled in Eq.~(\ref{equation:pathlabel}). We define the \textit{bubble wand} graph to be $ B=P\cup\{\bsj\}$ and define the \emph{handle} of the wand as the path $P_i=\{\bsj_k\}_{k=0}^{i}$, with the \emph{hoop} defined as the hole $C=B{\setminus}P_i$.
\end{definition}
Note that, if we were to allow $i=\ell-2$, then we would have $\bsj\prec_P\bsj_{i+1}$. However, we need to formally distinguish between these cases in our proof. We return to this structure in Section~\ref{section:KrylovSubspaces}.

Returning to the topic, we collect all of the holes or induced paths related by sequences of single-vertex deformations into sets called deformation closures.

\begin{definition}[Deformation closure]
    \label{definition:deformationclosure}
    Let $L_0$ be a hole or an induced path. The \emph{deformation closure} $\avg{L_0}$ of $L_0$ is the set such that $L_0$ is in $\avg{L_0}$ and, for any hole or induced path $L$ in $\avg{L_0}$, every single-vertex deformation of $L$ is in $\avg{L_0}$.
\end{definition}
Note that a given hole or induced path cannot belong to more than one deformation closure. If $L$ is in $\avg{L_0}$ and $\avg{L_0'}$, then $L$ and $L_0$ are related by a deformation, and so are $L$ and $L_0'$. Thus, $L_0$ is related to $L_0'$ by the deformation that first takes $L_0$ to $L$ and then from $L$ to $L_0'$. This therefore gives $\avg{L_0}=\avg{L'_0}$. Additionally, all of the holes in a deformation closure have the same length, so the deformation closures partition the holes in the graph such that all of the holes in a given deformation closure have a fixed length. The induced paths in a given deformation closure have the same length and endpoints, so their deformation closures partition them similarly.

The structures of the deformation closures can be complicated, with certain single-vertex deformations either enabled or prevented by other ones. In particular, this happens when a given vertex $\bst$ is neighboring to exactly one of $\{\bss,\bsa\}$ with $\bss\prec_L\bsa$. In this case, we say that $\bst$ is \emph{dependent} on the deformation by $\bss\prec_L\bsa$. We are interested in the instance where $L$ is an even hole, for which we have the following lemma.
\begin{lemma}
    \label{lemma:evenholecloneneighbor}
    Let $C$ be an even hole with $\abs{C}=2k$ and vertex labeling as defined in Eq.~(\ref{equation:evenholelabel}). If $\bss\prec_C\bsa_0$ and a vertex $\bst$ neighbors exactly one element of $\{\bss,\bsa_0\}$, then either 
    \begin{equation}
        \Gamma_{\left(\{\bss\} \cup C\right)}(\bst) =
        \begin{cases} 
            \bsa_{k-1}\md\bsb_0\md\bsu & \text{(i)}, \\
            \bsb_{k-1}\md\bsa_{k-1}\md\bsb_0\md\bsu & \text{(ii)}, \\
            \bsu\md\bsb_1\md\bsa_1 & \text{(iii)}, \\
            \bsu\md\bsb_1\md\bsa_1\md\bsb_2 & \text{(iv)}, 
        \end{cases}
        \notag
    \end{equation}
    where $\bsu=\bss$ or $\bsu=\bsa_0$. If $k=2$, then cases (ii) and (iv) coincide.
\end{lemma}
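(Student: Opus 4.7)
My plan is to analyze the neighborhood of $\bst$ via the two even holes $C$ and $C' = (C \setminus \{\bsa_0\}) \cup \{\bss\}$; the latter is itself an even hole of length $2k$ by Definition~\ref{definition:singlevertexdeformation}, with $\bss$ occupying the position of $\bsa_0$ and sharing its hole-neighbors $\bsb_0, \bsb_1$. I would first reduce to $\bst \notin C \cup \{\bss\}$: among the vertices of $C$ other than $\bsa_0$, only $\bsb_0$ and $\bsb_1$ neighbor $\bsa_0$ or $\bss$, and these neighbor both, so no vertex of $C \setminus \{\bsa_0\}$ satisfies the ``exactly one'' hypothesis; together with the implicit $\bst \notin \{\bss, \bsa_0\}$, this leaves $\bst \notin C \cup \{\bss\}$, under which Lemma~\ref{lemma:vertexcyclerelations} applies to $\bst$ against each of $C$ and $C'$.

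Let $\bsu \in \{\bss, \bsa_0\}$ denote the unique neighbor of $\bst$ in this pair, and let $\bsv$ denote the other. Then $\bsu$ lies in exactly one of $\{C, C'\}$, which I call $C_\bsu$ (so $C_{\bsa_0} = C$ and $C_{\bss} = C'$), and $\bsu$'s two neighbors in $C_\bsu$ are precisely $\bsb_0$ and $\bsb_1$. Since $\bst \sim \bsu \in C_\bsu$, Corollary~\ref{corollary:vertexcycleneighboring} forces $\bst$ to be adjacent to at least one of $\{\bsb_0, \bsb_1\}$, and I would split into three sub-cases.

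In the sub-case $\bst \sim \bsb_0$ and $\bst \not\sim \bsb_1$, I switch to the companion hole $C_\bsv$, in which the neighbors of $\bsb_0$ are $\bsv$ and $\bsa_{k-1}$. Since $\bst \not\sim \bsv$, Corollary~\ref{corollary:vertexcycleneighboring} applied in $C_\bsv$ forces $\bst \sim \bsa_{k-1}$, so that $\{\bsa_{k-1}, \bsb_0, \bsu\}$ induces a path of length two in $\Gamma_{\{\bss\} \cup C}(\bst)$. Corollary~\ref{corollary:vertexcycleanother} applied in $C_\bsu$ then restricts any additional neighbor of $\bst$ in $C_\bsu$ to $\bsb_{k-1}$ or $\bsb_1$; the latter is excluded, yielding case (i) (without $\bsb_{k-1}$) or case (ii) (with $\bsb_{k-1}$). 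The symmetric sub-case $\bst \sim \bsb_1$, $\bst \not\sim \bsb_0$ yields cases (iii) or (iv).

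The remaining sub-case is $\bst \sim \bsb_0$ and $\bst \sim \bsb_1$ simultaneously. Running the $C_\bsv$-argument at both vertices forces $\bst \sim \bsa_{k-1}$ and $\bst \sim \bsa_1$, producing the five putative neighbors $\{\bsu, \bsa_{k-1}, \bsb_0, \bsb_1, \bsa_1\}$ of $\bst$ in $C_\bsu$. For $k \geq 3$ these vertices are distinct, violating the four-neighbor upper bound that Lemma~\ref{lemma:vertexcyclerelations} places on any external vertex of an even hole, so this sub-case is impossible. For $k = 2$, the identifications $\bsa_{k-1} = \bsa_1$ and $\bsb_{k-1} = \bsb_1$ collapse these to the four vertices of $C$, and the resulting neighborhood reproduces case (ii)$\equiv$(iv). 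The main obstacle is the careful index bookkeeping for the $k = 2$ boundary, particularly verifying that the merging of cases (ii) and (iv) correctly absorbs the ``both'' sub-case without creating or missing any configurations.
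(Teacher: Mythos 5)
Your proof is correct and follows essentially the same route as the paper's: pass to the deformed hole $C'=(C{\setminus}\{\bsa_0\})\cup\{\bss\}$, apply Corollary~\ref{corollary:vertexcycleneighboring} once in $C'$ to force adjacency to $\bsb_0$ or $\bsb_1$ and once in the companion hole to force adjacency to $\bsa_{k-1}$ or $\bsa_1$, then use Corollary~\ref{corollary:vertexcycleanother} and the degree bounds of Lemma~\ref{lemma:vertexcyclerelations} to pin down the remaining possibilities. The only cosmetic differences are that you make the reduction $\bst\notin C\cup C'$ explicit and dispose of the ``both $\bsb_0$ and $\bsb_1$'' configuration by a five-neighbor count rather than the paper's claw $\{\bsb_1,\bst,\bsa_0,\bsa_1\}$; both arguments are valid.
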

\begin{proof}
    Without loss of generality, suppose $\bsu=\bss$, and let $C'=(C{\setminus}\bsa_0)\cup\{\bss\}$ be the single-vertex deformation of $C$ by $\bss$. By Corollary~\ref{corollary:vertexcycleneighboring}, $\bst$ must neighbor at least one vertex in $\Gamma_{C'}(\bss)=\{\bsb_0,\bsb_1\}$. Once again without loss of generality, suppose $\bst$ neighbors $\bsb_0$. Again by Corollary~\ref{corollary:vertexcycleneighboring}, $\bst$ must neighbor at least one vertex in $\Gamma_{C}(\bsb_0) = \{\bsa_{k-1},\bsa_0\}$, so it must neighbor $\bsa_{k-1}$. This gives case (i). If $\bst$ has an additional neighbor in $C'$, then by Corollary~\ref{corollary:vertexcycleanother}, it must be either $\bsb_{k-1}$ or $\bsb_1$. If $k=2$, then $\bsb_{k-1}=\bsb_1$, cases (ii) and (iv) coincide, and this gives that case. If $k>2$, then $\bst$ cannot neighbor $\bsb_1$, as $\{\bsb_1,\bst,\bsa_0,\bsa_1\}$ induces a claw, and $\bst$ cannot have any additional neighbors in $C'$ in this case. Thus, $\bst$ must neighbor $\bsb_{k-1}$, and this gives case (ii). A similar argument applies for the case where $\bsb_1$ is in $\Gamma_{C}(\bst)$ and the case where $\bsu=\bsa_0$. This completes the proof.
\end{proof}

We have the following useful corollaries.
\begin{corollary}
    \label{corollary:evenholecloneexactlyoneneighbor}
    If $\bst$ neighbors exactly one of $\{\bss,\bsa_0\}$ in the setting of Lemma~\ref{lemma:evenholecloneneighbor} with $k>2$, it neighbors exactly one of $\{\bsb_0,\bsb_1\}$.
\end{corollary}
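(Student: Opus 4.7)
The plan is to observe that this corollary is essentially a direct read-off of the four cases already enumerated in Lemma~\ref{lemma:evenholecloneneighbor}, so the proof should proceed by case analysis with $k > 2$ used in a single index-distinctness argument per case. First I would recall the four possible forms of $\Gamma_{(\{\bss\}\cup C)}(\bst)$ from cases (i)--(iv) of that lemma, and note that the hypothesis of the corollary places us in exactly one of these cases.

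Next, I would verify that in cases (i) and (ii) the set $\Gamma_{(\{\bss\}\cup C)}(\bst)$ contains $\bsb_0$ but not $\bsb_1$, and in cases (iii) and (iv) it contains $\bsb_1$ but not $\bsb_0$. The only non-trivial check is case (ii), where the neighborhood is $\bsb_{k-1}\md\bsa_{k-1}\md\bsb_0\md\bsu$, and one must confirm that $\bsb_{k-1} \neq \bsb_1$; this is precisely where the assumption $k > 2$ enters, since index addition is modulo $k$ and so $k-1 \equiv 1 \pmod{k}$ exactly when $k = 2$. Case (iv) is symmetric, with $\bsb_2 \neq \bsb_0$ for $k > 2$. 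Cases (i) and (iii) require no hypothesis on $k$ beyond $\bsb_0 \neq \bsb_1$.

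Since the four cases of Lemma~\ref{lemma:evenholecloneneighbor} are exhaustive under the hypothesis that $\bst$ neighbors exactly one of $\{\bss,\bsa_0\}$, this case-by-case verification immediately yields that $\bst$ neighbors exactly one of $\{\bsb_0,\bsb_1\}$. I do not anticipate any genuine obstacle here; the only potential subtlety is being careful about the modular identification of vertex labels around the even hole, which is handled uniformly by the $k > 2$ hypothesis that keeps the indices $0, 1, 2, k-1$ pairwise distinct.
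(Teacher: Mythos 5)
Your proposal is correct and matches the paper's intent: the paper states this as an immediate corollary of Lemma~\ref{lemma:evenholecloneneighbor} without a separate proof, and your case-by-case read-off of the four neighborhoods is exactly the implicit argument. Your identification of where $k>2$ is needed (to ensure $\bsb_{k-1}\neq\bsb_1$ in case (ii) and $\bsb_2\neq\bsb_0$ in case (iv), the cases that coincide when $k=2$) is precisely the right observation.
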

\begin{corollary}
    \label{corollary:evenholeclonemanyneighbors}
    If $\bst$ neighbors at least one of $\{\bss,\bsa_0\}$ and both of $\{\bsb_0,\bsb_1\}$ in the setting of Lemma~\ref{lemma:evenholecloneneighbor} with $k > 2$, it neighbors both of $\{\bss,\bsa_0\}$.
\end{corollary}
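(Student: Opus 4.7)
The plan is to prove this by contradiction, leveraging the immediately preceding Corollary~\ref{corollary:evenholecloneexactlyoneneighbor} in a contrapositive manner. Specifically, I would assume for contradiction that $\bst$ satisfies the hypotheses of the corollary (neighbors at least one of $\{\bss,\bsa_0\}$, and both of $\{\bsb_0,\bsb_1\}$), yet fails to neighbor both of $\{\bss,\bsa_0\}$. Then $\bst$ would neighbor exactly one of $\{\bss,\bsa_0\}$, which places us squarely in the hypothesis of Corollary~\ref{corollary:evenholecloneexactlyoneneighbor}.

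Applying that corollary with $k > 2$ immediately yields that $\bst$ neighbors exactly one of $\{\bsb_0,\bsb_1\}$, contradicting the assumption that $\bst$ neighbors both. Hence $\bst$ must in fact neighbor both of $\{\bss,\bsa_0\}$, as required.

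There is essentially no obstacle: the real work has already been done in Lemma~\ref{lemma:evenholecloneneighbor}, whose enumeration of the four cases (i)--(iv) shows that when $k > 2$, each admissible neighborhood $\Gamma_{(\{\bss\} \cup C)}(\bst)$ contains exactly one of $\bsb_0$ or $\bsb_1$ (cases (i) and (ii) include $\bsb_0$ but not $\bsb_1$, while cases (iii) and (iv) include $\bsb_1$ but not $\bsb_0$, since for $k>2$ the vertices $\bsb_{k-1}$ and $\bsb_2$ are distinct from $\bsb_0$ and $\bsb_1$). If I wanted to make the argument self-contained, I could alternatively invoke Lemma~\ref{lemma:evenholecloneneighbor} directly and observe that none of its four cases is compatible with $\bst$ being adjacent to both $\bsb_0$ and $\bsb_1$, but routing through Corollary~\ref{corollary:evenholecloneexactlyoneneighbor} is cleaner and emphasizes that this result is really just the contrapositive of the previous corollary combined with the nontriviality hypothesis that $\bst$ has at least one neighbor in $\{\bss,\bsa_0\}$.
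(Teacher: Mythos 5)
Your proof is correct and matches the paper's intent: the paper states this corollary without a separate proof precisely because it is the contrapositive of Corollary~\ref{corollary:evenholecloneexactlyoneneighbor} (equivalently, a direct reading of the case enumeration in Lemma~\ref{lemma:evenholecloneneighbor}, where for $k>2$ every admissible neighborhood contains exactly one of $\bsb_0$, $\bsb_1$). Your observation that ``at least one but not both'' forces ``exactly one'' is the only step needed, and you have it.
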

These results allow us to collect statements about individual even holes into statements about their deformation closures. The following lemma is a simple example.
\begin{lemma}
    \label{lemma:evenholeneighbor}
    If $\bst$ is in $\Gamma[C_0]$ for an even hole $C_0$, then $\bst$ is in $\Gamma[C]$ for any even hole $C$ in $\avg{C_0}$.
\end{lemma}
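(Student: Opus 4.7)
The plan is to proceed by induction on the length of a single-vertex deformation sequence connecting $C_0$ to $C$. This is well-defined because $\avg{C_0}$ is by definition the smallest set containing $C_0$ closed under single-vertex deformations, so every $C \in \avg{C_0}$ is reached in finitely many steps. The base case $C = C_0$ is immediate. For the inductive step, it suffices to prove the following local statement: if $C'$ is an even hole, $C = (C'\setminus\{\bsa\}) \cup \{\bss\}$ with $\bss \prec_{C'} \bsa$, and $\bst \in \Gamma[C']$, then $\bst \in \Gamma[C]$.

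I would dispose of this local step by a short case analysis on how $\bst$ sits relative to the exchanged vertex. First, if $\bst \in C'\setminus\{\bsa\}$ or $\bst = \bss$, then $\bst \in C$ and we are done. Second, if $\bst = \bsa$, then Definition~\ref{definition:singlevertexdeformation} guarantees that $\bss$ is a neighbor of $\bsa$ in $C'$, so $\bst \in \Gamma(\bss) \subseteq \Gamma[C]$. Third, if $\bst \notin C' \cup \{\bss\}$ has some neighbor in $C'\setminus\{\bsa\}$, then that neighbor already lies in $C$, and again $\bst \in \Gamma[C]$. The only case requiring real work is when $\bst \notin C' \cup \{\bss\}$ and the unique neighbor of $\bst$ in $C'$ is $\bsa$.

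This residual case is the main technical obstacle, and I expect Lemma~\ref{lemma:evenholecloneneighbor} to resolve it cleanly. If $\bst$ also neighbors $\bss$, we finish immediately via $\bst \in \Gamma(\bss) \subseteq \Gamma[C]$. Otherwise $\bst$ neighbors exactly one of $\{\bss,\bsa\}$, namely $\bsa$, so Lemma~\ref{lemma:evenholecloneneighbor} forces $\Gamma_{\{\bss\} \cup C'}(\bst)$ to take one of the four explicit configurations (i)--(iv). Inspecting each configuration in turn, $\bst$ is forced to carry at least one additional neighbor drawn from $\{\bsa_{k-1},\bsb_0,\bsb_1,\bsa_1,\bsb_{k-1},\bsb_2\}\setminus\{\bsa\} \subseteq C'\setminus\{\bsa\} \subseteq C$, which flatly contradicts the standing assumption that $\bsa$ is the unique neighbor of $\bst$ in $C'$. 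Hence this residual subcase is vacuous, which completes the induction. The low-length corner $k=2$ needs no separate treatment, because Lemma~\ref{lemma:evenholecloneneighbor} already absorbs the coincidence of cases (ii) and (iv) there.
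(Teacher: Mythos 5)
Your proof is correct and takes essentially the same route as the paper: reduce to a single single-vertex deformation and invoke Lemma~\ref{lemma:evenholecloneneighbor} to rule out the case where $\bsa$ is the only neighbor of $\bst$ in $C'\cup\{\bss\}$. The paper phrases this as a direct contradiction (assuming $\bst\notin\Gamma[C]$ forces $\Gamma_{C\cup C_0}(\bst)=\{\bsa_0\}$, contradicting that lemma), whereas you enumerate the trivial cases explicitly first, but the substance is identical.
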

\begin{proof}
    It is sufficient to prove that $\bst$ is in $\Gamma[C]$ for a single-vertex deformation $C$ of $C_0$. Thus, let
    \begin{equation}
        C = (C_0{\setminus}\{\bsa_0\})\cup\{\bss\},
    \end{equation}
    with $\bss\prec_{C_0}\bsa_0$. Suppose that $\bst$ is not in $\Gamma[C]$, then $\bst$ is not in $C \cup C_0$ since $C \cup C_0$ is a subset of $\Gamma[C]$, and $\Gamma_{C \cup C_0}(\bst)=\{\bsa_0\}$ since $\bst$ is in $\Gamma[C_0]$. However, this is a contradiction to Lemma~\ref{lemma:evenholecloneneighbor}. Therefore, if $\bst$ is in $\Gamma[C_0]$, then $\bst$ is in $\Gamma[C]$ for any single-vertex deformation $C$ of $C_0$.
    For an even hole $C\in\avg{C_0}$ that is not necessarily a single-vertex deformation of $C_0$, applying this result iteratively to the sequence of deformations from $C_0$ to $C$ completes the proof.
\end{proof}
Lemma~\ref{lemma:evenholeneighbor} shows that $\Gamma[C]=\Gamma[C_0]$ for any $C\in\avg{C_0}$. Conversely, if $\bst$ is not in $\Gamma[C_0]$, then $\bst$ is not in $\Gamma[C]$ for any $C\in\avg{C_0}$. Recall from Section~\ref{section:FrustrationGraphs} that two even holes $C$ and $C'$ are said to be \emph{compatible} if $\bsj$ is not in $\Gamma[C']$ for every $\bsj\in\Gamma[C]$. We thus have the following corollary.
\begin{corollary}
    If an even hole $C$ is compatible with an even hole $C_0$, then $C$ is compatible with every even hole $C'\in\avg{C_0}$.
\end{corollary}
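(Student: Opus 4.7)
The plan is to derive this corollary as an almost immediate consequence of Lemma~\ref{lemma:evenholeneighbor}. That lemma already shows that for any vertex $\bst$, if $\bst\in\Gamma[C_0]$, then $\bst\in\Gamma[C']$ for every $C'\in\avg{C_0}$. Applied in both directions (using the fact that deformation is reversible, as noted after Definition~\ref{definition:singlevertexdeformation}, so $\avg{C'}=\avg{C_0}$ whenever $C'\in\avg{C_0}$), this strengthens to the equality $\Gamma[C']=\Gamma[C_0]$ for every $C'\in\avg{C_0}$. This is the one ingredient needed.

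With that in hand, the argument is a single step. Fix an even hole $C$ that is compatible with $C_0$, and let $C'\in\avg{C_0}$ be arbitrary. By the compatibility hypothesis, for every $\bsj\in\Gamma[C]$ we have $\bsj\notin\Gamma[C_0]$. Substituting the equality $\Gamma[C_0]=\Gamma[C']$ from the preceding paragraph gives $\bsj\notin\Gamma[C']$ for every $\bsj\in\Gamma[C]$, which is exactly the condition (as restated immediately before the corollary) that $C$ is compatible with $C'$.

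There is essentially no obstacle: the corollary is packaged so that all the graph-theoretic work has been absorbed into Lemma~\ref{lemma:evenholeneighbor} (and, upstream, into Lemma~\ref{lemma:evenholecloneneighbor}). The only point worth flagging in the writeup is that one must apply Lemma~\ref{lemma:evenholeneighbor} symmetrically in $C_0$ and $C'$ to promote the inclusion $\Gamma[C_0]\subseteq\Gamma[C']$ to the equality $\Gamma[C_0]=\Gamma[C']$; this is justified because any finite sequence of single-vertex deformations taking $C_0$ to $C'$ can be reversed step by step to take $C'$ back to $C_0$, so $C_0\in\avg{C'}$ as well.
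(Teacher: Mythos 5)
Your proof is correct and follows exactly the route the paper takes: the corollary is stated immediately after the remark that Lemma~\ref{lemma:evenholeneighbor} gives $\Gamma[C]=\Gamma[C_0]$ for all $C\in\avg{C_0}$, and compatibility then transfers by substitution. Your extra remark about using reversibility of single-vertex deformations to upgrade the inclusion to an equality is a careful touch the paper glosses over, but it is the same argument.
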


\begin{figure}[ht!]
    \centering
    \includegraphics[width=0.5\textwidth]{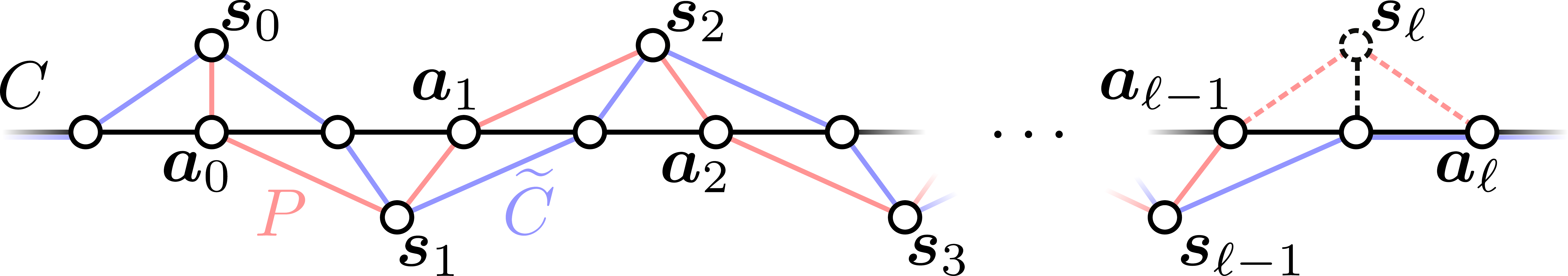}
    \caption{The deformation along a path $P$ when $S$ is an independent set, defined as the path component of $G[S{\oplus}C_a]$ with endpoint $\bss_0$, where $C_a$ is the coloring class to which the clone of $\bss_0$ in $C$ belongs. If $\bss_{\ell}$ is present, then we cannot deform by $\bss_{\ell}$, and we say that $\bss_0$ is tethered to $\bss_{\ell}$. Otherwise, the path $P$ has odd length.}
    \label{figure:deformationsequence}
\end{figure}

Lemma~\ref{lemma:evenholecloneneighbor} allows us to consider sequences of deformations. An important case is that where we deform an even hole $C$ by vertices in an independent set $S=\{\bss_j\}_{j=0}^{\ell-1}$, labeled such that each single-vertex deformation is performed successively in $j$ as shown in Fig.~\ref{figure:deformationsequence}. Denote $C^{(j)}$ as the hole following the deformation by $\bss_j$, with $C^{(-1)}=C$. We have that $\bss_{j+1}$ is not neighboring to $\bss_j$ but is neighboring to its clone $\bsa_j \in C^{(j-1)}$. If we can deform $C^{(j)}$ by $\bss_{j+1}$, then $\bss_{j+1}\prec_{C^{(j)}}\bsa_{j+1}$ by Lemma~\ref{lemma:evenholecloneneighbor}.

There is a unique path associated to the sequence of deformations shown in Fig.~\ref{figure:deformationsequence}. This is the path component $P \subseteq G[S{\oplus}C_a]$ with endpoint $\bss_0$, where the independent set $C_a$ is operationally defined as the coloring class of $C$ to which the clone to $\bss_0$ belongs. (Since $\bss_0$ has only one neighbor in $C_a$, it is the endpoint of a path component of $G[S{\oplus}C_a]$.) We call this sequential deformation by vertices in $P \cap S$ a deformation \emph{along} the path $P$, and we call $P$ the deformation path. We call $\bss_0$ the \emph{initializing vertex} of the deformation along $P$ (or we say $\bss_0$ initializes this deformation). We can continue the deformation until we reach a vertex $\bss_{\ell}$ with only three neighbors in $C$, as shown in Fig.~\ref{figure:deformationsequence}, and we cannot deform by $\bss_{\ell}$ if it is present. Letting $P'$ be the path component of $G[S{\oplus}C_b]$ with $\bss_{\ell}$ as an endpoint, we could similarly deform $C$ along $P'$ until we reach the vertex $\bss_{0}$, so $\bss_{0}$ and $\bss_{\ell}$ can be uniquely associated this way. We say that $\bss_0$ and $\bss_{\ell}$ are \emph{tethered} with respect to $C$. If $\bss_0$ is untethered with respect to $C$, i.e., $\bss_{\ell}$ is not present, then the path $P$ has odd length $2\ell-1$, so is given by
\begin{equation}
    P = \bss_0\md\bsa_0\md\dots\bss_{\ell-1}\md\bsa_{\ell-1}.
\end{equation}
Note that for $\abs{C}=2k$, we have $\ell \leq k$. Otherwise, this contradicts the requirement that $P$ is a path. We can operationally interpret $\ell$ as the number of vertex-clone pairs in the deformation by $P$, and the collection of these pairs corresponds to the unique perfect matching in $P$.

\subsection{Reconfiguration Problems}
\label{subsection:ReconfigurationProblems}

Deformations for holes and induced paths are the subject of a particular \emph{reconfiguration problem} for claw-free graphs. A reconfiguration problem considers whether a graph structure, such as an independent set or shortest path, can be reached from another one by a sequence of allowed moves. We consider the following important reconfiguration move for independent sets.
\begin{definition}[Token sliding~\cite{bonsma2014reconfiguring}]
    Given independent sets $S$ and $S'$ in a claw-free graph $G$, $S$ and $S'$ are related by a \emph{token slide} if there is a pair of neighboring vertices $\bsu$ and $\bsv$ with $\bsu \in S{\setminus}S'$ and $\bsv \in S'{\setminus}S$ such that
    \begin{equation}
        S' = (S{\setminus}\bsu)\cup\{\bsv\}.
    \end{equation} 
\end{definition}
That is, we consider a set of \emph{tokens} placed on the independent set $S$ and ask whether we can obtain $S'$ from $S$ by sliding a token along an edge of $G$. Note that if $S'$ is reachable from $S$ by a token sliding move, then $S$ is similarly reachable from $S'$. We say that $S\leftrightarrow_{\text{TS}}S'$ if $S$ and $S'$ are related by a sequence of token-sliding moves.

Reachability is described by the solution graph $\tsk$, whose vertices correspond to $k$-vertex independent sets in $G$ and are neighboring if they are related by a token slide. Two $k$-vertex independent sets $S$ and $S'$ satisfy $S\leftrightarrow_{\text{TS}}S'$ if $S$ and $S'$ are in the same connected component of $\tsk$. Let $\Xi_k$ be the set of connected components of $\tsk$.

We can consider a connected component of $\tsk$ as a corresponding closure of independent sets, and define the following conserved charges of Theorem~\ref{theorem:conservedcharges} as sums over the appropriate closures.
\begin{definition}[Token-sliding charges]
    \label{definion:tokenslidingcharges}
    The token-sliding charge $\qkg{k,\mu}{G}$ is defined as
    \begin{equation}
        \qkg{k,\mu}{G} \coloneqq \sum_{S\in\mu}h_S, \label{eq:tokenindependence}
    \end{equation}
    where $\mu\in\Xi_k$ is a connected component of $\tsk$. These are related to the independent set charges from Def.~\ref{definition:independentsetcharges} via 
    \begin{equation}
        \sum_{\mu\in\Xi_k}\qkg{k,\mu}{G} = \qkg{k}{G}.
    \end{equation}
    That is, the independent set charge is a sum over the connected components of $\tsk$. If $k=0$, we take the convention that there is only a single component $\mu$ of $\tsk$ with $\qkg{0,\mu}{G}=\qkg{0}{G}=I$.
\end{definition}

Note that if $G$ is itself not connected, then neither is $\tsk$, and we have a token-sliding charge for each component of $\tsk$. However, even when $G$ is connected, $\tsk$ may not be. The case where $G$ is an even hole and $k=\alpha(G)$ is a clear example, since any token slide will take a coloring class of $G$ to a set which is not independent. The $\qkg{k,\mu}{G}$ can thus be thought of as a fine graining of the independent set charges to account for the case where $G$ is not connected or contains a certain kind of even hole. Ref.~\cite{bonsma2014reconfiguring} gives necessary and sufficient conditions for $\tsk$ to be connected. In fact, it is only possible for $\tsk$ to be disconnected with $G$ connected when $k=\alpha(G)$ and when $G$ contains an even hole. This implies Ref.~\cite[Lemma 1]{elman2021free} is already the strongest possible when $H$ has a connected frustration graph. However, when $G$ contains even holes, we may have additional token-sliding charges.

We see that a single-vertex deformation is a special case of a token sliding move on a coloring class of an even hole that preserves the even hole. It is thus natural to define a corresponding operator.
\begin{definition}[Generalized cycle symmetries]
    \label{definition:generalizedcyclesymmetries}
    The generalized cycle symmetries are defined by
    \begin{equation}
        \jkg{C_0}{G} = \sum_{C\in\avg{C_0}}h_C.
    \end{equation}
\end{definition}
Single-vertex deformations and token sliding are both special cases of reconfiguration for regular induced subgraphs. Specifically, single-vertex deformations of even holes correspond to reconfigurations of connected 2-regular induced subgraphs. Token sliding moves on independent sets correspond to reconfigurations of 0-regular induced subgraphs (see, e.g., Ref.~\cite{eto2022reconfiguration} for more details). 

\subsection{Induced Path Trees}
\label{subsection:InducedPathTrees}

Simplicial, claw-free graphs have a hereditary structure. That is, $G[U]$ is an SCF graph for all vertex subsets $U \subseteq V$ of an SCF graph $G$. For a given simplicial clique $K_s$ in $G$, we define
\begin{equation}
    K_{\bsj} = (\Gamma(\bsj){\setminus}K_s)\cup\{\bsj\},
\end{equation}
for all $\bsj \in K_s$. By the definition of a simplicial clique, $K_{\bsj}$ is a clique for all $\bsj \in K_s$. Furthermore, $\Gamma(\bsj){\setminus}K_s$ is itself a simplicial clique in $G{\setminus}K_s$~\cite{chudnovsky2007roots}.

\begin{figure*}[ht!]
    \begin{subfigure}{0.3\textwidth}
         \centering
         \includegraphics[width=0.9\columnwidth]{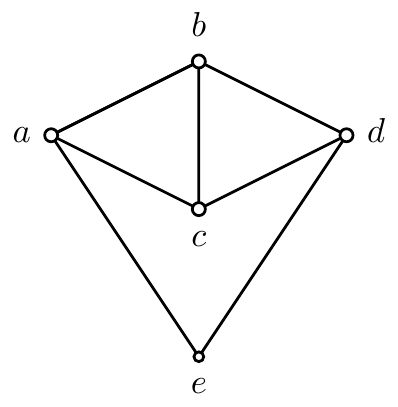}
         \caption{}
         \label{figure:parentgraph}
     \end{subfigure}%
     \begin{subfigure}{0.3\textwidth}
         \centering
         \includegraphics[width=0.9\columnwidth]{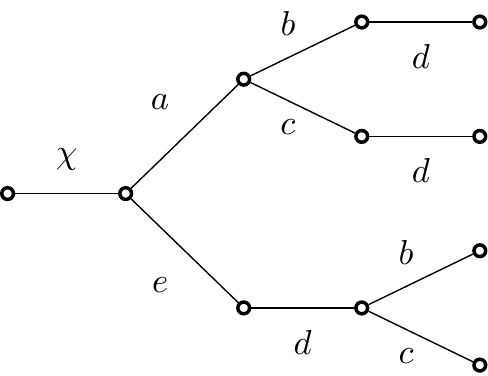}
         \caption{}
         \label{figure:inducedpathtree}
     \end{subfigure}%
     \begin{subfigure}{0.3\textwidth}
         \centering
         \includegraphics[width=0.9\columnwidth]{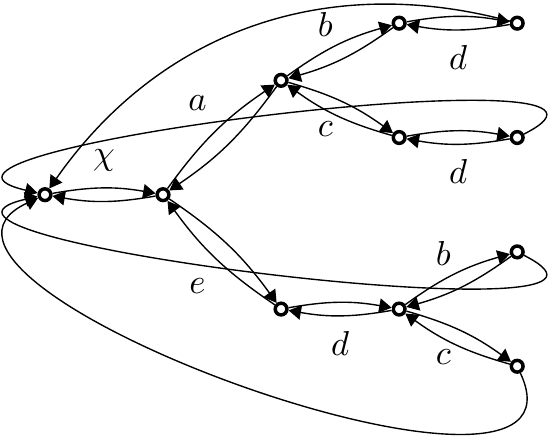}
         \caption{}
         \label{figure:directedhoppinggraph}
     \end{subfigure}
    \caption{The induced path tree: (a) a simplicial, claw-free graph with an even hole; (b) an induced-path tree from a simplicial vertex attached to the simplicial clique $K_s=\{\bsa,\bs{e}\}$ in the graph; (c) the hopping graph of a walk induced by the nested commutators of $\chi$ with the Hamiltonian $H$ for which the graph is the frustration graph.}
    \label{figure:inducedpathtreegraphs}
\end{figure*}

With this in mind, the \emph{induced path tree} is defined as follows.
\begin{definition}[Induced path tree with respect to $\bsj$~\cite{leake2019generalizations}]
    \label{definition:inducedpathtreevertex}
    For $\bsj \in V$, the \emph{induced path tree} $\idt{\bsj}{G}$, of $G$ with respect to $\bsj$ is defined recursively. If $G$ is a tree, then $\idt{\bsj}{G}=G$, and we say that $\bsj$ is the root of $\idt{\bsj}{G}$. Otherwise, we consider the forest of disjoint trees $\idt{\bsk}{(G{\setminus}\Gamma[\bsj])\cup\{\bsk\}}$ with root $\bsk$ for each $\bsk\in\Gamma(\bsj)$. We then define $\idt{\bsj}{G}$ by appending a root vertex corresponding to $\bsj$ and connecting it to the roots of each of these trees.
\end{definition}
As in Ref.~\cite{leake2019generalizations}, we also define an induced path tree with respect to a clique $K$.
\begin{definition}[Induced path tree with respect to $K$~\cite{leake2019generalizations}]
    \label{definition:inducedpathtreeclique}
    Let $K$ be a clique. The induced path tree $\idt{K}{G}$ of $G$ with respect to $K$ is defined as follows. Let $\gs$ be the graph formed by attaching a new vertex $\js$ to $G$ with the property that $(\js,\bsk) \in E(\gs)$ for all $\bsk \in K$. Then $\idt{K}{G}=\idt{\js}{\gs}$.
\end{definition}
Each vertex in $\idt{K}{G}$ can be labeled by the induced path in $G$ given by the sequence of subtree roots in the path from that vertex to $\js$. Note that $\gs$ from Def.~\ref{definition:inducedpathtreeclique} is also simplicial, claw-free when $K=K_s$ is a simplicial clique. Clearly $\gs$ is simplicial since $\js$ is a simplicial vertex. Suppose that $\gs$ contains a claw, then that claw must contain $\js$ since $G$ is claw-free. However, $\Gamma(\js)=K_s$, so there must be some vertex $\bsk$ in $\Gamma(\js)$ that neighbors an independent set of order at least three. Suppose that this is the case, then the set $\Gamma[\bsk]{\setminus}\Gamma[\js]$ must contain a pair of non-neighboring vertices, but this contradicts our assumption that $K_s$ is simplicial. Therefore, we have that $\gs$ is a simplicial, claw-free graph as well. In particular, we shall use the fact that all of the neighboring relations in Table~\ref{table:vertexcyclerelations} hold for induced paths containing $\js$ as an endpoint in $\gs$.

Fig.~\ref{figure:parentgraph} shows a small example of a simplicial, claw-free graph. We have identified the simplicial clique $K_s=\{a,e\}$ and constructed the induced path tree $\idt{K_s}{G}$ in Fig.~\ref{figure:inducedpathtree}. From this, we construct the \emph{directed hopping graph} $\Lambda_{K_s}(G)$, shown in Fig.~\ref{figure:directedhoppinggraph}, and defined as follows.

\begin{definition}[Directed hopping graph]
    \label{definition:directedhoppinggraph}
    The graph $\Lambda_{K_s}(G)$, is related to the graph $\idt{K_s}{G}$ by replacing each edge in $\idt{K_s}{G}$ with a pair of directed arcs and adding a set of arcs corresponding to even holes in $G$. Specifically, there is such an arc from $P$ to $P'$ in $\Lambda_{K_s}(G)$ if there is a vertex $\bsk$ not in $P$ such that $P\cup\{\bsk\}$ is a bubble wand (see Def.~\ref{definition:bubblewandhandlehoop}) with $P'$ the handle, and the hoop is an even hole.
\end{definition}
We shall always consider these graphs with respect to a fixed simplicial clique of $G$, so we drop the explicit $K_s$ dependence in our notation as $\idt{}{\gs}=\idt{K_s}{\gs}$.

\section{Conserved Quantities}
\label{section:ConservedQuantities}

A crucial component to the proof of Theorem~\ref{theorem:freefermionsolution} is the identification of the conserved charges via their graphical structures. In this section we prove the following theorem.
\begin{theorem}[Conserved Charges]
    \label{theorem:conservedcharges}
    Let $H$ be an Hamiltonian with claw-free frustration graph $G$. The operators $\{\qkg{s,\mu}{G}\}_{(s,\mu)}$ and $\{\jkg{C_0}{G}\}_{\avg{C_0}}$ satisfy
    \begin{align}
        [\qkg{s,\mu}{G},\qkg{t,\nu}{G}] &= 0, \tag{i} \\
        [\jkg{C_0}{G},\qkg{s,\mu}{G}] &= 0, \tag{ii} \\
        [\jkg{C_0}{G},\jkg{C'_0}{G}] &= 0. \tag{iii}
    \end{align}
    In particular, these operators are conserved charges of the Hamiltonian.
\end{theorem}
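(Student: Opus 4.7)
All three parts follow the same involution strategy: using $[A, B] = \sum_{X,Y}[h_X, h_Y]$ together with the fact that $[h_X, h_Y]=0$ if $h_X$ and $h_Y$ commute and $[h_X, h_Y]=2 h_X h_Y$ if they anticommute, each commutator collapses to a sum of products $h_X h_Y$ taken over anticommuting ordered pairs $(X, Y)$. It therefore suffices to exhibit a sign-reversing involution $\sigma$ on these anticommuting pairs which fixes the closure labels (e.g.\ sending $(S, T)$ with $S\in\mu,\ T\in\nu$ to another such pair) and satisfies $h_{\sigma(X,Y)} = -h_X h_Y$. Pairs then cancel in orbits of size two and the commutator vanishes.

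For (i), take anticommuting $(S, T)$ with $S\in\mu$, $T\in\nu$. By Lemma~\ref{lemma:clawfreesymmetricdifference}, $G[S \oplus T]$ is a disjoint union of isolated vertices, induced paths, and even cycles; cycles contribute an even number of edges, so $\scomm{h_S}{h_T} = (-1)^{|E[S \oplus T]|}=-1$ forces an odd number of odd-length path components. Fix a total order on $V$ and let $P$ be the odd-length path component containing the smallest vertex. Define the $P$-swap
\[
    \sigma(S,T) = \bigl((S\setminus P)\cup(T\cap P),\ (T\setminus P)\cup (S\cap P)\bigr).
\]
Since $P$ is a full component of $G[S \oplus T]$, no edge crosses from $P$ to $V\setminus P$ in the symmetric difference, so both outputs are independent sets; executing the swap as a sequence of token slides from an endpoint of $P$ inward keeps the set independent at every intermediate step, placing the image in $\mu\times\nu$. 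The symmetric difference $S\oplus T$ and intersection $S\cap T$ are preserved, so the canonical $P$ is the same and $\sigma^2=\mathrm{id}$ with no fixed points. Finally, writing $h_S h_T = \bigl(\prod_{\bsj\in S\cap T}b_\bsj^2\bigr)\, h_{S\setminus T}h_{T\setminus S}$ in a fixed internal ordering, $\sigma$ only interchanges the sub-blocks $h_{S\cap P}$ and $h_{T\cap P}$; edges between these blocks are exactly the $|E(P)|$ edges of $P$ while neither has edges to the remaining pieces of $S\oplus T$ or to $S\cap T$, so the rearrangement produces the sign $(-1)^{|E(P)|}=-1$.

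For (ii) and (iii), the common idea is to build $\sigma$ from simultaneous deformations along a path. Lemma~\ref{lemma:vertexcyclerelations} shows that a vertex $\bss\notin C$ satisfies $\scomm{h_{\bss}}{h_C}=-1$ precisely when its three neighbors in $C$ form an induced $2$-path, equivalently $\bss\prec_C \bsa_0$ for some $\bsa_0$ in a coloring class $C_a$. Anticommutation of $h_C$ with $h_S$ (part (ii)) or with $h_{C'}$ (part (iii), after decomposing each $h_{C'}$ into its coloring classes) amounts to the parity of such $\bss$ being odd. Pick the canonical $\bss$ (least in the total order) and take $P$ to be the path component of $G[Y\oplus C_a]$ containing $\bss$, where $Y$ is $S$ in (ii) or a suitable coloring class of $C'$ in (iii); this is well-defined by Lemma~\ref{lemma:clawfreesymmetricdifference} applied to the two relevant independent sets. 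The involution swaps $Y\cap P$ with $C_a\cap P$, which simultaneously deforms $C$ to another member of $\avg{C_0}$ along the deformation path of Fig.~\ref{figure:deformationsequence} and moves $Y$ within its own closure. When $\bss$ is untethered, $P$ has odd length and the entire sequence of single-vertex deformations is legal; when $\bss$ is tethered to a partner $\bss_{\ell}$, both endpoints of $P$ play symmetric roles and $\sigma$ still has orbits of size two. The sign-flip computation is identical to part (i).

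The principal technical obstacle in (ii) and (iii) is verifying that each intermediate step of the path-deformation stays within the relevant closure, even when further vertices of $Y$ happen to neighbour the affected region of $C$. This is exactly what Lemma~\ref{lemma:evenholecloneneighbor}, Corollary~\ref{corollary:evenholecloneexactlyoneneighbor}, and Corollary~\ref{corollary:evenholeclonemanyneighbors} control: they force any ``third'' vertex neighbouring a cloned pair $\{\bss,\bsa_0\}$ into one of a few tightly prescribed configurations, each compatible with propagating the deformation along $P$. This is the step where the claw-free hypothesis is decisive, turning the local anticommutation witnessed at $\bss$ into a global cancellation across the whole deformation/token-sliding closure.
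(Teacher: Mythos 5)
Your part (i) is correct and is essentially the paper's own argument: expand the commutator, note that an anticommuting pair $(S,S')$ forces an odd number of odd-length path components in $G[S{\oplus}S']$, and cancel by swapping the two coloring classes of a fixed such component, using token sliding to confirm the image stays in $\mu\times\nu$.

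The gaps are in (ii) and (iii). In (ii), your involution is ill-defined when the canonical initializing vertex $\bss$ is tethered: the tethered path $\bss_0\md\bsa_0\md\cdots\md\bsa_{\ell-1}\md\bss_{\ell}$ contains $\ell$ vertices of $C_a$ but $\ell+1$ vertices of $S$, so swapping $S\cap P$ with $C_a\cap P$ changes $\abs{C}$ and the image is not an even hole, let alone a member of $\avg{C_0}$; "both endpoints playing symmetric roles" does not repair this. The missing step — which is the paper's key counting argument — is that an untethered initializing vertex must exist: anticommutation of $h_S$ with $h_C$ forces an odd number of initializing vertices, tethered ones pair up, so at least one is untethered, and the involution must be built on a canonically chosen \emph{untethered} path (chosen as a path rather than via a least vertex, so the choice is invariant under the swap and $\sigma^2=\mathrm{id}$).

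In (iii) the gap is more serious. A single deformation path $P\subseteq G[C_a{\oplus}C'_c]$ does not in general move $C'$ within its own closure: for the reverse deformation of $C'$ by the vertices $\bsa_j\in C\cap P$ to be legal, each $\bsa_j$ must neighbour both neighbours of $\bsc^{(j)}$ in $C'_d$, and this fails precisely when some vertex of $C'_d$ neighbours $\bsc^{(j)}$ but not $\bsa_j$ — the non-palindromic case the paper isolates at the end of its warm-up example in Section~\ref{subsubsection:PalindromicPathExample}. Handling it requires deforming by an ordered sequence of several paths (a fixed-pairing-type deformation, Def.~\ref{definition:fixedpairingtypedeformation}), a criterion for when the reversed sequence is a valid deformation of $C'$ (Lemma~\ref{lemma:solutioncriterion}), and an iterative obstruction-graph search whose termination and correctness occupy Lemmas~\ref{lemma:searchprocessbasecase} and~\ref{lemma:searchprocessinduction} and their appendices. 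Moreover the final sign computation is not "identical to part (i)": once several paths with varying numbers of anticommuting vertices are composed, the cancellation rests on showing that the composite operator $h_{\mco}$ is anti-Hermitian, which needs the parity bookkeeping of Section~\ref{subsubsection:ProofCancellation}. Lemma~\ref{lemma:evenholecloneneighbor} and its corollaries control only single-vertex deformations and are nowhere near sufficient for this.
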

Theorem~\ref{theorem:conservedcharges} alone gives further evidence for the idea proposed in Ref.~\cite{elman2021free} that Hamiltonians with claw-free frustration graphs are integrable, despite possibly not having free-fermion solutions. Our proof strategy is to expand each operator as a sum in the commutator. For each non-vanishing contribution to the sum, we shall show there is a unique additional term to cancel it. This is illustrated in our proof of Theorem~\ref{theorem:conservedcharges}~(i).

\subsection{Independent Set Charges}
\label{subsection:IndependentSetCharges}

In this section we prove Theorem~\ref{theorem:conservedcharges}~(i). 
\begin{proof}[Proof of Theorem~\ref*{theorem:conservedcharges}~$(i)$]
    By expanding, we have
    \begin{equation}
        [\qkg{s,\mu}{G},\qkg{t,\nu}{G}] = \sum_{\substack{S\in\mu \\ S'\in\nu}}[h_S,h_{S'}]. \label{equation:conservedchargescommutator}
    \end{equation}
    Using the fact that $S$ and $S'$ are independent sets, we have $\Delta_{S'}(\bsj)=\Delta_{S'{\setminus}S}(\bsj)$ for $\bsj \in S$, and $\Delta_{S'}(\bsj)=0$ for $\bsj \in S \cap S'$. This gives
    \begin{equation}
        \sum_{\bsj \in S}\Delta_{S'}(\bsj) = \sum_{\bsj \in S{\setminus}S'}\Delta_{S'{\setminus}S}(\bsj) = \abs{E[S{\oplus}S']},
    \end{equation}
    since $S{\oplus}S'=(S{\setminus}S')\cup(S'{\setminus}S)$ and $G[S{\oplus}S']$ is bipartite with coloring classes $S{\setminus}S'$ and $S'{\setminus}S$. Then, by Eq.~(\ref{equation:degreephase}),
    \begin{align}
        h_Sh_{S'} &= (-1)^{\sum_{\bsj \in S}\Delta_{S'}(\bsj)}h_{S'}h_S \\
        &= (-1)^{\abs{E[S{\oplus}S']}}h_{S'}h_S.
    \end{align}
    
    Now consider a fixed term in Eq.~(\ref{equation:conservedchargescommutator}) indexed by $(S,S')$ with $\abs{E[S{\oplus}S']}$ odd. By Lemma~\ref{lemma:clawfreesymmetricdifference}, there there is at least one (and an odd number in general) path component of $G[S{\oplus}S']$ with odd length. Choose such a path $P=\bss'_0\md\bss_0\md\dots\md\bss'_{\ell-1}\md\bss_{\ell-1}$ with $\{\bss_j\}_{j=0}^{\ell-1} = S \cap P$ and $\{\bss'_j\}_{j=0}^{\ell-1} = S' \cap P$, so that $P$ has length $2\ell-1$. Let $\wt{S}=S{\oplus}P$ and $\wt{S}'=S'{\oplus}P$. Note that $\wt{S}$ can be obtained from $S$ by successively sliding $\bss_j$ to $\bss'_j$ for $j\in\{0,\dots,\ell-1\}$. Thus, $S\leftrightarrow_{\text{TS}}\wt{S}$ and $\wt{S}$ is in $\mu$. Similarly, $S'\leftrightarrow_{\text{TS}}\wt{S}'$ and $\wt{S}'$ is in $\nu$. Since $S{\oplus}S'=\wt{S}{\oplus}\wt{S}'$ and $G[\wt{S}{\oplus}\wt{S}']$ has odd size, there is an additional term in Eq.~(\ref{equation:conservedchargescommutator}) indexed by $(\wt{S},\wt{S}')$. Then, we have
    \begin{align}
        &[h_S,h_{S'}]+[h_{\wt{S}},h_{\wt{S}'}] = 2\left(h_{S}h_{S'} + h_{\wt{S}} h_{\wt{S}'}\right) \notag \\
        &\quad= 2h_{S{\setminus}P}\left(h_{S \cap P }h_{S' \cap P}+h_{S' \cap P} h_{S \cap P}\right)h_{S'{\setminus}P} \\
        &\quad= 0.
    \end{align}
    For a collection of such terms in Eq.~(\ref{equation:conservedchargescommutator}) with fixed symmetric difference $S{\oplus}S'$, we fix a path component of $G[S{\oplus}S']$ by which corresponding terms are paired. These terms cancel pairwise, completing the proof.
\end{proof}

\subsection{Generalized Cycle Symmetries and Independent Set Charges}
\label{subsection:GeneralizedCycleSymmetriesIndependentSetCharges}

We now prove Theorem~\ref{theorem:conservedcharges}~(ii). 
\begin{proof}[Proof of Theorem~\ref*{theorem:conservedcharges}~$(ii)$]
    By expanding, we have
    \begin{align}
        [\jkg{C_0}{G},\qkg{s,\mu}{G}] = \sum_{\substack{C\in\avg{C_0} \\ S \in \mu}}[h_C,h_S]. \label{equation:conservedchargecyclecommutator}
    \end{align}
    Assume $h_C$ and $h_S$ anticommute. There is at least one (and an odd number in general) vertex $\bs{s}_0$ in $S$ such that $h_{\bs{s}_0}$ and $h_{C}$ anticommute, as shown in Section~\ref{subsection:NeighboringRelations}, each such vertex initializes a deformation path, but may be tethered. If every such initializing vertex is tethered, then they can be uniquely paired, which contradicts the assumption that $h_S$ and $h_C$ anticommute. Thus, there is at least one untethered path $P=\bss_0\md\bsa_0\md\dots\md\bss_{\ell-1}\md\bsa_{\ell-1}$ with $\bss_0\prec_C\bsa_0$. This is shown in Fig.~\ref{figure:deformationsequence}. Let $\wt{C}=C{\oplus}P$ and $\wt{S}=S{\oplus}P$. We have shown in Section~\ref{subsection:NeighboringRelations} that $\wt{C}$ is in $\avg{C_0}$, and it follows from the proof of Theorem~\ref{theorem:conservedcharges}~(i) that $\wt{S}$ is in $\mu$.

    We have that $h_P$ anticommutes with both $h_S$ and $h_C$, since only $h_{\bs{s}_0}$ anticommutes with $h_C$ and only $h_{\bs{a}_{\ell-1}}$ and $h_S$ anticommute. Thus,
    \begin{align}
        \scomm{h_{\wt{S}}}{h_{\wt{C}}} &= \scomm{h_S}{h_C}\scomm{h_S}{h_P}\scomm{h_P}{h_C}\scomm{h_P}{h_P} \\
        &= -1,
    \end{align}
    by Eq.~(\ref{equation:symmetricdifference}) and the assumption that $h_S$ and $h_C$ anticommute. Then, we have
    \begin{align}
        &[h_C, h_S]+[h_{\widetilde{C}},h_{\widetilde{S}}] = 2\left(h_Ch_S+h_{\widetilde{C}}h_{\widetilde{S}}\right) \notag \\
        &\quad= 2h_{C{\setminus}P}\left(h_{C \cap P}h_{S \cap P}+h_{S \cap P}h_{C \cap P}\right)h_{S{\setminus}P} \\
        &\quad= 0,
    \end{align}
    where the last line follows since $S \cap P$ and $C \cap P$ are the coloring classes of a path of odd length. For a collection of terms in Eq.~(\ref{equation:conservedchargecyclecommutator}) related by a fixed set of untethered paths, we fix a path by which corresponding terms are paired. These terms cancel pairwise, completing the proof.
\end{proof}

\subsection{Generalized Cycle Symmetries}
\label{subsection:GeneralizedCycleSymmetries}

Finally, we prove Theorem~\ref{theorem:conservedcharges}~(iii). By expanding, we have
\begin{equation}
    [\jkg{C_0}{G},\jkg{C'_0}{G}] = \sum_{\substack{C\in\avg{C_0} \\ C'\in\avg{C'_0}}}[h_C,h_{C'}]. \label{equation:conservedcyclecommutator}
\end{equation}
Fix a pair $C$ and $C'$ such that $h_C$ and $h_{C'}$ anticommute. We label the vertices according to Eq.~(\ref{equation:evenholelabel}) as
\begin{align}
    C &= \bsb_0\md\bsa_0\md\bsb_1\md\bsa_1\dots\md\bsb_{k-1}\md\bsa_{k-1}\md\bsb_{0}, \\
    C' &= \bsd_0\md\bsc_0\md\bsd_1\md\bsc_1\dots\md\bsd_{k'-1}\md\bsc_{k'-1}\md\bsd_{0}.
\end{align}
Then $C = C_a \cup C_b$ and $C' = C'_c \cup C_d'$, where $C_a=\{\bsa_j\}_{j=0}^{k-1}$ and $C_b=\{\bsb_j\}_{j=0}^{k-1}$ are the coloring classes of $C$, and $C'_c=\{\bsc_j\}_{j=0}^{k'-1}$ and $C'_d=\{\bsd_j\}_{j=0}^{k'-1}$ are the coloring classes of $C'$. We shall identify a term corresponding to a pair of even holes $\wt{C}$ and $\wt{C}'$ whose contribution cancels the $(C,C')$ term in Eq.~(\ref{equation:conservedcyclecommutator}). We achieve this by deforming $C$ to $\wt{C}$ by a sequence of vertices in $C'$ such that there exists a corresponding reverse deformation from $C'$ to $\wt{C}'$ by vertices in $C$.
This deformation consists of an ordered sequence of entire deformation paths as defined in Section~\ref{section:ClawFreeGraphs}. As this proof is considerably more complicated than the proofs of Theorem~\ref{theorem:conservedcharges}~(i) and Theorem~\ref{theorem:conservedcharges}~(ii), we divide it into several subsections and motivate the proof with an illustrative example.

\subsubsection{Palindromic Path Example}
\label{subsubsection:PalindromicPathExample}

We now motive our proof with an example. We label $C$ and $C'$ such that $\bsc_0\prec_C\bsa_0$ and $\bsc_0$ is the untethered initializing vertex of a deformation path $P$ in $G[C_a{\oplus}C_c']$,
\begin{equation}
    P = \bsc_0\md\bsa_0\md\dots\md\bsc^{(j)}\md\bsa_j\md\dots\md\bsc^{(\ell-1)}\md\bsa_{\ell-1}.
\end{equation}
Since $h_C$ and $h_{C'}$ anticommute, at least one such vertex $\bsc_0$ is guaranteed to exist. We note two subtleties here. First, while we are guaranteed that
\begin{equation}
    \Gamma_C(\bsc^{(j)}) = \bsa_{j-1}\md\bsb_{j}\md\bsa_{j}\md\bsb_{j+1},
\end{equation}
for $j>0$, if $\ell>1$, we cannot assume that $\bsc^{(j)}=\bsc_j$ for $j>0$ (we take $\bsc^{(0)}=\bsc_0$). That is, we do not assume that $\bsc^{(j)}$ and $\bsc^{(j+1)}$ share a neighbor in $C_d'$ for any $j\in\{0,\dots,\ell-2\}$. The second subtlety is that, while $\bsc^{(j)}$ is not in $C$ for all $j\in\{0,\dots,\ell-1\}$, $\bsa_j$ may be in $C \cap C'$ for some $j\in\{0,\dots,\ell-2\}$. $\bsa_{\ell-1}$ cannot be in $C'$, since then $\bsa_{\ell-1}$ would have two neighbors in $C'_c$, which contradicts the assumption that $\bsa_{\ell-1}$ is the endpoint of $P$.

Up to this point, our description has been completely general. We now restrict to the special case in which $\bsa_j$ neighbors both neighbors to $\bsc^{(j)}$ in $C'_d$ for all ${j\in\{0,\dots,\ell-1\}}$. In this case, $\bsa_{\ell-1}\prec_{C'}\bsc^{(\ell-1)}$ by Lemma~\ref{lemma:vertexcyclerelations} and the assumption that $\bsc^{(\ell-1)}$ is the only neighbor to $\bsa_{\ell-1}$ in $C'_c$. By considering Fig.~\ref{figure:deformationsequence}, we see that $P$ is a deformation path for $C'$ as well, since we have assumed $\bsa_j$ has four neighbors in $C'$ for all $j<\ell-1$. Indeed, these two subtleties do not apply in this case. We label the vertices such that $\bsc^{(j)}=\bsc_j$ with
\begin{equation}
    \Gamma_{C'}(\bsa_j) = \bsd_j\md\bsc_j\md\bsd_{j+1}\md\bsc_{j+1},
\end{equation}
for $j<\ell-1$, and
\begin{equation}
    \Gamma_{C'}(\bsa_{\ell-1}) = \bsd_{\ell-1}\md\bsc_{\ell-1}\md\bsd_{\ell}.
\end{equation}
We have that $\bsa_j$ is not in $C'$ for all $j\in\{0,\dots,\ell-1\}$. We refer to such a path as \emph{palindromic}. Let $\wt{C}=C{\oplus}P\in\avg{C_0}$ and $\wt{C}'=C'{\oplus}P\in\avg{C'_0}$. Further, let $\wt{C}_a=C_a{\oplus}P$ and $\wt{C}'_c=C'_c{\oplus}P$. We have
\begin{align}
    &[h_C,h_{C'}]+[h_{\wt{C}},h_{\wt{C}'}]=2(h_Ch_{C'}+h_{\wt{C}}h_{\wt{C}'}) \notag \\
    &\quad= 2h_{C{\setminus}P}(h_{C \cap P}h_{C' \cap P}+h_{C' \cap P}h_{C \cap P})h_{C'{\setminus}P} \\
    &\quad= 0,
\end{align}
where we rearranged the factors $h_{C_a}$, $h_{\wt{C}_a}$, $h_{C'_c}$, and $h_{\wt{C}'_c}$ to the interiors of the respective products (recall our convention that $h_C=h_{C_a}h_{C_b}$) and used the fact that $P$ is a path in $G[C_a{\oplus}C_c']$. Therefore, Eq.~(\ref{equation:conservedcyclecommutator}) holds in this case. The difficulty in the general case arises if there exists a vertex $\bsu$ in $C'_d$ that neighbors $\bsc^{(j)}$ and not $\bsa_j$. In this case, multiple deformation paths will be required to find a solution, and we generalize the parts of the proof accordingly.

\subsubsection{Definitions and Proof Structure}
\label{subsubsection:DefinitionsProofStructure}

We now introduce some definitions concerning deformations of multiple paths and outline our proof strategy.

\begin{definition}[Fixed-pairing-type deformation]
    \label{definition:fixedpairingtypedeformation}
    A fixed-pairing-type deformation $\overrightarrow{\mco}$ of an even hole $C$ by vertices in $C'$ is a sequence $\left(P^{(r)}\right)_{r=0}^m$ of induced paths of odd length in $G[C \cup C']$. Each path $P^{(r)}=\bsj_0^{(r)}\md\bsk_0^{(r)}\md\dots\md\bsj_{\ell_r-1}^{(r)}\md\bsk_{\ell_r-1}^{(r)}$ is a component of $G[C_j{\oplus}C'_{\sigma(j)}]$ where $j\in\{a,b\}$ and $\sigma(j)\in\{c,d\}$ labels a unique coloring class of $C'$ associated to $C_j$ by $\overrightarrow{\mco}$. The vertices of $P^{(r)}$ are labeled such that $\bsj_{s-1}^{(r)}$ is in $C'_{\sigma(j)}{\setminus}C_{j}$ and $\bsk_{s-1}^{(r)}$ is in $C_j{\setminus}C'_{\sigma(j)}$ for all $s\in\{1,\dots,\ell_r\}$ and all $r\in\{0,\dots,m\}$.
    
    The deformation $\overrightarrow{\mco}$ is $(a,c)$ pairing if $\sigma(a)=c$ and $\sigma(b)=d$ for all $r$. Similarly, $\overrightarrow{\mco}$ is $(a,d)$ pairing if $\sigma(a)=d$ and $\sigma(b)=c$ for all $r$. The pairing type of $\overrightarrow{\mco}$ is thus specified by $\sigma$. We additionally describe a path component $P$ of $G[C_j{\oplus}C'_{\sigma(j)}]$ or vertex pair $\{\bsj,\bsk\}$ with $\bsj \in C'_{\sigma(j)}$ and $\bsk \in C_j$ as having the pairing type specified by $\sigma$. We take $P^{(r)}_{2s-1}=\bsj_0^{(r)}\md\dots\md\bsk_{s-1}^{(r)}$ and let
    \begin{equation}
        C^{(r,s)} = C{\oplus}\left(\bigoplus_{g=0}^{r-1}P^{(g)}\right){\oplus}P^{(r)}_{2s-1},
    \end{equation}
    with $C^{(r)}=C^{(r,\ell_r)}$ for all $r$ and $s$. By convention, we take $C^{(r,0)}=C^{(r-1)}$ and $C^{(-1)}=C$.
    
    The deformation $\overrightarrow{\mco}$ is such that $\bsj_{s-1}^{(r)}\prec_{C^{(r,s-1)}}\bsk_{s-1}^{(r)}$ for all $r$ and $s$. Thus, $C^{(r,s)}$ is in $\avg{C}$ and $P^{(r)}$ and $C^{(r-1)}$ satisfy the relationship shown in Fig.~\ref{figure:deformationsequence} for all $r$ and $s$.
    
    We let $\mco=\bigcup_{r=0}^m\{P^{(r)}\}$ denote the set of paths in $\overrightarrow{\mco}$ and we let $\partial\mco=\bigcup_{r=0}^mP^{(r)}$ denote the set of vertices involved in the deformation by $\overrightarrow{\mco}$.
\end{definition}

Our motivation for this definition comes from the palindromic path example. Deforming along a path $P$ in $C \cup C'$ as shown in Fig.~\ref{figure:deformationsequence} gives that the vertices in $P \cap C$ are in a fixed coloring class of $C$, regardless of whether the independent set $P{\setminus}C$ is a subset of a coloring class of $C'$. However, we require that $P{\setminus}C$ is a subset of a coloring class of $C'$ to apply the deformation in reverse.

We now define several important vertex subsets relative to a fixed-pairing-type deformation.
\begin{definition}(Anticommuting and dependent subsets)
    \label{definition:anticommutingdependentsubsets}
    Let $\overrightarrow{\mco}$ be a fixed-pairing-type deformation of $C$ by vertices in $C'$ with labeling as in Def.~\ref{definition:fixedpairingtypedeformation} and let
    \begin{equation}
        W = \{\bsj \in C'{\setminus}C \mid \scomm{h_{\bsj}}{h_C}=-1\},
    \end{equation}
    and
    \begin{equation}
        U^{(r,s)} = \Gamma_{C'}(\bsj_{s-1}^{(r)}){\setminus}\Gamma_{C'}[\bsk_{s-1}^{(r)}],
    \end{equation}
    for $r\in\{0,\dots,m\}$ and $s\in\{1,\dots,\ell_r\}$. The set $W$ consists of vertices in $C'$ whose corresponding term anticommutes with $h_C$. The set $U^{(r,s)}$ consists of vertices in $C'{\setminus}\{\bsk_{s-1}^{(r)}\}$ dependent on the deformation by $\bsj_{s-1}^{(r)}\prec_{C^{(r,s-1)}}\bsj_{s-1}^{(r)}$. Further, let
    \begin{equation}
        U = \bigcup_{r=0}^m\bigcup_{s=1}^{\ell_r}U^{(r,s)},
    \end{equation}
    be the union of all such dependent subsets.
\end{definition}

Note that since $h_C$ and $h_{C'}$ anticommute, $\abs{W}$ is odd. Additionally, note that $|U^{(r,s)}|$ is at most one by Corollary~\ref{corollary:vertexcycleneighboring} with the assumption that $\bsj_{s-1}^{(r)}$ is in $C'$. If $\bsk_{s-1}^{(r)}$ is also in $C'$, then the only element of $U^{(r,s)}$ is the additional neighbor to $\bsj_{s-1}^{(r)}$ in $C'$.

With these definitions, we describe our proof strategy. Let $\overleftarrow{\mco}$ denote the fixed-pairing-type deformation related to $\overrightarrow{\mco}$ by reversing the order of the paths in $\overrightarrow{\mco}$. We first give a sufficient condition for $\overleftarrow{\mco}$ to correspond to a fixed-pairing-type deformation of $C'$ by vertices in $C$. We similarly refer to such a deformation as palindromic. We next give a search process to find a palindromic deformation. This process considers the full set of vertices $W$, defined in Def.~\ref{definition:anticommutingdependentsubsets}, as potential initializing vertices for our desired deformation. For each such initializing vertex $\bsj$ in $W$, we produce a unique fixed-pairing-type deformation $\overrightarrow{\mco}_{\bsj}$. If $\overrightarrow{\mco}_{\bsj}$ is not palindromic, then it is \emph{obstructed} by another vertex $\bsj'$ in $W$. This allows us to define a directed graph, called the \emph{obstruction graph}.
\begin{definition}[Obstruction graph and coloring classes]
    \label{definition:obstructiongraphcoloringclasses}
    The \emph{obstruction graph} $\mcd=(W,D)$ is a directed graph with vertex set $W$ as defined in Def.~\ref{definition:anticommutingdependentsubsets} and with $(\bsj\rightarrow\bsj')$ in $D$ if $\bsj'$ obstructs $\overrightarrow{\mco}_{\bsj}$. The graph $\mcd$ has odd order and is bipartite with the coloring class of $\bsj$ in $W$ given by the pairing type of $\overrightarrow{\mco}_{\bsj}$. The \emph{source set} of $\mcd$ is the coloring class with larger size, and the \emph{obstruction set} is that with smaller size.
\end{definition}

If $\mcd$ is such that no $\overrightarrow{\mco}_{\bsj}$ is palindromic for any $\bsj$ in $W$, then we prescribe a corrective \emph{rerouting} of $\mcd$ to $\mcd^{(0)}$. In this case, we consider the obstruction graph $\mcd^{(0)}$ defined on a vertex subset $W'$ of $W$ with each $\overrightarrow{\mco}_{\bsj}$ replaced with $\overrightarrow{\mco}'_{\bsj}$ for all $\bsj \in W'$. We apply this procedure recursively by updating $\mcd^{(i)}$ to $\mcd^{(i+1)}$ and since the order of the obstruction graph is strictly decreasing, this search process is guaranteed to find a palindromic (i.e. unobstructed) deformation.

Our final step of the proof is to show that the term indexed by $(\wt{C},\wt{C}')$ cancels the $(C,C')$ term. These claims follow from additional properties of the palindromic deformation $\overrightarrow{\mco}$. Before we proceed with the proof, we list several important properties of fixed-pairing-type deformations.

\subsubsection{Fixed-Pairing-Type Deformations}
\label{subsubsection:FixedPairingTypeDeformations}

A number of important properties follow from the assumption that a deformation $\overrightarrow{\mco}$ has fixed-pairing-type. These generally concern the intersections between path components $P$ of $\mco$ and commutation relations between associated operators and the Hamiltonian terms. For the remainder of this section, we assume that $\overrightarrow{\mco}$ is a fixed-pairing-type deformation of $C$ by vertices in $C'$ with labeling as in Def.~\ref{definition:fixedpairingtypedeformation}. Without loss of generality, we assume that $\overrightarrow{\mco}$ is $(a,c)$ pairing.
\begin{lemma}
    \label{lemma:vertexonecoloringclass}
    Let $\bsv$ be a vertex in exactly one coloring class of $C_a,$ $C_b$, $C'_c$, and $C'_d$, then $\bsv$ is contained in at most one path component of $\overrightarrow{\mco}$.
\end{lemma}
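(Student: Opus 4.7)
The plan is to argue directly from the definition of a fixed-pairing-type deformation. The key observation is that if $\overrightarrow{\mco}$ is $(a,c)$ pairing, then every path $P^{(r)}$ in the sequence is a component either of $G[C_a \oplus C'_c]$ (when the index $j$ for that path is $a$) or of $G[C_b \oplus C'_d]$ (when $j=b$); the other two symmetric-difference graphs $G[C_a \oplus C'_d]$ and $G[C_b \oplus C'_c]$ contribute no paths to $\mco$. The symmetric case of $(a,d)$ pairing swaps the roles of $C'_c$ and $C'_d$ throughout, so without loss of generality I would assume $(a,c)$ pairing from the outset.

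The proof then reduces to a short case analysis on which of the four coloring classes contains $\bsv$. If $\bsv \in C_a$ and $\bsv$ lies in none of $C_b$, $C'_c$, $C'_d$, then $\bsv \in C_a \oplus C'_c$ while $\bsv \notin C_b \oplus C'_d$; consequently the only paths of $\overrightarrow{\mco}$ that can contain $\bsv$ are components of $G[C_a \oplus C'_c]$, and since the connected components of any graph partition its vertex set, exactly one such component contains $\bsv$. The three remaining cases --- $\bsv$ exclusively in $C_b$, in $C'_c$, or in $C'_d$ --- follow the same template: in each case, the exclusivity hypothesis pins $\bsv$ to exactly one of the two relevant symmetric-difference graphs, and component-disjointness there finishes the argument.

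There is no substantive obstacle; the lemma is essentially a bookkeeping consequence of the constraint that the pairing map $\sigma$ in Def.~\ref{definition:fixedpairingtypedeformation} is fixed across all $r$, which is what rules out $\bsv$ belonging simultaneously to a path in $G[C_a \oplus C'_c]$ and a path in $G[C_a \oplus C'_d]$, say. The only mild care needed is to recognize that the statement allows $\bsv$ to lie in two coloring classes (e.g.\ in $C_a \cap C'_c$), in which case the conclusion can fail --- hence the exclusivity hypothesis is genuinely used. This lemma will be invoked in the subsequent analysis to control how distinct paths of $\overrightarrow{\mco}$ can interact when pieced together into the full deformation from $C$ to $\wt{C}$.
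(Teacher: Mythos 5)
Your case analysis correctly establishes that $\bsv$ can lie in at most one path of the \emph{set} $\mco$: the exclusivity hypothesis pins $\bsv$ to exactly one of $G[C_a{\oplus}C'_c]$ or $G[C_b{\oplus}C'_d]$, and distinct connected components of that graph are disjoint. But the lemma is a statement about the \emph{sequence} $\overrightarrow{\mco}$, and at this point in the development nothing forbids the same induced path from occurring at two different positions $r<g$ of the sequence --- in which case $\bsv$ would be contained in two path components of $\overrightarrow{\mco}$ even though it lies in only one element of $\mco$. This is not a pedantic distinction: the paper uses precisely this lemma (via Corollary~\ref{corollary:pathuniqueendpoints}) to \emph{derive} Corollary~\ref{corollary:pathcomponentsdistinct}, i.e.\ that the path components of $\overrightarrow{\mco}$ are distinct subsets, so you cannot assume non-repetition when proving the lemma without circularity.

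The missing half of the argument runs as follows. Suppose $\bsv\in C_a{\setminus}(C_b\cup C')$ lies in $P^{(r)}$ and $P^{(g)}=P^{(r)}$ with $g>r$ minimal. Since $\bsv\in C_a{\setminus}C'_c$, it is a $\bsk$-type vertex, $\bsv=\bsk_{s-1}^{(r)}$, so the deformation at step $(r,s)$ removes $\bsv$ from the hole and $\bsv\notin C^{(r)}$. For $\bsv$ to re-enter any intermediate hole $C^{(q)}$ with $r<q<g$ it would have to appear as some $\bsj_{t-1}^{(q)}$, but the definition of a fixed-pairing-type deformation forces every $\bsj$-type vertex to lie in $C'$, whereas $\bsv\notin C'$. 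Hence $\bsv\notin C^{(g,s-1)}$, which contradicts the requirement $\bsj_{s-1}^{(g)}\prec_{C^{(g,s-1)}}\bsk_{s-1}^{(g)}$ with $\bsk_{s-1}^{(g)}=\bsv$, since a clone must be present in the hole being deformed. (A symmetric argument handles $\bsv\in C'{\setminus}C$, where $\bsv$ is $\bsj$-type, is inserted at step $r$, and cannot be removed again because removed vertices are $\bsk$-type and hence lie in $C$.) As a minor aside, your closing remark that the conclusion can fail for $\bsv\in C_a\cap C'_c$ picks the wrong example --- such a vertex lies in neither symmetric difference and so in no path at all; the two-component case arises for $\bsv\in C_a\cap C'_d$ or $C_b\cap C'_c$, as in Corollary~\ref{corollary:vertextwopathcomponents}.
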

\begin{proof}
    Without loss of generality, assume that $\bsv$ is in $C_a{\setminus}(C_b \cup C')$. Clearly, $\bsv$ is not in $P$ if $P\in\mco$ is a component of $G[C_b{\oplus}C'_d]$. By construction, all paths $P$ in $\mco$ that are components of $G[C_a{\oplus}C'_c]$ are disjoint (any repeated paths are removed in the union defining $\mco$), so $\bsv$ is contained in at most one such component $P$.
    
    Thus, for $\bsv$ to be in multiple elements of $\overrightarrow{\mco}$, then $\bsv$ is in $P^{(r)}$ and $P^{(r)}$ is repeated in $\overrightarrow{\mco}$. Assume that this is the case. Let $g>r$ be the smallest index such that $P^{(g)}=P^{(r)}$ for $P^{(g)}\in\overrightarrow{\mco}$. Since $\bsv$ is in $C_a{\setminus}C'_c$, then $\bsv=\bsk_{s-1}^{(r)}$ for some $s$. This gives that $\bsv$ is not in $C^{(r)}$, and thus $\bsv$ is not in $C^{(g-1)}$ by our assumption that $g$ is minimal. We then require that $\bsk=\bsj^{(g)}_{t-1}$ for some $t$ in order for $\bsj_{s-1}^{(g)}\prec_{C^{(g,s-1)}}\bsk_{s-1}^{(g)}$ for all $s$ by our assumption on $\overrightarrow{\mco}$. However, this contradicts the assumption on $\overrightarrow{\mco}$ that $\bsj_{t-1}^{(g)}$ is in $C'$, since $\bsv$ is not in $C'$. Therefore, $\bsv$ is contained in at most one path component of $\overrightarrow{\mco}$, completing the proof.
\end{proof}

By construction, the coloring classes of $C$ are disjoint and similarly for $C'$. Thus, in order to apply Lemma~\ref{lemma:vertexonecoloringclass}, it is sufficient to show that the vertex $\bsv$ is in $C{\oplus}C'$. We have the following corollaries.
\begin{corollary}
    \label{corollary:pathuniqueendpoints}
    Let $P^{(r)}$ be a path in $\overrightarrow{\mco}$. The endpoints $\bsj_0^{(r)}$ and $\bsk_{\ell_r-1}^{(r)}$ are not contained in any other path in $\overrightarrow{\mco}$.
\end{corollary}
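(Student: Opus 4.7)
My plan is to apply Lemma~\ref{lemma:vertexonecoloringclass}, which reduces the corollary to showing that each endpoint of $P^{(r)}$ lies in exactly one coloring class among $\{C_a, C_b, C'_c, C'_d\}$. By the labeling in Definition~\ref{definition:fixedpairingtypedeformation}, $\bsj_0^{(r)} \in C'_{\sigma(j)} \setminus C_j$, so together with the disjointness of the coloring classes of $C'$ the only remaining memberships to rule out are $\bsj_0^{(r)} \in C_{1-j}$; symmetrically, $\bsk_{\ell_r-1}^{(r)} \in C_j \setminus C'_{\sigma(j)}$ leaves only $\bsk_{\ell_r-1}^{(r)} \in C'_{1-\sigma(j)}$ to exclude.

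For $\bsj_0^{(r)}$, I would use a local degree count at the endpoint. Because $P^{(r)}$ is an odd-length path component of $G[C_j \oplus C'_{\sigma(j)}]$ with $\bsj_0^{(r)}$ as an endpoint, $\bsj_0^{(r)}$ has exactly one neighbor in the vertex set $C_j \oplus C'_{\sigma(j)}$; and since $C'_{\sigma(j)}$ is an independent set containing $\bsj_0^{(r)}$, this unique neighbor must lie in $C_j \setminus C'_{\sigma(j)}$. Supposing for contradiction that $\bsj_0^{(r)} \in C_{1-j}$ would put $\bsj_0^{(r)}$ on the even hole $C$ with two neighbors in $C$, both in $C_j$. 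Consistency with the preceding sentence would then force one of these $C$-neighbors into $C_j \cap C'_{\sigma(j)}$, but such a vertex would lie in the independent set $C'_{\sigma(j)}$ alongside $\bsj_0^{(r)}$, a contradiction.

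An entirely parallel argument will handle $\bsk_{\ell_r-1}^{(r)}$: since $C_j$ is independent and contains $\bsk_{\ell_r-1}^{(r)}$, its single endpoint-neighbor in $C_j \oplus C'_{\sigma(j)}$ must lie in $C'_{\sigma(j)} \setminus C_j$, and assuming $\bsk_{\ell_r-1}^{(r)} \in C'_{1-\sigma(j)}$ would place both of its $C'$-neighbors in $C'_{\sigma(j)}$, forcing one of them into $C'_{\sigma(j)} \cap C_j$ and contradicting the independence of $C_j$. With both endpoints meeting the hypothesis of Lemma~\ref{lemma:vertexonecoloringclass}, each is contained in at most one path of $\overrightarrow{\mco}$; since each already appears in $P^{(r)}$ by assumption, neither can appear in any other path.

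I do not anticipate a serious obstacle here, as the argument is essentially a single neighbor count played against the independence of the coloring classes of $C$ and $C'$. The only point requiring care is to justify that an endpoint of a path component in the bipartite subgraph $G[C_j \oplus C'_{\sigma(j)}]$ genuinely has exactly one in-subgraph neighbor, which follows from the assumption that $P^{(r)}$ has odd length $2\ell_r - 1 \ge 1$ (so the endpoint is not isolated) combined with Lemma~\ref{lemma:clawfreesymmetricdifference}, which bounds the in-subgraph degree by two and reduces the component to a path.
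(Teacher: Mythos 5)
Your proposal is correct and follows essentially the same route as the paper: reduce to Lemma~\ref{lemma:vertexonecoloringclass} by showing each endpoint lies in exactly one of the four coloring classes, using the endpoint's unique neighbor in the symmetric-difference component together with the independence of $C'_{\sigma(j)}$ (resp.\ $C_j$) to exclude membership in the other coloring class of $C$ (resp.\ $C'$). The paper phrases the exclusion slightly differently (an extra neighbor in $C_a$ would itself have to lie on $P^{(r)}$, contradicting the endpoint property), but the counting argument and its ingredients are the same.
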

\begin{proof}
    Without loss of generality, assume that $\bsj_{0}^{(r)}$ is in $C'_c{\setminus}C_a$. By construction, $\bsj_{0}^{(r)}$ has precisely one neighbor in $C_a$. If $\bsj_{0}^{(r)}$ has an additional neighbor $\bsv$ in $C_a{\setminus}\{\bsk_{0}^{(r)}\}$, then $\bsv$ is not in $C'_c$, since it is neighboring to $\bsj_0^{(r)}$, but then $\bsv$ is also in $P^{(r)}$. This contradicts the assumption that $\bsj_0^{(r)}$ is an endpoint of $P^{(r)}$. Thus, $\bsj_0^{(r)}$ is not in $C$, since every element of $C{\setminus}C_a$ has two neighbors in $C_a$. By Lemma~\ref{lemma:vertexonecoloringclass}, $\bsj_0^{(r)}$ is in exactly one path component of $\overrightarrow{\mco}$. A similar argument holds for $\bsk_{\ell_r-1}^{(r)}$, completing the proof.
\end{proof}
We immediately obtain the following corollary.
\begin{corollary}
    \label{corollary:pathcomponentsdistinct}
    The path components of $\overrightarrow{\mco}$ are distinct subsets.
\end{corollary}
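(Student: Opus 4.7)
The plan is to derive this corollary directly from Corollary~\ref{corollary:pathuniqueendpoints} by a simple contradiction. Suppose for contradiction that two path components $P^{(r)}$ and $P^{(r')}$ of $\overrightarrow{\mco}$ with $r \neq r'$ are equal as subsets of vertices. Then they share the same endpoints (as subsets of $G[C \cup C']$, the endpoints of a path component are determined by its vertex set, since the induced subgraph structure is determined by $G$). In particular, the endpoint $\bsj_0^{(r)}$ of $P^{(r)}$ is also an endpoint of $P^{(r')}$, so $\bsj_0^{(r)}$ is contained in a second path component of $\overrightarrow{\mco}$. This contradicts Corollary~\ref{corollary:pathuniqueendpoints}.

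The only subtle point is to argue that equality of $P^{(r)}$ and $P^{(r')}$ as subsets forces shared endpoints. This holds because each $P^{(r)}$ is identified in Def.~\ref{definition:fixedpairingtypedeformation} as a connected component of the bipartite graph $G[C_j{\oplus}C'_{\sigma(j)}]$, whose path structure (and therefore set of degree-one vertices) is fully determined by the underlying vertex set together with the ambient graph $G$. Since the labeling of endpoints in $\overrightarrow{\mco}$ is determined by which coloring class the endpoint belongs to (the $\bsj$-endpoint lies in $C'_{\sigma(j)}{\setminus}C_j$ and the $\bsk$-endpoint in $C_j{\setminus}C'_{\sigma(j)}$), equal subsets yield the same pair of endpoints.

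The main (and essentially only) obstacle is therefore just to ensure the endpoint-identification is unambiguous so that Corollary~\ref{corollary:pathuniqueendpoints} applies directly. Given how Def.~\ref{definition:fixedpairingtypedeformation} is set up, this is immediate, so the proof is a one-line contradiction and no further machinery is needed. I would present it as two sentences: assume equality, extract a shared endpoint, invoke Corollary~\ref{corollary:pathuniqueendpoints}.
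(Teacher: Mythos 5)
Your proof is correct and matches the paper's intended reasoning; the paper simply states ``We immediately obtain the following corollary'' without writing the argument out, and what you have written is that implicit argument. One small simplification: you need not argue that equal vertex subsets yield the same pair of endpoints. The contradiction is already available one step earlier: if $P^{(r)} = P^{(r')}$ with $r \neq r'$, then trivially $\bsj_0^{(r)} \in P^{(r)} = P^{(r')}$, so $\bsj_0^{(r)}$ is contained in a path of $\overrightarrow{\mco}$ with a different index, directly contradicting Corollary~\ref{corollary:pathuniqueendpoints} (which, via Lemma~\ref{lemma:vertexonecoloringclass}, really asserts the endpoint lies in exactly one \emph{element} of the sequence $\overrightarrow{\mco}$, not merely exactly one distinct subset). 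So your second paragraph, while not wrong, is unnecessary machinery.
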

We can therefore relax our distinction between $\mco$ and $\overrightarrow{\mco}$ in our notation. Note that path components of $\overrightarrow{\mco}$ may intersect, but any vertex in the intersection of such components must be in $C \cap C'$. This gives the following corollary.
\begin{corollary}
    \label{corollary:vertextwopathcomponents}
    Every vertex $\bsv$ is contained in at most two path components of $\overrightarrow{\mco}$. If $\bsv$ is in $P^{(r)} \cap P^{(g)}$ for $P^{(r)},P^{(g)}\in\mco$ with $g$ and $r$ distinct, then either $\bsv=\bsj_{s-1}^{(r)}=\bsk_{t-1}^{(g)}$ for some $s$ and $t$ or $\bsv=\bsk_{s-1}^{(r)}=\bsj_{t-1}^{(g)}$ for some $s$ and $t$.
\end{corollary}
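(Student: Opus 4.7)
The plan is to combine Lemma~\ref{lemma:vertexonecoloringclass} with a short case analysis on which coloring classes of $C_a, C_b, C'_c, C'_d$ contain $\bsv$. Since the two coloring classes of $C$ are disjoint, as are those of $C'$, the vertex $\bsv$ lies in at most one of $\{C_a,C_b\}$ and at most one of $\{C'_c,C'_d\}$. If $\bsv$ is in at most one of these four classes, Lemma~\ref{lemma:vertexonecoloringclass} applies directly and puts $\bsv$ into at most one path of $\overrightarrow{\mco}$. Also, any path of $\overrightarrow{\mco}$ is a component of $G[C_a \oplus C'_c]$ or $G[C_b \oplus C'_d]$ and therefore sits inside $C\cup C'$, so a vertex outside $C\cup C'$ is in no path at all. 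The only remaining regime is $\bsv \in C\cap C'$, and I would split that into four sub-cases.

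Assuming without loss of generality that $\overrightarrow{\mco}$ is $(a,c)$-pairing, the sub-cases $\bsv \in C_a\cap C'_c$ and $\bsv\in C_b\cap C'_d$ are immediate: in each, $\bsv$ lies in the intersection for one pairing and is absent from both coloring classes of the other, so $\bsv\notin C_a\oplus C'_c$ and $\bsv\notin C_b\oplus C'_d$ simultaneously, meaning $\bsv$ is in no path of $\mco$. That leaves the two genuinely interesting sub-cases $\bsv \in C_a\cap C'_d$ and $\bsv \in C_b\cap C'_c$. Treating the first (the second is symmetric), we have $\bsv \in C_a\setminus C'_c$ and $\bsv \in C'_d\setminus C_b$, so $\bsv$ is eligible to appear in at most one $(a,c)$-pairing path and at most one $(b,d)$-pairing path.

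To upgrade ``at most one per pairing type'' into a rigorous bound, I would invoke Corollary~\ref{corollary:pathcomponentsdistinct}: distinct elements of $\mco$ are distinct as subsets, and paths of a single pairing type are connected components of the fixed bipartite graph $G[C_a\oplus C'_c]$ or $G[C_b\oplus C'_d]$, so any two of them are either equal or vertex-disjoint. Hence $\bsv$ belongs to at most one path of each pairing type and therefore to at most two paths in total. Identification of $\bsv$ as a $\bsj$- or $\bsk$-labelled vertex is then read off the convention of Def.~\ref{definition:fixedpairingtypedeformation}: in the $(a,c)$-pairing path, $\bsv\in C_a\setminus C'_c$ forces $\bsv=\bsk_{s-1}^{(r)}$, while in the $(b,d)$-pairing path, $\bsv\in C'_d\setminus C_b$ forces $\bsv=\bsj_{t-1}^{(g)}$. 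This is the first alternative of the corollary; the sub-case $\bsv\in C_b\cap C'_c$ yields the second alternative by the symmetric argument.

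I don't expect a real obstacle here: the statement is essentially an organizational consequence of Lemma~\ref{lemma:vertexonecoloringclass} and Corollary~\ref{corollary:pathcomponentsdistinct}. The only thing to be careful about is keeping track of which symbol ($\bsj$ versus $\bsk$) corresponds to which side of the symmetric difference in each pairing type, which is fully determined by checking Def.~\ref{definition:fixedpairingtypedeformation} against the sub-case at hand.
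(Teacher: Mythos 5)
Your proposal is correct and follows essentially the same route as the paper: both arguments reduce to the observation that two distinct paths of the same pairing type are distinct connected components of a fixed graph $G[C_a{\oplus}C'_c]$ or $G[C_b{\oplus}C'_d]$ (hence vertex-disjoint, via Corollary~\ref{corollary:pathcomponentsdistinct}), so a shared vertex forces the two paths to have opposite pairing types, which pins $\bsv$ to $C_a \cap C'_d$ or $C_b \cap C'_c$ and determines the $\bsj$/$\bsk$ labels exactly as you describe. The only nit is that your $C_a\cap C'_d$ sub-case yields the second listed alternative rather than the first, but since the two alternatives differ only by swapping the roles of $r$ and $g$ this is immaterial.
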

\begin{proof}
    Assume that $\bsv$ is in $P^{(r)} \cap P^{(g)} \cap P^{(q)}$ for $P^{(r)},P^{(g)},P^{(q)}\in\mco$ with $r$, $g$, and $q$ distinct. At least two of $\{P^{(r)},P^{(g)},P^{(q)}\}$ are components of either $G[C_a{\oplus}C'_c]$ or $G[C_b{\oplus}C'_d]$ by our assumption that $\overrightarrow{\mco}$ is $(a,c)$ pairing. Without loss of generality, assume that $P^{(r)}$ and $P^{(g)}$ are both components of $G[C_a{\oplus}C'_c]$. However, then $P^{(r)}$ and $P^{(g)}$ are distinct by Corollary~\ref{corollary:pathcomponentsdistinct}, and so their intersection is empty. Therefore, every vertex $\bsv$ is contained in at most two path components of $\overrightarrow{\mco}$.
    
    Now assume that $\bsv$ is in $P^{(r)} \cap P^{(g)}$ for $P^{(r)},P^{(g)}\in\mco$ with $g>r$. By the previous argument, assume without loss of generality that $P^{(r)}$ is a component of $G[C_a{\oplus}C'_c]$ and $P^{(g)}$ is a component of $G[C_b{\oplus}C'_d]$. Since the coloring classes of $C$ and $C'$ are disjoint, then $\bsv$ is in either $C_a \cap C'_d$ or $C_b \cap C'_c$ and $\bsv$ is contained in exactly two coloring classes of $C_a$, $C_b$, $C'_c$, and $C'_d$. If $\bsv$ is in $C_a \cap C'_d$, then $\bsv=\bsk_{s-1}^{(r)}$ for some $s$, since $\bsv$ is in $C_a{\setminus}C'_c$, and $\bsv=\bsj_{t-1}^{(g)}$ for some $t$ since $\bsv$ is in $C'_d{\setminus}C_b$. If $\bsv$ is in $C_b \cap C'_c$, then $\bsv=\bsj_{s-1}^{(r)}$ for some $s$, since $\bsv$ is in $C'_c{\setminus}C_a$, and $\bsv=\bsk_{t-1}^{(g)}$ for some $t$, since $\bsv$ is in $C_b{\setminus}C'_d$. The other cases follow similarly. This completes the proof.
\end{proof}
This corollary shows that a vertex can be introduced and removed at most once in a fixed-pairing-type deformation $\overrightarrow{\mco}$.

We now consider relations involving the vertex subsets of Def.~\ref{definition:anticommutingdependentsubsets}
\begin{corollary}
    \label{corollary:vertexonepathcomponent}
    Every vertex $\bsw$ in $W$ is contained in at most one path component of $\overrightarrow{\mco}$. Further, if a vertex $\bsu$ is in $U^{(r,s)}{\setminus}P^{(r)}$ for some $r$ and $s$, then there is a mutual neighbor $\bsv$ in $C^{(r-1)}$ to $\bsu$, $\bsj_{s-1}^{(r)}$, and $\bsk_{s-1}^{(r)}$. The vertices $\bsu$, $\bsj_{s-1}^{(r)}$, and $\bsv$ are in at most one path component of $\overrightarrow{\mco}$.
\end{corollary}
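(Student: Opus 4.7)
The plan is to treat the two assertions in sequence, with the first being an immediate consequence of Lemma~\ref{lemma:vertexonecoloringclass} and the second built from Lemma~\ref{lemma:evenholecloneneighbor} combined with the induced-path structure of $P^{(r)}$. For the first claim, I would observe that any $\bsw\in W$ lies in $C'\setminus C$ by Def.~\ref{definition:anticommutingdependentsubsets}, so $\bsw$ lies in exactly one of the two disjoint coloring classes $\{C'_c, C'_d\}$ and in neither of $\{C_a, C_b\}$. Thus $\bsw$ belongs to exactly one coloring class among $\{C_a, C_b, C'_c, C'_d\}$, and Lemma~\ref{lemma:vertexonecoloringclass} immediately yields that $\bsw$ sits in at most one path component of $\overrightarrow{\mco}$.

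For the second claim, write $\bsj=\bsj_{s-1}^{(r)}$, $\bsk=\bsk_{s-1}^{(r)}$, and $C^{-}=C^{(r,s-1)}$, and fix $\bsu\in U^{(r,s)}\setminus P^{(r)}$. By the definition of $U^{(r,s)}$, $\bsu\in C'$ with $\bsu$ adjacent to $\bsj$ but neither equal nor adjacent to $\bsk$. Applying Lemma~\ref{lemma:evenholecloneneighbor} to the deformation $\bsj\prec_{C^{-}}\bsk$ with external vertex $\bsu$, I would run through the four enumerated cases; in each, $\bsu$ is adjacent to at least one vertex $\bsv\in C^{-}$ that is itself adjacent to both $\bsj$ and $\bsk$. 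To upgrade $\bsv\in C^{-}$ to $\bsv\in C^{(r-1)}$, use $C^{-}=C^{(r-1)}\oplus P^{(r)}_{2s-3}$ and show $\bsv\notin P^{(r)}_{2s-3}$: the $\bsk$-vertices of $P^{(r)}$ all lie in the same coloring class of $C$ (an independent set) so are pairwise non-adjacent, ruling out $\bsv=\bsk_{t-1}^{(r)}$ for $t<s$; and since $P^{(r)}$ is induced, the only neighbors of $\bsk$ inside $P^{(r)}$ are $\bsj_{s-1}^{(r)}$ and $\bsj_{s}^{(r)}$ (when defined), neither of which lies in $P^{(r)}_{2s-3}$, ruling out $\bsv=\bsj_{t-1}^{(r)}$ for $t\le s-1$. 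Hence $\bsv\in C^{(r-1)}$.

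The last clause---that $\bsu$, $\bsj$, and $\bsv$ each sit in at most one path component of $\overrightarrow{\mco}$---again reduces via Lemma~\ref{lemma:vertexonecoloringclass} to showing that each of these vertices lies in exactly one of the four coloring classes $\{C_a, C_b, C'_c, C'_d\}$. For $\bsj$, one has $\bsj\in C'_c\setminus C_a$ by the pairing type, and $\bsj\notin C'_d$ by the disjointness of the $C'$-classes; the hypothesis $\bsu\in U^{(r,s)}\setminus P^{(r)}$ provides the extra neighbor needed to exclude $\bsj\in C_b$ via Corollary~\ref{corollary:vertexcycleanother} applied to the hole $C$, since the presence of $\bsu$ in $C'$ adjacent to $\bsj$ but not $\bsk$ is incompatible with $\bsj$ being a $C_b$-vertex whose $C_a$-neighbors are precisely $\bsk$ and the $C_a$-neighbor forced by the case analysis of Lemma~\ref{lemma:evenholecloneneighbor}. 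Analogous case analyses, relying on the claw-free restrictions in Table~\ref{table:vertexcyclerelations} together with Corollaries~\ref{corollary:vertexcycleneighboring} and~\ref{corollary:vertexcycleanother}, handle $\bsu$ and $\bsv$.

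The main obstacle I anticipate is precisely this final step: each of the four cases of Lemma~\ref{lemma:evenholecloneneighbor} must be traced through to verify that membership in a second coloring class is inconsistent with the hypothesis $\bsu\notin P^{(r)}$. The structural reason this must work is that $\bsu\in U^{(r,s)}\setminus P^{(r)}$ supplies a genuine ``external'' witness outside the deformation, and the tight claw-free/triangle-free interplay of Table~\ref{table:vertexcyclerelations} leaves no room for the vertices $\bsu,\bsj,\bsv$ to be simultaneously assigned to two coloring classes without producing an induced claw.
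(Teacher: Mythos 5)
Your overall strategy matches the paper's: reduce both assertions to Lemma~\ref{lemma:vertexonecoloringclass} by showing that each of $\bsw$, $\bsu$, $\bsj_{s-1}^{(r)}$, and $\bsv$ lies in exactly one of the four coloring classes, with $\bsv$ supplied by Lemma~\ref{lemma:evenholecloneneighbor}. The first assertion is handled exactly as in the paper, and your argument that $\bsv$ survives into $C^{(r-1)}$ (via $C^{(r,s-1)}=C^{(r-1)}\oplus P^{(r)}_{2s-3}$ and the induced-path structure of $P^{(r)}$) is actually spelled out more carefully than in the paper, which essentially takes this for granted.

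The gap is in the step you yourself flag as the obstacle: assigning coloring classes to $\bsu$, $\bsj_{s-1}^{(r)}$, and $\bsv$. You try to exclude $\bsj_{s-1}^{(r)}\in C_b$ first, before knowing anything about $\bsv$'s class, by invoking Corollary~\ref{corollary:vertexcycleanother}; that does not go through as stated, since the corollary concerns an additional neighbor lying \emph{on the same hole or path} as the three consecutive neighbors, whereas $\bsu$ need not lie on $C$ at all. The deductions are short once done in the right order. Assuming $(a,c)$ pairing with $P^{(r)}\subseteq C_a{\oplus}C'_c$: first, $\bsu\in C'_d$, since it is a $C'$-neighbor of $\bsj_{s-1}^{(r)}\in C'_c$ and $C'_c$ is independent. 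Second, $\bsv$ neighbors both $\bsj_{s-1}^{(r)}\in C'_c$ and $\bsu\in C'_d$, which are themselves adjacent, so $\bsv\notin C'$ --- otherwise $\{\bsv,\bsj_{s-1}^{(r)},\bsu\}$ induces a triangle in the hole $C'$; since $\bsv\in C\cup C'$ and neighbors $\bsk_{s-1}^{(r)}\in C_a$, this forces $\bsv\in C_b{\setminus}C'$. Only now do the remaining memberships fall out: $\bsj_{s-1}^{(r)}$ neighbors $\bsv\in C_b$ and $\bsk_{s-1}^{(r)}\in C_a$, so $\bsj_{s-1}^{(r)}\in C'_c{\setminus}C$; and $\bsu$ neighbors $\bsv\in C_b$, so $\bsu\notin C_b$, while $\bsu\in C_a$ would place $\bsu$ in $C_a{\oplus}C'_c$ adjacent to $\bsj_{s-1}^{(r)}$ and hence in $P^{(r)}$, contradicting your hypothesis; so $\bsu\in C'_d{\setminus}C$. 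The triangle-freeness of $C'$ applied to $\{\bsv,\bsj_{s-1}^{(r)},\bsu\}$ is the missing anchor; without it the ``analogous case analyses'' you defer to for $\bsu$ and $\bsv$ have nothing to start from.
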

\begin{proof}
    Clearly, $\bsw$ is not in $C$ if $\bsw$ is in $W$, so $\bsw$ is contained in at most one path component of $\overrightarrow{\mco}$ by Lemma~\ref{lemma:vertexonecoloringclass}. Now consider $\bsu$ in $U^{(r,s)}{\setminus}P^{(r)}$. Assume without loss of generality that $P^{(r)}$ is a component of $G[C_a{\oplus}C'_c]$, then $\bsu$ is in $C'_d$, since it is a neighbor in $C'$ to $\bsj_{s-1}^{(r)}$ in $C'_c{\setminus}C_a$. Thus, $\bsv$ is neighboring to $\bsu \in C'_d$, $\bsk_{s-1}^{(r)}$ in $C_a$, and $\bsj_{s-1}^{(r)}$ in $C'_c$. Therefore $\bsv$ is in $C_b{\setminus}C'$, and so $\bsv$ is contained in at most one path component of $\overrightarrow{\mco}$. Similarly, $\bsj_{s-1}^{(r)}$ is neighboring to $\bsv$ in $C_b$, $\bsu$ in $C'_d$, and $\bsk_{s-1}^{(r)}$ in $C_a$. Therefore $\bsj_{s-1}^{(r)}$ is in $C'_c{\setminus}C$, and so $\bsj_{s-1}^{(r)}$ is contained in at most one path component of $\overrightarrow{\mco}$. Finally, $\bsu$ is neighboring to $\bsv$ in $C_b$, and if $\bsu$ is in $C_a$, then $\bsu$ is in $P^{(r)}$, which contradicts our assumption. Therefore, $\bsu$ is in $C_a{\setminus}C'$, and so $\bsu$ is contained in at most one path component of $\overrightarrow{\mco}$. This completes the proof.
\end{proof}

We now prove statements concerning commutation relations.
\begin{lemma}[{restate=[name=restatement]PairingRelations}]
    \label{lemma:pairingrelations}
    Let $P^{(r)}$ be a path in $\overrightarrow{\mco}$ and let $\bsu$ be a vertex in $C'$.
    \begin{enumerate}
        \item[(a)] If $h_{\bsu}$ and $h_{P^{(r)}}$ anticommute, then $\bsu$ is in $U^{(r,s)}$ for some $s$ if and only if $\bsu$ and $\bsj_0^{(r)}$ are distinct and $\Delta_{C^{(r)}}(\bsu)=\Delta_{C^{(r-1)}}(\bsu)+1$.
        \begin{enumerate}
            \item[(i)] If $\Delta_{C^{(r)}}(\bsu)=3$ with $\bsu\prec_{C^{(r)}}\bsv$, then $\bsv$ is in $C$ and the pairing type of $\{\bsu,\bsv\}$ is the same as that of $\overrightarrow{\mco}$.
            \item[(ii)] $\Delta_{C^{(r)}}(\bsu)=4$ if and only if $\Delta_C(\bsu)=3$ and the pairing type of $\bsu\prec_{C}\bsv$ is opposite to that of $\overrightarrow{\mco}$.
        \end{enumerate}
        \item[(b)] If $h_{\bsu}$ and $h_{P^{(r)}}$ commute and $\bsu$ is in $U^{(r,s)}$ for some $s$, then $\Delta_{C^{(r)}}(\bsu)=4$ and there is a vertex $\bsu'$ in $U^{(r,s')}$ with $s'<s$ such that $h_{\bsu'}$ and $h_{P^{(r)}}$ anticommute. If $P$ is the unique component containing $\bsu'$ with the same pairing type as $\overrightarrow{\mco}$, then $\bsu$ is in $P$.
    \end{enumerate}
\end{lemma}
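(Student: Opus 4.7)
The plan is to analyze how the adjacencies of $\bsu$ to the intermediate even holes $C^{(r,s)}$ evolve across the deformation, applying Lemma~\ref{lemma:evenholecloneneighbor} at each step. Recall that $C^{(r,s)}$ is obtained from $C^{(r,s-1)}$ by swapping $\bsk_{s-1}^{(r)}$ for $\bsj_{s-1}^{(r)}$, so, provided $\bsu\neq\bsk_{s-1}^{(r)}$ (which holds automatically since $\bsu\in C'$, while $\bsk_{s-1}^{(r)}\in C\setminus C'$ when the pairing type of $\{\bsj_{s-1}^{(r)},\bsk_{s-1}^{(r)}\}$ matches that of $\bsu$), the step change
\begin{equation*}
    \Delta_{C^{(r,s)}}(\bsu)-\Delta_{C^{(r,s-1)}}(\bsu) = [\bsj_{s-1}^{(r)}\in\Gamma(\bsu)] - [\bsk_{s-1}^{(r)}\in\Gamma(\bsu)]
\end{equation*}
is $+1$ exactly when $\bsu\in U^{(r,s)}$, is $-1$ in the symmetric case, and is $0$ otherwise. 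Summing telescopically along $P^{(r)}$ shows that the total change $\Delta_{C^{(r)}}(\bsu)-\Delta_{C^{(r-1)}}(\bsu)$ has the same parity as $|\Gamma_{P^{(r)}}(\bsu)|$, which is odd iff $h_{\bsu}$ and $h_{P^{(r)}}$ anticommute.

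For part~(a), the anticommuting hypothesis forces this total to be odd. By Lemma~\ref{lemma:vertexcyclerelations}, $\Delta_{C^{(r-1)}}(\bsu),\Delta_{C^{(r)}}(\bsu)\in\{0,2,3,4\}$, since holes of length $5$ are excluded by evenness of the holes in $\avg{C}$. This restricts the change to $\pm 1$ or $\pm 3$, and a local application of Lemma~\ref{lemma:evenholecloneneighbor} at any would-be $\pm 3$ step rules out that possibility using claw-freeness and the configurations in Table~\ref{table:vertexcyclerelations}. The $+1$ case coincides with the existence of some $s$ with $\bsu\in U^{(r,s)}$, giving one direction; the converse follows because, conditioned on a strict $+1$ step at $s$ and using Lemma~\ref{lemma:evenholecloneneighbor} to control all subsequent steps, no further non-zero contribution can bring the total back to $0$ or $-1$ without forcing $\bsu=\bsj_0^{(r)}$. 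The excluded case $\bsu=\bsj_0^{(r)}$ is precisely the configuration in which $\bsu$ is inserted into the hole and its ``degree'' must be reinterpreted.

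For subcase~(a.i), given $\Delta_{C^{(r)}}(\bsu)=3$ with $\bsu\prec_{C^{(r)}}\bsv$, I would show $\bsv\in C$ by backtracking: if $\bsv\notin C$ then $\bsv=\bsj_{t-1}^{(g)}$ for some earlier $(g,t)$, and Lemma~\ref{lemma:evenholecloneneighbor} applied at that step would force $\bsu$ into a neighborhood configuration incompatible with the fact that $\bsu\in U^{(r,s)}$ has pairing type determined by $\overrightarrow{\mco}$. The pairing-type assertion then follows by identifying the coloring classes of $\bsu$ and $\bsv$ via the four cases of Lemma~\ref{lemma:evenholecloneneighbor}. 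For subcase~(a.ii), the degree $4$ endpoint corresponds to case~(a.iv) of Table~\ref{table:vertexcyclerelations}, and one checks that the only way to reach this via a single $+1$ net change is for $\bsu$ to already have satisfied $\Delta_C(\bsu)=3$ in the original hole, with clone pairing opposite to $\overrightarrow{\mco}$: the $+1$ step introduced by $\bsu\in U^{(r,s)}$ then raises the original three-neighbor configuration to a four-neighbor one.

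For part~(b), commutation makes the total degree change even, so a $+1$ step from $\bsu\in U^{(r,s)}$ must be matched by a prior compensating $-1$ step at some $s'<s$. Lemma~\ref{lemma:evenholecloneneighbor} applied at step $s'$ identifies a vertex $\bsu'\in U^{(r,s')}$ with odd local anticommutation, hence $h_{\bsu'}$ and $h_{P^{(r)}}$ anticommute. The same lemma assigns $\bsu$ and $\bsu'$ to a shared coloring-class configuration, which forces them to lie in the same component of $G[C_a\oplus C'_c]$ or $G[C_b\oplus C'_d]$ of the appropriate pairing type, giving $\bsu\in P$. The main obstacle I expect is the careful bookkeeping required to rule out the $\pm 3$ net changes and to backtrack the origin of $\bsv$ in subcase~(a.i): both require iterating Lemma~\ref{lemma:evenholecloneneighbor} consistently across all intermediate holes, maintained by an induction on $s$ along $P^{(r)}$ together with the pairing-type constraint of Def.~\ref{definition:fixedpairingtypedeformation}.
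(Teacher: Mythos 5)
Your telescoping identity $\Delta_{C^{(r,s)}}(\bsu)-\Delta_{C^{(r,s-1)}}(\bsu)=[\bsj_{s-1}^{(r)}\in\Gamma(\bsu)]-[\bsk_{s-1}^{(r)}\in\Gamma(\bsu)]$ and the resulting parity relation with $\abs{\Gamma_{P^{(r)}}(\bsu)}$ are correct and are a clean way to see why anticommutation forces an odd net change. But the parity argument only delivers the easy half of the lemma; everything the lemma actually asserts --- that the net change is exactly $+1$ rather than $-1$ or $\pm3$, the clone and pairing-type identifications in (a.i) and (a.ii), and the existence and location of $\bsu'$ in (b) --- lives in the claw-free case analysis that you defer to ``a local application of Lemma~\ref{lemma:evenholecloneneighbor}.'' That deferred analysis is the proof (the paper's argument enumerates the configurations of Table~\ref{table:vertexcyclerelations}, rules out $\bsu\in\Gamma_{C'}(\bsk_{s-1}^{(r)}){\setminus}\Gamma_{C'}[\bsj_{s-1}^{(r)}]$ by constructing explicit claws, and for (b) runs a recursion over the sub-cases $k=2$, $k=3$, $k>3$), so the plan as written does not reduce the difficulty.

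Two of your steps also point in the wrong direction. First, in (b) you assert that the $+1$ step at $s$ ``must be matched by a prior compensating $-1$ step,'' and that this $-1$ step identifies $\bsu'$. In the case $\bsu=\bsk_{s-2}^{(r)}\in C_a\cap C'_d$ (which can occur, since $\bsk$-vertices are only excluded from the paired coloring class $C'_{\sigma(j)}$, not from all of $C'$ --- this also breaks your claim that $\bsu\neq\bsk_{s-1}^{(r)}$ holds automatically), the compensation is a second $+1$ step: $\bsu$ gains both $\bsj_{s-2}^{(r)}$ and $\bsj_{s-1}^{(r)}$ and its degree rises from $2$ to $4$. The vertex $\bsu'$ is then the additional $C'$-neighbor of $\bsj_{s-2}^{(r)}$, not a witness of any $-1$ step for $\bsu$. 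Second, statement (a.ii) concerns $\Delta_C(\bsu)$ in the original hole $C$, not $\Delta_{C^{(r-1)}}(\bsu)$; relating the two requires controlling what the earlier paths $P^{(0)},\dots,P^{(r-1)}$ did to $\bsu$'s neighborhood (the paper does this via Corollaries~\ref{corollary:vertexanticommutingdistinctpaths} and \ref{corollary:vertexanticommutingdistinctpathsendpoint}), whereas your telescoping sum runs only along $P^{(r)}$ and says nothing about the earlier segments of the deformation. Both gaps would need to be filled before the plan becomes a proof.
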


\begin{figure}[ht!]
    \begin{subfigure}{\columnwidth}
         \centering
         \includegraphics[width=0.7\columnwidth]{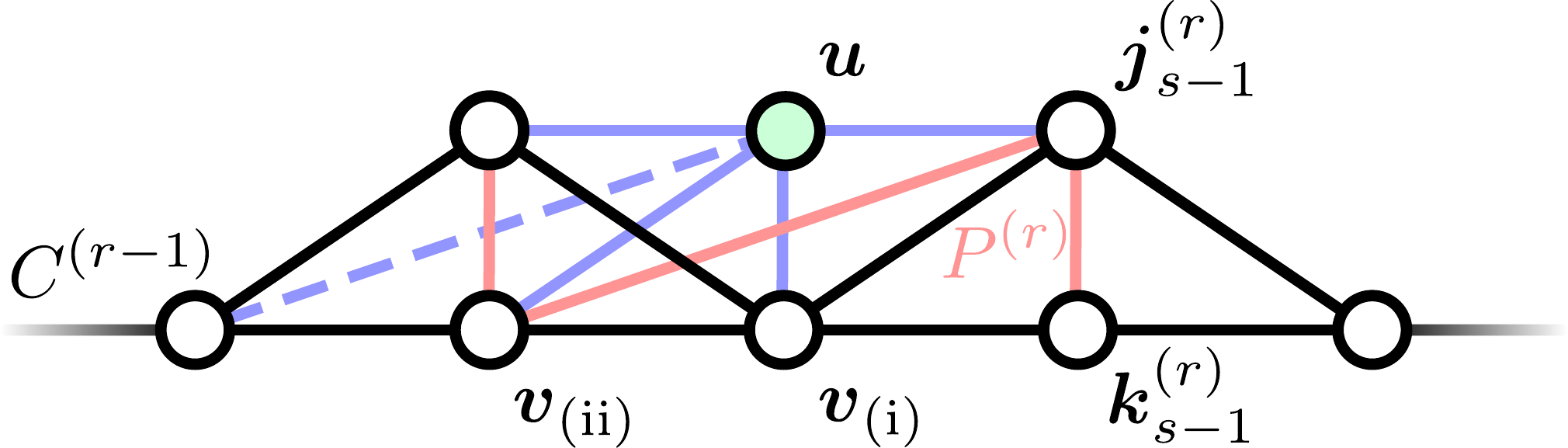}
         \caption{}
         \label{figure:paringrelationshipa}
     \end{subfigure}
     \\
     \begin{subfigure}{\columnwidth}
         \centering
         \includegraphics[width=0.95\columnwidth]{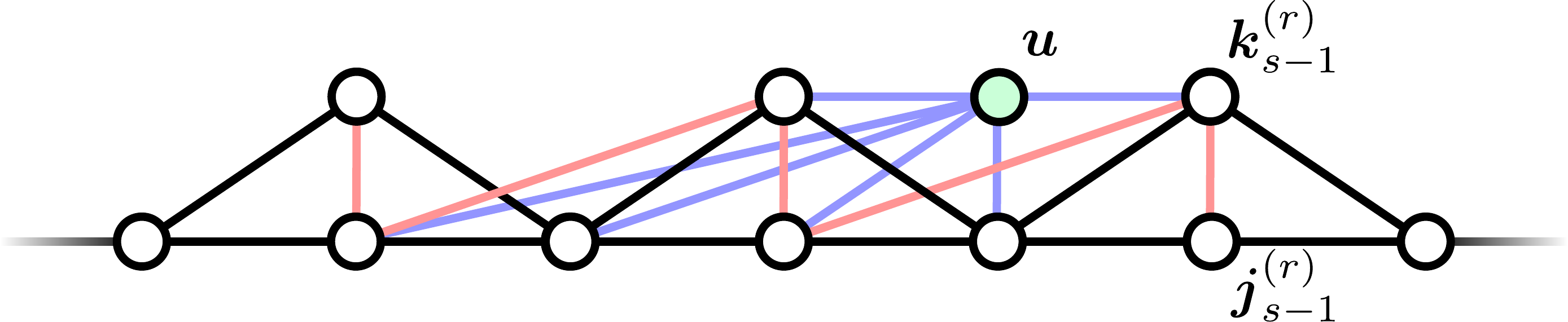}
         \caption{}
         \label{figure:paringrelationshipb}
     \end{subfigure}
    \caption{An illustration of Lemma~\ref{lemma:pairingrelations}. (a) If $h_{\bsu}$ and $h_{P^{(r)}}$ anticommute, then $\bsu$ has either two or three neighbors in $C^{(r-1)}$ (the dashed edge may or may not be present). Deforming by $P^{(r)}$ increases $\Delta_C(\bsu)$ by one. If $\bsu$ has three neighbors in $C^{(r)}$, then $\bsu\prec_{C^{(r)}}\bsv_{\text{(i)}}$ with the same pairing type as $\overrightarrow{\mco}$. In this case, we can deform by $\bsu$. If $\bsu$ has three neighbors in $C^{(r-1)}$, then $\bsu\prec_{C^{(r-1)}}\bsv_{\text{(ii)}}$, and in fact, $\bsu\prec_{C}\bsv_{\text{(ii)}}$ with the opposite pairing type to $\overrightarrow{\mco}$. (b) If $h_{\bsu}$ and $h_{P^{(r)}}$ commute and $\bsu$ is in $U^{(r,s)}$, then $\Delta_{C^{(r)}}(\bsu)=4$.}
    \label{figure:paringrelations}
\end{figure}
We prove Lemma~\ref{lemma:pairingrelations} in Appendix~\ref{section:PairingRelationsProof} by enumerating all possible neighboring relations for $\bsu$ under the assumptions. An illustration of Lemma~\ref{lemma:pairingrelations} is given in Fig.~\ref{figure:paringrelations}. We also prove the following corollary in Appendix~\ref{section:PairingRelationsProof}.
\begin{corollary}[{restate=[name=restatement]VertexAnticommutingDistinctPaths}]
    \label{corollary:vertexanticommutingdistinctpaths}
    Let $\overrightarrow{\mco}$ and $\overrightarrow{\mco}'$ be possibly non-distinct deformations of the same fixed-pairing type. There is no vertex $\bsu$ in $U^{(r,s)} \cap U^{(g,t)}$ such that $h_{\bsu}$ anticommutes with $h_{P^{(r)}}$ and $h_{P^{(g)}}$ for distinct path components $P^{(r)}$ in $\overrightarrow{\mco}$ and $P^{(g)}$ in $\overrightarrow{\mco}'$ for some $s$ and $t$.
\end{corollary}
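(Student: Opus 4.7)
The plan is to argue by contradiction. Suppose such a vertex $\bsu \in U^{(r,s)} \cap U^{(g,t)}$ exists, with $h_{\bsu}$ anticommuting with both $h_{P^{(r)}}$ and $h_{P^{(g)}}$ and $P^{(r)} \neq P^{(g)}$. First I would perform a coloring-class reduction. By Def.~\ref{definition:anticommutingdependentsubsets}, $U^{(r,s)} \subseteq \Gamma_{C'}(\bsj^{(r)}_{s-1})$, and since $\bsj^{(r)}_{s-1}$ lies in one specific coloring class of $C'$, bipartiteness of $C'$ forces $U^{(r,s)}$ into the opposite coloring class, and similarly for $U^{(g,t)}$. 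If those opposite classes were different, $\bsu$ could not belong to both sets. Hence $P^{(r)}$ and $P^{(g)}$ must be components of the same symmetric-difference graph $G[C_j \oplus C'_{\sigma(j)}]$, and by Corollary~\ref{corollary:pathcomponentsdistinct} they are disjoint.

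Next, since $\bsu \in C'$ is a vertex of the hole $C'$, it has exactly two neighbors in $C'$, both lying in the coloring class opposite to $\bsu$'s. These two neighbors must be $\bsj^{(r)}_{s-1}$ and $\bsj^{(g)}_{t-1}$, as these are the only candidates once one uses the disjointness of $P^{(r)}$ and $P^{(g)}$. This yields $\abs{\Gamma(\bsu) \cap P^{(r)} \cap C'} = 1$, so the anticommutation of $h_{\bsu}$ with $h_{P^{(r)}}$ forces $\abs{\Gamma(\bsu) \cap P^{(r)} \cap C}$ to be even, with an analogous parity constraint for $P^{(g)}$.

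I would then apply Lemma~\ref{lemma:pairingrelations}(a) to each of $(r,s)$ and $(g,t)$. Each falls into case (i), giving $\Delta_{C^{(r)}}(\bsu) = 3$ and a clone $\bsv \in C$ of $\bsu$ in $C^{(r)}$ with $\{\bsu,\bsv\}$ of the same pairing as $\overrightarrow{\mco}$, or case (ii), giving $\Delta_{C^{(r)}}(\bsu) = 4$, $\Delta_C(\bsu) = 3$, and a clone $\bsv \in C$ of opposite pairing. I would analyze each of the four resulting combinations. The clone of $\bsu$ in any given hole is unique when it exists (Lemma~\ref{lemma:evenholecloneneighbor}), and the pairing-type distinction between (i) and (ii) together with the degree identity $\Delta_{C^{(r)}}(\bsu) = \Delta_{C^{(r-1)}}(\bsu) + 1$ (and its analog at $(g,t)$) sharply restricts which neighbors of $\bsu$ can lie in each of the sets $C$, $C^{(r-1)}$, $C^{(r)}$, $C^{(g-1)}$, $C^{(g)}$. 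Feeding these into the parity constraint from the previous step and the claw-free neighborhood table (Lemma~\ref{lemma:vertexcyclerelations}) should force the deformation histories up to $P^{(r)}$ and $P^{(g)}$ to share the same initial step and every subsequent vertex-clone pair, so that $P^{(r)} = P^{(g)}$ by Corollary~\ref{corollary:pathcomponentsdistinct}, contradicting our assumption.

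The hard part is the case analysis at the end: reconciling the information about $\bsu$'s neighborhood across three distinct holes $C$, $C^{(r)}$, and $C^{(g)}$ into a single consistent picture. I expect the mixed configuration with (i) on one side and (ii) on the other to be the most delicate, since the two clone statements refer to clones in different holes and do not immediately collide; resolving it will likely require tracking how the edges in $\Gamma(\bsu) \cap C$ are added or removed one deformation step at a time, using Lemma~\ref{lemma:evenholecloneneighbor} and Corollary~\ref{corollary:vertexcycleanother} to exclude every arrangement in which $P^{(r)}$ and $P^{(g)}$ could remain distinct.
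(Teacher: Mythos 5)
Your opening moves match the paper's: the fixed-pairing-type assumption places $\bsu$ in $C'_d$, forces $P^{(r)}$ and $P^{(g)}$ to be disjoint components of the same graph $G[C_a{\oplus}C'_c]$, and the observation that $\bsu$ has exactly two neighbours in $C'$, both in $C'_c$, is exactly the pivot the paper uses. But the route you take from there has two genuine problems. First, a circularity: you propose to invoke Lemma~\ref{lemma:pairingrelations}(a), parts (i) and (ii), at both $(r,s)$ and $(g,t)$. In the paper, the proof of statement (a.ii) itself invokes Corollary~\ref{corollary:vertexanticommutingdistinctpaths} (to conclude $\bsu=\bsj_0^{(g)}$ when $h_{\bsu}$ anticommutes with an earlier path operator). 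Only the preliminary part of (a) --- the enumeration of admissible configurations from Table~\ref{table:vertexcyclerelations} and the increment $\Delta_{C^{(r)}}(\bsu)=\Delta_{C^{(r-1)}}(\bsu)+1$ --- is available before this corollary, so any argument routed through (a.ii) would have to be re-proven independently here.

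Second, the decisive step is missing. From ``two neighbours in $C'_c$'' you should extract more: in every configuration of Lemma~\ref{lemma:vertexcyclerelations} other than (b.iii), a vertex of $U^{(r,s)}$ whose operator anticommutes with $h_{P^{(r)}}$ has \emph{two} neighbours in $C'_c \cap P^{(r)}$; since the paths are disjoint and $\bsu$ must also have at least one neighbour in $C'_c \cap P^{(g)}$, that would give three neighbours in $C'_c$. Hence $\Gamma_{P^{(r)}}(\bsu)=\{\bsj_0^{(r)}\}$ and $\Gamma_{P^{(g)}}(\bsu)=\{\bsj_0^{(g)}\}$, i.e.\ $\bsu$ attaches to each path only at its initial vertex. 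With that in hand the paper finishes in a few lines: Corollary~\ref{corollary:evenholecloneexactlyoneneighbor} supplies mutual neighbours $\bsv_r,\bsv_g \in C_b{\setminus}C'$ of the respective endpoint pairs, one checks $\bsu$ is not in $C$, and each remaining configuration ($\bsv_r\neq\bsv_g$ with $\Delta_C(\bsu)=4$, $\bsv_r\neq\bsv_g$ with $\Delta_C(\bsu)=3$, and $\bsv_r=\bsv_g$) yields an induced claw. Your plan instead defers the conclusion to a four-way case analysis across the three holes $C$, $C^{(r)}$, $C^{(g)}$ that you explicitly leave unexecuted; as written the proposal is a strategy sketch rather than a proof, and the strategy it sketches is both harder than necessary and, via (a.ii), partly circular.
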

The following corollary follows immediately from Lemma~\ref{lemma:pairingrelations} and Corollary~\ref{corollary:vertexanticommutingdistinctpaths}.
\begin{corollary}
    \label{corollary:vertexanticommutingdistinctpathsendpoint}
    If $h_{\bsu}$ anticommutes with $h_{P^{(r)}}$ and $h_{P^{(g)}}$ for distinct path components $P^{(r)}$ in $\overrightarrow{\mco}$ and $P^{(g)}$ in $\overrightarrow{\mco}'$ in the setting of Corollary~\ref{corollary:vertexanticommutingdistinctpaths}, then $\bsu$ is in $\{\bsj_0^{(g)},\bsj_0^{(r)}\}$.
\end{corollary}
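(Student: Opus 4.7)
The plan is a short proof by contradiction that pipes the two preceding results together. Assume for contradiction that $\bsu \notin \{\bsj_0^{(r)}, \bsj_0^{(g)}\}$ while $h_{\bsu}$ anticommutes with both $h_{P^{(r)}}$ and $h_{P^{(g)}}$. The goal is to deduce $\bsu \in U^{(r,s)} \cap U^{(g,t)}$ for some $s$ and $t$, which is precisely the configuration forbidden by Corollary~\ref{corollary:vertexanticommutingdistinctpaths}, yielding the desired contradiction.

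The two memberships are obtained by symmetric applications of Lemma~\ref{lemma:pairingrelations}(a), one to $\overrightarrow{\mco}$ with the path $P^{(r)}$ and one to $\overrightarrow{\mco}'$ with $P^{(g)}$. Since $\bsu \neq \bsj_0^{(r)}$ and $h_{\bsu}$ anticommutes with $h_{P^{(r)}}$, the reverse direction of the iff in Lemma~\ref{lemma:pairingrelations}(a) places $\bsu$ in some $U^{(r,s)}$ as soon as the degree increment $\Delta_{C^{(r)}}(\bsu) = \Delta_{C^{(r-1)}}(\bsu) + 1$ has been established; the analogous argument then produces $\bsu \in U^{(g,t)}$, and the contradiction with Corollary~\ref{corollary:vertexanticommutingdistinctpaths} follows immediately.

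The main obstacle is establishing the degree increment. Because $h_{\bsu}$ anticommutes with $h_{P^{(r)}}$, the parity of its neighbor count in $P^{(r)}$ is odd, but this only forces the signed difference between its neighbors among the $\bsj^{(r)}_i$'s and those among the $\bsk^{(r)}_i$'s to be some odd integer, a priori any of $\pm 1$ or $\pm 3$. Ruling out the three anomalous values is where the real work lies. The plan is to invoke Lemma~\ref{lemma:vertexcyclerelations} to enumerate the admissible neighboring patterns of $\bsu$ on the induced path $P^{(r)}$, then to use the fixed-pairing-type constraint that the $\bsj^{(r)}_i$'s and $\bsk^{(r)}_i$'s on $P^{(r)}$ lie respectively in the paired coloring classes $C'_{\sigma(j)}$ and $C_j$, and finally to apply claw-freeness to triples such as $\{\bsk^{(r)}_i, \bsu, \bsj^{(r)}_i, \bsj^{(r)}_{i+1}\}$ to eliminate each anomalous pattern. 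I expect that this enumeration is already carried out inside the proof of Lemma~\ref{lemma:pairingrelations} in Appendix~\ref{section:PairingRelationsProof}, making the corollary a two-line deduction once the lemma is in hand.
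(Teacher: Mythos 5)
Your proposal is correct and is essentially the paper's own argument: the paper derives this corollary immediately from Lemma~\ref{lemma:pairingrelations}(a) and Corollary~\ref{corollary:vertexanticommutingdistinctpaths}, exactly as you do. Your worry about separately establishing the degree increment is unnecessary — the proof of Lemma~\ref{lemma:pairingrelations}(a) in Appendix~\ref{section:PairingRelationsProof} directly shows that anticommutation together with $\bsu\neq\bsj_0^{(r)}$ already forces $\bsu\in U^{(r,s)}$ for some $s$ (the degree increment is then a consequence), so the corollary is indeed the two-line deduction you anticipate.
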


We proceed by applying these statements to show that a palindromic deformation exists.

\subsubsection{Solution Criterion}
\label{subsubsection:SolutionCriterion}

We now give a sufficient criterion for a fixed-pairing-type deformation $\overrightarrow{\mco}$ of $C$ by vertices in $C'$ to be palindromic, i.e., $\overleftarrow{\mco}$ is a fixed-pairing-type deformation of $C'$ by vertices in $C$. This is the case if every vertex in $C'$ that is dependent on some single-vertex deformation in $\overrightarrow{\mco}$ is in $C^{(m)}$.

\begin{lemma}[{restate=[name=restatement]SolutionCriterion}]
    \label{lemma:solutioncriterion}
    Let $\overrightarrow{\mco}$ be a fixed-pairing-type deformation of $C$ by vertices in $C'$ with labeling as in Def.~\ref{definition:fixedpairingtypedeformation} and let $U$ be a vertex subset as in Def.~\ref{definition:anticommutingdependentsubsets}. If $U$ is a subset of $C^{(m)}$, then $\overrightarrow{\mco}$ is palindromic.
\end{lemma}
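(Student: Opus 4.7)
The plan is to verify directly that $\overleftarrow{\mco}$ satisfies the definition of a fixed-pairing-type deformation of $C'$ by vertices in $C$. Without loss of generality, assume $\overrightarrow{\mco}$ is $(a,c)$-pairing, so its path components lie in $G[C_a \oplus C'_c]$ and $G[C_b \oplus C'_d]$; each such path, traversed from its $\bsk$-endpoint, is then a candidate initializing path for an $(a,c)$-pairing deformation of $C'$ with the roles of $C$ and $C'$ exchanged.

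I would proceed by induction on the reverse steps. Let $D^{(r,s)}$ denote the cycle obtained from $C'$ after applying the reverse steps that undo forward steps $(m,\ell_m),\ldots,(r,s+1)$. At the step that undoes $(r,s)$, one must verify $\bsk^{(r)}_{s-1} \prec_{D^{(r,s)}} \bsj^{(r)}_{s-1}$, i.e., that $\bsk^{(r)}_{s-1}$ has exactly three consecutive neighbors in $D^{(r,s)}$ with $\bsj^{(r)}_{s-1}$ in the middle. The easy direction is producing three such neighbors: the two cycle-neighbors of $\bsj^{(r)}_{s-1}$ in $C'$ remain its cycle-neighbors in $D^{(r,s)}$, because under the hypothesis $U \subseteq C^{(m)}$, the reverse steps performed so far modify $C'$ only outside the immediate vicinity of $\bsj^{(r)}_{s-1}$. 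Combining the forward clone relation $\bsj^{(r)}_{s-1} \prec_{C^{(r,s-1)}} \bsk^{(r)}_{s-1}$ with Lemma~\ref{lemma:pairingrelations} then shows that these cycle-neighbors of $\bsj^{(r)}_{s-1}$ are also neighbors of $\bsk^{(r)}_{s-1}$, as required.

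The main obstacle, and the only place the hypothesis $U \subseteq C^{(m)}$ is used, is ruling out a fourth neighbor $\bsw \in D^{(r,s)}$ of $\bsk^{(r)}_{s-1}$. A spurious $\bsw$ must either (i) be a vertex of $C'$ outside $\Gamma_{C'}[\bsj^{(r)}_{s-1}]$ that neighbors $\bsk^{(r)}_{s-1}$, or (ii) be a vertex $\bsk^{(g)}_{t-1}$ introduced into the cycle by an earlier reverse step, i.e., a forward step $(g,t) > (r,s)$. In case (i) I would apply Lemma~\ref{lemma:pairingrelations} to identify $\bsw$ with an element of some $U^{(r',s')}$ that cannot simultaneously satisfy the neighboring constraints on $\bsk^{(r)}_{s-1}$ while residing in $C^{(m)}$. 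Case (ii) is controlled by Corollary~\ref{corollary:vertextwopathcomponents} together with Corollary~\ref{corollary:vertexanticommutingdistinctpaths}: any such $\bsk^{(g)}_{t-1}$ adjacent to $\bsk^{(r)}_{s-1}$ would force one of the associated $\bsj$-vertices into a dependent position incompatible with membership of $U$ in $C^{(m)}$.

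I expect the hard part to be the case analysis matching every putative extra neighbor of $\bsk^{(r)}_{s-1}$ with a specific element of $U$ that cannot reach $C^{(m)}$. This requires careful bookkeeping against the neighborhood configurations allowed by Lemma~\ref{lemma:vertexcyclerelations} and Lemma~\ref{lemma:evenholecloneneighbor}, and is the step most sensitive to claw-freeness.
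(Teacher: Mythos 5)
Your skeleton matches the paper's: reverse the order of the paths, show $\bsk^{(r)}_{s-1}\prec_{D^{(r,s)}}\bsj^{(r)}_{s-1}$ by induction on the reverse steps, and use the hypothesis $U\subseteq C^{(m)}$ to control neighborhoods. However, there is a genuine gap at the central step. Your ``easy direction'' rests on the assertion that the reverse steps performed so far ``modify $C'$ only outside the immediate vicinity of $\bsj^{(r)}_{s-1}$,'' and this is precisely the claim that requires proof --- it is not a direct consequence of $U\subseteq C^{(m)}$. The paper supplies it in two pieces: an auxiliary lemma showing that, when $U\subseteq C^{(m)}$, every vertex of $C'\oplus C^{(m)}$ lies outside $\partial\mco$ (this itself needs the observation that a path vertex $\bsk^{(r)}_{s-1}$ lying in $C'$ would force a neighbor of $\bsj^{(r)}_{s-1}$ into $U^{(r,s+1)}{\setminus}C^{(m)}$); and the identity $C'^{(r,s)}\oplus C^{(r,s)} = C'\oplus C^{(m)}$, which forces any vertex of $\Gamma_{C'^{(r,s)}}(\bsk^{(r)}_{s-1})\oplus\Gamma_{C'^{(r,s)}}[\bsj^{(r)}_{s-1}]$ to lie in $C'\oplus C^{(m)}$ and hence outside $\partial\mco$. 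Without these, your induction does not get off the ground: a priori an earlier reverse step could have swapped a vertex adjacent to $\bsj^{(r)}_{s-1}$ or $\bsk^{(r)}_{s-1}$, and nothing in your sketch rules this out. Note also that Lemma~\ref{lemma:pairingrelations} is not the right tool for the easy direction; what you need is the forward clone relation $\bsj^{(r)}_{s-1}\prec_{C^{(r,s)}}\bsk^{(r)}_{s-1}$ transported across the controlled symmetric difference, plus the observation that a vertex of $\Gamma_{C'^{(r,s)}}[\bsj^{(r)}_{s-1}]{\setminus}\Gamma_{C'^{(r,s)}}(\bsk^{(r)}_{s-1})$ lying in $C'$ would belong to $U^{(r,s)}$ and hence to $C^{(m)}$, a contradiction.

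Your treatment of the fourth-neighbor exclusion is also not a proof. Once the symmetric-difference identity is in place, your case (ii) (a spurious neighbor introduced by an earlier reverse step) is excluded immediately, because any such vertex lies in $\partial\mco$ while the offending vertex must lie in $C'\oplus C^{(m)}$; you do not need Corollaries~\ref{corollary:vertextwopathcomponents} or \ref{corollary:vertexanticommutingdistinctpaths} here, and the argument you gesture at (``would force one of the associated $\bsj$-vertices into a dependent position'') is not concrete enough to check. For the remaining case of a vertex $\bsv\in\Gamma_{C'^{(r,s)}}(\bsk^{(r)}_{s-1}){\setminus}\Gamma_{C'^{(r,s)}}[\bsj^{(r)}_{s-1}]$ with $\bsv\in C'{\setminus}C^{(m)}$, the paper's mechanism is a degree count: handle $s=\ell_r$ separately via the endpoint property of $P^{(r)}$ and Corollary~\ref{corollary:vertexcycleneighboring}, and otherwise observe that $C'^{(r,s)}=C'^{(r,s+1)}\oplus\{\bsj^{(r)}_s,\bsk^{(r)}_s\}$, so a fourth neighbor $\bsv$ would give $\bsk^{(r)}_{s-1}$ five neighbors in the even hole $C'^{(r,s+1)}$ (known to be an even hole by the strong induction), contradicting Lemma~\ref{lemma:vertexcyclerelations}. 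Your sketch contains neither this degree argument nor a workable substitute, so as written the proposal does not close.
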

We prove Lemma~\ref{lemma:solutioncriterion} in Appendix~\ref{section:SolutionCriterionProof}.

\subsubsection{Search Process}
\label{subsubsection:SearchProcess}

We now describe a search process to find a palindromic deformation $\overrightarrow{\mco}$ using the criterion of Lemma~\ref{lemma:solutioncriterion}. As a subroutine, we describe an iterative obstruction search process, called $\mathsf{OS}$, with the following signature
\begin{equation}
    (\overrightarrow{\mco}_{f},V_f,U_f) \coloneqq \mathsf{OS}(\overrightarrow{\mco}_{i},\bsj_{i},W_i).
\end{equation}
This function is used to initialize the obstruction graph $\mcd=(W,D)$ and perform the rerouting step. Here, $\overrightarrow{\mco}_i$ is an initial fixed-pairing-type deformation, and $\bsj_i$ is a vertex in $C'$ such that $h_{\bsj_{i}}$ anticommutes with $h_{\wt{C}_i}$ if $\wt{C}_i\in\avg{C}$ is the even hole given by deforming $C$ by $\overrightarrow{\mco}_i$, and $\bsj_i\prec_{\wt{C}_i}\bsk$ has the same pairing type as $\overrightarrow{\mco}_i$. We assume that $W_i$ is a subset of $W$, with $W$ as defined in Def.~\ref{definition:anticommutingdependentsubsets}, and $\bsj_i$ is not in $W_i$. The output deformation $\overrightarrow{\mco}_f$ has the same pairing type as $\overrightarrow{\mco}_i$, and $V_f$ and $U_f$ are sets of at most one vertex. If $V_f$ is non-empty, then the only element $\bsj_f$ of $V_f$ obstructs $\overrightarrow{\mco}_f$. If $U_f$ is non-empty, then the only element $\bsu_f$ of $U_f$ is a vertex in $C'$ required for the obstruction graph update. Note that $V_f$, $U_f$, $\overrightarrow{\mco}_i$, $\overrightarrow{\mco}_f$, and $W_i$ may be empty. The pairing types of $\overrightarrow{\mco}_i$ and $\overrightarrow{\mco}_f$ are however well defined.

We define the obstruction search process $\mathsf{OS}$ as follows. Initialize the iteration variables as $\bsu_{t}=\bsj_i$ and $\overrightarrow{\mco}_t=\overrightarrow{\mco}_i$. At each iteration, $\overrightarrow{\mco}_t$ is updated followed by $\bsu_t$. Let $P$ be the unique component of $G[C_j{\oplus}C'_{\sigma(j)}]$ to which $\bsu_t \in C'_{\sigma(j)}$ is a member, where $\sigma$ gives the pairing type of $\overrightarrow{\mco}_i$. If $P \cap W_i$ is non-empty, suppose that $\bsj$ is the vertex in $W_i$ with the smallest distance to $\bsu_t$ along $P$ (we shall show that this vertex is unique). In this case, the process terminates with $\overrightarrow{\mco}_f=\overrightarrow{\mco}_t$, $V_f=\{\bsj\}$, and $U_f=\{\bsu_t\}$. If $P \cap W_i$ is non-empty, then $P$ is a path and we update $\overrightarrow{\mco}_t$ by concatenating $P$ as the last path in the deformation.

In order to update $\bsu_t$, we let
\begin{equation}
    \mcl_{r,s} = \left(\sum_{g=0}^{r-1}\ell_g\right)+s,
\end{equation}
denote the number of single-vertex deformations required to reach $C^{(r,s)}$. Additionally, we define $U^{(r,s)}(\overrightarrow{\mco}_t)$ as the set in Def.~\ref{definition:singlevertexdeformation} corresponding to the set of paths $\mco_t$. Recall that $U^{(r,s)}(\overrightarrow{\mco}_t)$ contains at most one element for a given $r$ and $s$. $\bsu_t$ is updated to the only member $\bsu$ in $U^{(r,s)}(\overrightarrow{\mco}_t)$ with the maximum value of $\mcl_{r,s}$ such that
\begin{enumerate}
    \item[(i)] $U^{(r,s)}(\overrightarrow{\mco}_t){\setminus}C^{(\abs{\mco_t}-1)}$ is non-empty.
    \item[(ii)] $h_{\bsu}$ and $h_{P^{(r)}}$ anticommute.
\end{enumerate}
If no such vertex exists, then the process terminates with $\overrightarrow{\mco}_f=\overrightarrow{\mco}_t$, $V_f=\varnothing$, and $U_f=\varnothing$. In this case, we show that $\overrightarrow{\mco}_f$ satisfies the criterion of Lemma~\ref{lemma:solutioncriterion} and that $\overrightarrow{\mco}=\overrightarrow{\mco}_f$ is palindromic.

Using this function, we initialize the obstruction graph on the set of vertices $W$, by defining 
\begin{equation}
    (\overrightarrow{\mco}_{\bsj},V_{\bsj},U_{\bsj}) \coloneqq \mathsf{OS}(\overrightarrow{\mco}^{(-2)}_{\bsj},\bsj,W{\setminus}\{\bsj\}), \label{equation:initializingstep}
\end{equation}
for all $\bsj \in W$. Here, $\overrightarrow{\mco}^{(-2)}_{\bsj}$ is the empty sequence and is defined to have the same pairing type as $\bsj\prec_C\bsk$. If $V_{\bsj}$ is non-empty, then the only member $\bsj'$ of $V_{\bsj}$ obstructs $\overrightarrow{\mco}_{\bsj}$ and we include the arc $(\bsj\rightarrow\bsj')$ in $D$. By construction, every vertex in $W$ has at most one outgoing arc in $\mcd$ and there are no self-loops. We shall show that $\mcd$ is a bipartite directed graph with odd order and coloring classes given by the pairing types.

We now specify the procedure to update $\mcd^{(i)}$ to $\mcd^{(i+1)}$ with the convention that $\mcd^{(-1)}=\mcd$ and all related objects notated similarly. In each iteration where there is no unobstructed deformation, we update $W^{(i)}$ to $W^{(i+1)}$ by removing one vertex from the source set and one vertex from the obstruction set of $\mcd^{(i)}$. Thus, $\mcd^{(i+1)}$ is bipartite with odd order and whose source and obstruction sets are subsets of the source and obstruction sets of $\mcd^{(i)}$, respectively. $(\overrightarrow{\mco}^{(i)}_{\bsj},V^{(i)}_{\bsj},U^{(i)}_{\bsj})$ is updated to $ (\overrightarrow{\mco}^{(i+1)}_{\bsj},V^{(i+1)}_{\bsj},U^{(i+1)}_{\bsj})$ and $D^{(i+1)}$ is such that $(\bsj\rightarrow\bsj')$ is in $D^{(i+1)}$ for all $\bsj,\bsj' \in W^{(i+1)}$ if and only if $\bsj'$ obstructs $\overrightarrow{\mco}^{(i+1)}_{\bsj}$.

In particular, suppose that there is no vertex $\bsj$ in $W^{(i)}$ for which $\overrightarrow{\mco}^{(i)}_{\bsj}$ is unobstructed. Thus, each vertex of $\mcd^{(i)}$ has exactly one outgoing arc and there must be at least one vertex $\bsj''$ in the obstruction set of $\mcd^{(i)}$ with at least two incoming arcs. We show that every vertex has at most two incoming arcs, and so $\bsj''$ has exactly two incoming arcs. Let $\bsj$ and $\bsj'$ be the vertices in the source set whose outgoing arcs are incoming to $\bsj''$, i.e., $(\bsj\rightarrow\bsj'')$ and $(\bsj'\rightarrow\bsj'')$ are in $D^{(i)}$. We additionally show that $\bsj''$ is the only member of exactly one of $U^{(i)}_{\bsj}$ or $U^{(i)}_{\bsj'}$. Suppose that $\bsj''$ is in $U^{(i)}_{\bsj'}$ and let $\bsu^{(i)}_{\bsj,f}$ be the only member of $U_{\bsj}^{(i)}$.
Further, let $W^{(i+1)}=W^{(i)}{\setminus}\{\bsj',\bsj''\}$ and
\begin{equation}
    (\overrightarrow{\mco}^{(i+1)}_{\bsv},V^{(i+1)}_{\bsv},U^{(i+1)}_{\bsv}) = (\overrightarrow{\mco}^{(i)}_{\bsv},V^{(i)}_{\bsv},U^{(i)}_{\bsv}), \label{equation:searchprocesstrivialupdate}
\end{equation} 
for all $\bsv \in W^{(i+1)}{\setminus}\{\bsj\}$. We then set
\begin{align}
    &(\overrightarrow{\mco}^{(i+1)}_{\bsj},V^{(i+1)}_{\bsj},U^{(i+1)}_{\bsj}) \notag \\
    &\quad= \mathsf{OS}(\overrightarrow{\mco}^{(i)}_{\bsj'}\|\overrightarrow{\mco}^{(i)}_{\bsj},\bsu^{(i)}_{\bsj,f},W^{(i+1)}{\setminus}\{\bsj\}), \label{equation:searchprocessupdate}
\end{align}
where $\|$ denotes the concatenation of sequences. This completes the description of the search process.

\subsubsection{Correctness of Search Process}
\label{subsubsection:CorrectnessSearchProcess}

We show that the search process returns a valid palindromic deformation inductively. The base case is given as the following lemma.
\begin{lemma}[{restate=[name=restatement]SearchProcessBaseCase}]
    \label{lemma:searchprocessbasecase}
    The following statements hold for $\{\overrightarrow{\mco}_{\bsj}\}_{\bsj \in W}$ and $\mcd$ at the initialization of the search process.
    \begin{enumerate}
        \item[(i)] The deformation $\overrightarrow{\mco}_{\bsj}$ is a possibly empty fixed-pairing-type deformation.
        \item[(ii)] The obstruction graph $\mcd$ is bipartite with coloring classes given by the pairing types.
        \item[(iii)] The deformations $\{\overrightarrow{\mco}_{\bsj}\}_{\bsj \in W}$ of a given pairing type are pairwise disjoint as sets of induced paths.
        \item [(iv)] Every vertex in the obstruction graph $\mcd$ has at most two incoming arcs. If $(\bsj\rightarrow\bsj'')$ and $(\bsj'\rightarrow\bsj'')$ are in $D$, then $\bsj''$ is the only member of exactly one of $U_{\bsj}$ or $U_{\bsj'}$.
    \end{enumerate}
\end{lemma}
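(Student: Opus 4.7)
The plan is to verify the four claims by tracking the invariants of $\mathsf{OS}$ through the initializing calls of Eq.~(\ref{equation:initializingstep}) and leveraging the structural results of Section~\ref{subsubsection:FixedPairingTypeDeformations}.

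For claim (i), I argue by induction on iterations that each $\overrightarrow{\mco}_t$ produced by $\mathsf{OS}(\overrightarrow{\mco}^{(-2)}_{\bsj},\bsj,W{\setminus}\{\bsj\})$ is a valid fixed-pairing-type deformation with pairing type inherited from $\bsj\prec_C\bsk$. At each step, Lemma~\ref{lemma:clawfreesymmetricdifference} forces the component $P$ of $G[C_j\oplus C'_{\sigma(j)}]$ containing the current $\bsu_t$ to be a path or an even cycle; the single-neighbor-in-$C_j$ structure at $\bsu_t$ guaranteed by $\bsu_t\prec_{C^{(|\mco_t|-1)}}\bsv$ makes $\bsu_t$ an endpoint, so $P$ is a path of odd length with its far endpoint lying in $C_j\subseteq C$, precisely matching Def.~\ref{definition:fixedpairingtypedeformation}. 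The update conditions on the next $\bsu_t$, combined with Lemma~\ref{lemma:pairingrelations}(a)(i), produce a next clone in $C$ of the correct pairing type, closing the induction.

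For claim (ii), any $\bsj''\in V_{\bsj}$ lies on some component $P$ of $G[C_j\oplus C'_{\sigma(j)}]$ of the pairing type of $\overrightarrow{\mco}_{\bsj}$. The endpoint argument from (i) shows the far endpoint of $P$ lies in $C\subseteq V{\setminus}W$, so every $W$-vertex on $P$ is interior, hence in $C'_{\sigma(j)}\setminus C_j$ with precisely two neighbors in $C_j$ forced by its position. Since $\bsj''\in W$ requires an odd total number of neighbors in $C$ and $C$ is not a $5$-cycle, $\bsj''$ has a single additional neighbor in the coloring class of $C$ opposite to $j$. Case (b.i) of Lemma~\ref{lemma:vertexcyclerelations} then places the clone of $\bsj''$ in this opposite coloring class, forcing $\overrightarrow{\mco}_{\bsj''}$ to have pairing type complementary to that of $\overrightarrow{\mco}_{\bsj}$. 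This gives the claimed bipartite structure on $\mcd$.

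For claim (iii), suppose $\overrightarrow{\mco}_{\bsj}$ and $\overrightarrow{\mco}_{\bsj'}$ have the same pairing type and share a path $P^{(r)}$. Corollary~\ref{corollary:pathuniqueendpoints} identifies the starting endpoint of each deformation along $P^{(r)}$ as the unique endpoint of $P^{(r)}$ in $C'_{\sigma(j)}\setminus C_j$, so the $\bsu_t$ that immediately precedes $P^{(r)}$ in both trajectories is this common endpoint. Because the state $(\overrightarrow{\mco}_t,\bsu_t)$ at any iteration determines $(\overrightarrow{\mco}_{t-1},\bsu_{t-1})$ by stripping off the last added path and identifying the previous $\bsu$ as its starting endpoint, iterative back-tracing forces both trajectories to coincide all the way down to iteration zero, so $\bsj=\bsj'$, contradicting distinctness.

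The main obstacle will be claim (iv). By (ii), incoming arcs at $\bsj''$ all originate from vertices of the pairing type opposite to $\bsj''$. For each such arc $(\bsj\to\bsj'')$, the terminal $\bsu_{\bsj,f}$ lies in some $U^{(r,s)}(\overrightarrow{\mco}_{\bsj})$ with $h_{\bsu_{\bsj,f}}$ anticommuting with $h_{P^{(r)}}$, by the update-rule conditions and the termination criterion. Applying Corollary~\ref{corollary:vertexanticommutingdistinctpaths} across distinct deformations of the same pairing type, any two arcs with the same terminal vertex $\bsu_{\bsj,f}$ would force the corresponding anticommuting paths in the two deformations to coincide, contradicting (iii). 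Thus each value of $\bsu_{\bsj,f}$ is realized by at most one $\bsj$. Since $\bsj''$ lies in a single component of $G[C_j\oplus C'_{\sigma(j)}]$ whose unique endpoint $e$ in $C'_{\sigma(j)}\setminus C_j$ is determined by $\bsj''$'s coloring class, the only possible terminal values are $\bsu_{\bsj,f}\in\{e,\bsj''\}$. This caps the in-degree at two; when both arcs are present, exactly one has $U_{\bsj}=\{\bsj''\}$ and the other $U_{\bsj}=\{e\}\neq\{\bsj''\}$, yielding the required exclusivity.
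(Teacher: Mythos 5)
Your overall architecture matches the paper's (tracking invariants of $\mathsf{OS}$, pairing-type flip at obstructions, disjointness via the endpoint/anticommutation structure), and statement~(ii) is essentially right, but there are two substantive gaps. First, in statement~(i) you verify the path structure of $P$ only relative to the \emph{original} hole $C$ (via $G[C_j{\oplus}C'_{\sigma(j)}]$), whereas Def.~\ref{definition:fixedpairingtypedeformation} requires $\bsj_{s-1}^{(r)}\prec_{C^{(r,s-1)}}\bsk_{s-1}^{(r)}$ relative to the hole \emph{as already deformed} by all earlier paths. Almost all of the work in the paper's proof of~(i) goes into showing that the earlier paths have not disturbed the relevant neighborhoods --- e.g.\ that $P\cap C_b\subseteq C^{(g)}$, that the mutual neighbors $\bsw_s$ of consecutive clone pairs are still present in $C^{(g)}$, and that the only way a vertex of $P$ can fail to be a valid clone in the current hole is to be a three-neighbor vertex lying in $W$ (hence an obstruction). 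Your appeal to Lemma~\ref{lemma:clawfreesymmetricdifference} and the endpoint structure at $\bsu_t$ does not supply any of this. Relatedly, in~(iii) the back-tracing step is not deterministic across two trajectories that share only a single path: the choice of the next $\bsu_t$ depends on the entire prefix $\overrightarrow{\mco}_t$, so knowing that both trajectories pass through $\bsj_0^{(r)}$ does not force their prefixes to agree. You still need Corollary~\ref{corollary:vertexanticommutingdistinctpaths} applied to the predecessors of $\bsj_0^{(r)}$ (the paper's minimal-shared-index argument) to propagate the coincidence backwards.

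The decisive gap is in~(iv). You assert that the component of $G[C_j{\oplus}C'_{\sigma(j)}]$ containing $\bsj''$ has a \emph{unique} endpoint $e$ in $C'_{\sigma(j)}{\setminus}C_j$, so that the terminal can only be $e$ or $\bsj''$. But a path component of this bipartite graph can have both endpoints in $C'_{\sigma(j)}{\setminus}C_j$ --- this is precisely the tethered configuration --- so a priori three distinct terminals $\{\bsj'',e_1,e_2\}$ are available and your argument only caps the in-degree at three. The bound of two, and the claim that $\bsj''$ is the sole member of exactly one of $U_{\bsj}$ or $U_{\bsj'}$, require showing that $\bsj''\in U_{\bsj}\cup U_{\bsj'}$ whenever two arcs point at $\bsj''$, i.e.\ ruling out the configuration in which the two arcs terminate at the two distinct endpoints $e_1,e_2$ with $\bsj''$ the unique $W$-vertex strictly between them. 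In the paper this is the bulk of the proof of~(iv): a long structural analysis of $C$ and $C'$ that culminates in an induced claw. Nothing in your proposal substitutes for that argument; the disjointness $U_{\bsj}\cap U_{\bsj'}=\varnothing$ that you do establish via Corollary~\ref{corollary:vertexanticommutingdistinctpaths} is only the easy half.
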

We prove Lemma~\ref{lemma:searchprocessbasecase} in Appendix~\ref{section:SearchProcessBaseCaseProof}. The induction is given as the following lemma.
\begin{lemma}[{restate=[name=restatement]SearchProcessInduction}]
    \label{lemma:searchprocessinduction}
    The statements of Lemma~\ref{lemma:searchprocessbasecase} hold for $\{\overrightarrow{\mco}^{(i)}_{\bsj}\}_{\bsj \in W^{(i)}}$ and $\mcd^{(i)}$ for all steps $i$ in the search process.  
\end{lemma}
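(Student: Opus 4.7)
The plan is induction on the iteration index $i$, taking Lemma~\ref{lemma:searchprocessbasecase} as the base case $i = -1$. For the inductive step, I would assume statements (i)--(iv) hold at step $i$ and verify that they are preserved by one rerouting update governed by Eqs.~\eqref{equation:searchprocesstrivialupdate} and \eqref{equation:searchprocessupdate}. Because only $\overrightarrow{\mco}^{(i)}_{\bsj}$ is modified among the deformations indexed by $W^{(i+1)} = W^{(i)}{\setminus}\{\bsj',\bsj''\}$, and because one vertex is removed from each coloring class (so the bipartite partition of (ii) is inherited as a subgraph), the conditions (ii) and (iii) for the unchanged deformations are immediate from the inductive hypothesis. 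The work concentrates on verifying (i) for $\overrightarrow{\mco}^{(i+1)}_{\bsj}$, disjointness of the newly appended paths, and then (iv) for $\mcd^{(i+1)}$, since the rerun of $\mathsf{OS}$ is the only source of new arcs.

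For statement (i), the idea is as follows. By the inductive hypothesis of (iv), $\bsj''$ lies in exactly one of $U^{(i)}_{\bsj}$ or $U^{(i)}_{\bsj'}$, which we may assume is $U^{(i)}_{\bsj'}$; since $\bsj''$ is in a fixed coloring class of $C'$ and obstructs both deformations, this pins the pairing types of $\overrightarrow{\mco}^{(i)}_{\bsj}$ and $\overrightarrow{\mco}^{(i)}_{\bsj'}$ to agree. By (iii) the two deformations share no paths of that pairing type, so the concatenation $\overrightarrow{\mco}^{(i)}_{\bsj'} \| \overrightarrow{\mco}^{(i)}_{\bsj}$ is a well-defined sequence of odd-length induced paths in $G[C_\cdot{\oplus}C'_\cdot]$. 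Lemma~\ref{lemma:pairingrelations}(b) applied to $\bsj''$ shows that once $\overrightarrow{\mco}^{(i)}_{\bsj'}$ has been executed, $\bsj''$ is absorbed into the deformation via its role in $U^{(i)}_{\bsj'}$, so the single-vertex-deformation condition at each intermediate step is preserved and the continuation from $\bsu^{(i)}_{\bsj,f}$ remains consistent. Each subsequent iteration of $\mathsf{OS}$ appends a path whose validity as a single-vertex-deformation chain is certified by Lemma~\ref{lemma:pairingrelations}(a.i); disjointness of these new paths from those already used by $\overrightarrow{\mco}^{(i+1)}_{\bsv}$ with $\bsv \in W^{(i+1)}{\setminus}\{\bsj\}$ follows from Corollary~\ref{corollary:vertexanticommutingdistinctpaths}, which simultaneously maintains (iii) at step $i+1$.

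The main obstacle is statement (iv). I would argue that a vertex $\bsw \in W^{(i+1)}$ receives an incoming arc from $\bsv$ precisely when $\bsw$ is the vertex of $W^{(i+1)}{\setminus}\{\bsv\}$ closest to $\bsu^{(i+1)}_{\bsv,f}$ along the unique component of $G[C_\cdot{\oplus}C'_\cdot]$ that contains $\bsu^{(i+1)}_{\bsv,f}$, with the coloring class determined by the pairing type of $\overrightarrow{\mco}^{(i+1)}_{\bsv}$. Since $\bsw$ lies in a single coloring class of $C'$, Lemma~\ref{lemma:clawfreesymmetricdifference} guarantees that $\bsw$ belongs to at most one component of each of the two pairing types, which caps the in-degree at two. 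The second clause of (iv) would follow by observing that if $\bsw$ were the unique element of both $U^{(i+1)}_{\bsv}$ and $U^{(i+1)}_{\bsv'}$, then Corollary~\ref{corollary:vertexanticommutingdistinctpathsendpoint} would force $\bsw \in \{\bsv,\bsv'\}$, which is impossible. The subtle point, and where I expect the most care, is ensuring that inherited arcs (surviving from $\mcd^{(i)}$ in the unchanged deformations) and newly created arcs from the rerun in Eq.~\eqref{equation:searchprocessupdate} cannot combine to produce a third incoming arc at any vertex; tracking this requires a case analysis of how removing $\bsj'$ and $\bsj''$ from $W^{(i)}$ alters the first-$W$-vertex found on each relevant component, and invoking Corollary~\ref{corollary:vertexanticommutingdistinctpaths} to preclude pathological coincidences.
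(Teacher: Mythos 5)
Your skeleton matches the paper's: induct on $i$ with Lemma~\ref{lemma:searchprocessbasecase} as the base case, observe that only $\overrightarrow{\mco}^{(i)}_{\bsj}$ changes among the surviving deformations, and concentrate on re-establishing (i) for the concatenation $\overrightarrow{\mco}^{(i)}_{\bsj'}\|\overrightarrow{\mco}^{(i)}_{\bsj}$ followed by the rerun of $\mathsf{OS}$. Your observations that the pairing types of $\overrightarrow{\mco}^{(i)}_{\bsj}$ and $\overrightarrow{\mco}^{(i)}_{\bsj'}$ agree, and that disjointness (iii) handles the well-definedness of the concatenation as a sequence, are correct and are also the paper's starting point.

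However, there is a genuine gap at the heart of statement (i). Being a well-defined sequence of odd-length induced paths is far from being a fixed-pairing-type deformation: Definition~\ref{definition:fixedpairingtypedeformation} requires $\bsj_{s-1}^{(r)}\prec_{C^{(r,s-1)}}\bsk_{s-1}^{(r)}$ at every intermediate step, and the paths of $\overrightarrow{\mco}^{(i)}_{\bsj}$ were certified as single-vertex deformations only of the holes obtained by applying $\overrightarrow{\mco}^{(i)}_{\bsj}$'s \emph{own} prefix to $C$. After pre-composing with all of $\overrightarrow{\mco}^{(i)}_{\bsj'}$, the intermediate holes are different sets, and one must rule out that vertices swapped in or out by $\overrightarrow{\mco}^{(i)}_{\bsj'}$ give $\bsj_{s-1}^{(r)}$ an extra neighbor that destroys the clone relation, or introduce a new tether along one of $\overrightarrow{\mco}^{(i)}_{\bsj}$'s paths. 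Your appeal to Lemma~\ref{lemma:pairingrelations}(b) ``applied to $\bsj''$'' does not do this work: that lemma concerns a single dependent vertex whose operator commutes with one path operator, whereas what is needed is a verification, for every $r$ and $s$ in the appended block, that $\Gamma_{C^{(r,s-1)}_{(\bsj',\bsj)}}[\bsj^{(r)}_{s-1}]$ still coincides with $\Gamma_{C^{(r,s-1)}_{(\bsj',\bsj)}}(\bsk^{(r)}_{s-1})$. The paper does this with a case analysis: any offending extra neighbor is shown, via its forced membership in specific coloring classes of $C$ and $C'$, to lie in some path of $\overrightarrow{\mco}^{(i)}_{\bsj'}$, contradicting its presence in the current hole (using Lemma~\ref{lemma:vertexonecoloringclass}, Corollary~\ref{corollary:vertexanticommutingdistinctpaths}, and claw-freeness). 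A second omission: the rerun of $\mathsf{OS}$ in Eq.~(\ref{equation:searchprocessupdate}) is seeded at $\bsu^{(i)}_{\bsj,f}$, and its signature requires this vertex not to lie in the current vertex set $W^{(i+1)}{\setminus}\{\bsj\}$; the paper proves this by introducing an ancestry tree $\mct^{(i)}$ of deformations and arguing that if $\bsu^{(i)}_{\bsj,f}$ were still in $W$ one of the anticommutation corollaries would be violated. Your proposal does not address the well-posedness of this recursive call, and without it the inductive step for (i) and the arc structure needed for (iv) cannot be completed.
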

We prove Lemma~\ref{lemma:searchprocessinduction} in Appendix~\ref{section:SearchProcessInductionProof}.
This process therefore returns an unobstructed deformation $\overrightarrow{\mco}$, since, at each step of the search process, exactly one source vertex and one obstruction vertex are removed. 
In the worst case, the search proceeds until there are no more obstruction vertices, in which case, the unobstructed deformation is trivial. We complete the proof of correctness with the following lemma.
\begin{lemma}
    Any unobstructed deformation satisfies the criterion of Lemma~\ref{lemma:solutioncriterion} and is palindromic.
\end{lemma}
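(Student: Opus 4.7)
The plan is to reduce the lemma to Lemma~\ref{lemma:solutioncriterion} by establishing the containment $U\subseteq C^{(m)}$ directly from the termination condition of $\mathsf{OS}$; once this is in hand, palindromicity is exactly the content of Lemma~\ref{lemma:solutioncriterion}, so no further argument is needed. Thus the whole task reduces to certifying that no dependent vertex has escaped the running hole at the moment the search halts in its unobstructed branch.

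To show $U\subseteq C^{(m)}$, I fix an arbitrary $\bsu\in U^{(r,s)}$ and exploit the fact that $\mathsf{OS}$ has halted unobstructed. Termination there means that for every $(r',s')$ with $U^{(r',s')}(\overrightarrow{\mco})\neq\varnothing$, the sole element of $U^{(r',s')}$ fails to satisfy at least one of (i) not being in $C^{(\abs{\mco}-1)}=C^{(m)}$, or (ii) having its Hamiltonian term anticommute with $h_{P^{(r')}}$. I split into two cases according to which condition fails for $\bsu$. If $h_{\bsu}$ anticommutes with $h_{P^{(r)}}$, then (ii) holds, so (i) must fail and $\bsu\in C^{(m)}$ directly.

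The more delicate case is when $h_{\bsu}$ commutes with $h_{P^{(r)}}$. Here I invoke Lemma~\ref{lemma:pairingrelations}(b) to produce a witness $\bsu'\in U^{(r,s')}$ with $s'<s$ such that $h_{\bsu'}$ anticommutes with $h_{P^{(r)}}$ and $\bsu$ lies on the unique component $P$ of $G[C_j\oplus C'_{\sigma(j)}]$ containing $\bsu'$ of the pairing type matching $\overrightarrow{\mco}$. The first case applied to $\bsu'$ yields $\bsu'\in C^{(m)}$; the remaining task is to promote this membership from $\bsu'$ to $\bsu$. I propose doing so by tracking the membership of $\bsu$ in $C^{(r')}$ as $r'$ increases, using Corollary~\ref{corollary:vertextwopathcomponents} to bound $\bsu$'s presence in path components of $\overrightarrow{\mco}$, together with Lemma~\ref{lemma:vertexonecoloringclass} and Corollary~\ref{corollary:vertexonepathcomponent} to control the coloring-class transitions. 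The condition $\Delta_{C^{(r)}}(\bsu)=4$ supplied by Lemma~\ref{lemma:pairingrelations}(b), in combination with $\bsu\in P$ and $\bsu'\in C^{(m)}$, forces any removal of $\bsu$ from the running hole to be paired with a restoration along $P$ itself.

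The main obstacle is this final bookkeeping step: one must rule out a subtle scenario in which some intermediate path silently strips $\bsu$ from $C$ without ever returning it. The mechanism that excludes this is the maximum-$\mcl_{r,s}$ selection rule of $\mathsf{OS}$, which guarantees that any such removal-candidate $\bsu$ would itself have been picked up and processed before the search could terminate; the unobstructed halt therefore certifies that no such silent removal has occurred. With $U\subseteq C^{(m)}$ verified, Lemma~\ref{lemma:solutioncriterion} completes the proof.
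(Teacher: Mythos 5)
Your overall architecture matches the paper's: reduce to the containment $U\subseteq C^{(m)}$, invoke Lemma~\ref{lemma:solutioncriterion}, split on whether $h_{\bsu}$ anticommutes or commutes with $h_{P^{(r)}}$, dispatch the anticommuting case directly from the termination condition of $\mathsf{OS}$, and in the commuting case produce the witness $\bsu'\in U^{(r,s')}$ via Lemma~\ref{lemma:pairingrelations}(b) and feed it back into the first case. Up to that point the proposal is sound.

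The gap is in the promotion from $\bsu'\in C^{(m)}$ to $\bsu\in C^{(m)}$, and the mechanism you propose for closing it is the wrong one. First, the missing link: you never argue that the path component $P$ containing $\bsu$ and $\bsu'$ actually occurs in $\overrightarrow{\mco}$, and this is the crux. Since $h_{\bsu'}$ anticommutes with $h_{P^{(r)}}$ and $\bsu'\in U^{(r,s')}$, the vertex $\bsu'$ is not in $C^{(r)}$ (nor in $C^{(r-1)}$ or $P^{(r)}$); for it nevertheless to lie in $C^{(m)}$ it must be inserted by a later deformation, and by Corollary~\ref{corollary:vertexonepathcomponent} the only path that can insert it is $P$, forcing $P=P^{(g)}$ for some $g>r$. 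Deforming by $P^{(g)}$ then inserts $\bsu$ as well, since $\bsu$ lies in the same $C'$-coloring class of $P$ as $\bsu'$. Second, your appeal to the maximum-$\mcl_{r,s}$ selection rule to exclude a ``silent removal'' of $\bsu$ cannot work: the vertex $\bsu$ in this case has $h_{\bsu}$ \emph{commuting} with $h_{P^{(r)}}$, so it fails condition~(ii) of the update rule and is never a candidate that $\mathsf{OS}$ would pick up and process; the unobstructed halt therefore certifies nothing about $\bsu$ directly. What actually prevents removal after step $g$ is structural, not algorithmic: $\bsu$ would have to belong to an additional path component of $\overrightarrow{\mco}$, and Corollary~\ref{corollary:vertexonepathcomponent} together with Corollary~\ref{corollary:vertextwopathcomponents} shows the only possibility is $\bsu=\bsk_{s-2}^{(r)}=\bsj_{t-1}^{(g)}$, i.e., removal at step $r$ followed by reinsertion at step $g>r$, after which no further component can contain $\bsu$. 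Supplying these two steps turns your sketch into the paper's proof.
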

\begin{proof}
    If a deformation is unobstructed, then there is no vertex $\bsu$ in $U^{(r,s)}{\setminus}C^{(m)}$ such that $h_{\bsu}$ and $h_{P^{(r)}}$ anticommute. If there is a vertex $\bsu$ in $U^{(r,s)}{\setminus}C^{(m)}$ such that $h_{\bsu}$ and $h_{P^{(r)}}$ commute, then by Lemma~\ref{lemma:pairingrelations}, there is another vertex $\bsu'$ in $U^{(r,s')}$ with $s'<s$ such that $h_{\bsu'}$ and $h_{P^{(r)}}$ anticommute, and $\bsu'$ and $\bsu$ are members of the same path component $P$ with the same pairing type as $\overrightarrow{\mco}$. Then $\bsu'$ is not in $C^{(r)}$, since $h_{\bsu'}$ and $h_{P^{(r)}}$ anticommute. Therefore, $\bsu'$ is not in $C^{(r-1)}$ and $\bsu'$ is not in $P^{(r)}$, since $\bsu'$ is in $U^{(r,s')}$ and $h_{\bsu'}$ and $h_{P^{(r)}}$ anticommute. Then $P$ must be in $\overrightarrow{\mco}$ for $\bsu'$ to be in $C^{(m)}$. Since $\bsu'$ is in at most one such path component by Corollary~\ref{corollary:vertexonepathcomponent}, then $P=P^{(g)}$ must be in $\overrightarrow{\mco}$ with $g>r$ for $\bsu'$ to be in $C^{(m)}$. By deforming by $P^{(g)}$, we have that $\bsu$ is in $C^{(g)}$, since $\bsu$ and $\bsu'$ are members of the same coloring class in $C'$ and $\bsu$ is in $U^{(r,s)}$. If $\bsu$ is not in $C^{(m)}$, then $\bsu$ is a member of an additional path component of $\overrightarrow{\mco}$. By Corollary~\ref{corollary:vertexonepathcomponent}, this path component must be $P^{(r)}$, however, then $\bsu=\bsk_{s-2}^{(r)}$ and $\bsu=\bsj_{t-1}^{(g)}$ by Corollary~\ref{corollary:vertextwopathcomponents}. Since $\bsu$ is contained in no additional path components of $\overrightarrow{\mco}$, then $\bsu$ is in $C^{(m)}$. Therefore, there is no vertex $\bsu$ in $U^{(r,s)}{\setminus}C^{(m)}$ for any $r$ and $s$, and so $U$ is a subset of $C^{(m)}$. This completes the proof.
\end{proof}

\subsubsection{Proof of Cancellation}
\label{subsubsection:ProofCancellation}

We now prove Theorem~\ref{theorem:conservedcharges}~(iii). 
\begin{proof}[Proof of Theorem~\ref*{theorem:conservedcharges}~$(iii)$]
    Let
    \begin{equation}
        h_{\mco} = \prod_{r=0}^mh_{P^{(r)}},
    \end{equation}
    where the product taken in ascending order in $r$ and 
    \begin{equation}
        h_{P^{(r)}} =  
        \begin{cases}
            h_{\left(P^{(r)} \cap C_a\right)}h_{\left(P^{(r)} \cap C'_c\right)} & P^{(r)} \subseteq C_a{\oplus}C'_c, \\
            h_{\left(P^{(r)} \cap C_b\right)}h_{\left(P^{(r)} \cap C'_d\right)} & P^{(r)} \subseteq C_b{\oplus}C'_d.
        \end{cases}
    \end{equation}
    We have
    \begin{align}
        h_Ch_{C'} &= \scomm{h_C}{h_{C'}}h_{C'}h_C, \\
        h_C h_{\mco}h_{\mco}h_{C'} &= \scomm{h_C}{h_{C'}}h_{C'}h^{\dagger}_{\mco}h^{\dagger}_{\mco}h_{C}, \\
        h_{\wt{C}}h_{\wt{C}'} &= \scomm{h_C}{h_{C'}}h_{\wt{C}'}h_{\wt{C}},
    \end{align}
    where all scale factors cancel upon conjugating by $h_{\mco}^2=(h^{\dagger}_{\mco})^2$. We use the fact that independent set factors such as $h_{C_a}$ and $h_{C_b}$ commute to multiply their corresponding factors in $h_{P^{(r)}}$ for each $r$. This is possible as $h_{C^{(r)}}$ is an even hole at each step of the deformation. Note that the first and last line are scalar commutators, so we have
    \begin{equation}
        \scomm{h_{\wt{C}}}{h_{\wt{C}'}} = \scomm{h_{C}}{h_{C'}} = -1.
    \end{equation}
    
    To show that the $(C,C')$ and $(\wt{C},\wt{C}')$ terms cancel, we show that $h_{\mco}$ is anti-hermitian. Note that the search process is such that an odd number of vertices in the obstruction graph are included in $\mco$, and that every such vertex is included in at most one path from $\overrightarrow{\mco}$. Further, note that $h_{P^{(r)}}$ and $h_{C^{(r-1)}}$ anticommute by construction. Since $C^{(r-1)}=C{\oplus}\bigoplus_{j=0}^{r-1}P^{(j)}$, then $h_{P^{(r)}}$ has the opposite commutation relation to $\prod_{j=0}^{r-1}h_{P^{(j)}}$ as it does to $h_{C}$. The commutation relation between $h_{P^{(r)}}$ and $h_{C}$ is the number of vertices from $W$ contained in $P^{(r)}$. Further, each factor $h_{P^{(r)}}$ is anti-hermitian. Thus
    \begin{align}
        h_{\mco}^{\dagger} &= \prod_{r=0}^mh^{\dagger}_{P^{(m-r)}} \\
        &= \prod_{r=0}^m(-h_{P^{(m-r)}}) \\
        &= -\prod_{r=0}^mh_{P^{(r)}} \\
        &= -h_{\mco}.
    \end{align}
    The third line follows by commuting each factor of $h_{P^{(r)}}$ through $\prod_{j=0}^{r-1}h_{P^{(j)}}$, which incurs an additional factor of $-1$ for every path with an even number of vertices in $W$. This factor gives an overall factor of $-1$ for each path $P^{(r)}$ containing an odd number of vertices in $W$. Since the total number of vertices in $W$ contained in any path in $\mco$ is odd, there must be an odd number of paths containing an odd number of vertices in $W$.
    
    By setting $h_{\mco}h_{\mco}^{\dagger}=B^2I$, we have
    \begin{align}
        h_Ch_{C'} &= -B^{-2}h_Ch_{\mco}h_{\mco}h_{C'} \\
        &= -h_{\wt{C}}h_{\wt{C}'}.
    \end{align}    
    Therefore, for every collection of terms in Eq.~(\ref{equation:conservedcyclecommutator}) related by a common collection of fixed-pairing-type multiple-path deformations, we fix a deformation to cancel terms pairwise, completing the proof.
\end{proof}

\section{Exact Solutions}
\label{section:ExactSolutions}

With the identification of the conserved charges, we proceed to prove the remainder of Result~\ref{result:freefermionsolution}. In particular, we show that within each symmetric subspace of the generalized-even-hole operators, a Hamiltonian with a simplicial, claw-free frustration graph exhibits a free-fermion solution.
\begin{theorem}[Free-Fermion Solution]
    \label{theorem:freefermionsolution}
    Let $H$ be an SCF Hamiltonian with frustration graph $G$. There exists a set of mutually commuting cycle symmetries $\{\jkg{C_0}{G}\}_{\avg{C_0}}$ for $H$, such that
    \begin{equation}
        H = \sum_{\mcj}\left(\sum_{j=1}^{\alpha(G)}\varepsilon_{\mcj,j}[\psi_{\mcj,j},\psi_{\mcj,j}^{\dagger}]\right)\Pi_{\mcj}, \notag
    \end{equation}
    with
    \begin{equation}
        \psi_{\mcj,j} = N_{\mcj,j}^{-1}\Pi_{\mcj}T(-u_{\mcj,j}) \chi T(u_{\mcj,j}), \notag
    \end{equation}
    where $\{\Pi_{\mcj}\}_{\mcj}$ is a complete set of projectors onto the mutual eigenspaces of $\{\jkg{C_0}{G}\}_{\avg{C_0}}$, and $u_{\mcj,j}$ satisfies $Z_{G}(-u_{\mcj,j}^2)\Pi_{\mcj}=0$.
    Furthermore, the projectors $\Pi_{\mcj}$ satisfy
    \begin{equation}
        [\Pi_{\mcj},\psi_{\mcj',j}] = 0, \notag
    \end{equation}
    for all $\mcj$, $\mcj'$, and $j$. The single-particle energies in the subspace labeled by $\mcj$ are given by $\varepsilon_{\mcj,j}=1/u_{\mcj,j}$.
\end{theorem}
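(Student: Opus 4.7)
The plan is to lift the solution method of Ref.~\cite{elman2021free}, which handles the even-hole-free claw-free case, to act on each simultaneous eigenspace of the generalized cycle symmetries independently. Theorem~\ref{theorem:conservedcharges} supplies the three ingredients that enable this: the operators $\{\jkg{C_0}{G}\}_{\avg{C_0}}$ mutually commute and commute with both $H=\qkg{1}{G}$ and every independent set charge $\qkg{k}{G}$; hence they commute with the transfer operator $T_G(u)$ and with $Z_G(-u^2)=T_G(u)T_G(-u)$. All of these operators are therefore block-diagonal in the simultaneous eigenbasis $\{\Pi_{\mcj}\}$, and $Z_G(-u^2)\Pi_{\mcj}$ reduces to an ordinary scalar polynomial $Z_{G,\mcj}(-u^2)$ on each sector. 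A short case analysis shows that every even hole $C_0$ satisfies $|K_s\cap C_0|\in\{0,2\}$: three vertices in $K_s\cap C_0$ would form a triangle in an induced cycle, while a single vertex $\bsj\in K_s\cap C_0$ would force its two non-adjacent neighbors in $C_0$ into $\Gamma(\bsj){\setminus}K_s$, contradicting the simplicial property that this set induces a clique. Hence $\chi$ commutes with every $h_{C_0}$ and therefore with every $\jkg{C_0}{G}$ and $\Pi_{\mcj}$, which already establishes $[\Pi_{\mcj},\psi_{\mcj',j}]=0$.

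The heart of the proof is to derive a single operator identity tying $\operatorname{ad}_H$ to the action of $T_G(\pm u)$ on $\chi$. Starting from the cliquewise recursion~\eqref{equation:transfermatrixrecursion} for $T_G(u)$ applied at $K_s$, and using the defining relation $\scomm{\chi}{h_{\bsj}}=(-1)^{\delta_{\bsj\in K_s}}$, one expands $\chi T_G(u)$ and $T_G(-u)\chi$ in terms of transfer operators on the induced subgraphs $G{\setminus}\Gamma[\bsj]$ for $\bsj\in K_s$; each such subgraph inherits the SCF structure, with neighboring cliques $K_{\bsj}$ available for a second iteration of the recursion. Iterating this process and isolating the contribution corresponding to $H$ produces an identity of the schematic form
\begin{equation}
\tfrac{u}{2}[H,\,T_G(-u)\chi T_G(u)]+T_G(-u)\chi T_G(u) = Z_G(-u^2)\,\chi, \notag
\end{equation}
in which the right-hand side is an explicit polynomial in the generalized cycle symmetries via Eq.~\eqref{equation:cyclesymmetries}. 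Projecting onto the $\mcj$-sector and evaluating at any root $u_{\mcj,j}$ of $Z_{G,\mcj}(-u^2)$ collapses the right-hand side to zero, which, after dividing by $N_{\mcj,j}$, yields the eigenmode relation $[H,\psi_{\mcj,j}]=2\varepsilon_{\mcj,j}\psi_{\mcj,j}$ with $\varepsilon_{\mcj,j}=1/u_{\mcj,j}$.

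With the eigenmode relation in hand, the remaining steps follow the pattern of Ref.~\cite{elman2021free}. Evaluating products of the form $\Pi_{\mcj}T_G(-u_{\mcj,j})\chi T_G(u_{\mcj,j})\cdot T_G(-u_{\mcj,k})\chi T_G(u_{\mcj,k})$ via the key identity, together with $\chi^2\propto I$ and the scalarity of $Z_{G,\mcj}(-u^2)$ on each sector, produces the canonical anticommutation relations $\{\psi_{\mcj,j},\psi_{\mcj,k}^{\dagger}\}=\delta_{jk}\Pi_{\mcj}$ and $\{\psi_{\mcj,j},\psi_{\mcj,k}\}=0$ once $N_{\mcj,j}$ is fixed; reality and non-vanishing of the $u_{\mcj,j}$, hence positivity of $\varepsilon_{\mcj,j}$, follow by applying the Chudnovsky--Seymour theorem on real-rootedness of the weighted independence polynomial of a claw-free graph~\cite{chudnovsky2007roots,leake2019generalizations} to each sector. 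Collecting the $\alpha(G)$ eigenmodes per sector and comparing with $H$ on each subspace then yields Eq.~\eqref{equation:firstresult}. The main obstacle is the combinatorial bookkeeping that produces the central commutator identity: in the presence of even holes, new contributions to $[\chi,T_G(u)]$ arise that must be identified with precisely the generalized cycle-symmetry terms of $Z_G(-u^2)$ from Eq.~\eqref{equation:cyclesymmetries}, so that upon projection by $\Pi_{\mcj}$ they condense cleanly into the scalar $Z_{G,\mcj}(-u^2)$ without introducing residual terms that would spoil the free-fermion structure on each sector.
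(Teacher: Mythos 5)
Your proposal follows essentially the same route as the paper: use Theorem~\ref{theorem:conservedcharges} to restrict to the mutual eigenspaces of the generalized cycle symmetries (on which $Z_{G}(-u^2)$ becomes scalar), prove a transfer-operator identity at the simplicial clique (the paper's Lemma~\ref{lemma:fundamentalidentity}), and then derive the eigenmode relation, canonical anticommutation relations, and the diagonal form of $H$ exactly as in Ref.~\cite{elman2021free}; your observation that $\abs{K_s \cap C_0}\in\{0,2\}$, so that $\chi$ commutes with the $\jkg{C_0}{G}$ and hence with $\Pi_{\mcj}$, is also the argument needed for $[\Pi_{\mcj},\psi_{\mcj',j}]=0$. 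Two small cautions: your ``schematic'' identity cannot hold literally as written (its left-hand side has odd powers of $u$ while $Z_G(-u^2)\chi$ is even --- the correct right-hand side is $Z_{G}(-u^2)\bigl(1-u\sum_{\bsj\in K_s}h_{\bsj}\bigr)\chi$, which still vanishes at the roots, so your conclusion survives); and the appeal to Chudnovsky--Seymour real-rootedness does not directly apply to $Z_{G,\mcj}$ in the presence of even holes, since by Lemma~\ref{lemma:generalizedcharacteristicpolynomial} it is then a cycle-eigenvalue-weighted combination of independence polynomials rather than an independence polynomial itself.
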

The proof closely follows the analysis given in Refs.~\cite{elman2021free, fendley2019free} with slight modifications and generalizations where necessary. We first prove the following lemma relating the transfer operators to the generalized characteristic polynomial of the frustration graph $G$.
\begin{lemma}[{restate=[name=restatement]GeneralizedCharacteristicPolynomial}]
    \label{lemma:generalizedcharacteristicpolynomial}
    Let $H$ be a Hamiltonian with claw-free frustration graph $G$. The generalized characteristic polynomial $Z_{G}(-u^2)$ is given by
    \begin{equation}
        Z_{G}(-u^2) = \!\!\!\!\sum_{\mcx\in\mathscr{C}^{\text{(even)}}_{G}}(-u^2)^{\abs{\partial\mcx}/2}2^{\abs{\mcx}}I_{G{\setminus}\Gamma[\mcx]}(-u^2)\prod_{C\in\mcx}h_C. \notag
    \end{equation}
\end{lemma}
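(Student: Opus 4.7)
The plan is to start from the representation already provided in Eq.~(\ref{equation:cyclesymmetries}), which expresses $Z_{G}(-u^2)$ as a sum over pairs of independent sets $(S,S')$ whose symmetric difference is $\partial\mcx$ for some $\mcx\in\mathscr{C}^{\text{(even)}}_G$, and to reparametrize this sum by $(\mcx, T, \tau)$, where $T = S\cap S'$ and $\tau\colon\mcx\to\{a,b\}$ specifies, for each $C\in\mcx$ with color-class decomposition $C=C_a\cup C_b$, which class is placed in $S$ (with the complementary class going to $S'$). Given $(\mcx,T,\tau)$ we recover $S=T\cup\bigcup_{C\in\mcx}C_{\tau(C)}$ and $S'=T\cup\bigcup_{C\in\mcx}C_{\bar\tau(C)}$, and this is a bijection with the support of the sum.

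The first substantive step is to pin down the range of $T$. Since both $S$ and $S'$ must be independent and any $\bsj\in T$ with a neighbor in $\partial\mcx$ would violate independence of whichever of $S,S'$ contained that neighbor, no $\bsj\in T$ can neighbor $\partial\mcx$; hence $T$ ranges over $\mcs_{G\setminus\Gamma[\mcx]}$, independently of $\tau$. Conversely every such $T$ together with any of the $2^{\abs{\mcx}}$ choices of $\tau$ yields a valid $(S,S')$, so the count of admissible pairs for fixed $\mcx$ is exactly $2^{\abs{\mcx}}\abs{\mcs_{G\setminus\Gamma[\mcx]}}$ (weighted suitably).

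Next, the summand factorizes cleanly. From $\abs{S}=\abs{T}+\abs{\partial\mcx}/2$ the weight splits as
\begin{equation}
    (-u^2)^{\abs{S}}\left(\prod_{\bsj\in S\cap S'}b_{\bsj}^2\right)\prod_{C\in\mcx}h_C = (-u^2)^{\abs{\partial\mcx}/2}\Bigl[(-u^2)^{\abs{T}}\prod_{\bsj\in T}b_{\bsj}^2\Bigr]\prod_{C\in\mcx}h_C. \notag
\end{equation}
The bracketed factor is exactly the contribution of $T$ to $I_{G\setminus\Gamma[\mcx]}(-u^2)$ by Def.~\ref{definition:independencepolynomial}, and $\prod_{C\in\mcx}h_C$ is independent of $\tau$ because $h_{C_a}$ and $h_{C_b}$ commute (so either ordering gives $h_C$), and different holes in $\mcx$ are compatible and hence their operators also commute. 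Summing over the $2^{\abs{\mcx}}$ choices of $\tau$ produces $2^{\abs{\mcx}}\prod_{C\in\mcx}h_C$; summing over $T\in\mcs_{G\setminus\Gamma[\mcx]}$ produces $I_{G\setminus\Gamma[\mcx]}(-u^2)$; and summing over $\mcx$ yields the stated identity.

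The main point requiring care is verifying that the claimed bijection $(S,S')\leftrightarrow(\mcx,T,\tau)$ is well defined on the support of Eq.~(\ref{equation:cyclesymmetries})---in particular, that $\mcx$ and hence the partition of $\partial\mcx$ into color classes is uniquely recoverable from $(S,S')$. This follows from Lemma~\ref{lemma:clawfreesymmetricdifference}, which ensures that $G[S\oplus S']$ is bipartite with maximum degree at most two and thus decomposes into isolated vertices, induced paths, and even holes; the support constraint $S\oplus S'=\partial\mcx$ forces each component to actually be an even hole, and their pairwise compatibility follows because the constraint that $T\subseteq V\setminus\Gamma[\partial\mcx]$ derived above already separates distinct holes' closed neighborhoods. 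Once this bijection is in hand, the remainder of the argument is the bookkeeping sketched above.
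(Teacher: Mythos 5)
Your reparametrization of the pair-sum by $(\mcx,T,\tau)$ is correct, and it actually supplies detail that the paper's own proof leaves implicit: the paper passes from the sum over pairs $(S,S')$ with $S{\oplus}S'=\partial\mcx$ directly to the factorized form without spelling out the bijection, the observation that $T=S\cap S'$ must avoid $\Gamma[\partial\mcx]$, or the $\tau$-independence of $\prod_{C\in\mcx}h_C$. That part of your argument is sound.

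However, there is a genuine gap: you take Eq.~(\ref{equation:cyclesymmetries}) as your starting point, but that representation is not established independently anywhere in the paper --- the main text defers its proof to the very appendix that proves this lemma, so it is part of what you must show. Starting from the definition $Z_{G}(-u^2)=T_G(u)T_G(-u)=\sum_{s,t}(-1)^su^{s+t}\qkg{s}{G}\qkg{t}{G}$, one must prove that only pairs $(S,S')$ whose symmetric difference is a disjoint union of compatible even holes survive. This requires three cancellation arguments that your proposal omits entirely: (i) terms with $s+t$ odd vanish because the independent-set charges commute, via $(-1)^s\qkg{s}{G}\qkg{t}{G}+(-1)^t\qkg{t}{G}\qkg{s}{G}=(-1)^s[\qkg{s}{G},\qkg{t}{G}]=0$, which invokes Theorem~\ref{theorem:conservedcharges}~(i); (ii) for $s+t$ even, pairs with $\abs{E[S{\oplus}S']}$ odd cancel against their transposes $(S',S)$, since $h_{S{\setminus}S'}$ and $h_{S'{\setminus}S}$ then anticommute; and (iii) pairs for which $G[S{\oplus}S']$ contains a path component $P$ (including isolated vertices) cancel against the pair $(S{\oplus}P,S'{\oplus}P)$, which carries the opposite overall sign because $(-1)^{\abs{S{\oplus}P}+\abs{E[P]}}=-(-1)^{\abs{S}}$. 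Only after these steps does Lemma~\ref{lemma:clawfreesymmetricdifference} reduce the surviving symmetric differences to disjoint unions of even holes. A small further correction: those holes are pairwise compatible because they are distinct connected components of the induced subgraph $G[S{\oplus}S']$, so there are no edges between them in $G$; this has nothing to do with the condition $T\subseteq V{\setminus}\Gamma[\partial\mcx]$, which governs $S\cap S'$ rather than edges between the holes. With items (i)--(iii) supplied, your concluding bookkeeping completes the proof.
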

We prove Lemma~\ref{lemma:generalizedcharacteristicpolynomial} in Appendix~\ref{section:GeneralizedCharacteristicPolynomialProof}. We also require the following lemma.
\begin{lemma}[{restate=[name=restatement]FundamentalIdentity}]
    \label{lemma:fundamentalidentity}
    Let $H$ be an SCF Hamiltonian with frustration graph $G$. Further, let $K_s$ be a simplicial clique and let $\chi$ be a simplicial mode with respect to $K_s$. Then
    \begin{equation}
        \begin{split}
            T_{G}(u) & \left(1+u\sum_{\bsj \in K_s}h_{\bsj}\right) \chi T_{G}(-u) \\
            &= Z_{G}(-u^2)\left(1-u\sum_{\bsj\in K_s}h_{\bsj}\right)\chi. \notag
        \end{split}
    \end{equation}
\end{lemma}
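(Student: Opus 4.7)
The plan is to reduce the lemma to a cleaner auxiliary identity, then to a purely algebraic commutator identity, and finally prove the latter by combining a vanishing argument with an involution, both of which exploit the simplicial hypothesis. Set $A = \sum_{\bsj \in K_s} h_{\bsj}$. Since $\chi$ is a simplicial mode with respect to $K_s$, each $h_{\bsj}$ with $\bsj \in K_s$ anticommutes with $\chi$, giving $\chi A = -A\chi$ and hence $(1\pm uA)\chi = \chi(1\mp uA)$. The lemma then follows by left-multiplying with $T_G(u)$ once I establish the auxiliary identity
\begin{equation*}
(1+uA)\chi\, T_G(-u) = T_G(-u)(1-uA)\chi,
\end{equation*}
together with $Z_G(-u^2) = T_G(u) T_G(-u)$.

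Since any independent set $S$ meets the clique $K_s$ in at most one vertex, $h_S\chi = (-1)^{|S\cap K_s|}\chi h_S$ yields $T_G(-u)\chi = \chi[2T_{G\setminus K_s}(-u)- T_G(-u)]$. Substituting this together with $\chi A = -A\chi$ into the auxiliary identity, applying the clique recursion (Eq.~\ref{equation:transfermatrixrecursion}) to replace $T_{G\setminus K_s}(-u) - T_G(-u)$, and cancelling $\chi$, the auxiliary identity becomes equivalent to the algebraic commutator identity
\begin{equation*}
\bigl[\, T_G(-u),\, A\, \bigr] = 2\sum_{\bsj \in K_s}\Bigl[\, T_{G\setminus K_s}(-u)\, h_{\bsj} - h_{\bsj}\, T_{G\setminus\Gamma[\bsj]}(-u)\, \Bigr],
\end{equation*}
which no longer involves $\chi$. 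Expanding $T_G(-u) = \sum_S u^{|S|}h_S$ and using $[h_S, h_{\bsj}] = ((-1)^{\Delta_S(\bsj)}-1)\, h_{\bsj}h_S$ (and noting that $\bsj \in S$ forces $\Delta_S(\bsj)=0$ so the bracket vanishes), I would cancel the ``diagonal'' contributions $S \subseteq V\setminus\Gamma[\bsj]$ that appear identically on both sides, reducing the identity to the vanishing of two residual sums: (b) over pairs $(\bsj, S)$ with $\bsj \in K_s$, $S\cap K_s = \{\bsk\}$ for some $\bsk \neq \bsj$, and $\Delta_S(\bsj)$ odd; and (c) over pairs $(\bsj, S)$ with $S \subseteq V\setminus K_s$ and $\Delta_S(\bsj)$ even and positive.

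The main obstacle---and the place where the full simplicial hypothesis enters---is showing both residual sums vanish. The simplicial property that $\Gamma(\bsj)\setminus K_s$ is a clique forces $|S \cap \Gamma(\bsj)| \leq 1$ for any independent $S \subseteq V\setminus K_s$, so $\Delta_S(\bsj) \leq 1$. Sum (c) is then vacuously empty, since no positive even value is available. For sum (b), writing $T = S\setminus\{\bsk\}$ and noting $\bsk \sim \bsj$ in $K_s$, the condition that $\Delta_S(\bsj)$ is odd becomes $\Delta_T(\bsj)$ even, so the same simplicial bound forces $T \subseteq V\setminus\Gamma[\bsj]$; combined with the required $T \subseteq V\setminus\Gamma[\bsk]$, one has $T \subseteq V\setminus(\Gamma[\bsj]\cup\Gamma[\bsk])$, and in particular the triple $(\bsj, \bsk, T)$ and its image $(\bsk, \bsj, T)$ are both admissible indices. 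The fixed-point-free involution $(\bsj, \bsk, T) \mapsto (\bsk, \bsj, T)$ then pairs terms whose contributions cancel by the anticommutation $h_{\bsj}h_{\bsk} = -h_{\bsk}h_{\bsj}$ (valid since $K_s$ is a clique). A telling sanity check is the three-vertex path $\bsk\md\bsj\md\bsl$ with $K_s = \{\bsj\}$ and $\bsk \not\sim \bsl$ (so $\bsj$ is not simplicial): a direct expansion shows the commutator identity fails by the uncancelled term $-2u^2 h_{\bsj}h_{\bsk}h_{\bsl}$, which corresponds precisely to the sum-(c) contribution that the simplicial bound would have killed.
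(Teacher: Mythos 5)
Your proof is correct, and after the shared first step it takes a genuinely different route from the paper's. Both arguments begin by reducing the lemma to the auxiliary identity $(1+uA)\chi T_G(-u)=T_G(-u)(1-uA)\chi$ with $A=\sum_{\bsj\in K_s}h_{\bsj}$. From there the paper equates coefficients of $u^k$ and manipulates the independent-set charges via nested clique recursions (Eq.~(\ref{equation:chargecliquerecursion}) applied to $K_s$ and to $K_{\bsj}=\Gamma[\bsj]{\setminus}(K_s{\setminus}\{\bsj\})$, plus Lemma~\ref{lemma:kjrecursion}), with the decisive cancellation being a double sum over distinct $\bsj,\bsj'\in K_s$ that vanishes because a symmetric coefficient multiplies the anticommuting product $h_{\bsj'}h_{\bsj}$. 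You instead strip out $\chi$ entirely --- using $h_S\chi=(-1)^{\abs{S\cap K_s}}\chi h_S$ and one application of Eq.~(\ref{equation:transfermatrixrecursion}) --- to arrive at a $\chi$-free commutator identity, which you verify by direct expansion over independent sets. Your sign-reversing involution $(\bsj,\bsk,T)\mapsto(\bsk,\bsj,T)$ is the same cancellation mechanism as the paper's vanishing double sum, just unpacked at the level of individual independent sets, and your sum~(c) is handled by the bound $\Delta_S(\bsj)\leq1$ for independent $S\subseteq V{\setminus}K_s$, which is exactly where the paper's proof also (implicitly) invokes simpliciality via the fact that $K_{\bsj}$ is a clique. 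What your version buys is transparency about where the hypothesis is load-bearing --- your counterexample with a non-simplicial center vertex pinpoints the failure mode precisely --- at the cost of the paper's more compact generating-function bookkeeping, which reuses machinery (the charge recursions) needed elsewhere. One small point worth making explicit if you write this up: cancelling $\chi$ from both sides is legitimate because $\chi$ is a Pauli operator and hence invertible.
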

We prove Lemma~\ref{lemma:fundamentalidentity} in Appendix~\ref{section:FundamentalIdentityProof}. Lemma~\ref{lemma:fundamentalidentity} immediately shows that the incognito modes satisfy the eigenmode condition for the Hamiltonian over the subspace specified by $\mcj$.
\begin{lemma}
    \label{lemma:commutationrelation}
    Let $H$ be an SCF Hamiltonian with frustration graph $G$. The single-particle energies $\{\varepsilon_{\mcj,j}\}_{\mcj,j}$ and incognito modes $\{\psi_{\mcj,j}\}_{\mcj,j}$ satisfy
    \begin{equation}
        [H,\psi_{\mcj,\pm j}] = \pm2\varepsilon_{\mcj,j}\psi_{\mcj,\pm j}. \notag
    \end{equation}
\end{lemma}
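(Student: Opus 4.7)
The plan is to reduce the commutator $[H, \psi_{\mcj, \pm j}]$ to a single commutator $[H, \chi]$, which can be computed explicitly from the defining property of the simplicial mode, and then to eliminate the resulting expression using Lemma~\ref{lemma:fundamentalidentity} together with the vanishing condition $Z_{G}(-u_{\mcj,j}^2)\Pi_{\mcj}=0$.

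First, I would observe that $H = \qkg{1}{G}$ commutes with every independent set charge $\qkg{k}{G}$ by Theorem~\ref{theorem:conservedcharges}~(i), hence with $T_{G}(\pm u)$ for every $u$. Moreover, $H$ commutes with every generalized cycle symmetry $\jkg{C_0}{G}$ by Theorem~\ref{theorem:conservedcharges}~(ii), so $[H,\Pi_{\mcj}]=0$, and for the same reason $[\Pi_{\mcj}, T_{G}(\pm u)]=0$. Therefore, setting $u=u_{\mcj,j}$,
\begin{equation}
    [H,\psi_{\mcj,j}] = N_{\mcj,j}^{-1}\Pi_{\mcj}T_{G}(-u)[H,\chi]T_{G}(u). \notag
\end{equation}

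Next, since $\chi$ anticommutes precisely with the Hamiltonian terms indexed by the simplicial clique $K_s$ and commutes with all others, a direct term-by-term computation gives $[H,\chi]=2\Phi\chi$, where $\Phi\coloneqq\sum_{\bsj\in K_s}h_{\bsj}$. Substituting into the equation for $[H,\psi_{\mcj,j}]$ reduces the problem to identifying $\Pi_{\mcj}T_{G}(-u)\Phi\chi\,T_{G}(u)$ as a multiple of $\psi_{\mcj,j}$.

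To achieve this, I would apply Lemma~\ref{lemma:fundamentalidentity} with $u\mapsto -u$:
\begin{equation}
    T_{G}(-u)\chi T_{G}(u) - u\,T_{G}(-u)\Phi\chi\,T_{G}(u) = Z_{G}(-u^2)(1+u\Phi)\chi. \notag
\end{equation}
Multiplying from the left by $\Pi_{\mcj}$ and using $Z_{G}(-u_{\mcj,j}^{2})\Pi_{\mcj}=0$ yields
\begin{equation}
    \Pi_{\mcj}T_{G}(-u)\Phi\chi\,T_{G}(u) = u^{-1}\Pi_{\mcj}T_{G}(-u)\chi\,T_{G}(u) = \varepsilon_{\mcj,j} N_{\mcj,j}\psi_{\mcj,j}, \notag
\end{equation}
so that $[H,\psi_{\mcj,j}] = 2\varepsilon_{\mcj,j}\psi_{\mcj,j}$. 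For the negative branch, I would use the convention $u_{\mcj,-j}=-u_{\mcj,j}$ and $\varepsilon_{\mcj,-j}=-\varepsilon_{\mcj,j}$, noting that $Z_{G}(-u^{2})$ is even in $u$ so its roots come in $\pm$ pairs; the same argument using the original (un-swapped) form of Lemma~\ref{lemma:fundamentalidentity} then gives $[H,\psi_{\mcj,-j}] = -2\varepsilon_{\mcj,j}\psi_{\mcj,-j}$.

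There is no real obstacle here beyond bookkeeping: the content is in Lemma~\ref{lemma:fundamentalidentity} and the commutativity package of Theorem~\ref{theorem:conservedcharges}. The only subtlety is ensuring that $\Pi_{\mcj}$ genuinely commutes with every factor so that it can be slid freely across $T_{G}(\pm u)$ and $[H,\chi]$, and that applying it before using $Z_{G}(-u^{2})\Pi_{\mcj}=0$ is justified; both follow from Theorem~\ref{theorem:conservedcharges}~(ii) and~(iii).
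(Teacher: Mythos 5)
Your proposal is correct and follows essentially the same route as the paper's proof: reduce $[H,\psi_{\mcj,\pm j}]$ to $N_{\mcj,j}^{-1}\Pi_{\mcj}T(\mp u_{\mcj,j})[H,\chi]T(\pm u_{\mcj,j})$ using the commutativity package of Theorem~\ref{theorem:conservedcharges}, compute $[H,\chi]=2\sum_{\bsj\in K_s}h_{\bsj}\chi$ from the defining property of the simplicial mode, and then invoke Lemma~\ref{lemma:fundamentalidentity} together with $\Pi_{\mcj}Z_{G}(-u_{\mcj,j}^2)=0$ to obtain the eigenvalue $\pm 2/u_{\mcj,j}=\pm 2\varepsilon_{\mcj,j}$. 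Your version is only slightly more explicit about which commutation facts justify sliding $\Pi_{\mcj}$ and which substitution of the fundamental identity handles each sign branch.
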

\begin{proof} 
    We have
    \begin{align}
        [H,\psi_{\mcj,\pm j}] &= N_{\mcj,j}^{-1}\Pi_{\mcj}T(\mp u_{\mcj,j})[H,\chi]T(\pm u_{\mcj,j}) \notag \\
        &= 2N_{\mcj,j}^{-1}\Pi_{\mcj}T(\mp u_{\mcj,j})\left(\sum_{\bsj \in K_s}h_{\bsj}\chi\right)T(\pm u_{\mcj,j}). \notag
    \end{align}
    Applying Lemma~\ref{lemma:fundamentalidentity}, together with the definition of $u_{\mcj,j}$ that $\Pi_{\mcj}Z_{G}(-u_{\mcj,j}^2)=0$ gives
    \begin{align}
        [H,\psi_{\mcj,\pm j}] &= \pm\frac{2}{u_{\mcj,j}N_{\mcj,j}}\Pi_{\mcj}T(\mp u_{\mcj,j}) \chi T(\pm u_{\mcj,j}) \notag \\
        &= \pm2\varepsilon_{\mcj,j}\psi_{\mcj,\pm j}, \notag
    \end{align}
    completing the proof.
\end{proof}
We now show that the incognito modes $\{\psi_{\mcj,\pm j}\}_{\mcj,j}$ obey the canonical anticommutation relations.
\begin{lemma}[{restate=[name=restatement]LadderAnticommutationRelations}]
    \label{lemma:ladderanticommutationrelations}
    Let $H$ be an SCF Hamiltonian with frustration graph $G$. The incognito modes $\{\psi_{\mcj,\pm j}\}_{\mcj,j}$ satisfy the following anticommutation relations.
    \begin{equation}
        \{\psi_{\mcj,+j},\psi_{\mcj',-k}\} = \delta_{\mcj,\mcj'}\delta_{jk}\Pi_{\mcj}. \notag
    \end{equation}
\end{lemma}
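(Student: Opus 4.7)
The plan is to first reduce to a single symmetric subspace using the projector commutativity $[\Pi_{\mcj},\psi_{\mcj',\pm k}]=0$ established in Theorem~\ref{theorem:freefermionsolution}, together with $\Pi_{\mcj}\Pi_{\mcj'}=\delta_{\mcj,\mcj'}\Pi_{\mcj}$. Every cross term in $\psi_{\mcj,+j}\psi_{\mcj',-k}+\psi_{\mcj',-k}\psi_{\mcj,+j}$ carries a factor $\Pi_{\mcj}\Pi_{\mcj'}$, which vanishes unless $\mcj=\mcj'$, producing the desired $\delta_{\mcj,\mcj'}\Pi_{\mcj}$ prefactor. Setting $u=u_{\mcj,j}$ and $v=u_{\mcj,k}$, it then remains to show
\begin{equation*}
    \Pi_{\mcj}\bigl\{T_G(-u)\chi T_G(u),\,T_G(v)\chi T_G(-v)\bigr\}\Pi_{\mcj} = N_{\mcj,j}N_{\mcj,k}\,\delta_{jk}\,\Pi_{\mcj}.
\end{equation*}
The outer $T_G$ factors commute with each other, since they are generating functions of the independent-set charges which mutually commute by Theorem~\ref{theorem:conservedcharges}(i), and each $T_G(w)$ commutes with $\Pi_{\mcj}$ by Theorem~\ref{theorem:conservedcharges}(ii), so we may freely rearrange outside of $\chi$.

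Next, I would apply Lemma~\ref{lemma:fundamentalidentity} on each side of the anticommutator. Writing $M:=\sum_{\bsj\in K_s}h_{\bsj}$, and using $\chi M=-M\chi$ and $\chi^{2}=I$, the fundamental identity yields a controlled rule for moving $\chi$ past an adjacent pair of transfer operators at parameter $u$, at the cost of producing a factor of $Z_G(-u^2)$. After applying this rule to both sides of the anticommutator and collecting terms by $u$-$v$ symmetry, I expect to arrive at a Cauchy-like expression of the form
\begin{equation*}
    \Pi_{\mcj}\bigl\{T_G(-u)\chi T_G(u),\,T_G(v)\chi T_G(-v)\bigr\}\Pi_{\mcj} \;\propto\; \Pi_{\mcj}\,\frac{u\,Z_G(-u^2)\,A(u,v)-v\,Z_G(-v^2)\,A(v,u)}{u^{2}-v^{2}}\,\Pi_{\mcj},
\end{equation*}
where $A(u,v)$ is built from the transfer operators and $Z_G(-w^2)=T_G(w)T_G(-w)$ by Def.~\ref{definition:generalizedcharacteristicpolynomial}. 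This mirrors the analogous derivation of Ref.~\cite{elman2021free} in the even-hole-free case, with the weighted independence polynomial $I_G$ replaced by the generalized characteristic polynomial $Z_G$ from Lemma~\ref{lemma:generalizedcharacteristicpolynomial}.

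The main obstacle will be carrying out this algebraic manipulation cleanly in the presence of the generalized cycle symmetries, since $Z_G$ is operator-valued in general. The crucial observation is that on restriction to $\Pi_{\mcj}$, each $\jkg{C_0}{G}$ acts as a scalar, so $\Pi_{\mcj}Z_G(-u^2)$ becomes the scalar polynomial $Z_{G,\mcj}(-u^2)$ whose roots define $\{u_{\mcj,j}\}$. The $j\neq k$ case then follows immediately: both numerator terms vanish under $\Pi_{\mcj}$ by the defining property $\Pi_{\mcj}Z_G(-u_{\mcj,j}^{2})=0$, and the apparent pole at $u^{2}=v^{2}$ is cancelled. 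For $j=k$, I would send $v\to u$ via L'Hôpital's rule, producing $\partial_u[uZ_{G,\mcj}(-u^{2})]$ evaluated at $u=u_{\mcj,j}$, which is precisely the quantity to be identified with $N_{\mcj,j}^{2}$; consistency of this identification with the independent definition of $N_{\mcj,j}$ (and the fact that $\Pi_{\mcj}T_G(-u_{\mcj,j})T_G(u_{\mcj,j})\Pi_{\mcj}=0$, so that the residue extraction is well-posed) is the technical step that will require the most care. Once this is verified, the canonical anticommutation relation follows.
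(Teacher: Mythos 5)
Your overall strategy coincides with the paper's: both proofs run on Lemma~\ref{lemma:fundamentalidentity} to pass $\chi$ through transfer operators at the cost of a factor of $Z_G(-u^2)$, both exploit the fact that $\Pi_{\mcj}Z_G(-u_{\mcj,j}^2)=0$ kills the off-diagonal terms, and both extract the diagonal term as a residue at the pole that this zero cancels, absorbing it into $N_{\mcj,j}$. The paper packages the first step slightly differently --- it derives the intertwining relation $(u_{\mcj,j}+u)T_G(u)\psi_{\mcj,+j}=(u_{\mcj,j}-u)\psi_{\mcj,+j}T_G(u)$, which immediately gives $\{\psi_{\mcj,+j},T_G(u)\chi T_G(-u)\}=\tfrac{u_{\mcj,j}+u}{u_{\mcj,j}-u}\,T_G(u)\{\psi_{\mcj,+j},\chi\}T_G(-u)$ --- but this is the same manipulation you describe in Cauchy-kernel form.

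The one concrete piece missing from your sketch is the evaluation of your kernel $A(u,v)$, which in the paper's organization is the separate computation of $\{\psi_{\mcj,+j},\chi\}$. This is done by applying the clique recursion Eq.~(\ref{equation:transfermatrixrecursion}) for $K=K_s$ twice, and it yields $\{\psi_{\mcj,+j},\chi\}=\tfrac{4}{N_{\mcj,j}}\Pi_{\mcj}Z_{G{\setminus}K_s}(-u_{\mcj,j}^2)$. Because of this factor, your proposed identification of the diagonal residue with $N_{\mcj,j}^2$ is not quite right as stated: the residue contributes $\partial_x Z_G(x)\big|_{x=-u_{\mcj,j}^2}$, but the normalization must also carry the factor $Z_{G{\setminus}K_s}(-u_{\mcj,j}^2)$ coming from $\{\psi_{\mcj,+j},\chi\}$; the paper's choice is $N_{\mcj,j}=4u_{\mcj,j}\bigl(Z_{G{\setminus}K_s}(-u_{\mcj,j}^2)\,\partial_x Z_G(x)\big|_{x=-u_{\mcj,j}^2}\bigr)^{1/2}$. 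You flag this verification as the step requiring care, and indeed it is exactly where the remaining work lies --- without computing $\{\psi_{\mcj,+j},\chi\}$ explicitly you cannot close the diagonal case or fix $N_{\mcj,j}$. Everything else in your plan (the projector reduction, the commutativity of the transfer operators via Theorem~\ref{theorem:conservedcharges}, and the scalar restriction of $Z_G$ on each $\Pi_{\mcj}$) is sound and matches the paper.
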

We prove Lemma~\ref{lemma:ladderanticommutationrelations} in Appendix~\ref{section:LadderAnticommutationRelationsProof}. 
Finally, we show that we can write the Hamiltonian $H$ as a free-fermion Hamiltonian in the eigenmode basis.
\begin{lemma}[{restate=[name=restatement]DiagonalRelation}]
    \label{lemma:diagonalrelation}
    Let $H$ be an SCF Hamiltonian with frustration graph $G$. The single-particle energies $\{\varepsilon_{\mcj,j}\}_{\mcj,j}$ and incognito modes $\{\psi_{\mcj,j}\}_{\mcj,j}$ satisfy
    \begin{equation}
        H = \sum_{\mcj}\left(\sum_{j=1}^{\alpha(G)}\varepsilon_{\mcj,j}[\psi_{\mcj,+j},\psi_{\mcj,-j}]\right)\Pi_{\mcj}. \notag
    \end{equation}
\end{lemma}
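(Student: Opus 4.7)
The plan is to verify the decomposition sectorwise and then sum. By Theorem~\ref{theorem:conservedcharges} the projectors $\{\Pi_{\mcj}\}$ commute with $H$, and by Theorem~\ref{theorem:freefermionsolution} they commute with every incognito mode $\psi_{\mcj',j}$. It therefore suffices to prove, for each fixed $\mcj$, the identity
\[
H\Pi_{\mcj} \;=\; \widetilde{H}_{\mcj} \;:=\; \sum_{j=1}^{\alpha(G)}\varepsilon_{\mcj,j}[\psi_{\mcj,+j},\psi_{\mcj,-j}]\Pi_{\mcj},
\]
and to then sum over $\mcj$ using completeness of the projectors.

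The first step is to show that $\widetilde{H}_{\mcj}$ reproduces the ad-action of $H$ on every incognito mode. Applying Lemma~\ref{lemma:ladderanticommutationrelations} together with $\Pi_{\mcj}^{2}=\Pi_{\mcj}$ and the commutation of $\Pi_{\mcj}$ with each $\psi_{\mcj',j}$, a direct expansion of the commutator $[\widetilde{H}_{\mcj},\psi_{\mcj,\pm k}]$ collapses via the canonical anticommutation relations to $\pm 2\varepsilon_{\mcj,k}\psi_{\mcj,\pm k}$. Combined with the eigenmode relation of Lemma~\ref{lemma:commutationrelation}, this yields $[H\Pi_{\mcj}-\widetilde{H}_{\mcj},\psi_{\mcj,\pm k}]=0$ for every $k$, so the difference $D_{\mcj}:=H\Pi_{\mcj}-\widetilde{H}_{\mcj}$ is central with respect to the algebra generated by the incognito modes on the sector.

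The second step is a completeness argument. By Lemma~\ref{lemma:generalizedcharacteristicpolynomial}, $Z_{G,\mcj}(-u^2)$ is a polynomial of degree $2\alpha(G)$ whose sectorwise reduction, on each term indexed by a compatible even-hole family $\mcx\in\mathscr{C}^{\text{(even)}}_G$, becomes a weighted independence polynomial $I_{G{\setminus}\Gamma[\mcx]}(-u^2)$ on a claw-free graph. The real-rootedness results of Ref.~\cite{chudnovsky2007roots}, extended to the weighted setting in Ref.~\cite{leake2019generalizations}, guarantee that $Z_{G,\mcj}(-u^2)$ has exactly $\alpha(G)$ distinct positive roots $\{u_{\mcj,j}\}$, producing a full complement of $2\alpha(G)$ linearly independent incognito modes. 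These, together with $\Pi_{\mcj}$, generate an irreducible $\alpha(G)$-mode CAR representation on the sector, forcing $D_{\mcj}=c_{\mcj}\Pi_{\mcj}$ for some scalar $c_{\mcj}$.

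The final step is to pin $c_{\mcj}=0$ by reducing to the already-established even-hole-free case of Ref.~\cite{elman2021free}. On the $\mcj$-eigenspace each generalized cycle symmetry $\jkg{C_0}{G}$ acts as a scalar, so the effective sectorwise Hamiltonian is obtained by substituting these scalar values into the expansion of $H$ together with its associated transfer operator. The resulting restricted operator has the same Krylov structure and satisfies the even-hole-free, claw-free diagonalization of Ref.~\cite{elman2021free} on that subspace, whose unique free-fermion form is precisely $\widetilde{H}_{\mcj}$. The main obstacle is making this reduction precise: one must verify that the sectorwise substitution indeed yields a well-defined Hamiltonian of the required form, and that the incognito modes defined by the transfer operator on the SCF graph coincide with those produced by the even-hole-free solution of the reduced model. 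Given the matching of the eigenmode equations, canonical anticommutation relations, and single-particle spectrum on the sector, this identification then forces $c_{\mcj}=0$ and completes the proof.
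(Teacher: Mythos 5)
Your approach is genuinely different from the paper's, but it has gaps that I do not think can be closed as written. The paper proves the lemma by direct computation: starting from the identity $\{\psi_{\mcj,+j},T(u)\chi T(-u)\}=\tfrac{u_{\mcj,j}+u}{u_{\mcj,j}-u}T(u)\{\psi_{\mcj,+j},\chi\}T(-u)$ established in the proof of Lemma~\ref{lemma:ladderanticommutationrelations}, it takes the limit $u\to u_{\mcj,j}$ to express $[\psi_{\mcj,+j},\psi_{\mcj,-j}]$ in terms of $\partial_u T_G(\pm u_{\mcj,j})$, multiplies by $\varepsilon_{\mcj,j}=1/u_{\mcj,j}$, sums over $j$, and recognizes the result as the Lagrange interpolation of $T_G(u)\,\partial_u T_G(u)$ evaluated at $u=0$, which is $H$. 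No centrality or irreducibility argument is needed.

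The concrete problems with your route are the following. First, the step ``$D_{\mcj}$ commutes with all incognito modes, hence $D_{\mcj}=c_{\mcj}\Pi_{\mcj}$'' requires the CAR representation generated by the $\alpha(G)$ modes to be irreducible on the range of $\Pi_{\mcj}$. That range generically has dimension strictly larger than $2^{\alpha(G)}$ (the cycle symmetries do not exhaust the degeneracy of a spin model on $n$ qubits), so the commutant of the mode algebra on the sector contains a nontrivial matrix factor, not just scalars; your Step 1 only places $D_{\mcj}$ in that commutant. Relatedly, Lemma~\ref{lemma:ladderanticommutationrelations} only establishes the mixed anticommutators $\{\psi_{\mcj,+j},\psi_{\mcj',-k}\}$, and your collapse of $[\widetilde{H}_{\mcj},\psi_{\mcj,\pm k}]$ also uses the vanishing of the same-sign anticommutators, which you would need to supply. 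Second, your appeal to the real-rootedness of claw-free independence polynomials to get $\alpha(G)$ distinct positive roots of $Z_{G,\mcj}(-u^2)$ does not follow: by Lemma~\ref{lemma:generalizedcharacteristicpolynomial}, $Z_{G,\mcj}$ is a linear combination of weighted independence polynomials with coefficients depending on the cycle eigenvalues, and a linear combination of real-rooted polynomials need not be real-rooted, let alone have simple roots. Third, and most seriously, the final reduction to Ref.~\cite{elman2021free} does not exist: restricting to a $\mcj$-eigenspace does not alter the frustration graph, which still contains even holes, so the even-hole-free diagonalization cannot be invoked on the sector. You flag this as the main obstacle but do not resolve it, so $c_{\mcj}=0$ is never established. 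I would recommend abandoning the centrality argument and carrying out the explicit limit-plus-Lagrange-interpolation computation, which uses only the identities already proved in Lemmas~\ref{lemma:fundamentalidentity} and~\ref{lemma:ladderanticommutationrelations}.
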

We prove Lemma~\ref{lemma:diagonalrelation} in Appendix~\ref{section:DiagonalRelationProof}. Combining Lemma~\ref{lemma:commutationrelation}, Lemma~\ref{lemma:ladderanticommutationrelations}, and Lemma~\ref{lemma:diagonalrelation} proves Theorem~\ref{theorem:freefermionsolution}.

\section{Krylov Subspaces}
\label{section:KrylovSubspaces}

In this section, inspired by a well-known polynomial divisibility result of Godsil~\cite{godsil1993algebraic}, we connect the frustration graph formalism to the emerging body of work on operator Krylov subspaces~\cite{parker2019universal, rabinovici2021operator, moudgalya2022hilbert, moudgalya2022symmetries, caputa2022geometry}. By constructing a set of Krylov subspaces associated with path operators in the Hamiltonian, we are able to present an alternative solution method for SCF Hamiltonians, which we expect to provide a strategy for applying graph theory to more general models. For our purposes, the alternative method will also provide a better physical intuition for the free-fermion modes in the Pauli basis where $H$ is defined. The technical content of the method is captured by the following theorem.
\begin{theorem}[Polynomial Divisibility]
    \label{theorem:polynomialdivisibility}
    There exists a real matrix $\mathbf{A}$ with elements indexed by induced paths in $\gs$ such that
    \begin{equation}
        \Pi_{\mcj}\adh{\chi}{k} = (-2i)^k\sum_{P\in\mcp}\left(\mathbf{A}_{G,\mcj}^k\right)_{\{\js\},P}h_P, \notag
    \end{equation}
    over each mutual eigenspace of the $\{\jkg{C_0}{G}\}$, where $\chi$ is a simplicial mode corresponding to the vertex $\js$ not in $V$, and we define $\adh{\chi}{}=[iH,\chi]$. The matrix $\mathbf{A}_{G,\mcj}$ denotes the weighted adjacency matrix of a directed bipartite graph.
\end{theorem}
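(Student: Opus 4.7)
My plan is to prove the claim by induction on $k$. The base case $k=0$ holds because $\adh{\chi}{0}=\chi=h_{\{\js\}}$ is the operator of the length-zero induced path $\{\js\}$, and $(\mathbf{A}_{G,\mcj}^0)_{\{\js\},P}=\delta_{\{\js\},P}$ matches this. For the inductive step, since $\chi$ commutes with every generalized cycle symmetry $\jkg{C_0}{G}$ by Result~\ref{result:inducedpath} and each $\jkg{C_0}{G}$ commutes with $H$ by Theorem~\ref{theorem:conservedcharges}, the projector $\Pi_{\mcj}$ commutes with both $\chi$ and $\operatorname{ad}_{iH}$. It therefore suffices to establish
\begin{equation}
    \Pi_{\mcj}[iH,h_P]=-2i\sum_{P'}(\mathbf{A}_{G,\mcj})_{P,P'}\Pi_{\mcj}h_{P'} \notag
\end{equation}
for every induced path $P\in\mcp_{\gs}$ with $\js$ as an endpoint, after which the matrix entries are read off and the full formula follows by iterated multiplication.

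The inductive step reduces to expanding $[iH,h_P]=2i\sum_{\bsj}h_{\bsj}h_P$ over vertices $\bsj\in V$ with $\scomm{h_{\bsj}}{h_P}=-1$ and classifying the products via Lemma~\ref{lemma:vertexcyclerelations}; only cases (b.i), (b.iii), and (b.iv) arise since case (b.ii) requires a five-hole. In case (b.iii), $\bsj$ is attached at an endpoint, $P'=P\cup\{\bsj\}$ is an induced path, and $h_{\bsj}h_P$ is a signed multiple of $h_{P'}$, producing the tree arcs of the directed hopping graph $\Lambda_{K_s}(G)$. In case (b.iv), $B=P\cup\{\bsj\}$ is a bubble wand with handle $P'=P_i$ and even-hole hoop $C$, so $h_{\bsj}h_P$ factors (up to sign) as a scalar multiple of $h_{P'}h_C$; summing contributions whose hoops lie in a common deformation closure $\avg{C_0}$ reassembles $h_{P'}\jkg{C_0}{G}$, which is scalar on the $\mcj$-eigenspace and yields the $\mcj$-dependent weights on the bubble wand arcs of $\Lambda_{K_s}(G)$.

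The main obstacle is case (b.i), where $\bsj$ is an interior clone, $\bsj\prec_P\bsj_i$, and $h_{\bsj}h_P$ is proportional to the operator of the non-path set $P\cup\{\bsj\}$, which contains the two triangles of the diamond $\{\bsj,\bsj_{i-1},\bsj_i,\bsj_{i+1}\}$. The strategy is to pair each such transition with the companion transition from the deformed path $P'=(P\setminus\{\bsj_i\})\cup\{\bsj\}$ via $\bsj_i\prec_{P'}\bsj$, which produces the same underlying Pauli, and then to invoke the claw-freeness constraints (Corollaries~\ref{corollary:vertexcycleneighboring} and~\ref{corollary:vertexcycleanother}) and the cycle-symmetry content of Theorem~\ref{theorem:conservedcharges} to show that their combined contribution, after $\Pi_{\mcj}$-projection, is a linear combination of induced path operators whose lengths differ from that of $P$ by an odd integer. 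This preserves the length-parity bipartition on $\mcp_{\gs}$, and so $\mathbf{A}_{G,\mcj}$ arises as the weighted adjacency matrix of a directed bipartite graph with real entries. Linearity then closes the induction.
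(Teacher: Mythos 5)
Your overall architecture (induction on $k$, expansion of the commutator, case analysis via Lemma~\ref{lemma:vertexcyclerelations}, tree arcs from case (b.iii) and the endpoint term, cycle arcs from case (b.iv)) matches the paper's, but there is a genuine gap at the step you call the main obstacle, and it propagates backwards to invalidate your reduction. The single-step identity $\Pi_{\mcj}[iH,h_P]=-2i\sum_{P'}(\mathbf{A}_{G,\mcj})_{P,P'}\Pi_{\mcj}h_{P'}$ is false for an individual induced path $P$: in case (b.i) the product $h_Ph_{\bsj}$ is, up to sign, the Pauli operator supported on the diamond-containing set $P\cup\{\bsj\}$, which is not an induced path and is linearly independent of every induced-path operator (this is visible in the fiducial bosonization, where distinct vertex subsets give trace-orthogonal Paulis). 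It therefore cannot be rewritten as ``a linear combination of induced path operators whose lengths differ from that of $P$ by an odd integer.'' The only way these terms disappear is by exact cancellation against the companion term $h_{\wt{P}}h_{\bsk}$ arising from the deformed path $\wt{P}=(P{\setminus}\{\bsk\})\cup\{\bsj\}$; since $h_{\wt{P}}h_{\bsk}=-h_Ph_{\bsj}$, the cancellation requires the two source paths to carry \emph{equal} coefficients, $\left(\mathbf{A}^{k-1}\right)_{\{\js\},P}=\left(\mathbf{A}^{k-1}\right)_{\{\js\},\wt{P}}$. This is precisely Lemma~\ref{lemma:deformationclosurepathsrelation} (constancy of the coefficient over a deformation closure), which your proposal never establishes. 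Consequently the theorem is not the statement that $\operatorname{ad}_{iH}$ acts as a matrix on the span of induced-path operators; it holds only for the particular coefficient vector generated by starting the walk at $\{\js\}$, which is why the paper's induction must carry those coefficients along rather than proving a one-step transition rule.

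The same issue is hidden in your treatment of case (b.iv). To reassemble the hoop contributions into $h_{P'}\jkg{C_0}{G}$ and hence into the eigenvalue on the $\mcj$-eigenspace, you need all source paths feeding the handle $P'$ --- in particular the two paths $P$ and $P^*$ traversing the hole in opposite directions, together with all their deformations --- to carry equal coefficients. The paper proves this as Eq.~(\ref{equation:pathcircle}) by a separate, substantial argument: grouping walks by their last visit to $P'$, using the trace identity $\tr(X\adh{Y}{})=-\tr(Y\adh{X}{})$ in the fiducial bosonization, and summing over the deformation closure so that $\jkg{C}{G}$ can be commuted through $\operatorname{ad}_{H'}^m$. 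You must also handle the case where the hoop $C$ is an \emph{odd} hole, which genuinely occurs in case (b.iv) but carries no arc in $\Lambda_{K_s}(G)$; those contributions are non-path operators that must cancel in pairs ($P$ against $P^*$), and this again requires the coefficient equality. Your proposal is silent on both points, so as written the induction does not close.
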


\subsection{Implications of Polynomial Divisibility}
\label{subsection:ImplicationsPolynomialDivisibility}

Before continuing to the proof, let us elaborate on some implications of Theorem~\ref{theorem:polynomialdivisibility}. The theorem appears very holographic in the sense that commutation with the Hamiltonian only changes path operators at the endpoints. This is entirely due to the fact that the frustration graph is claw-free. We make this precise by showing that the theorem implies a set of fermion modes given by repeated commutators with the Hamiltonian.
\begin{corollary}
    \label{corollary:modifiedanticommutationrelation}
    The operators generated by repeated commutation with $H$ satisfy
    \begin{equation}
        \Pi_{\mcj}\{\adh{\chi}{j},\adh{\chi}{k}\}
        = 2\left(\mathbf{M}_{G,\mcj}\right)_{jk}\Pi_{\mcj}, \notag
    \end{equation}
    where the matrix $\mathbf{M}_{G,\mcj}$ is real symmetric.
\end{corollary}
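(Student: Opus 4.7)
The plan is to combine Theorem~\ref{theorem:polynomialdivisibility} with a Leibniz-type recursion satisfied by anticommutators of nested commutators, reducing the entire computation to a single base case that is then handled by a Pauli commutation calculation enabled by the simplicial structure of $K_s$. Abbreviating $X_k = \adh{\chi}{k}$ and noting that $\Pi_{\mcj}$ commutes with $iH$ and with $\chi$ (and hence with every $X_k$, by Result~\ref{result:inducedpath}), I would first derive the operator identity
\begin{equation}
    \{X_j,X_k\} + \{X_{j-1},X_{k+1}\} = [iH,\{X_{j-1},X_k\}], \notag
\end{equation}
by substituting $X_j = [iH,X_{j-1}]$ in each term of $\{X_j,X_k\}$ and applying the derivation rule $[A,BC]=[A,B]C+B[A,C]$ twice.

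Next, I would set up a nested induction: the outer induction on $n=j+k$ with hypothesis that $\Pi_{\mcj}\{X_a,X_b\}$ is a scalar multiple of $\Pi_{\mcj}$ for all $a+b<n$, and the inner induction on $j$ from $0$ to $n$. Under the outer hypothesis the commutator $[iH,\Pi_{\mcj}\{X_{j-1},X_k\}]$ vanishes, since $\Pi_{\mcj}$ commutes with $iH$ and any scalar multiple of $\Pi_{\mcj}$ is $iH$-central; the projected recursion therefore collapses to $\Pi_{\mcj}\{X_j,X_k\} = -\Pi_{\mcj}\{X_{j-1},X_{k+1}\}$, and iterating the inner induction down to $j=0$ yields $\Pi_{\mcj}\{X_j,X_k\} = (-1)^j\Pi_{\mcj}\{\chi,X_{j+k}\}$. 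The base case of the inner induction is handled by invoking Theorem~\ref{theorem:polynomialdivisibility}:
\begin{equation}
    \Pi_{\mcj}\{\chi,X_n\} = \{\chi,\Pi_{\mcj}X_n\} = (-2i)^n\sum_{P}(\mathbf{A}_{G,\mcj}^n)_{\{\js\},P}\{\chi,h_P\}\Pi_{\mcj}. \notag
\end{equation}

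The pivotal claim needed to close the base case is that $\{\chi,h_P\} = 2\delta_{P,\{\js\}}\,I$ for every induced path $P$ in $\gs$ with endpoint $\js$. Writing $h_P = \chi h_{\bsj_1}\cdots h_{\bsj_n}$ and using Eq.~(\ref{equation:commutatordistribution}) together with the fact that exactly one $\bsj_k$ lies in $K_s$---namely $\bsj_1$, because $\Gamma_{\gs}(\js)=K_s$ and $P$ is induced in $\gs$---gives $\scomm{\chi}{h_P}=-1$ for $P\neq\{\js\}$, while $\chi^2=I$ produces the Kronecker delta contribution at $P=\{\js\}$. Setting $(\mathbf{M}_{G,\mcj})_{jk} \coloneqq (-1)^j(-2i)^{j+k}(\mathbf{A}_{G,\mcj}^{j+k})_{\{\js\},\{\js\}}$ then yields the identity of the corollary. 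Realness and symmetry follow from the bipartite structure of the directed hopping graph of Def.~\ref{definition:directedhoppinggraph}: closed walks through $\{\js\}$ must have even length, so $(\mathbf{A}_{G,\mcj}^{n})_{\{\js\},\{\js\}}$ vanishes for odd $n$; this forces $\mathbf{M}_{jk}=0$ whenever $j+k$ is odd and makes $(-2i)^{j+k}$ real wherever it does not vanish, while the parity constraint $(-1)^j=(-1)^k$ in the non-vanishing sector gives $\mathbf{M}_{jk}=\mathbf{M}_{kj}$. The main obstacle I expect is carrying the two layers of induction consistently so that the commutator with $iH$ may be dropped at each inner step; once this bookkeeping is in place, the remaining content is the Pauli identity $\{\chi,h_P\}=2\delta_{P,\{\js\}}I$, which is the genuine graph-theoretic input and is essentially forced by the simplicial structure of $K_s$.
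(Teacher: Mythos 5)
Your proposal is correct and follows essentially the same route as the paper: the base case $\Pi_{\mcj}\{\chi,\adh{\chi}{n}\}$ is evaluated via Theorem~\ref{theorem:polynomialdivisibility} together with the observation that $\chi$ anticommutes with every induced-path operator $h_P$ except $h_{\{\js\}}=\chi$ itself, and an induction reduces $\{\adh{\chi}{j},\adh{\chi}{k}\}$ to $-\{\adh{\chi}{j-1},\adh{\chi}{k+1}\}$ on each symmetric subspace, with realness and symmetry following from the vanishing of odd closed walks in the bipartite hopping graph. The only cosmetic difference is that you obtain the recursion by applying the Leibniz rule directly to anticommutators, whereas the paper routes through the Jacobi identity and the conversion $[A,B]=\{A,B\}-2BA$; your explicit retention of the $(-1)^j$ factor in $\mathbf{M}_{jk}$ is the faithful solution of that recursion.
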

Let us restrict to a particular subspace labeled by $\mcj$ implicitly, allowing us to drop this label as well as the factor of $\Pi_{\mcj}$ in what follows. We additionally drop the label $G$ unless specifying a particular induced subgraph.
\begin{proof}
    Our proof is by induction on $j$. Note that, for all $k$,
    \begin{align}
        \{\adh{\chi}{0},\adh{\chi}{k}\} &= \{\chi,\adh{\chi}{k}\} \notag \\
        &= (-i)^k2^{k+1}\left(\mathbf{A}^k\right)_{\{\js\},\{\js\}}I,
    \end{align}
    by applying Theorem~\ref{theorem:polynomialdivisibility} to $\adh{\chi}{k}$ together with the fact that the only operator in the sum that does not anticommute with $\chi$ is itself. Thus, the corollary holds for $j=0$ with
    \begin{align}
        M_{0,k} = (-2i)^k\left(\mathbf{A}^k\right)_{\{\js\},\{\js\}}.
    \end{align}
    Now assume that
    \begin{equation}
        \{\adh{\chi}{\ell},\adh{\chi}{k}\} = 2M_{\ell,k}I,
    \end{equation}
    for all $k$, some matrix $\mathbf{M}$, and all $\ell\in\{0,\dots,j-1\}$. Then
    \begin{equation}
        \begin{split}
            &[\adh{\chi}{j},\adh{\chi}{k}] = [\adh{\left(\adh{\chi}{j-1}\right)}{},\adh{\chi}{k}] \\
            &\quad= [iH, [\adh{\chi}{j-1},\adh{\chi}{k}]]-[\adh{\chi}{j-1},\adh{\chi}{k+1}] \\ 
            &\quad= -2M_{j-1, k+1}I-2\adh{\chi}{k}\adh{\chi}{j},
        \end{split}
        \label{equation:jacobistep}
    \end{equation}
    where the second line follows by the Jacobi identity and the third line follows from applying the inductive hypothesis with the identity 
    \begin{align}
        [A,B] = \{A,B\}-2BA, \label{equation:commutatoranticommutator}
    \end{align}
    and canceling terms. Applying Eq.~(\ref{equation:commutatoranticommutator}) again to Eq.~(\ref{equation:jacobistep}) gives
    \begin{equation}
        \{\adh{\chi}{j},\adh{\chi}{k}\} = -2M_{j-1,k+1}I.
    \end{equation}
    This shows the first part of the corollary. By solving the recursion relation for $\mathbf{M}$, we have
    \begin{align}
        \{\adh{\chi}{j},\adh{\chi}{k}\} &= 2(-i)^jM_{0,j+k}I \notag \\
        &= (-i)^{j+k}2^{j+k+1}\left(\mathbf{A}^{j+k}\right)_{\{\js\},\{\js\}}I. \notag
    \end{align}
    This gives
    \begin{equation}
        M_{jk} = (-2i)^{j+k}\left(\mathbf{A}^{j+k}\right)_{\{\js\},\{\js\}},
    \end{equation}
    so $\mathbf{M}$ is real symmetric.
\end{proof}

While we may repeatedly take commutators with $H$, the fact that the set of induced paths of $G$ is finite implies that there is a minimal rank $r$ such that $\adh{\chi}{r}$ is a linear combination of the elements of $\{\adh{\chi}{j}\}_{j=0}^{r-1}$.
Suppose we have
\begin{align}
    \adh{\chi}{r} = \sum_{k=0}^{r-1}v_k\adh{\chi}{k}.
\end{align}
This gives
\begin{align}
    \{\adh{\chi}{j},\adh{\chi}{r}\} &= \sum_{k=0}^{r-1}v_k\{\adh{\chi}{j},\adh{\chi}{k}\} \\
    &= 2\left(\mathbf{M}\cdot\mathbf{v}\right)_{j}I.
\end{align}
Thus, it suffices to consider only the spanning set given by the elements of $\{\adh{\chi}{j}\}_{j=0}^{r-1}$, and we take $\mathbf{M}$ to be an $r \times r$ matrix. The value of this rank is typically much lower than the number of induced paths in $G$. By the definition of $\mathbf{M}$, we have
\begin{align}
    \left(\mathbf{v}^{\text{T}}\cdot\mathbf{M}\cdot\mathbf{v}\right) &= \frac{1}{2d}\sum_{j,k=0}^{r-1}v_jv_k\tr(\{\adh{\chi}{j},\adh{\chi}{k}\}) \\
    &= \frac{1}{d}\tr\left[\left(\sum_{j=0}^{r-1}v_j\adh{\chi}{j}\right)^2\right],
\end{align}
where $d$ is the Hilbert-space dimension. Since this is the trace of the square of a non-zero Hermitian matrix, it is positive. Thus, $\mathbf{M}$ is positive definite. This allows us to define a set of physical Majorana modes $\{\gamma_j\}_{j=0}^{r-1}$ by diagonalizing $\mathbf{M}$ as
\begin{equation}
    \mathbf{M} = \mathbf{U}^{\text{T}}\mathbf{D}\mathbf{U},
\end{equation}
where $\mathbf{U}$ is an orthogonal matrix and $\mathbf{D}$ is a diagonal matrix with positive elements $D_{jj}=\lambda_j$ along the main diagonal. We now define
\begin{equation}
    \gamma_j = i^{(j\bmod2)}\lambda_j^{-1/2}\sum_{k=0}^{r-1}U_{jk}\adh{\chi}{k}.
\end{equation}
These operators are Hermitian and satisfy the canonical anticommutation relations for fermions
\begin{equation}
    \{\gamma_j,\gamma_k\} = 2\delta_{jk}I.
\end{equation}
As a result, they are trace-orthogonal, i.e.,
\begin{equation}
    d^{-1}\tr(\gamma_j\gamma_k) = \frac{1}{2d}\tr(\{\gamma_j,\gamma_k\}) = \delta_{jk}.
\end{equation}
We can write the elements of an effective single-particle Hamiltonian by
\begin{equation}
    h_{jk} = -\frac{i}{2d}\tr[\gamma_j\operatorname{ad}_H{(\gamma_k)}{}]
\end{equation}
Formally, $\mathbf{h}$ is given by the inverse transformation relating $\{\gamma_j\}_{j=0}^{r-1}$ to $\{\adh{\chi}{j}\}_{j=0}^{r-1}$ applied to the companion matrix for $\operatorname{ad}_{iH}$ on the cyclic subspace generated by $\chi$.

\subsection{Matrix Elements and Pauli Realization}
\label{subsection:MatrixElementsPauliRealization}

While the only specific property of $\mathbf{A}$ that we have relied on is the fact that $\mathbf{A}$ is the weighted adjacency matrix of a directed bipartite graph, it will be helpful to propose a specific form here. In Def.~\ref{definition:directedhoppinggraph} we defined a directed bipartite graph, by replacing the edges of an induced path tree with directed arcs and adding arcs corresponding to generalized cycles. Choosing an induced path tree with respect to a simplicial clique $K$ (Def.~\ref{definition:inducedpathtreeclique}), we can define the elements $A_{PP'}$ for $(P,P') \in E_{\Lambda(G)}$ as
\begin{equation}
    A_{PP'} =
    \begin{cases}
        1 & \abs{P'}>\abs{P}, \\
        b_{\bsj}^2 & P=P{\setminus}\{\bsj\}, \\
        \frac{\mcj_{\avg{C_0}}}{N_{P,P',\avg{C_0}}} & \text{hoop in $\avg{C_0}$}, \\
        0 & \text{otherwise}.
    \end{cases}
    \label{eq:decoratedtree}
\end{equation}
The normalization factor $N_{P,P',\avg{C_0}}$ is chosen such that $\frac{1}{2}\sum_{P'}N_{P,P',\avg{C_0}}^{-1}=1$. Each of the non-zero cases corresponds to a particular non-vanishing contribution from an additional application of $\operatorname{ad}_{iH}$ in a fixed mutual subspace of the generalized cycle symmetries. In the first two cases, the induced path can transition to an adjacent induced path in the induced path tree. In the last case, the induced path wraps around an even hole, and this even-hole part contributes a factor of the generalized cycle eigenvalue in the given subspace. There is an additional factor of two due to the fact that the path can wrap around the hole in either direction.

To prove Theorem~\ref{theorem:polynomialdivisibility}, we shall also consider a particular Pauli representation of $\gs$, and we shall prove particular properties of $\gs$ using this representation. Since the result will be a property of $\gs$ alone, it will not depend on the representation, and we can conclude that it holds for all representations of $\gs$. We choose the representation to have the property that
\begin{equation}
    \frac{1}{d}\tr(h^{\dagger}_Ph_{P'}) = \delta_{PP'},
\end{equation}
for any pair of induced paths $P,P'\in\mcp_{\gs}$. For an explicit instance of such a representation, we can take the \emph{fiducial bosonization} from Ref.~\cite{chapman2022free}. In this representation, we assign a qubit to each edge $e = \{\bsj,\bsk\} \in E_{G}$ of the frustration graph. Without loss of generality, we choose one of the terms from $h_{\bsj}$ and $h_{\bsk}$ to act on this qubit as $\sigma_e^z$, and we let the other term act as $\sigma_e^x$ (e.g. $h_{\bsj}$ acts as $\sigma_e^z$ and $h_{\bsk}$ acts as $\sigma_e^x$). Additionally the only terms acting on the qubit corresponding to edge $e$ are $h_{\bsj}$ and $h_{\bsk}$. Thus, $h_P$ is the only induced-path operator acting as $\sigma^y$ only on the qubits corresponding to the edges in $E[P]$, so it satisfies the property.

\subsection{Deformation Closure of an Induced Path}
\label{subsection:DeformationClosureInducedPath}

We now prove the following lemma regarding paths within the same deformation closure.
\begin{lemma}
    \label{lemma:deformationclosurepathsrelation}
    Let $P=\js\md\bsj_1\md\dots\md\bsj_{\ell}$ be an induced path with $\ell\geq2$. Then
    \begin{equation}
        \left(\mathbf{A}^k\right)_{\{\js\},P} = \left(\mathbf{A}^k\right)_{\{\js\},\wt{P}}, \notag
    \end{equation}
    for all $\wt{P}\in\avg{P}$ and all $k\geq0$.
\end{lemma}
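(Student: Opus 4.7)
The plan is to reduce to a single-vertex deformation and then induct on $k$. Since $\avg{P}$ is generated by single-vertex deformations and the claimed equality is transitive across chains of deformations, it suffices to establish the lemma when $\wt{P}=(P\setminus\{\bsk\})\cup\{\bsj\}$ with $\bsj\prec_P\bsk$ and $\Gamma_P(\bsj)=\bsu\md\bsk\md\bsv$. The vertices $\{\bsu,\bsk,\bsv,\bsj\}$ then induce a $4$-cycle $C_0$ in $G$, and $\avg{C_0}$ is a well-defined deformation closure of even holes that will govern the bookkeeping for hoop arcs.

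The base cases $k\leq\ell-1$ hold trivially: any walk from $\{\js\}$ in $\Lambda(G)$ requires at least one extension arc per unit of path length traversed, while hoop arcs only shorten paths, so $(\mathbf{A}^k)_{\{\js\},P}=(\mathbf{A}^k)_{\{\js\},\wt{P}}=0$ whenever $k<\ell$. For the inductive step, the plan is to expand
\begin{equation}
    (\mathbf{A}_{G,\mcj}^k)_{\{\js\},P} = \sum_{Q}(\mathbf{A}_{G,\mcj}^{k-1})_{\{\js\},Q}\,A_{Q,P}, \notag
\end{equation}
and the analogous sum for $\wt{P}$, and then to construct a weight-preserving bijection $\Phi$ between predecessors $Q$ of $P$ in $\Lambda(G)$ and predecessors of $\wt{P}$ satisfying $\Phi(Q)\in\avg{Q}$. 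Granting this, the inductive hypothesis delivers $(\mathbf{A}^{k-1})_{\{\js\},Q}=(\mathbf{A}^{k-1})_{\{\js\},\Phi(Q)}$ and weight preservation closes the step.

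The predecessors of $P$ split into three arc classes: the unique extension predecessor $P\setminus\{\bsj_\ell\}$ with weight $1$; contraction predecessors $P\cup\{\bsm\}$ with weight $b_{\bsm}^{2}$ for each vertex $\bsm$ extending $P$; and hoop predecessors $Q$ for each bubble wand $Q\cup\{\bs{y}\}$ with handle $P$ and hoop in some $\avg{C_0'}$, weighted by $\mcj_{\avg{C_0'}}/N$. For each class, $\Phi$ is defined by mirroring the arc on the $\wt{P}$ side and, where the underlying path contains $\bsk$ or $\bsj$, swapping $\bsk\leftrightarrow\bsj$. Lemma~\ref{lemma:vertexcyclerelations} and Corollary~\ref{corollary:vertexcycleneighboring}, applied to $C_0$, ensure that the swapped predecessor is a valid induced path lying in the same deformation closure as $Q$. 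The hoop-arc weights are preserved because Lemma~\ref{lemma:evenholeneighbor} and its corollaries show that deformation closures of even holes are invariant under single-vertex deformations of paths, so both $\mcj_{\avg{C_0'}}$ and the normalization $N$ survive the swap.

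The main technical obstacle will be the boundary case in which the swap vertex $\bsk$ is adjacent to the endpoint $\bsj_\ell$ (for instance $\bsv=\bsj_\ell$), so that the extension predecessors $P\setminus\{\bsj_\ell\}$ and $\wt{P}\setminus\{\bsj_\ell\}$ carry different endpoints ($\bsk$ versus $\bsj$) and are no longer in a common deformation closure. There the bijection must compensate by pairing extension predecessors on one side with contraction or hoop predecessors on the other side that route through the $4$-cycle $C_0$, and the weight accounting must remain consistent across arc types. A potentially cleaner alternative is to exploit Theorem~\ref{theorem:polynomialdivisibility} together with the Pauli orthogonality $d^{-1}\operatorname{tr}(h_P^{\dagger}h_{P'})=\delta_{P,P'}$ in the fiducial bosonization to obtain
\begin{equation}
    (-2i)^k(\mathbf{A}_{G,\mcj}^k)_{\{\js\},P} = \tfrac{1}{d}\operatorname{tr}\!\bigl(h_P^{\dagger}\Pi_{\mcj}\adh{\chi}{k}\bigr), \notag
\end{equation}
and then to establish $\operatorname{tr}\!\bigl((h_P-h_{\wt{P}})^{\dagger}\Pi_{\mcj}\adh{\chi}{k}\bigr)=0$ by factoring $h_P^{\dagger}h_{\wt{P}}$ through the generalized cycle symmetry $\jkg{C_0}{G}$, which acts as the scalar $\mcj_{\avg{C_0}}$ on the range of $\Pi_{\mcj}$; this circumvents the boundary-case gymnastics entirely.
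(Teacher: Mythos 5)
Your high-level reduction (pass to a single-vertex deformation, then track walks on $\Lambda(\gs)$) matches the paper, but the one-step predecessor recursion you build the induction on has a genuine gap, and you have correctly located it without closing it. Write $\wt{P}=(P{\setminus}\{\bsj_i\})\cup\{\bsk\}$ with $\bsk\prec_P\bsj_i$; since a clone must have two neighbours in $P$, we have $i\in\{1,\dots,\ell-1\}$, and when $i=\ell-1$ the extension predecessors $P{\setminus}\{\bsj_\ell\}$ and $\wt{P}{\setminus}\{\bsj_\ell\}$ have \emph{different endpoints} ($\bsj_{\ell-1}$ versus $\bsk$). Deformation closures of induced paths preserve endpoints, so these two predecessors do not lie in a common closure and the inductive hypothesis at order $k-1$ says nothing about them; indeed their $(\mathbf{A}^{k-1})_{\{\js\},\cdot}$ entries genuinely differ in general (e.g.\ $\bsj_{\ell-1}$ and $\bsk$ need not admit the same weighted set of one-vertex extensions, which already breaks equality at order $\ell+1$). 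Your suggestion to "compensate" by pairing an extension predecessor on one side with contraction or hoop predecessors on the other cannot be carried out within the induction, because any such pairing still needs the order-$(k-1)$ amplitudes of the paired predecessors to agree, which is exactly what is unavailable. The paper escapes this by not working predecessor-by-predecessor: it conditions each whole walk on the \emph{last} traversal of the arc $P_{i-1}\to P_i$ and substitutes $P_r\to\wt{P}_r$ only on the suffix, which handles arbitrary excursions past $\bsj_\ell$ uniformly. A second, smaller error: the set $\{\bsu,\bsk,\bsv,\bsj\}$ with $\Gamma_P(\bsj)=\bsu\md\bsk\md\bsv$ is not an induced $4$-cycle but a diamond (it contains the chord $\{\bsj,\bsk\}$), so there is no even hole $C_0$ and no symmetry $\jkg{C_0}{G}$ attached to a single-vertex deformation; the paper explicitly treats this case as a bubble wand whose hoop is a triangle, which contributes no hoop arc.

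Your proposed "cleaner alternative" is circular: Theorem~\ref{theorem:polynomialdivisibility} is proved \emph{using} Lemma~\ref{lemma:deformationclosurepathsrelation} (it is invoked to argue that the matrix amplitudes summed against the coefficients of $\jkg{C}{G}$ are constant over $\avg{P}$), so you cannot use the trace identity $(-2i)^k(\mathbf{A}^k_{G,\mcj})_{\{\js\},P}=d^{-1}\tr(h_P^{\dagger}\Pi_{\mcj}\adh{\chi}{k})$ to derive the lemma. Moreover $h_P^{\dagger}h_{\wt{P}}$ is proportional to $h_{\bsj_i}h_{\bsk}$, a product over the two-element symmetric difference, not to any generalized cycle symmetry, so the proposed factorization through $\jkg{C_0}{G}$ would not go through even granting the theorem.
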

\begin{proof}
    It is sufficient to prove the lemma for $\wt{P}$ given by a single-vertex deformation of $P$. By convention, we take $\bsj_0=\js$, and let
    \begin{equation}
        \wt{P} = (P{\setminus}\{\bsj_i\})\cup\{\bsk\},
    \end{equation} 
    for $\bsk \prec_{P} \bsj_i$, with $i\in\{1,\dots,\ell-1\}$. We consider $\left(\mathbf{A}^k\right)_{P,P'}$ as the weighted sum of walks from $P$ to $P'$ on $\Lambda(\gs)$ in $k$ steps. Let $P_m=\js\md\bsj_1\md\dots\md\bsj_m$ for $m\leq\ell$, with $P_{\ell}=P$, and define
    \begin{equation}
        \widetilde{P}_m =
        \begin{cases}
            P_m & m < i, \\
            (P_m{\setminus}\{\bsj_i\})\cup\{\bsk\} & m \geq i.
        \end{cases}
    \end{equation}
    We first show that, if a weighted arc $P_r \to P_s$ is present in $\Lambda(\gs)$ for $r,s\in\{i,i+1,\dots,\ell\}$, then there is a corresponding arc $\widetilde{P}_r \to \widetilde{P}_s$ with the same weight present as well. Suppose $P_r$ and $P_s$ are neighboring in $\idt{}{\gs}$. If $s>r$, then $P_r \subset P_s$ with $P_s{\setminus}P_r=\{\bsj_s\}$. Since $r \geq i$, then $s \geq i+1$, so $\wt{P}_r \subset \wt{P}_s$ with $\wt{P}_s{\setminus}\wt{P}_r=\{\bsj_s\}$. Thus, there is an arc $P_r \to P_s$ with weight $1$ present in $\Lambda(\gs)$ in this case. If $s<r$, then $P_s \subset P_r$ with $P_r{\setminus}P_s=\{\bsj_r\}$. Since $s \geq i$, we have $r \geq i+1$, so $\wt{P}_r \subset \wt{P}_s$ with $\wt{P}_s{\setminus}\wt{P}_r=\{\bsj_r\}$. Thus, there is an arc $P_r \to P_s$ with weight $b_{\bsj_r}^2$ present in $\Lambda(\gs)$ in this case. If $\{P_r,P_s\}$ is not in $E_{\idt{}{\gs}}$, then there is a vertex $\bss$ in $\Gamma(\bsj_r){\setminus}\Gamma[\bsj_{r-1}]$ such that $P_r \cup \{\bss\} = P_s \cup C$ with $C\in\avg{C_0}$. Restricting to this case, if $s=i$, then we have that $\bss$ and $\bsk$ are neighboring. If this is not the case, then $\{\bsj_{i+1},\bss,\bsk,\bsj_{i+2}\}$ induces a claw in $G$. Thus, we have $\Gamma_{\widetilde{P}_r}(\bss)=\{\bsj_r,\bsk,\bsj_{i+1}\}$ and so $\widetilde{P}_r \cup \{\bss\} = \widetilde{P}_s \cup C$ with $C\in\avg{C_0}$. If $s>i$, then $\widetilde{P}_r \cup \{\bss\} = \widetilde{P}_s \cup C$ with $C \in \avg{C_0}$. Thus, there is an arc $\widetilde{P}_r\to\widetilde{P}_s$ with the same weight as that from $P_r \to P_s$.

    Now, consider the weighted sum of walks from $\{\js\}$ to $P$ on $\Lambda(\gs)$ in $k$ steps. Since every walk in this sum ends at $P$, there is a step $m \leq k$, that is the last step in which the arc $P_{i-1} \to P_i$ is traversed. After this, no arcs $P_{r} \to P_{s}$ can be traversed where $s<i$. Otherwise, the walk would traverse the arc $P_{i-1} \to P_i$ again. Thus, all arcs $P_{r} \to P_{s}$ traversed after step $m$ have ${r,s\in\{i,i+1,\dots,\ell\}}$. Since $A_{P_{i-1},P_i}=A_{P_{i-1},\wt{P}_i}=1$, we substitute $A_{P_{i-1},P_i} \to A_{P_{i-1},\wt{P}_i}$ at step $m$ and ${A_{P_r,P_s} \to A_{\wt{P}_r,\wt{P}_s}}$ thereafter. This gives a walk with an equal weight that ends at $\wt{P}$, and since this substitution can be performed for each term in the sum, we have
    \begin{equation}
        \left(\mathbf{A}^k\right)_{\{\js\} P} = \left(\mathbf{A}^k\right)_{\{\js\},\wt{P}},
    \end{equation}
    completing the proof.
\end{proof}

We note that a single-vertex deformation of a path $P$ is a special case of a bubble wand with the hoop being a cycle of length three. Thus, we expect there to be a similar result, regarding bubble wands when the hoops are even holes and describe symmetries of the model, which we shall see in the following section.

\subsection{Proof of Polynomial Divisibility}
\label{subsection:PolynomialDivisibilityTheoremProof}

We now prove Theorem~\ref{theorem:polynomialdivisibility}.

\begin{proof}[Proof of Theorem~\ref{theorem:polynomialdivisibility}]
    Our proof is by induction on the power $k$ of $\operatorname{ad}_{iH}$. Clearly, we have
    \begin{equation}
        \adh{\chi}{0} = \chi = \sum_{P\in\mcp}\left(\mathbf{A}^0\right)_{\{\js\},P}h_P, \\
    \end{equation}
    and
    \begin{equation}
        \adh{\chi}{} = -2i\sum_{\bsj \in K_s} \chi h_{\bsj} = -2i\sum_{P\in\mcp} \left(\mathbf{A}\right)_{\{\js\},P}h_P,
    \end{equation}
    so the theorem holds for $k\in\{0,1\}$. Now assume it is true for all powers $m<k$, i.e.,
    \begin{equation}
        \adh{\chi}{m}=\sum_{P\in\mcp}(\mathbf{A}^m)_{\{\js\},P}h_P,
    \end{equation}
    for $m<k$. Now take
    \begin{align}
        \adh{\chi}{k} &= [iH,\adh{\chi}{k-1}] \\
        &= -\frac{1}{2}(-2i)^{k}\sum_{P\in\mcp}(\mathbf{A}^{k-1})_{\{\js\},P}[iH,h_P] \\
        &= (-2i)^{k}\!\!\!\!\sum_{\substack{P\in\mcp \\ P=\js\md\dots\md\bsj_{\ell}}}\!\!\!\!(\mathbf{A}^{k-1})_{\{\js\},P}b^2_{\bsj_{\ell}} h_{P{\setminus}\bsj_{\ell}} \\
        &\quad+ \sum_{P\in\mcp}(\mathbf{A}^{k-1})_{\{\js\},P}\!\!\!\!\sum_{\substack{\bsj \notin P \\ \scomm{h_{\bsj}}{h_P}=-1}}\!\!\!\!h_Ph_{\bsj},
    \end{align}
    where the third line follows by expanding the commutator and collecting terms according to whether $\bsj$ is in $P$. Note that in the former case, the only operator $h_{\bsj}$ with $\bsj$ in $P$ that anticommutes with $h_P$ is $h_{\bsj_{\ell}}$. By Lemma~\ref{lemma:vertexcyclerelations}, there are three cases whereby $h_{\bsj}$ can anticommute with $h_P$ for $\bsj$ not in $P$. In case (b.i), where $\bsj\prec_{P}\bsk$, we have a unique additional term corresponding to $\wt{P}=(P{\setminus}\bsk)\cup\{\bsj\}$ and $\bsk$ not in $\wt{P}$, which cancels the term corresponding to $P$ and $\bsj$ as
    \begin{align}
        &\left(\mathbf{A}^{k-1}\right)_{\{\js\},P}h_Ph_{\bsj}+\left(\mathbf{A}^{k-1}\right)_{\{\js\}, \wt{P}}h_{\wt{P}}h_{\bsk} \notag \\
        &\quad= \left[\left(\mathbf{A}^{k-1}\right)_{\{\js\},P}- \left(\mathbf{A}^{k-1}\right)_{\{\js\}, P} \right] h_P h_{\bsj} \\
        &\quad= 0.
    \end{align}
    In case (b.iv), we have that $\Gamma_{P}(\bsj)=\{\bsj_s,\bsj_{s+1},\bsj_{\ell}\}$ for ${0 \leq s < \ell-2}$ (if $s=\ell-2$, then we again have case (b.i)). Let $P'=\js\md\bsj_1\md\dots\md\bsj_{s}$, then we have that ${C=(P{\setminus}P')\cup\{\bsj\}}$ gives a hole in $G$. Next, let ${P^*=(P\cup\{\bsj\}){\setminus}\{\bsj_{s+1}\}}$ be the path traversing $C$ in the opposite direction to $P$. We shall show that 
    \begin{equation}
        \left(\mathbf{A}^m\right)_{\{\js\},P} = \left(\mathbf{A}^m\right)_{\{\js\},P^*}, \label{equation:pathcircle}
    \end{equation}
    for all powers $m<k$. We have
    \begin{equation}
        \left(\mathbf{A}^m\right)_{\{\js\},P} = \!\!\!\!\sum_{\mathbf{P}\in\mcp_{G}^{\times m-1}}\!\!\!\!A_{\{\js\},P^{(1)}}A_{P^{(1)},P^{(2)}} \dots A_{P^{(m-1)},P}. \notag
    \end{equation}
    Note that the summand is only non-zero when ${\mathbf{P}=\{P^{(1)},\dots,P^{(m-1)}\}}$ is a walk on the directed graph with weighted adjacency matrix $\mathbf{A}$. Let us now group terms in this sum by walks which pass $P'$ for the last time at step $g$. This gives
    \begin{equation}
        \begin{split}
            \!\left(\mathbf{A}^m\right)_{\{\js\},P} &= \sum_{g=1}^m\sum_{\mathbf{P}\in\mcp_{G}^{\times g-1}}A_{\{\js\},P^{(1)}} \dots A_{P^{(g-1)},P'}\! \\
            &\quad\times \left(\mathbf{A}_{G{\setminus}\Gamma[P']}^{m-g}\right)_{\{\bsj_{s+1}\},(P{\setminus}P')},
        \end{split}
        \label{equation:pathgroupterms}
    \end{equation}
    where $\mathbf{A}_{G{\setminus}\Gamma[P']}$ is the weighted adjacency matrix corresponding to the graph $G{\setminus}\Gamma[P']$ in the natural way. 
    It is a result of Ref.~\cite{chudnovsky2007roots} that $h_{\bsj_{s+1}}$ is a simplicial mode for this graph, and so this matrix is well defined.
    
    First, note that, strictly speaking, we cannot have $g=m$ by the requirement that $s<\ell-2$ and the fact that the length of a path can increase by at most one at any step in the walk.
    This implies that the largest $g$ can be is $m-3$, but we shall include all values of $g$ that are not obviously forbidden in Eq.~(\ref{equation:pathgroupterms}) with the understanding that there are no terms in the sum when $g$ is too large. Next, we shall see that once the walk passes through $P'$ for the last time, it must immediately pass through $P'\cup\{\bsj_{s+1}\}$, or it will eventually have to return to $P'$ again to proceed to $P$. Since $\bsj$ neighbors $\bsj_s$ and $\bsj_{s+1}$, if the walk passes through any path containing $\bsj$ from this point, it will again return to $P'$ by the definition of $\mathbf{A}$. Finally, if the walk passes through a path $P_r=\js\md\bsj_1\md\dots\md\bsj_r$ for $r \leq s$, then it must pass through $P'$ again.
    Thus, our requirement that the walk passes through $P'$ for the last time at step $g$ gives Eq.~(\ref{equation:pathgroupterms}). To prove Eq.~(\ref{equation:pathcircle}), it is sufficient to prove
    \begin{equation}
        \!\left(\mathbf{A}_{G{\setminus}\{\bsj,\bsk\}}^m\right)_{\{\bsj\},(P{\setminus}\{\js\})} \!=\! \left(\mathbf{A}_{G{\setminus}\{\bsj, \bsk\}}^{m}\right)_{\{\bsk\},(P^*{\setminus}\{\js\})}\!,
        \label{equation:patheffectivecircle}
    \end{equation}
    for $\bsj,\bsk \in K_s$ with the simplicial mode corresponding to vertex $\js$, and $P^*=(P \cup\{\bsk\}){\setminus}\{\bsj\}$ with $(P{\setminus}\{\js\})\cup\{\bsk \}=(P^*{\setminus}\{\js\})\cup\{\bsj\}\in\mcc_G$. We thus proceed to prove Eq.~(\ref{equation:patheffectivecircle}).
    
    Letting $H'=H-h_{\bsj}-h_{\bsk}$, we apply the inductive hypothesis and assume the fiducial bosonization of Ref.~\cite{chapman2022free} to obtain
    \begin{equation}
        \!\left(\mathbf{A}_{G{\setminus}\{\bsj,\bsk\}}^m\right)_{\{\bsj\},(P{\setminus}\{\js\})} \!=\! \frac{(-2)^{-m}}{d\prod_{\bsu \in P}b_{\bsu}^2}\tr\left[h^{\dagger}_P \operatorname{ad}_{H'}^m\left(\chi h_{\bsj}\right)\right]\!, \notag
    \end{equation}
    where
    \begin{align}
        h^{\dagger}_P &= \left(\prod_{\bsu \in C} b_{\bsu}^2\right)^{-1}\left(h_C^{\dagger}h_C\right)h^{\dagger}_{P}, \\
        h^{\dagger}_P &= \frac{1}{b_{\bsk}^2}h^{\dagger}_C(h_{\bsk}\chi).
    \end{align}
    Here, $C=P\cup\{\bsk\}=P^*\cup\{\bsj\}$. Note that the factors in $h_C$ have cyclic ordering, so that the corresponding factors in $h^{\dagger}_P$ are canceled. This gives
    \begin{align}
        &\left(\mathbf{A}_{G{\setminus}\{\bsj,\bsk\}}^{m}\right)_{\{\bsj\},(P{\setminus}\{\js\})} \notag \\
        &\quad= \frac{(-2)^{-m}}{d\prod_{\bsu \in C}b_{\bsu}^2}\tr\left[h_C^{\dagger} \left(h_{\bsk}\chi\right)\operatorname{ad}_{H'}^m\left(\chi h_{\bsj}\right)\right] \\
        &\quad= \frac{2^{-m}}{d\prod_{\bsu \in C}b_{\bsu}^2}\tr\left\{\left(h_{\bsj} \chi \right)\operatorname{ad}_{H'}^m\left[h_C^{\dagger}\left(\chi h_{\bsk}\right)\right]\right\},
    \end{align}
    where we have applied the identity
    \begin{equation}
        \tr\left(X\adh{Y}{}\right) = -\tr\left(Y\adh{X}{}\right), 
    \end{equation}
    $m$ times in succession. We next sum over all deformations of $P$ with the coefficient corresponding to $C$ to obtain
    \begin{align}
     &\sum_{\substack{P\in\avg{P} \\ C=P\cup\{\bsk\}}}\left(\prod_{\bsu \in C} b_{\bsu}^2\right)\left(\mathbf{A}_{G{\setminus}\{\bsj, \bsk\}}^{m}\right)_{\{\bsj\},(P{\setminus}\{\js\})} \notag \\ 
        &\quad= \frac{2^{-m}}{d}\tr\left\{\left(h_{\bsj}\chi\right)\operatorname{ad}_{H'}^m\left[\jkg{C}{G}\left(\chi h_{\bsk}\right)\right]\right\} \notag \\
        &\quad= \frac{2^{-m}}{d}\tr\left\{\left(h_{\bsj}\chi\right)\jkg{C}{G}\operatorname{ad}_{H'}^m\left[\left(\chi h_{\bsk}\right)\right]\right\} \notag \\
        &\quad= \frac{2^{-m}}{d}\sum_{\substack{P\in\avg{P} \\ C=P\cup\{\bsk\}}}\tr\left\{\left(h_{\bsj}\chi\right)h^{\dagger}_C\operatorname{ad}_{H'}^m\left(\chi h_{\bsk}\right)\right\} \notag \\
        &\quad= (-1)^{m+\abs{C}}\!\!\!\!\sum_{\substack{P\in\avg{P} \\ C=P\cup\{\bsk\}}}\!\!\!\!\left(\prod_{\bsu \in C}b_{\bsu}^2\right)\left(\mathbf{A}_{G{\setminus}\{\bsj,\bsk\}}^{m}\right)_{\{\bsk\},(P^*{\setminus}\{\js\})}\!\!. \notag
    \end{align}
    Note that $m+\abs{C}$ must be even in order for the walk to reach either $P{\setminus}\{\js\}$ or $P^*{\setminus}\{\js\}$. Applying Lemma~\ref{lemma:deformationclosurepathsrelation}, we have that the matrix amplitude in the sum is a constant, and so we obtain Eq.~(\ref{equation:patheffectivecircle}). By pairing corresponding walks to $P$ and $P^*$ with the same first $g-1$ steps in Eq.~(\ref{equation:pathgroupterms}), we have Eq.~(\ref{equation:pathcircle}). Therefore, in case (b.iv), the contributions from odd holes cancel, and the contributions from even holes add. This completes the proof.
\end{proof}

\section{Application of Results}
\label{section:ApplicationResults}

We now apply our formalism to a two dimensional model. The system is supported on a two-dimensional square lattice, with five qubits located on the links of the lattice, such that there is a spin at each of the positions $(j+\frac{\alpha}{6},k)$ and $(j,k+\frac{\alpha}{6})$ for $\alpha=\{1,2,3,4,5\}$. Due to the symmetry of the lattice we collect the terms along each link of the lattice in the Hamiltonian by their coupling strength and label each term as $h_\mu$ for $\mu\in\{a,b,c,d,e,f,g,h\}$. The full Hamiltonian is given by
\begin{equation}
    \begin{split}
        H = &\sum_{j,k}\left[a\left(\sigma^y_{j+\frac{2}{6},k}\sigma^x_{j+\frac{1}{6},k}+\sigma^y_{j,k+\frac{2}{6}}\sigma^x_{j,k+\frac{1}{6}}\right)\right. \\
        &\quad+ b\left(\sigma^x_{j+\frac{2}{6},k}\sigma^y_{j-\frac{1}{6},k}+\sigma^x_{j,k+\frac{2}{6}}\sigma^y_{j+\frac{1}{6},k}\right) \\
        &\quad+ c\left(\sigma^z_{j+\frac{2}{6},k}\sigma^y_{j+\frac{3}{6},k}+\sigma^z_{j,k+\frac{2}{6}}\sigma^y_{j,k+\frac{3}{6}}\right) \\
        &\quad+ d\left(\sigma^z_{j+\frac{2}{6},k}\sigma^z_{j+\frac{3}{6},k}+\sigma^z_{j,k+\frac{2}{6}}\sigma^z_{j,k+\frac{3}{6}}\right) \\
        &\quad+ e\left(\sigma^x_{j+\frac{3}{6},k}\sigma^z_{j+\frac{4}{6},k}+\sigma^x_{j,k+\frac{3}{6}}\sigma^z_{j,k+\frac{4}{6}}\right) \\
        &\quad+ f\left(\sigma^y_{j+\frac{3}{6},k}\sigma^z_{j+\frac{4}{6},k}+\sigma^y_{j,k+\frac{3}{6}}\sigma^z_{j,k+\frac{4}{6}}\right) \\
        &\quad+ g\left(\sigma^x_{j+\frac{4}{6},k}\sigma^y_{j+1,k+\frac{1}{6}}+\sigma^x_{j,k+\frac{4}{6}}\sigma^y_{j-\frac{1}{6},k}\right) \\
        &\quad+ \left.h\left(\sigma^y_{j+\frac{4}{6},k}\sigma^x_{j+\frac{5}{6},k}+\sigma^y_{j,k+\frac{4}{6}}\sigma^x_{j,k+\frac{5}{6}}\right)\right].
    \end{split}
    \label{equation:applicationhamiltonian}
\end{equation}

\begin{figure}[ht!]
    \centering
    \includegraphics[width=0.45\textwidth]{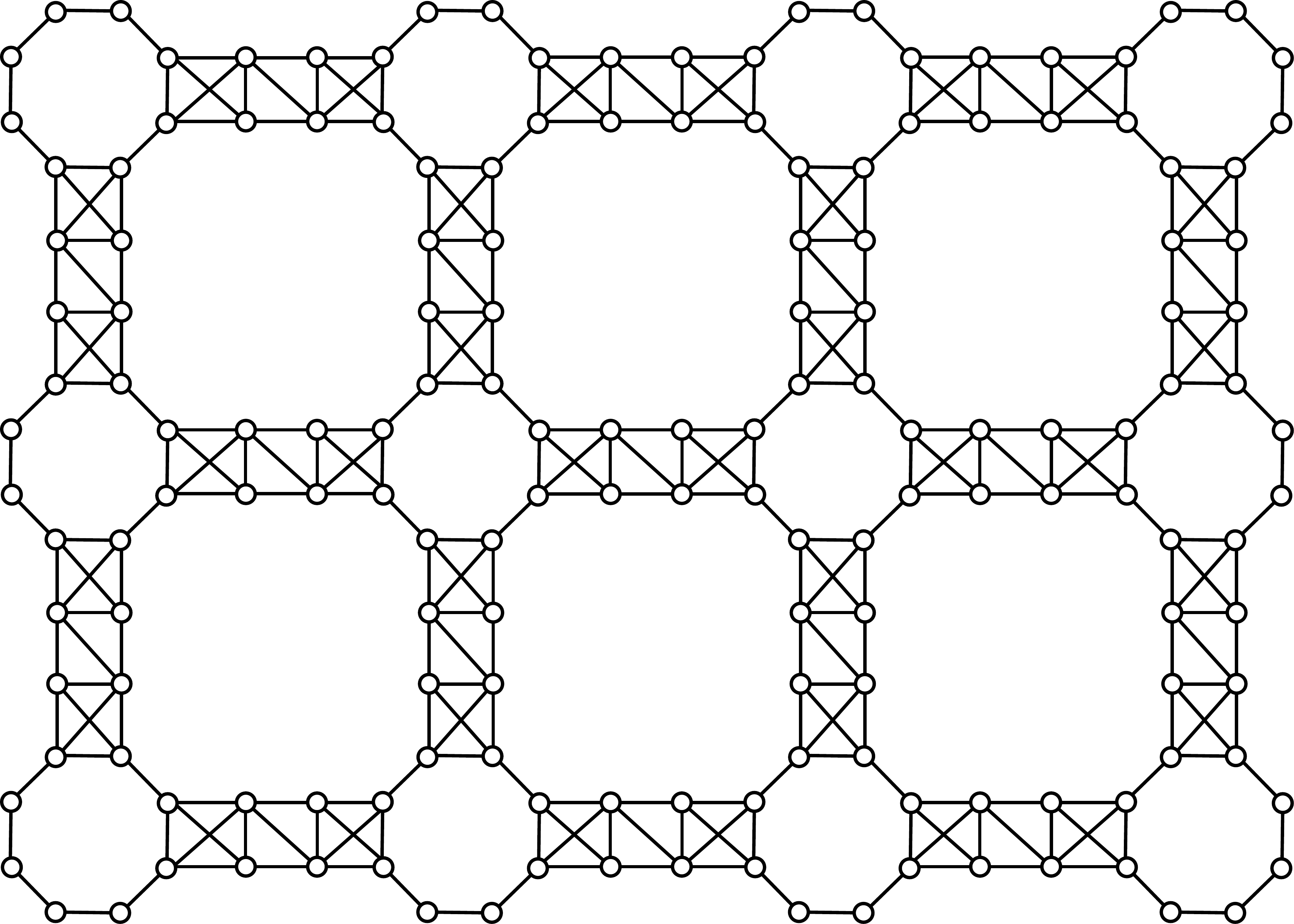}
    \caption{The frustration graph for the two-dimensional model. The frustration graph is simplicial and claw-free. The model is neither a line graph nor (even-hole, claw)-free and so exists beyond the scope of Refs.~\cite{chapman2020characterization, elman2021free}}
    \label{figure:squareoctagon}
\end{figure}

The frustration graph of the model is shown in Fig.~\ref{figure:squareoctagon}. The frustration graph is not a line graph, thus admitting no obvious map to free fermions~\cite{chapman2020characterization}. Also, since the graph is two-dimensional the model necessarily contains even holes, and thus falls beyond the scope of Ref.~\cite{elman2021free}. Nevertheless, the frustration graph is claw-free and contains a simplicial clique, thus admitting a free-fermion solution of the form Eq.~(\ref{equation:firstresult}). The model was constructed by first designing a two-dimensional, claw-free, simplicial graph which is not a line graph and finding a qubitization of the model.
As we have shown, the free-fermion solution does not depend on this particular realization. We stress that, while the solution can be found by mapping the model to a line graph through a local unitary transformation, the model was not constructed to have this property.

For small system sizes it is possible to construct the full independence polynomial and generalized characteristic polynomial of the graph, from which the single-particle energies and fermionic modes can be extracted. However, in the thermodynamic limit this is not practical. It is perhaps more informative to use our knowledge of the existence of such a solution to construct a unitary (and therefore spectrum-preserving) transformation which maps the model from its present form to one whose frustration graph is a line graph.

The unit cell of the model contains sixteen Hamiltonian terms acting on ten qubits; however, there is a symmetry between the horizontal and vertical links of the lattice. It is sufficient therefore to consider the transformation on a single arm of the graph containing eight vertices (Hamiltonian terms) applied to five qubits; we thus denote the Hamiltonian on each of the arms of the graph as
\begin{equation}
    H_{\text{arm}}=h_a+h_b+h_c+h_d+h_e+h_f+h_g+h_h, \notag
\end{equation}
with each term associated to the Pauli realization given in Eq.~(\ref{equation:applicationhamiltonian}) in the natural way.

\begin{figure}[ht!]
    \begin{subfigure}{0.45\textwidth}
         \centering
         \includegraphics[width=0.9\columnwidth]{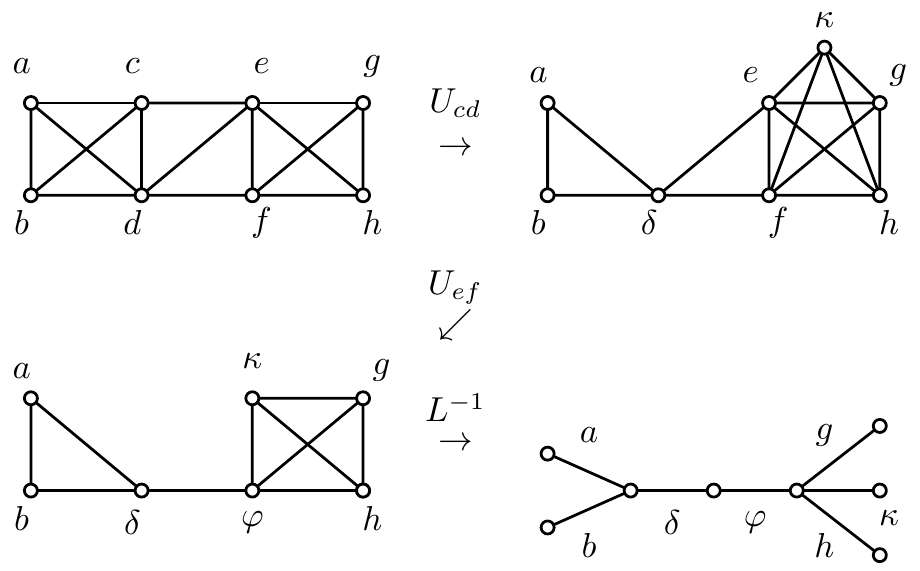}
         \caption{}
         \label{figure:squareoctagongadget}
     \end{subfigure}
     \begin{subfigure}{0.45\textwidth}
         \centering
         \includegraphics[width=0.9\columnwidth]{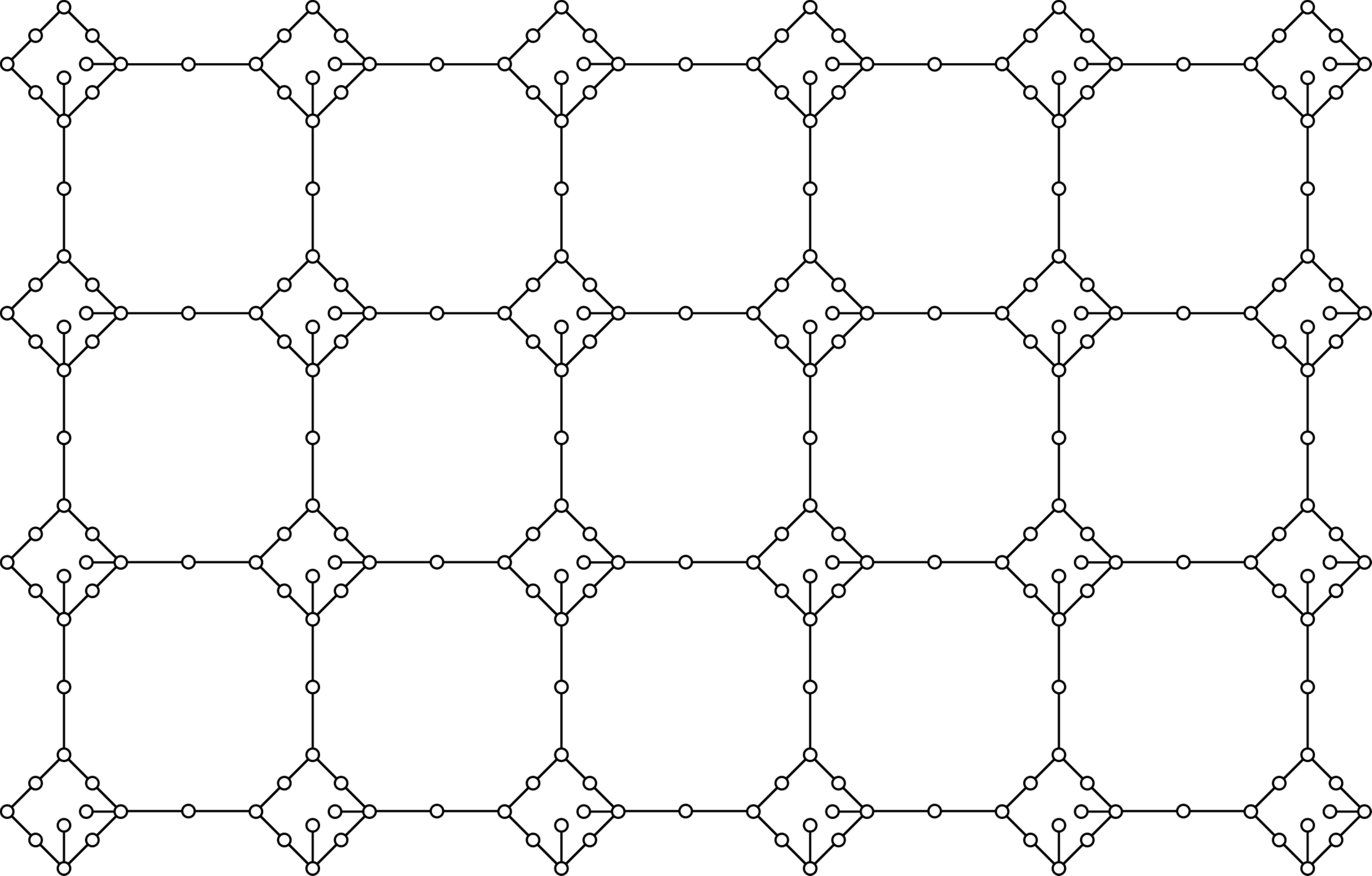}
         \caption{}
         \label{figure:squareoctagonrootgraph}
     \end{subfigure}
     \caption{A 2-dimensional simplicial claw-free frustration graph. (a) A graphical depiction of the unitary circuit to map the model as given to a model whose frustration graph is a line graph and so admits a Jordan-Wigner solution. In the last step, we show the generalized Jordan-Wigner solution applied to a single unit cell, though it is not always the case that such a solution extends globally. (b) The generalized Jordan-Wigner transformation on the entire model.}
\end{figure}
     
The transformation is depicted graphically in Fig.~\ref{figure:squareoctagongadget}. In each step the Hamiltonian is conjugated by a unitary generated by the product of a pair of terms, choosing the generating angle to remove one of the terms. This will generally introduce new terms in the Hamiltonian with interaction strengths depending on the rotation angle of our previous steps. By choosing our rotations appropriately, we can iterate this procedure until the frustration graph is a line graph. While our result guarantees that such a unitary circuit exists, in general cases it may be difficult to find in practice.

We begin by applying a unitary rotation to contract the edge between the vertices $c$ and $d$. This rotation is generated by the product of the Hamiltonian terms $h_c$ and $h_d$ in each of the halved unit cells so that
\begin{align}
    U_{cd} &= \prod_{j,k}e^{\theta h_ch_d} \notag \\
    &= \prod_{j,k}\exp\left(-i\theta\sigma^x_{j+\tfrac{3}{6},k}\right)\exp\left(-i\theta\sigma^x_{j,k+\tfrac{3}{6}}\right), \notag
\end{align}
with $\theta=\frac{1}{2}\arctan\left(\frac{c}{d}\right)$ and chosen such that ${U_{cd}(h_c+h_d)U_{cd}^\dagger=h_\delta}$ where $h_\delta = \frac{1}{c}\sqrt{c^2+d^2}h_d$. That is, the Hamiltonian term associated with the vertex $c$ is removed from the model. Note that vertices $a$, $b$, and $e$ each neighbor both $c$ and $d$, thus commuting with the unitary $U_{cd}$. However, vertex $f$ neighbors only $d$, so this rotation introduces an additional term to the Hamiltonian
\begin{equation}
    h_\kappa = 
    \begin{cases}
        f\sin(2\theta)\sigma^z_{j+\tfrac{3}{6},k}\sigma_{j+\tfrac{4}{6},k}^z &\ \text{on horizontal links}, \\
        f\sin(2\theta)\sigma^z_{j,k+\tfrac{3}{6}}\sigma_{j,k+\tfrac{4}{6}}^z &\quad \text{on vertical links},
    \end{cases} 
    \notag
\end{equation}
for each arm in the lattice via $U_{cd}h_fU_{cd}^\dagger=\cos(2\theta)h_f+h_\kappa$. The frustration graph of the rotated model is shown in the top right of Fig.~\ref{figure:squareoctagongadget}. This graph is one of the forbidden subgraphs of a line graph and so another rotation needs to be applied in order to find the hopping graph. We now apply a rotation to contract the edge between vertices $e$ and $f$. This rotation is given by the unitary matrix
\begin{align}
    U_{ef} &= \prod_{j,k}e^{\phi h_eh_f} \notag \\
    &= \prod_{j,k}\exp\left(-i\phi\sigma^z_{j+\tfrac{3}{6},k}\right)\exp\left(-i\phi\sigma^z_{j,k+\tfrac{3}{6}}\right), \notag
\end{align}
with $\phi=-\frac{1}{2}\arctan\left(\frac{f\cos(2\theta)}{e}\right)$ and chosen such that $U_{ef}(h_e+\cos(2\theta)h_f)U_{ef}^\dagger=h_\varphi$. We now see that the frustration graph shown in the bottom left of Fig.~\ref{figure:squareoctagongadget} is a line graph. Finding the root graph, as shown in the bottom right of Fig.~\ref{figure:squareoctagongadget}, gives the hopping graph of an arm of the lattice. Though it is not always the case that a local transformation will extend globally, in this case we have transformed the entire Hamiltonian into a line-graph free-fermion model with a local unitary circuit.

\begin{figure}[ht!]
    \begin{subfigure}{0.45\columnwidth}
         \centering
         \includegraphics[width=0.95\columnwidth]{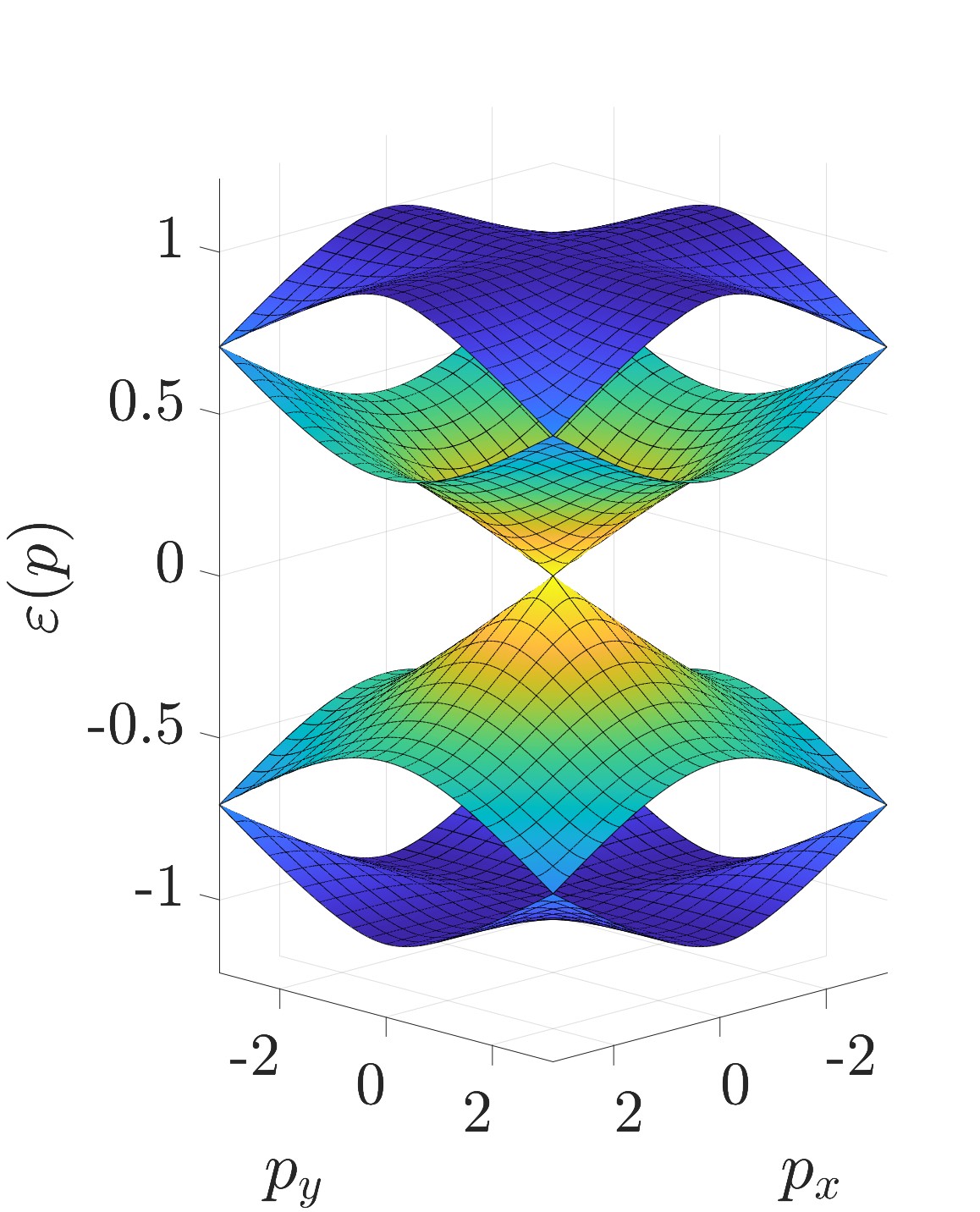}
         \caption{}
         \label{figure:squareoctagondispersionoff}
     \end{subfigure}
     \begin{subfigure}{0.45\columnwidth}
         \centering
         \includegraphics[width=0.95\columnwidth]{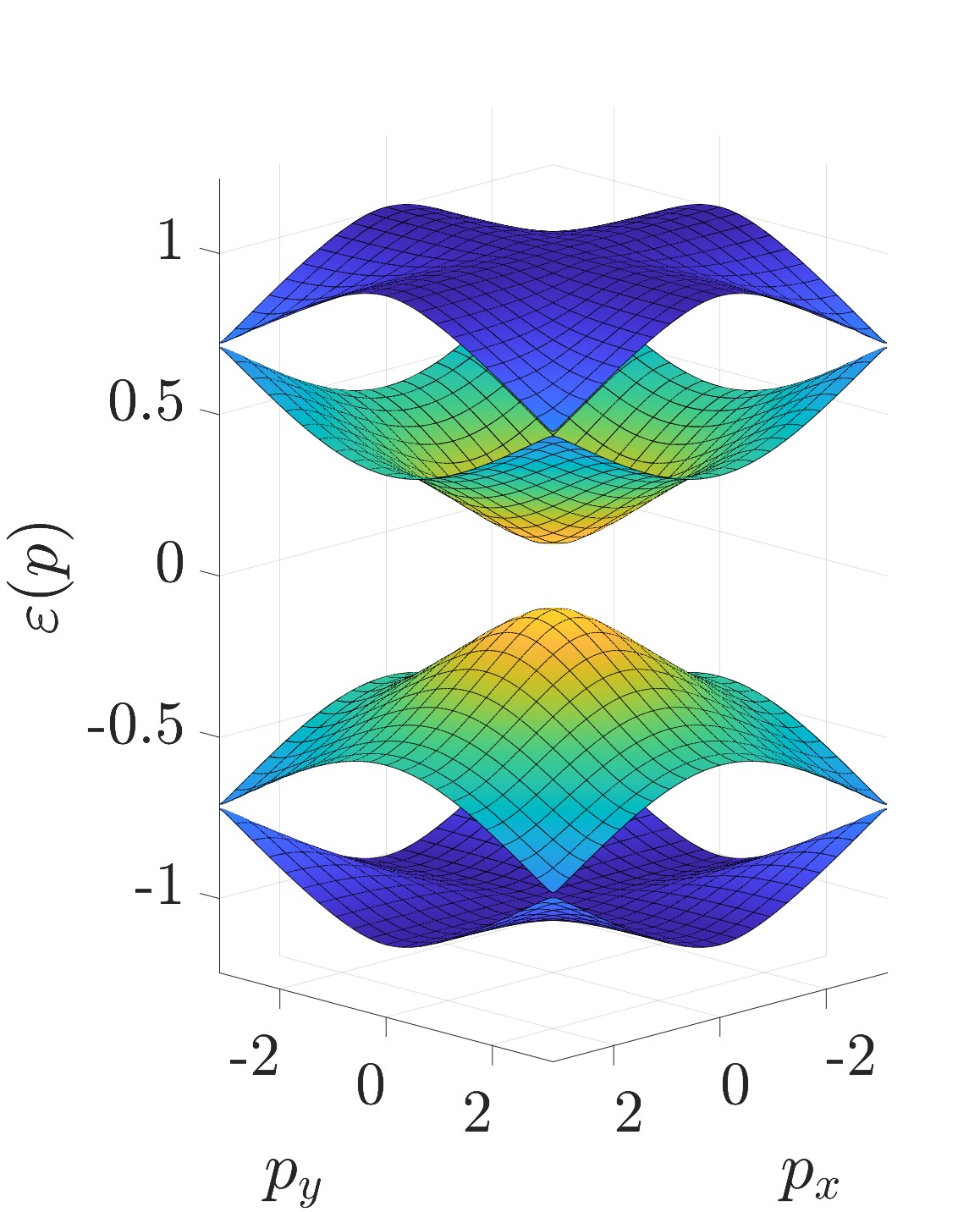}
         \caption{}
         \label{figure:squareoctagondispersionon}
     \end{subfigure}
     \caption{Dispersion relation $\varepsilon(p)$ for the two-dimensional model studied here with $\kappa$ (a) turned off and (b) turned on. When $h_{\kappa}$ is turned off, the model reverts to the same phase as the Kitaev honeycomb model as can be seen from the Dirac cone shape~\cite{kitaev2006anyons}. When $h_{\kappa}$ is turned on, a gap opens up and the other bands hybridize as in the periodic model.}
     \label{figure:squareoctagondispersion}
\end{figure}

With the Majorana hopping graph identified, it is then possible to numerically construct and plot the dispersion relation for the model. The dispersion relation for the model is shown in Fig.~\ref{figure:squareoctagondispersion}. Note that when $h_\kappa$ term is turned off (that is, when $f\sin\theta=0$), the dispersion relation has a conical shape and is critical at $p=(0,0)$ with critical exponent $z_c=1$. Thus, when $\kappa=0$, the model is in the same phase as the Kitaev honeycomb model~\cite{kitaev2006anyons}. On the contrary, as $\kappa$ is increased, the gap opens and the other bands hybridize. Thus, we have constructed a gapped, two-dimensional phase which can be realized by a two-local Hamiltonian. Models with properties such as these could prove useful for applications in error correction or general condensed matter physics.

\section{Discussion}
\label{section:Discussion}

We have developed a graph-theoretic framework for free-fermion solvability, which unifies the results of the previous work. Our results show that the absence of a claw and the presence of a simplicial clique in the frustration graph of a Hamiltonian are sufficient to prove that the model has a generic free-fermion solution. Further, this condition can be efficiently determined via the algorithm of Ref~\cite{chudnovsky2012growing}. Our key insight in this proof is the identification of a family of mutually commuting Hamiltonian symmetries --- the generalized cycle symmetries (see Section~\ref{section:ConservedQuantities}). The identification of these symmetries has also developed further the connection, previously established in Ref.~\cite{elman2021free}, between the absence of a claw in the frustration graph of a quantum many-body model and the integrability of the model. In the case of simplicial, claw-free graphs, we have shown that finding the single-particle energies of the model reduces to diagonalizing the generalized cycle operators of the model. These operators are, in general, not Pauli operators (see Section~\ref{subsection:ExampleApplication}). Nevertheless, the diagonalization of these operators does not necessarily represent an intractable barrier to obtaining the energies of these models.

While the sufficient condition of an SCF frustration graph encompasses the previous work in this area, SCF Hamiltonians do not span the entirety of the free-fermionic Hamiltonian space. For one thing we know that there exist models which are free for specific coefficients; for another there are models whose spectrum is the union of a super linearly increasing number of free-fermionic spectra. In Section~\ref{section:ApplicationResults}, we considered a two-dimensional model that is equivalent to a Jordan-Wigner solvable free-fermion model under a constant-depth circuit. One could imagine a class of models that can be mapped to a Jordan-Wigner solvable model via a circuit whose depth grows exponentially. Thus, it is clear that a characterization of free-fermion models could be developed based on the depth of the circuit needed to transform the model's frustration graph into a graph for which a Jordan-Wigner transformation to free fermions applies. Such a characterization could theoretically help to categorize those models which do not have a gapless phase, and thus fall beyond the scope of the characterization by critical exponents and quantum field theories~\cite{osborne2021quantum, osborne2023conformal}.

Since the claw is a forbidden induced subgraph of both previous graph-theoretic characterizations, it seems to be a natural criterion for a free-fermion solution to apply. However, it is perhaps mysterious that we require the presence of a simplicial clique as well. Indeed, the graph families of both previous characterizations are guaranteed to also contain a simplicial clique. This was recently shown in Ref.~\cite{chudnovsky2021note} for a family generalizing (even-hole, claw)-free graphs, as inspired by this precise question. Intuitively, we can view a simplicial clique as a kind of fermionic boundary mode, and its existence in a claw-free graph implies a recursive structure that forces the model to be fermionic in some sense. We capture this intuition by connecting these models to a polynomial divisibility result for the independence polynomial of a graph given by Godsil~\cite{godsil1993algebraic} and later generalized to the multivariate setting in Ref.~\cite{leake2019generalizations}. While our characterization does not capture all free-fermion solutions to spin models (there are well-known examples of non-generic solutions), we expect that it will be difficult to remove the simplicial clique assumption in the generic case. 

Regarding quantum circuits, there is also the question of whether this class of free-fermion models also extends the class of free-fermionic quantum gates. It is known that the matchgates~\cite{knill2001fermionic, terhal2002classical, bravyi2006universal, brod2011extending, hebenstreit2019all} represent a non-universal set of quantum gates. It is also known that a quantum circuit constructed entirely from matchgates results in Gaussian states which are ground states of a Jordan-Wigner-type Hamiltonians~\cite{hebenstreit2019all}. This then raises the question of whether there is an overlapping, but not universal, gate set which produces the SCF Hamiltonians developed here. Also related to quantum circuits is the notion of simulability. It has been suggested~\cite{matos2021quantifying} that free-fermion states can be more efficiently prepared using quantum optimization algorithms than their interacting counterparts. It would be interesting to know whether the ground states (or thermal states) of SCF models are more efficiently prepared using quantum optimization algorithms. It is clear then, that while the current work extends the class of known free-fermion-solvable models, there are avenues for further work regarding both free-fermion solutions to many-body physics, and using the mathematical framework of graph theory to probe our understanding of physics.

Certainly, it appears as though free-fermion solvable models are more abundant than was previously known, so an obvious question to ask would be: just how abundant are free-fermionic Hamiltonians? Even if the class of Hamiltonians was restricted to those currently known to have a generic solution it remains unknown what the likelihood is of a random Hamiltonian admitting an SCF free-fermion solution. There are also unanswered questions pertaining to whether free-fermionic systems can be leveraged for purposes within the field of quantum information. A strong link between the Jordan-Wigner-type free-fermion models of Ref.~\cite{chapman2020characterization} and error correcting codes for quantum computing has been established and studied by Ref.~\cite{chapman2022free}. However it remains to be seen whether such a link can be extended to include the (even-hole, claw)-free models developed in Ref.~\cite{elman2021free} or the SCF models developed in the current work. The ability to leverage free-fermionic solutions might also appear useful in the development of more compact fermion-to-qubit mappings~\cite{derby2021acompact, derby2021bcompact}, which could have a plethora of uses in the fields of quantum chemistry, condensed matter physics, and high-energy physics.

Another example of future work that could be probed within the mathematical framework of graph theory pertains to the question of qudit many-body models and \emph{free parafermions}. In particular, it would be interesting to understand whether there exists a graph-theoretic characterization of free parafermions in the same way that we have developed for fermions. Some examples of free-parafermionic models have been studied in isolation~\cite{fendley2014free, alcaraz2017energy, alcaraz2018anomalous, yao2021parafermionization}, while families of spin chains have also been constructed~\cite{alcaraz2020free, alcaraz2020integrable, alcaraz2021free, Alcaraz2021Powerful}. However, as of yet, there is no systematic identification mechanism akin to the simplicial, claw-free characterization of free fermions. We conjecture that such a characterization is possible.

\section*{Acknowledgements}

We thank Simon Benjamin, Paul Fendley, Steven Flammia, Tyler Helmuth, Alex Little, and Yuan Miao for helpful discussions. AC acknowledges support from EPSRC under agreement EP/T001062/1, and from EU H2020-FETFLAG-03-2018 under grant agreement no. 820495 (AQTION). SJE was supported by the ARC Centre of Excellence for Engineered Quantum Systems (EQUS, CE170100009) and in part with funding from the Defense Advanced Research Projects Agency under the Quantum Benchmarking (QB) program under award no. HR00112230007,  HR001121S0026, and HR001122C0074 contracts. RLM was supported by the QuantERA ERA-NET Cofund in Quantum Technologies implemented within the European Union's Horizon 2020 Programme (QuantAlgo project), EPSRC grants EP/L021005/1, EP/R043957/1, and EP/T001062/1, and the ARC Centre of Excellence for Quantum Computation and Communication Technology (CQC2T), project number CE170100012. The views, opinions and/or findings expressed are those of the authors and should not be interpreted as representing the official views or policies of the Department of Defense or the U.S. Government. No new data were created during this study.

\onecolumngrid

\appendix

\section{Proof of Lemma~\ref*{lemma:pairingrelations}}
\label{section:PairingRelationsProof}

\PairingRelations*

We shall assume without loss of generality that $\overrightarrow{\mco}$ is $(a,c)$ pairing and that $P^{(r)}$ is a connected component of $G[C_a{\oplus}C'_c]$. We shall prove statements~(a) and (b) separately. 

\subsection{Proof of Statement~(a)}

\begin{proof}[Proof of Lemma~\ref*{lemma:pairingrelations}~$(a)$]
    We first assume that $\bsu \in C'$ is a vertex such that $h_{\bsu}$ and $h_{P^{(r)}}$ anticommute. If $\bsu$ is in $U^{(r,s)}$ for some $s\in\{1,\dots,\ell_r\}$, then $\bsu$ is distinct from $\bsj_0^{(r)}$. We assume that $\bsu$ distinct from $\bsj_0^{(r)}$ and shall prove that $\bsu$ is in $U^{(r,s)}$ for at least one value of $s$. If $\bsu$ is in $P^{(r)}$, then $\bsu=\bsk_{\ell_r-1}^{(r)}$. However, $\bsk_{\ell_r-1}^{(r)}$ is in $C{\setminus}C'$ by the proof of Corollary~\ref{corollary:pathuniqueendpoints} and our assumption on $\overrightarrow{\mco}$. Therefore $\bsu$ is not in $P^{(r)}$.
    
    Suppose that $\bsu$ is in $P^{(r)}$ and that $\bsu$ is not in $U^{(r,s)}$ for any value of $s$ by way of contradiction. If $\bsu$ neighbors neither or both of the vertices in the set $\{\bsj_{s-1}^{(r)},\bsk_{s-1}^{(r)}\}$ for every $s$, then $h_{\bsu}$ and $h_{P^{(r)}}$ commute. Thus, suppose $\bsu$ is in $\Gamma_{C'}(\bsk_{s-1}^{(r)}){\setminus}\Gamma_{C'}[\bsj_{s-1}^{(r)}]$ for some $s$. If $s=\ell_r$, then this contradicts our assumption that $\bsk_{\ell_r-1}^{(r)}$ is the endpoint of $P^{(r)}$ by the proof of Corollary~\ref{corollary:pathuniqueendpoints} and by Corollary~\ref{corollary:vertexcycleneighboring}. Thus, $s<\ell_r$, and so $\ell_r>1$.
    
    We have that $\bsj_{s}^{(r)}$ is in $\Gamma_{C'}(\bsu)$, otherwise $\{\bsk_{s-1}^{(r)},\bsj_{s-1}^{(r)},\bsj_{s}^{(r)},\bsu\}$ induces a claw. Since $\bsj_{s}^{(r)}$ is in $C'_c$, then $\bsu$ is in $C'_d$. If $\bsk_{s}^{(r)}$ is not in $\Gamma(\bsu)$, then $\bsu$ is in $U^{(r,s+1)}$, and we assume that $\bsk_{s}^{(r)}$ is in $\Gamma(\bsu)$. There is an additional neighbor $\bsv \in C'_c$ to $\bsu$, which is distinct from $\bsj_{s-1}^{(r)}$. $\bsv$ is not in $\{\bsk_{s-1}^{(r)},\bsk_{s}^{(r)}\}$, since both vertices in this set are neighboring to $\bsj_{s}^{(r)} \in C'_c$. Further, $\bsv$ is not neighboring to $\bsk_{s-1}^{(r)}$, otherwise the set $\{\bsk_{s-1}^{(r)},\bsj_{s-1}^{(r)},\bsj_{s}^{(r)},\bsv\}$ induces a claw. Thus, $\bsv$ is in $\Gamma_{C'}(\bsk_{s}^{(r)})$, otherwise the set $\{\bsu,\bsk_{s-1}^{(r)},\bsk_{s}^{(r)},\bsv\}$ induces a claw, and so $\bsv=\bsj_{s+1}^{(r)}$ and $\ell_r>2$. Then $\Gamma_{P^{(r)}}(\bsu)=\bsk_{s-1}^{(r)}\md\bsj_{s}^{(r)}\md\bsk_{s}^{(r)}\md\bsj_{s+1}^{(r)}$ as $\bsu$ has no additional neighbors in $P^{(r)}$ by Lemma~\ref{lemma:vertexcyclerelations}. However, then $\bsu$ is in $U^{(r,s+2)}$ and $h_{\bsu}$ and $h_{P^{(r)}}$ commute. Therefore, $\bsu$ is in $U^{(r,s)}$ for at least one value of $s\in \{1,\dots,\ell_r\}$ if and only if $\bsu$ is distinct from $\bsj_0^{(r)}$. Now, by applying Lemma~\ref{lemma:vertexcyclerelations}, we have that $\Delta_{C^{(r)}}(\bsu)=\Delta_{C^{(r-1)}}(\bsu)+1$.

    If $\bsu$ is neighboring to $P^{(r)}$ as in Table~\ref{table:vertexcyclerelations} case (b.i), then $\bsu$ is in $U^{(r,s)}$ for $s>1$ and $\Gamma_{P^{(r)}}(\bsu)=\bsj_{s-2}^{(r)}\md\bsk_{s-2}^{(r)}\md\bsj_{s-1}^{(r)}$. If $\bsu$ is neighboring to $P^{(r)}$ as in Table~\ref{table:vertexcyclerelations} case (b.iii), then $\bsu$ is in $U^{(r,1)}$ and $\Gamma_{P^{(r)}}(\bsu)=\{\bsj_0^{(r)}\}$. Finally, if $\bsu$ is neighboring to $P^{(r)}$ as in Table~\ref{table:vertexcyclerelations} case (b.iv), then $\bsu$ is in $U^{(r,1)}$ and $\Gamma_{P^{(r)}}(\bsu)=\{\bsj_0^{(r)},\bsj_{t-1}^{(r)}\md\bsk_{t-1}^{(r)}\}$ for $1<t\leq\ell_r$ or $\Gamma_{P^{(r)}}(\bsu)=\{\bsj_0^{(r)},\bsk_{t-2}^{(r)}\md\bsj_{t-1}^{(r)}\}$ for $2<t\leq\ell_r$. In these cases, we have $\Delta_{C^{(r)}}(\bsu)=\Delta_{C^{(r-1)}}(\bsu)+1$. If $\bsu$ is neighboring to $P^{(r)}$ as in Table~\ref{table:vertexcyclerelations} case (b.iv) with $\Gamma_{{P^{(r)}}}(\bsu)=\{\bsk_{s-2}^{(r)}\md\bsj_{s-1}^{(r)},\bsk_{\ell_r-1}^{(r)}\}$ for $1<s<\ell_r$, then this contradicts assumption that $\bsk_{\ell_r-1}^{(r)}$ is an endpoint of $P^{(r)}$. This proves that $\Delta_{C^{(r)}}(\bsu)=\Delta_{C^{(r-1)}}(\bsu)+1$. This result allows us to prove Corollary~\ref{corollary:vertexanticommutingdistinctpaths}.
    
    Suppose $\bsu$ is in $U^{(r,s)}$ and $h_{\bsu}$ and $h_{P^{(r)}}$ anticommute. Further, suppose that $\Delta_{C^{(r)}}(\bsu)=3$. By Lemma~\ref{lemma:evenholecloneneighbor}, $\bsu$ neighbors a mutual neighbor $\bsv$ to $\bsj_{s-1}^{(r)}$ and $\bsk_{s-1}^{(r)}$ in $C^{(r,s-1)}$. Then $\bsv$ is not in $P^{(r)}$ since it neighbors both $\bsj_{s-1}^{(r)}$ and $\bsk_{s-1}^{(r)}$. Thus, $\bsv$ is in $C^{(r)}$ and $\bsv$ is in $C^{(r-1)}$. Similarly, let $\bsv'$ be the mutual neighbor to $\bsj_{s-1}^{(r)}$ and $\bsk_{s-1}^{(r)}$ in $C^{(r,s-1)}$ other than $\bsv$. By the same argument, $\bsv'$ is in $C^{(r)}$ and $\bsv'$ is in $C^{(r-1)}$. Let $\bsw$ be the additional neighbor to $\bsv$ in $C^{(r)}$ other than $\bsj_{s-1}^{(r)}$. Since $\bsu$ has three neighbors in $C^{(r)}$, then either $\Gamma_{C^{(r)}}(\bsu)=\bsw\md\bsv\md\bsj_{s-1}^{(r)}$ or $\Gamma_{C^{(r)}}(\bsu)=\bsv\md\bsj_{s-1}^{(r)}\md\bsv'$ by Lemma~\ref{lemma:vertexcyclerelations}. However, if $\bsv'$ is a neighbor to $\bsu$, then since $\Delta_{C^{(r)}}(\bsu)=\Delta_{C^{(r-1)}}(\bsu)+1$, we have that $\Gamma_{C^{(r-1)}}(\bsu)=\{\bsv,\bsv'\}$. This contradicts Lemma~\ref{lemma:vertexcyclerelations} since $\bsv$ and $\bsv'$ are not neighboring ($\{\bsv,\bsw,\bsk^{(r)}_{s-1},\bsu\}$ induces a claw). Thus, $\Gamma_{C^{(r)}}(\bsu)=\bsw\md\bsv\md\bsj_{s-1}^{(r)}$, so $\bsu\prec_{C^{(r)}}\bsv$. If $\bsv$ is in $C'$, then $\{\bsu,\bsv,\bsj_{s-1}^{(r)}\}$ induces a triangle in $C'$, so $\bsv$ is in $C{\setminus}C'$. Since $\bsv$ is a neighbor to $\bsk_{s-1}^{(r)} \in C_a$, then $\bsv$ is in $C_b$. Further, $\bsu$ is in $C'_d$ since $\bsu$ is in $U^{(r,s)}$, and so $\bsu\prec_{C^{(r)}}\bsv$ is $(a,c)$ pairing. This proves statement (a.i).
    
    Now, suppose $\bsu$ is in $U^{(r,s)}$, $h_{\bsu}$ and $h_{P^{(r)}}$ anticommute, and $\Delta_{C^{(r-1)}}(\bsu)=3$. Again, let $\bsv$ be the mutual neighbor to $\bsj_{s-1}^{(r)}$ and $\bsk_{s-1}^{(r)}$ in $C^{(r,s-1)}$ that is neighboring to $\bsu$. Then $\bsv$ is in $C^{(r-1)}$, $C^{(r)}$, and $C{\setminus}C'$. If $\Delta_{C}(\bsu)\neq3$, then there is a path $P^{(g)} \in \mco$ with $g<r$ whose corresponding operator anticommutes with $h_{\bsu}$. If $h_{\bsu}$ commutes with every such $h_{P^{(g)}}$, and $\Delta_C(\bsu)\neq3$, then $\Delta_{C^{(r-1)}}(\bsu)$ and $\Delta_C(\bsu)$ differ by an even number. However, $\bsu$ can have no odd degree in $C$ other than $3$ by Lemma~\ref{lemma:vertexcyclerelations}. Now suppose that $g<r$ is the largest index such that $h_{P^{(g)}}$ and $h_{\bsu}$ anticommute. By Corollary~\ref{corollary:pathcomponentsdistinct}, $P^{(g)}$ and $P^{(r)}$ are distinct. By Corollary~\ref{corollary:vertexanticommutingdistinctpaths} and Corollary~\ref{corollary:vertexanticommutingdistinctpathsendpoint} for $\overrightarrow{\mco}=\overrightarrow{\mco}'$, we have that $\bsu=\bsj_0^{(g)}$, and so $\Delta_{C^{(g)}}(\bsu)=2$. Therefore, $\Delta_{C^{(r-1)}}(\bsu)$ is even since we have assumed there is no path $P^{(q)}$ whose corresponding operator anticommutes with $h_{\bsu}$ for $g<q<r$. This contradicts our assumption that $\Delta_{C^{(r-1)}}(\bsu)=3$, and so $\Delta_{C}(\bsu)=3$. Now suppose that $\Delta_{C}(\bsu)=3$ and $g<r$ is the largest index such that $h_{P^{(g)}}$ and $h_{\bsu}$ anticommute. By Corollary~\ref{corollary:vertexanticommutingdistinctpathsendpoint}, $\bsu=\bsj_0^{(g)}$. Then $\bsu$ is in $C^{(g)}$ and thus $\bsu$ is in $C^{(r-1)}$ by Corollary~\ref{corollary:vertexonepathcomponent}. This contradicts our assumption that $h_{\bsu}$ and $h_{P^{(r)}}$ anticommute. Thus, if $\Delta_{C}(\bsu)=3$, then $h_{\bsu}$ commutes with all operators $h_{P^{(g)}}$ with $g<r$. Thus, $\Delta_{C^{(r)}}(\bsu)=4$, since $\bsu$ can have no odd degree in $C^{(r-1)}$ other than $3$ and $\Delta_{C^{(r)}}(\bsu)=\Delta_{C^{(r-1)}}(\bsu)+1$. Since $\bsv$ is in $\Gamma_C(\bsu)$, and $\bsk_{s-1}^{(r)} \in C_a$ is not neighboring to $\bsu$, the clone to $\bsu$ in $C$ is in $C_a$. Therefore, $\bsu\prec_C\bsv$ is $(a,d)$ pairing. This proves statement (a.ii), and completes the proof.
\end{proof}

\subsection{Proof of Statement~(b)}

\begin{proof}[Proof of Lemma~\ref*{lemma:pairingrelations}~$(b)$]
    We assume that $\bsu \in C'$ is a vertex such that $h_{\bsu}$ and $h_{P^{(r)}}$ commute and $\bsu$ is in $U^{(r,s)}$ for some $s\in\{1,\dots,\ell_r\}$. We shall show that $\Delta_{C^{(r)}}(\bsu)=4$ and that there is a vertex $\bsv \in U^{(r,s')}$ with $s'<s$ in the same connected component $P$ of $G[C_b{\oplus}C'_d]$ as $\bsu$. We achieve this by showing that $s\neq1$ if $h_{\bsu}$ and $h_{P^{(r)}}$ commute. We argue that there is a vertex $\bsu'$ in $U^{(r,s')}$ with $s'<s$ such that $h_{\bsu'}$ and $h_{P^{(r)}}$ anticommute.
    
    First, suppose that $\bsu$ is in $P^{(r)}$, then $\bsu=\bsk_{s-2}^{(r)}$ by our assumptions. Hence, $s>1$ and $\ell_r>1$. Since $P^{(r)}$ is a connected component of $G[C_a{\oplus}C'_c]$ and $\bsk_{s-2}^{(r)}$ is in $C'$, then $\bsk_{s-2}^{(r)}$ is in $C_a \cap C'_d$ by construction. Then ${\Gamma_{C^{(r)}}(\bsu)=\Gamma_{C^{(r-1)}}(\bsu)\cup\{\bsj_{s-2}^{(r)},\bsj_{s-1}^{(r)}\}}$, and so $\Delta_{C^{(r)}}(\bsu)=4$ since $\Delta_{C^{(r-1)}}(\bsu)=2$ and $\Gamma_{C^{(r-1)}}(\bsu)\cap\{\bsj_{s-2}^{(r)},\bsj_{s-1}^{(r)}\}=\varnothing$. Further, $\bsj_{s-2}^{(r)}$ has an additional neighbor $\bsv \in C'$ other than $\bsk_{s-2}^{(r)}$, and thus $\bsv \in U^{(r,s-1)}$. By Corollary~\ref{corollary:evenholecloneexactlyoneneighbor}, $\bsv$ neighbors $\bsw \in C^{(r,s-3)}$, which is a mutual neighbor to $\bsj_{s-2}^{(r)}$ and $\bsk_{s-2}^{(r)}$. Then neither $\bsw$ is in $C'$ nor $\{\bsv, \bsw,\bsj_{s-2}^{(r)}\}$ induces a triangle in $C'$. Thus, $\bsw$ is in $C$, and hence in $C_b$ since it is neighboring to $\bsk_{s-2}^{(r)} \in C_a$.
    Similarly, $\bsv$ is in $C'_d$, since it is neighboring to $\bsj_{s-2}^{(r)} \in C'_c$. Thus, $\bsv$ and $\bsu$ are in $P$ since they both neighbor $\bsw \in C_b$.
    
    Suppose that $\bsu$ is not in $P^{(r)}$. If $s=1$, then $\bsu$ has an odd number of neighbors in $P^{(r)}$ by Lemma~\ref{lemma:vertexcyclerelations}, and so $s>1$, $\ell_r>1$, and $\bsk_{s-2}^{(r)}$ is in $\Gamma_{P^{(r)}}(\bsu)$ otherwise $\{\bsj_{s-1}^{(r)},\bsk_{s-2}^{(r)},\bsk_{s-1}^{(r)},\bsu\}$ induces a claw. Let $\bsw_{s-1}$ be the mutual neighbor in $C^{(r-1)}$ to every vertex in the set $\{\bsj_{s-2}^{(r)},\bsk_{s-2}^{(r)},\bsj_{s-1}^{(r)},\bsk_{s-1}^{(r)}\}$, as shown in Fig.~\ref{figure:deformationsequence}. We consider the case where $\bsu$ neighbors neither $\bsw_{s-1}$ nor $\bsj_{s-2}^{(r)}$ and then the case where $\bsu$ neighbors at least one of these vertices. Recall that $\abs{C}=2k$ by the labeling for $C$ in Eq.~(\ref{equation:evenholelabel}), and that $|C^{(r-1)}|=\abs{C}$.

    If $k=2$, let $C^{(r-1)}=\bsw_{s-2}\md\bsk_{s-2}^{(r)}\md\bsw_{s-1}\md\bsk_{s-1}^{(r)}\md\bsw_{s-2}$. If $\bsu$ does not neighbor $\bsw_{s-1}$ or $\bsj_{s-2}^{(r)}$, then $\bsu$ neighbors $\bsw_{s-2}$ by Corollary~\ref{corollary:vertexcycleneighboring}. However, then $\{\bsw_{s-2},\bsj_{s-2}^{(r)},\bsk_{s-1}^{(r)},\bsu\}$ induces a claw. Thus, $\bsu$ neighbors either $\bsw_{s-1}$ or $\bsj_{s-2}^{(r)}$.
    
    If $k=3$, let $C^{(r-1)}=\bsw_{s-2}\md\bsk_{s-2}^{(r)}\md\bsw_{s-1}\md\bsk_{s-1}^{(r)}\md\bsw_s\md\bst\md\bsw_{s-2}$. Assume that $\bsu$ does not neighbor $\bsw_{s-1}$ or $\bsj_{s-2}^{(r)}$. By applying Corollary~\ref{corollary:evenholecloneexactlyoneneighbor} to $\bsj_{s-1}^{(r)}\prec_{C^{(r,s-1)}}\bsk_{s-1}^{(r)}$ and using the assumption that $\bsw_{s-1}$ is not in $\Gamma_{C^{(r,s-1)}}(\bsu)$, we have that $\bsw_s\md\bst$ is in $\Gamma_{C^{(r-1)}}(\bsu)$. We require that $\bsk_{s-2}^{(r)}$ is in $\Gamma_{C^{(r-1)}}(\bsu)$, otherwise $\{\bsj_{s-1}^{(r)},\bsk_{s-1}^{(r)},\bsk_{s-2}^{(r)},\bsu\}$ induces a claw. Then $\Gamma_{C^{(r-1)}}(\bsu)=\bsw_s\md\bst\md\bsw_{s-2}\md\bsk_{s-2}^{(r)}$ by Corollary~\ref{corollary:vertexcycleneighboring} and the assumption that $\bsw_{s-1}$ is not in $\Gamma_{C^{(r,s-1)}}(\bsu)$. If $\bst$ is in $P^{(r)}$, then there is a vertex $\bss \in C'$ such that $\bst$ is the clone to $\bss$ in the deformation by $P^{(r)}$, i.e., $\bss$ is in $\{\bsj_{s-3}^{(r)},\bsj_s^{(r)}\}$. Then $\bsw_s\md\bst\md\bsw_{s-2}$ in in $\Gamma_{C^{(r-1)}}(\bss)$, and so $\bsu$ is in $\Gamma_{C'}(\bss)$, otherwise $\{\bsw_{s-2},\bsj_{s-2}^{(r)},\bss,\bsu\}$ induces a claw. Therefore, $\Delta_{C^{(r)}}(\bsu)=4$ regardless of whether $\bst$ is in $P^{(r)}$. We now prove that there is a vertex $\bsv$ in $U^{(r,s')}$ with $s'<s$ in $P$.
    
    Assuming that $k=3$ and $\bsu$ does not neighbor $\bsw_{s-1}$ or $\bsj_{s-2}^{(r)}$, then $\Delta_{C'}(\bsk_{s-2}^{(r)})>3$, since its neighbors $\{\bsj_{s-2}^{(r)},\bsj_{s-1}^{(r)},\bsu\}$ do not induce a path. Thus, there is a vertex $\bsv$ in $U^{(r,s-1)}$. This vertex is not in $C^{(r-1)}$ since then it is a neighbor to $\bsj_{s-2}^{(r)}$ in $C^{(r-1)} \cap C'$. If $\bsj_{s-2}^{(r)}$ has an additional neighbor in $C^{(r-1)}$, then the neighbor is $\bst$, however $\bst$ is not in $C'$, otherwise $\bst$ neighbors $\bsj_{s-2}^{(r)} \in C'_c$ and $\bsu \in C'_d$. Thus, let $\bsv$ be a in $U^{(r,s-1)}$, and so $\bsv$ is not in $C^{(r-1)}$. If $\bsv$ does not neighbor $\bsw_{s-2}$ or $\bsw_{s}$, then $\Gamma_{C^{(r-1)}}(\bsv)=\bsw_{s-1}\md\bsk_{s-1}^{(r)}$ by Corollary~\ref{corollary:evenholecloneexactlyoneneighbor} applied to $\bsj_{s-2}^{(r)}\prec_{C^{(r,s-2)}}\bsk_{s-2}^{(r)}$. Since both $\bsw_{s-1}$ and $\bsk_{s-1}^{(r)}$ have neighbors in $C'_c$ and $C'_d$, then $\bsv$ has an additional neighbor $\bsv' \in C'{\setminus}C^{(r-1)}$. If $\bsv'$ does not neighbor $\bsk_{s-1}^{(r)}$, then $\{\bsv,\bsv',\bsk_{s-1}^{(r)},\bsj_{s-2}^{(r)}\}$ induces a claw ($\bsv'$ and $\bsj_{s-2}^{(r)}$ are in the same coloring class of $C'$). Thus, $\bsv'$ neighbors $\bsk_{s-1}^{(r)}$. Then $\bsv'=\bsj_{s}^{(r)}$ and $\Gamma_{C^{(r-1)}}(\bsv')=\bsk_{s-1}^{(r)}\md\bsw_s\md\bst\md\bsw_{s-2}$. However, this induces the claw $\{\bsj_s^{(r)},\bsv,\bsw_{s-2},\bsw_s\}$. Thus, $\bsv$ neighbors at least one of $\bsw_{s-2}$ or $\bsw_s$, and so $\bsv$ is in $U^{(r,s-1)}$ and $\bsv$ is in $P$.
    
    If $k>3$, let $C^{(r-1)}=\dots\md\bsw_{s-2}\md\bsk_{s-2}^{(r)}\md\bsw_{s-1}\md\bsk_{s-1}^{(r)}\md\bsw_s\md\bst\md\bsw_{s+1}\md\dots$. Then $\bsk_{s-2}^{(r)}$ is in $\Gamma_{C^{(r-1)}}(\bsu)$, otherwise $\{\bsj_{s-1}^{(r)},\bsk_{s-2}^{(r)},\bsk_{s-1}^{(r)},\bsu\}$ induces a claw. If $\bsu$ does not neighbor $\bsw_{s-1}$ or $\bsj_{s-2}^{(r)}$, then $\bsw_{s-2}$ is in $\Gamma_{C^{(r-1)}}(\bsu)$ to satisfy Corollary~\ref{corollary:vertexcycleneighboring} for $\bsk_{s-2}^{(r)}$ in $\Gamma_{C^{(r-1)}}(\bsu)$. Further, $\bst$ is in $\Gamma_{C^{(r-1)}}(\bsu)$ to satisfy Corollary~\ref{corollary:evenholecloneexactlyoneneighbor} for $\bsj_{s-1}^{(r)}\prec_{C^{(r, s-1)}}\bsk_{s-1}^{(r)}$, and $\bst$ is not in $\Gamma_{C^{(r-1)}}(\bsj_{s-1}^{(r)})$ by construction. However, then $\{\bsu,\bsw_{s-2},\bsj_{s-1}^{(r)},\bst\}$ induces a claw. Thus, $\bsu$ neighbors at least one of $\bsw_{s-1}$ and $\bsj_{s-2}^{(r)}$.
    
    Let $\bsw_{s-1}$ be the mutual neighbor in $C^{(r-1)}$ to every vertex in the set $\{\bsj_{s-2}^{(r)},\bsk_{s-2}^{(r)},\bsj_{s-1}^{(r)},\bsk_{s-1}^{(r)}\}$, as shown in Fig.~\ref{figure:deformationsequence}. We assume that $\bsu$ is neighboring to $\bsw_{s-1}$ or $\bsj_{s-2}^{(r)}$, or, if there is a claw, then there is a vertex $\bsv \in U^{(r,s-1)}$ in the same component of $G[C_b{\oplus}C'_d]$ as $\bsu$. If $\bsu$ is neighboring to $\bsw_{s-1}$, then it is neighboring to $\bsj_{s-2}^{(r)}$, otherwise $\{\bsw_{s-1},\bsj_{s-2}^{(r)},\bsk_{s-1}^{(r)},\bsu\}$ induces a claw. Thus, $\bsu$ is neighboring to $\bsj_{s-2}^{(r)}$ regardless of whether it is neighboring to $\bsw_{s-1}$. By Lemma~\ref{lemma:vertexcyclerelations}, $\bsu$ is then  neighboring to $\bsk_{s-3}^{(r)}$ so that $h_{\bsu}$ and $h_{P^{(r)}}$ anticommute, and then $s>2$ with $\ell_r>2$. Further, $\bsu$ does not neighbor $\bsj_{s-3}^{(r)}$, since $\bsu$ has four neighbors in $P^{(r)}$ with $\Gamma_{P^{(r)}}(\bsu)=\bsk_{s-3}^{(r)}\md\bsj_{s-2}^{(r)}\md\bsk_{s-2}^{(r)}\md\bsj_{s-1}^{(r)}$. Let $\bsw_{s-2}$ be the mutual neighbor in $C^{(r-1)}$ to every vertex in the set $\{\bsj_{s-3}^{(r)},\bsk_{s-3}^{(r)},\bsj_{s-2}^{(r)},\bsk_{s-2}^{(r)}\}$. This vertex must exist by Lemma~\ref{lemma:evenholecloneneighbor}. Since $\ell_r>2$, then $k>2$, where we recall that $\abs{C}=2k$ by our labeling scheme for $C$. By Lemma~\ref{lemma:evenholecloneneighbor} with respect to $C^{(r,s-3)}$ (prior to the deformation by $\bsj_{s-3}^{(r)}\prec_{C^{(r,s-3)}}\bsk_{s-3}^{(r)}$), we have that $\bsu$ is neighboring to $\bsw_{s-2}$, since it is neighboring to $\bsk_{s-3}^{(r)} \in C^{(r,s-3)}$ and not $\bsj_{s-3}^{(r)}$. Then $\bsw_{s-2}$ is in $C{\setminus}C'$, since if $\bsv_1$ is in $C'$, then $\{\bsj_{s-1}^{(r)},\bsw_{s-2},\bsu\}$ induces a triangle in $C'$. Further, $\bsw_{s-2}$ is in $C_b$, since it neighbors $\bsk_{s-3}^{(r)}$ in $C_a$. By Lemma~\ref{lemma:evenholecloneneighbor} with respect to the deformation by $\bsj_{s-1}^{(r)}\prec_{C^{(r,s-1)}}\bsk_{s-1}^{(r)}$, we have that $\bsu$ is neighboring to $\bsw_{s-1}$, since it is neighboring to $\bsj_{s-2}^{(r)} \in C^{(r,s-1)}$. Then $\bsw_{s-1}$ is in $C{\setminus}C'$, since if $\bsw_{s-1}$ is in $C'$, then $\{\bsj_s^{(r)},\bsw_{s-1},\bsu\}$ induces a triangle in $C'$. Further, $\bsw_{s-1}$ is in $C_b$, since it neighbors $\bsk_{s-1}^{(r)}$ in $C_a$. Thus, $\Gamma_{C^{(r-1)}}(\bsu)=\bsk_{s-3}^{(r)}\md\bss_{s-2}\md\bsk_{s-2}^{(r)}\md\bsw_{s-1}$ and $\Gamma_{C^{(r)}}(\bsu)=\bsw_{s-2}\md\bsj_{s-2}^{(r)}\md\bsw_{s-1}\md\bsj_{s-1}^{(r)}$. Therefore, $\Delta_{C^{(r)}}(\bsu)=4$.
    
    Now, let $\bsv_{s-2}$ be the additional neighbor to $\bsj_{s-2}^{(r)}$ in $C_d'$ other than $\bsu$. Then $\bsv_{s-2}$ is in $C^{(r-1)}$, since $\Gamma_{C^{(r-1)}}(\bsj_{s-2}^{(r)})=\Gamma_{C^{(r-1)}}(\bsu)=\bsk_{s-3}^{(r)}\md\bsw_{s-2}\md\bsk_{s-2}^{(r)}\md\bsk_{s-1}$, and $\bsu$ is in $C'_d$. If $\bsv_{s-2}$ does not neighbor $\bsk_{s-2}^{(r)}$, then $\bsv_{s-2}$ is in $U^{(r,s-1)}$. Further, $\bsv_{s-2}$ neighbors either $\bsw_{s-2}$ or $\bsw_{s-1}$, otherwise $\{\bsj_{s-2}^{(r)},\bsw_{s-2},\bsw_{s-1},\bsv_{s-2}\}$ induces a claw. Both $\bsw_{s-2}$ and $\bsw_{s-1}$ are neighbors to $\bsu$ in $C_b$, and so $\bsv_{s-2}$ and $\bsu$ are in the same connected component of $G[C_b{\oplus}C'_d]$.
    
    Assume that $\bsv_{s-2}$ neighbors $\bsk_{s-2}^{(r)}$, then $\bsv_{s-2}$ neighbors $\bsw_{s-2}$, otherwise $\{\bsk_{s-2}^{(r)},\bsj_{s-1}^{(r)},\bsv_{s-2},\bsw_{s-2}\}$ induces a claw. By applying Lemma~\ref{lemma:evenholecloneneighbor} with respect to $\bsu\prec_{C^{(r,s-2)}}\bsk_{s-2}^{(r)}$, we have that $\bsj_{s-3}^{(r)}$ is in $\Gamma(\bsv_{s-2})$. If $\bsv_{s-2}$ does not neighbor $\bsk_{s-3}^{(r)}$, then $\bsv_{s-2}$ is in $U^{(r,s-2)}$. Further, $\bsv_{s-2}$ and $\bsu$ are in the same connected component of $G[C_b{\oplus}C'_d]$, since they both neighbor $\bsw_{s-2}$ in $C_b$. Assume that $\bsv_{s-2}$ neighbors $\bsk_{s-3}^{(r)}$. Let $\bsw_{s-3}$ be the mutual neighbor to $\bsj_{s-3}^{(r)}$ and $\bsk_{s-3}^{(r)}$ in $C^{(r-1)}$ other than $\bsw_{s-2}$. Then $\bsw_{s-3}$ is in $\Gamma(\bsv_{s-2})$, otherwise $\{\bsk_{s-3}^{(r)},\bsu,\bsv_{s-2},\bsw_{s-3}\}$ induces a claw, since $\bsu$ has four neighbors in $C^{(r-1)}$ which do not include $\bsw_{s-3}$. If $\bsw_{s-3}$ is in $C'$, then $\{\bsw_{s-3},\bsv_{s-2},\bsj_{s-3}^{(r)}\}$ induces a triangle in $C'$, and so $\bsw_{s-3}$ is in $C{\setminus}C'$. Additionally, $\bsw_{s-3}$ is in $C_b$, since $\bsw_{s-3}$ is neighboring to $\bsk_{s-3}^{(r)}$ in $C_a$. Let $\bsv_{s-3}$ be the additional neighbor to $\bsj_{s-3}^{(r)}$ in $C_d'$. Then $\bsv_{s-3}$ does not neighbor $\bsk_{s-3}^{(r)}$, since $\Gamma_{C'}(\bsk_{s-3}^{(r)})=\bsj_{s-3}^{(r)}\md\bsv_{s-2}\md\bsj_{s-2}^{(r)}\md\bsu$. Thus, $\bsv_{s-3}$ is in $U^{(r, s-2)}$. Then $\bsv_{s-3}$ is neighboring to $\bsw_{s-3}$, otherwise $\{\bsj_{s-3}^{(r)},\bsw_{s-3},\bsw_{s-2},\bsv_{s-3}\}$ induces a claw, and $\Gamma_{C'}(\bsw_{s-2})=\bsj_{s-3}^{(r)}\md\bsv_{s-2}\md\bsj_{s-2}^{(r)}\md\bsu$. Thus, $\bsv_{s-3}$ and $\bsu$ are in the same connected component of $G[C_b{\oplus}C'_d]$, which includes $\bsv_{s-3}\md\bsw_{s-3}\md\bsv_{s-2}\md\bsw_{s-2}\md\bsu$.
    
    We recursively apply this argument by replacing $\bsu$ with the only member of $U^{(r,s-\delta)}$ and $s$ with $s-\delta$ for $\delta\in\{1,2\}$. Since $s$ is strictly decreasing in this recursion, then there is eventually a $\bsu$ such that $h_{\bsu}$ and $h_{P^{(r)}}$ anticommute, since $h_{\bsu}$ and $h_{P^{(r)}}$ do not commute if $\bsu$ is in $U^{(r,1)}$. Therefore, there exists a vertex $\bsu'$ in $U^{(r,s')}$ with $s'<s$ such that $h_{\bsu'}$ and $h_{P^{(r)}}$ anticommute, and the unique path component containing $\bsu'$ with the same pairing type as $\overrightarrow{\mco}$ contains $\bsu$. This completes the proof.
\end{proof}

\subsection{Proof of Corollary~\ref*{corollary:vertexanticommutingdistinctpaths}}

\VertexAnticommutingDistinctPaths*

\begin{proof}
    Let $P^{(r)}\in\overrightarrow{\mco}$ and $P^{(g)}\in\overrightarrow{\mco}'$ be distinct path components for possibly non-distinct deformations $\overrightarrow{\mco}$ and $\overrightarrow{\mco}'$ of the same fixed pairing type. Assume there is a vertex $\bsu$ such that $h_{\bsu}$ anticommutes with $h_{P^{(r)}}$ and $h_{P^{(g)}}$ for $\bsu \in U^{(r,s)} \cap U^{(g,t)}$ for some $s\in\{1,\dots\ell_r\}$ and $t\in\{1,\dots\ell_g\}$. Since $P^{(r)}$ is a component of $G[C_a{\oplus}C'_c]$, then $\bsu$ is in $C'_d$, since it is a neighbor to $\bsj_0^{(r)} \in C'_c$. Thus, $P^{(g)}$ is a component of $G[C_a{\oplus}C'_c]$, since $\bsu$ is a neighbor to $\bsj_0^{(r)} \in C'$, and $\bsj_0^{(r)}$ is not in $C'_d$. Since $P^{(r)}$ and $P^{(g)}$ are distinct, they are therefore disjoint. It follows from the neighboring relations in the proof of Lemma~\ref{lemma:pairingrelations}~$(a)$, that either $\Gamma_{P^{(r)}}(\bsu)=\{\bsj_0^{(r)}\}$ or $\Delta_{C'_c \cap P^{(r)}}(\bsu)=2$, and similarly for $P^{(g)}$. Thus, $\Gamma_{P^{(r)}}(\bsu)=\{\bsj_0^{(r)}\}$ and $\Gamma_{P^{(g)}}(\bsu)=\{\bsj_0^{(g)}\}$. Otherwise, $\bsu$ would have three distinct neighbors in $C'_c$.
    
    Let $C^{(g-1)}\in\avg{C}$ be the even hole given by deforming $C$ by $\overrightarrow{\mco}'$ until the deformation by $P^{(g)}$. Similarly, let $C^{(r-1)}\in\avg{C}$ be the even hole given by deforming $C$ by $\overrightarrow{\mco}$ until the deformation by $P^{(r)}$. By Corollary~\ref{corollary:evenholecloneexactlyoneneighbor}, $\bsu$ is a neighbor to a mutual neighbor $\bsv_g$ to $\bsj_0^{(g)}$ and $\bsk_0^{(g)}$ in $C^{(g-1)}$. As in the proof of Corollary~\ref{corollary:vertexonepathcomponent}, $\bsv_g$ is in $C{\setminus}C'$, or $\{\bsj_0^{(g)},\bsv_g,\bsu\}$ induces a triangle in $C'$. Similarly, let $\bsv_r$ be a mutual neighbor to $\bsj_0^{(r)}$ and $\bsk_0^{(r)}$ in $C^{(r-1)}$ that is a neighbor to $\bsu$. By the same argument, $\bsv_r$ is in $C{\setminus}C'$. Since $\bsv_g$ is a neighbor to $\bsk_0^{(g)}$ in $C_a$, then $\bsv_g$ is in $C_b$. Thus, if $\bsu$ is in $C$, then $\bsu$ is in $C_a$. However, this contradicts the assumption that $\bsj_0^{(g)}$ is an endpoint of $P^{(g)}$, since $\bsj_0^{(g)}$ has two distinct neighbors $\bsu$ and $\bsk_0^{(g)}$ in $C_a{\setminus}C'_c$. Thus, $\bsu$ is in $C'_d{\setminus}C$ and $\bsu$ has two possibly non-distinct neighbors $\bsv_g$ and $\bsv_r$ in $C_b{\setminus}C'$.
    
    Assume that $\bsv_g$ is distinct from $\bsv_r$. By Lemma~\ref{lemma:vertexcyclerelations}, $\bsu$ has either three or four neighbors in $C$. Assume that $\bsu$ has four neighbors in $C$, then it has two distinct neighbors $\bsw$ and $\bsw'$ in $C_a$. Then $\bsj_0^{(g)}$ is not in $\{\bsw,\bsw'\}$ since $\bsj_0^{(g)}$ is in $C'_c{\setminus}C_a$, and $\bsk_0^{(g)}$ is not in $\{\bsw,\bsw'\}$ since $\bsk_0^{(g)}$ does not neighbor $\bsu$. Thus, neither $\bsw$ nor $\bsw'$ neighbor $\bsj_0^{(g)}$, since the only neighbor to $\bsj_0^{(g)}$ in $C_a$ is $\bsk_0^{(g)}$. This induces the claw $\{\bsu,\bsj_0^{(g)},\bsw,\bsw'\}$, and so $\bsu$ does not have four neighbors in $C$. Now assume that $\bsu$ has three neighbors in $C$, then it has a neighbor $\bsw$ in $C_a$, which is a mutual neighbor to $\bsv_g$ and $\bsv_r$. By a similar argument, $\bsw$ is not in $\{\bsk_0^{(g)},\bsk_0^{(r)}\}$, since neither $\bsk_0^{(g)}$ nor $\bsk_0^{(r)}$ neighbor $\bsu$. Thus, neither $\bsj_0^{(g)}$ nor $\bsj_0^{(r)}$ neighbor $\bsw$, since their only neighbors in $C_a$ are $\bsk_0^{(g)}$ and $\bsk_0^{(r)}$, respectively. This induces the claw $\{\bsu, \bsj_0^{(g)},\bsj_0^{(r)},\bsw\}$, and so $\bsv_g=\bsv_r$. If $\bsv_g=\bsv$, then $\{\bsv,\bsk_0^{(g)},\bsk_0^{(r)},\bsu\}$ induces a claw. Therefore, no such vertex $\bsu$ exsits, completing the proof.
\end{proof}

\section{Proof of Lemma~\ref*{lemma:solutioncriterion}}
\label{section:SolutionCriterionProof}

\SolutionCriterion*

We first prove the following lemma, which shows that any vertex in exactly one of $C'$ or $C^{(m)}$ is not involved in the deformation $\overrightarrow{\mco}$ when $U$ is a subset of $C^{(m)}$.
\begin{lemma}
    \label{lemma:vertexnotinpartialdeformation}
    Let $\overrightarrow{\mco}$ be a fixed-pairing-type deformation of $C$ by vertices in $C'$ with labeling as in Def.~\ref{definition:fixedpairingtypedeformation}, and such that $U$ is a subset of $C^{(m)}$. For any vertex $\bsu$ in $C'{\oplus}C^{(m)}$, $\bsu$ is not in $\partial\mco$.
\end{lemma}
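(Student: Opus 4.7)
The plan is to prove the contrapositive: assume $\bsu \in \partial\mco$ and derive that $\bsu \in C'$ if and only if $\bsu \in C^{(m)}$, so that $\bsu \notin C' \oplus C^{(m)}$. I would first apply Corollary~\ref{corollary:vertextwopathcomponents} to classify how $\bsu$ can appear in $\overrightarrow{\mco}$: as only a $\bsj$-vertex, only a $\bsk$-vertex, or once as each in two distinct path components. In the pure $\bsj$ case, $\bsu$ is added to the hole and never removed, giving $\bsu \in C' \cap C^{(m)}$; in the pure $\bsk$ case with $\bsu \notin C'$ we get $\bsu \notin C' \cup C^{(m)}$; and in the both-roles case, I would rule out the possibility that the $\bsj$-step precedes the $\bsk$-step, since then $\bsu \notin C^{(g,t-1)}$ (by the precondition of deformation) while $\bsu \in C$ (from the $\bsk$-role) would force an earlier removal, impossible as $\bsu$ is in only two paths by the same corollary. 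Hence the $\bsk$-step must come first and $\bsu \in C^{(m)}$.

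The only remaining case is $\bsu$ appearing only as a $\bsk$-vertex with $\bsu \in C'$. Assuming without loss of generality that $\overrightarrow{\mco}$ is $(a,c)$-pairing and $P^{(r)} \subseteq G[C_a \oplus C'_c]$, this forces $\bsu = \bsk_{s-1}^{(r)} \in C_a \cap C'_d$. Let $\bsv$ be the other neighbor of $\bsj_{s-1}^{(r)}$ in $C'$; both of $\bsj_{s-1}^{(r)}$'s hole-neighbors in $C'$ lie in $C'_d$, so $\bsv \neq \bsu$, and since $C'_d$ is independent, $\bsv \notin \Gamma_{C'}(\bsu)$. Therefore $\bsv \in U^{(r,s)} \subseteq U \subseteq C^{(m)}$ by the hypothesis of the lemma.

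The main obstacle is extracting a contradiction from $\bsv \in C^{(m)}$. Because $\bsj_{s-1}^{(r)} \prec_{C^{(r,s-1)}} \bsk_{s-1}^{(r)}$ has exactly three neighbors in $C^{(r,s-1)}$, and $\bsv$ is adjacent to $\bsj_{s-1}^{(r)}$ in $G$ but not to $\bsu = \bsk_{s-1}^{(r)}$ (as both lie in the independent set $C'_d$), we must have $\bsv \notin C^{(r,s-1)}$. Together with $\bsv \in C^{(m)}$ this forces $\bsv$ to be introduced as a $\bsj$-vertex at some later step $(g,t) > (r,s)$ in a $(b,d)$-paired path, with $\bsv \prec_{C^{(g,t-1)}} \bsk_{t-1}^{(g)}$. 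Since $\bsj_{s-1}^{(r)}$ remains in the hole from step $(r,s)$ onward --- its $\bsk$-step, if any, must precede $(r,s)$ by the both-roles analysis above --- we have $\bsj_{s-1}^{(r)} \in C^{(g,t-1)}$, making it a neighbor of $\bsv$ there. As $\bsv$ has only three hole-neighbors in $C^{(g,t-1)}$, namely $\bsk_{t-1}^{(g)}$ and its two in-hole neighbors, $\bsj_{s-1}^{(r)}$ must either be a disallowed fourth neighbor or coincide with one of these three. The delicate step is excluding the latter: $\bsj_{s-1}^{(r)} \neq \bsk_{t-1}^{(g)}$ because their associated path steps differ, and the remaining possibility --- $\bsj_{s-1}^{(r)}$ being adjacent to $\bsk_{t-1}^{(g)}$ in $C^{(g,t-1)}$ --- is ruled out by invoking Lemma~\ref{lemma:vertexcyclerelations} on the hole $C^{(g,t-1)}$ together with the claw-free hypothesis on $G$, yielding the desired contradiction.
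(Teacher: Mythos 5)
Your contrapositive set-up and the first three cases (pure $\bsj$-appearance, pure $\bsk$-appearance with $\bsu\notin C'$, and the ordering argument in the both-roles case) are sound and match what the paper does implicitly. The problem lies in the one case where the hypothesis $U\subseteq C^{(m)}$ must actually be used --- $\bsu=\bsk_{s-1}^{(r)}$ appearing only as a clone but lying in $C'$ --- and there your argument has a genuine gap. You route the contradiction through the \emph{other} $C'$-neighbour $\bsv$ of $\bsj_{s-1}^{(r)}$, correctly place $\bsv$ in $U^{(r,s)}\subseteq C^{(m)}$, and then need the configuration ``$\bsj_{s-1}^{(r)}$ is a hole-neighbour of $\bsk_{t-1}^{(g)}$ in $C^{(g,t-1)}$'' to be impossible. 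It is not obviously impossible, and you give no actual argument: $\bsj_{s-1}^{(r)}$ sits in every hole from step $(r,s)$ onward and its hole-neighbours drift as later clones are swapped out, so nothing in Lemma~\ref{lemma:vertexcyclerelations} or claw-freeness alone forbids $\bsk_{t-1}^{(g)}$ from being one of them; indeed $\bsv\prec_{C^{(g,t-1)}}\bsk_{t-1}^{(g)}$ \emph{requires} $\bsv$ to neighbour both hole-neighbours of $\bsk_{t-1}^{(g)}$, and $\bsj_{s-1}^{(r)}$ being one of them is perfectly consistent with that. You have asserted a contradiction where none has been exhibited.

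The fix is much shorter and bypasses $\bsv$ entirely: the dependent vertex you want is $\bsu$ itself, one step further along the path. Since $\bsu=\bsk_{s-1}^{(r)}\in C'$ and the path endpoint $\bsk_{\ell_r-1}^{(r)}$ is never in $C'$ (proof of Corollary~\ref{corollary:pathuniqueendpoints}), we have $s<\ell_r$, so $\bsj_{s}^{(r)}$ and $\bsk_{s}^{(r)}$ exist. Then $\bsu$ is a $C'$-neighbour of $\bsj_{s}^{(r)}$ (they are consecutive on $P^{(r)}$), while $\bsu$ and $\bsk_{s}^{(r)}$ are distinct non-neighbours, both lying in the same independent coloring class of $C$. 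Hence $\bsu\in U^{(r,s+1)}\subseteq U\subseteq C^{(m)}$, which directly contradicts the fact that $\bsu$, removed at step $(r,s)$ and never re-introduced, is not in $C^{(m)}$. This one-line observation is exactly how the paper closes the case.
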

\begin{proof}
    Assume $\bsu$ is in $C^{(m)}{\setminus}C'$. Then $\bsu$ is distinct from $\bsj_{s-1}^{(r)}$ for all $r$ and $s$. Otherwise, $\bsu$ is in $C'$ by our assumption  on $\overrightarrow{\mco}$. Now assume that $\bsu=\bsk_{s-1}^{(r)}$ for some $r$ and $s$, then $\bsu$ is not in $C^{(m)}$ unless $\bsu=\bsj_{t-1}^{(g)}$ for some $t$ and $g>r$. Thus, if $\bsu$ is in $C^{(m)}{\setminus}C'$, then $\bsu$ is not in $\partial\mco$.
    
    Assume that $\bsu$ is in $C'{\setminus}C^{(m)}$. If $\bsu=\bsk_{s-1}^{(r)}$ for some $r$ and $s$, then $\bsu$ has an additional neighbor $\bsj_{s}^{(r)} \in P^{(r)}$ in the same coloring class of $C'$ as $\bsj_{s-1}^{(r)}$, since $\bsu$ is in $C'$. Thus, $\bsu$ is in $U^{(r,s+1)}$. This contradicts our assumption that $U$ is a subset of $C^{(m)}$. If $\bsu=\bsj_{s-1}^{(r)}$, then $\bsu$ is in $C^{(m)}$ unless $\bsu=\bsk_{t-1}^{(g)}$ for some $t$ and $g>r$. Therefore, if $\bsu$ is in $C^{(m)}{\oplus}C'$, then $\bsu$ is not in $\partial\mco$. This completes the proof.
\end{proof}

We now prove Lemma~\ref{lemma:solutioncriterion}.
\begin{proof}[Proof of Lemma~\ref*{lemma:solutioncriterion}]
    Assume that $\overrightarrow{\mco}$ is such that $U$ is a subset of $C^{(m)}$, and let
    \begin{equation}
        C'^{(r,s)} = C'{\oplus}\left(\bigoplus_{g=r+1}^{m}P^{(g)}\right){\oplus}(P^{(r)}{\setminus}P^{(r)}_{2s-1}),
    \end{equation}
    with $C'^{(r)}=C'^{(r-1,\ell_r)}=C'{\oplus}\bigoplus_{g=r}^mP^{(g)}$ and $C'^{(m,\ell_m)}=C'^{(m+1)}=C'$. We shall show that $\bsk_{s-1}^{(r)}\prec_{C'^{(r,s)}}\bsj_{s-1}^{(r)}$ for all $s\in\{1,\dots,\ell_r\}$ for all $r\in\{0,\dots,m\}$. By our assumption on $\overrightarrow{\mco}$, we have $\bsk_{s-1}^{(r)}\prec_{C^{(r,s)}}\bsj_{s-1}^{(r)}$. Thus, if there is a vertex $\bsv$ in exactly one of $\Gamma_{C'^{(r,s)}}(\bsk_{s-1}^{(r)})$ or $\Gamma_{C'^{(r,s)}}[\bsj_{s-1}^{(r)}]$, then either
    \begin{enumerate}
        \item[(i)] $\bsv=\bsj_{s-1}^{(r)}$ and $\bsj_{s-1}^{(r)}$ is not in $C'^{(r,s)}$, since $\bsj_{s-1}^{(r)}$ is a neighbor to $\bsk_{s-1}^{(r)}$ in $C^{(r,s)}$. Thus, $\bsv$ is in $C^{(r,s)}{\setminus}C'^{(r,s)}$.
        \item[(ii)] $\bsv$ is in $C'^{(r,s)}{\setminus}C^{(r,s)}$, since $\Gamma_{C^{(r,s)}}[\bsj_{s-1}^{(r)}]=\Gamma_{C^{(r,s)}}(\bsk_{s-1}^{(r)})$ as $\bsk_{s-1}^{(r)}\prec_{C^{(r,s)}}\bsj_{s-1}^{(r)}$.
    \end{enumerate}
    Note that
    \begin{align}
        C'^{(r,s)}{\oplus}C^{(r,s)} &= (C{\oplus}C'){\oplus}\left(\bigoplus_{g=0}^mP^{(g)}\right) \\
        &= C'{\oplus}C^{(m)}.
    \end{align}
    By applying Lemma~\ref{lemma:vertexnotinpartialdeformation}, if $\bsv$ is in $\Gamma_{C'^{(r,s)}}(\bsk_{s-1}^{(r)}){\oplus}\Gamma_{C'^{(r,s)}}[\bsj_{s-1}^{(r)}]$, then $\bsv$ is in $C'^{(r,s)}{\oplus}C^{(r,s)}=C'{\oplus}C^{(m)}$, and so $\bsv$ is not in $\partial\mco$. Since $\bsj_{s-1}^{(r)}$ is in $\partial\mco \cap C^{(r,s)}$, then $\bsj_{s-1}^{(r)}$ is in $C'^{(r,s)}$. This rules out case (i). Similarly, $\bsk_{s-1}^{(r)}$ is in $\partial\mco{\setminus}C^{(r,s)}$, and so $\bsk_{s-1}^{(r)}$ is not in $C'^{(r,s)}$.
    
    If $\bsv$ is in $\Gamma_{C'^{(r,s)}}(\bsk_{s-1}^{(r)}){\oplus}\Gamma_{C'^{(r,s)}}[\bsj_{s-1}^{(r)}]$, then we are in case (ii), and so $\bsv$ is in $C'^{(r,s)}{\setminus}C^{(r,s)}$. If $\bsv$ is in $C^{(m)}{\setminus}C'$, then $\bsv$ is in $C$, which contradicts that $\bsv$ is not in $C^{(r,s)}$, since $\bsv$ is not in $\partial\mco$. Thus, $\bsv$ is in $C'{\setminus}C^{(m)}$.
    
    If $\bsv$ is in $\Gamma_{C'^{(r,s)}}[\bsj_{s-1}^{(r)}]{\setminus}\Gamma_{C'^{(r,s)}}(\bsk_{s-1}^{(r)})$, then $\bsv$ is in $U^{(r,s)}$, but this contradicts our assumption that $U$ is a subset of $C^{(m)}$. Thus, $\Gamma_{C'^{(r,s)}}[\bsj_{s-1}^{(r)}]$ is a subset of $\Gamma_{C'^{(r,s)}}(\bsk_{s-1}^{(r)})$.
    
    Now, suppose that $\bsv$ is in $\Gamma_{C'^{(r,s)}}(\bsk_{s-1}^{(r)}){\setminus}\Gamma_{C'^{(r,s)}}[\bsj_{s-1}^{(r)}]$. If $s=\ell_r$, then $\bsk_{\ell_r-1}^{(r)}$ has a neighbor $\bsv \in C'$ that is not neighboring to $\bsj_{\ell_r-1}^{(r)}$, but this contradicts our assumption that $\bsk_{\ell_r-1}^{(r)}$ is an endpoint of $P^{(r)}$ by Corollary~\ref{corollary:vertexcycleneighboring} (as in the proof of Corollary~\ref{corollary:pathuniqueendpoints}). Thus, $\Gamma_{C'^{(r,s)}}(\bsk_{s-1}^{(r)})=\Gamma_{C'^{(r,s)}}[\bsj_{s-1}^{(r)}]$ if $s=\ell_r$.
    
    These results hold regardless of whether $C'^{(r,s)}$ is an even hole. We complete the proof by strong induction on the number of single-vertex deformations to obtain $C'^{(r,s)}$. As our base case, we take $r=m$ and $s=\ell_m$, and so $C'^{(r,s)}=C'$, and the result that $\bsk_{\ell_m-1}^{(m)}\prec_{C'}\bsj_{\ell_m-1}^{(m)}$ follows. By our inductive hypothesis, $C'^{(r,s)}\in\avg{C'}$ is an even hole. To prove our inductive statement, suppose that $\bsv$ is in $\Gamma_{C'^{(r,s)}}(\bsk_{s-1}^{(r)}){\setminus}\Gamma_{C'^{(r,s)}}[\bsj_{s-1}^{(r)}]$. We have shown that $\Gamma_{C'^{(r,s)}}(\bsk_{s-1}^{(r)})=\Gamma_{C'^{(r,s)}}[\bsj_{s-1}^{(r)}]$ if $s=\ell_r$, and so assume $s<\ell_r$. Thus, there are vertices $\bsj_s^{(r)}$ and $\bsk_{s}^{(r)}$ in $P^{(r)}$, and $\bsv$ is not in $\{\bsj_{s}^{(r)},\bsk_{s}^{(r)}\}$, since $\bsv$ is not in $\partial\mco$. By our inductive hypothesis, $C'^{(r,s)}=C'^{(r,s+1)}{\oplus}\{\bsj_s^{(r)},\bsk_{s}^{(r)}\}$, and $C^{(r,s+1)}\in\avg{C'}$ is an even hole. Since neither vertex in $\{\bsj_s^{(r)}, \bsk_{s}^{(r)}\}$ is neighboring to $\bsj_{s-1}^{(r)}$, then $\Gamma_{C'^{(r,s+1)}}(\bsk_{s-1}^{(r)})=\Gamma_{C'^{(r,s)}}[\bsj_{s-1}^{(r)}]\cup\{\bsv,\bsj_s^{(r)}\}$. However, this contradicts Lemma~\ref{lemma:vertexcyclerelations}, since $\bsk_{s-1}^{(r)}$ has at least five neighbors in the even hole $C'^{(r,s+1)}$. Therefore, $\Gamma_{C'^{(r,s)}}(\bsk_{s-1}^{(r)})=\Gamma_{C'^{(r,s)}}[\bsj_{s-1}^{(r)}]$, and so $\bsk_{s-1}^{(r)}\prec_{C'^{(r,s)}}\bsj_{s-1}^{(r)}$. This completes the proof.
\end{proof}

\section{Proof of Lemma~\ref*{lemma:searchprocessbasecase}}
\label{section:SearchProcessBaseCaseProof}

\SearchProcessBaseCase*

We shall prove statements~(i)-(iv) separately. 

\subsection{Proof of Statement~(i)}

\begin{proof}[Proof of Lemma~\ref*{lemma:searchprocessbasecase}~$(i)$]
    We prove this statement inductively on the number of path components in $\overrightarrow{\mco}_t$. In the initial step of the obstruction search process, $\overrightarrow{\mco}_t$ is the empty sequence and $\bsu_t=\bsj$. Suppose that $\bsj\prec_C\bsk$ is $(a,c)$ pairing without loss of generality, so that $\overrightarrow{\mco}_t$ is also $(a,c)$ pairing. Let $P$ be the component of $G[C_a{\oplus}C'_c]$ with endpoint $\bsj$ in $C'_c$. If $P\cap(W{\setminus}\{\bsj\})=\varnothing$, then $\bsj$ is untethered with respect to $C$, and we set $\overrightarrow{\mco}_t=(P^{(0)})$ with $P^{(0)}=P$. Clearly $\overrightarrow{\mco}_t$ is updated to a fixed-pairing-type deformation.
    
    For our inductive step, let $\overrightarrow{\mco}_{\bsj,g}=(P^{(j)})_{j=0}^g$ correspond to $\overrightarrow{\mco}_t$ at step $g\geq0$ of our obstruction search process, and suppose that $\overrightarrow{\mco}_t$ is a non-empty deformation with $(a,c)$ pairing type. If $\bsu_t$ is defined, then $\bsu_t$ is in $U^{(r,s)}(\mco_t){\setminus}C^{(g)}$, and $h_{\bsu_t}$ and $h_{P^{(r)}}$ anticommute for $r \leq g$. Without loss of generality, suppose $P^{(r)}$ is a component of $G[C_a{\oplus}C'_c]$. By the proof of Corollary~\ref{corollary:vertexonepathcomponent}, $\bsu_t$ is in $C'_d{\setminus}C$, and so there is a unique component $P$ of $G[C_b{\oplus}C'_d]$ containing $\bsu_t$.
    
    We have that $\bsu_t$ is distinct from $\bsj$, since $\bsj$ is in $C^{(g)}$ as $\bsj=\bsj_0^{(0)}$ and $\bsj$ is not contained in any other path in $\overrightarrow{\mco}_t$ other than $P^{(0)}$ by Corollary~\ref{corollary:vertexonepathcomponent}. Thus, if there is no vertex in $P\cap(W{\setminus}\{\bsj\})$, then $\bsu_t$ is not in $W$. By Lemma~\ref{lemma:pairingrelations}~$(a)$, $\Delta_{C^{(r)}}(\bsu_t)=3$ and $\bsu_t\prec_{C^{(r)}}\bsv$ with $\bsv$ in $C_b$, i.e., $\{\bsu,\bsv\}$ has the same pairing type as $\overrightarrow{\mco}_t$. We shall show that $\bsu_t\prec_{C^{(g)}}\bsv$. If $h_{\bsu_t}$ and $h_{C^{(g)}}$ commute, then $h_{\bsu_t}$ anticommutes with a path operator $h_{P^{(q)}}$ for $r < q \leq g$, and $P^{(q)} \neq P^{(r)}$ by Corollary~\ref{corollary:pathcomponentsdistinct}. However, by Corollary~\ref{corollary:vertexanticommutingdistinctpathsendpoint} with $\overrightarrow{\mco}=\overrightarrow{\mco}'$, $\bsu_t=\bsj_0^{(q)}$, however $\bsu_t$ is contained in at most one path component of $\overrightarrow{\mco}_t$ by  Corollary~\ref{corollary:vertexonepathcomponent}, which contradicts the assumption that $\bsu_t$ is not in $C^{(g)}$. Thus, $h_{\bsu_t}$ and $h_{C^{(r)}}$ anticommute.

    Similarly, if $\bsv$ is not in $C^{(g)}$, then $\bsv=\bsk_{t-1}^{(q)}$ for $r < q \leq g$, and $P^{(q)}$ is a component of $G[C_b{\oplus}C'_d]$, since $\bsv$ is in $C_b{\setminus}C'_d$ by our assumption on $\overrightarrow{\mco}_t$. However, $\bsu_t$ is in $\{\bsj_{t-1}^{(q)},\bsj_t^{(q)}\}$, since $\bsu_t$ is neighboring to  $\bsv$ in $C'_d{\setminus}C_b$, and so $\bsu_t$ is also in $P^{(q)}$. This contradicts the assumption that $\bsu_t$ is not in $C^{(g)}$, since $\bsu_t$ is contained in at most one path component of $\overrightarrow{\mco}_t$ by Corollary~\ref{corollary:vertexonepathcomponent}. Thus, $\bsv$ is in $C^{(g)}$.
    
    If $\bsv$ is not the clone to $\bsu_t$ in $C^{(g)}$, then $\Gamma_{C^{(g)}}(\bsu_t)=\bsv\md\bss\md\bst$. Then $\bst$ is in $C^{(g)}{\setminus}C^{(r)}$, and so $\bst=\bsj_{p-1}^{(q)}$ for some path component $P^{(q)}\in\overrightarrow{\mco}_t$ with $r < q \leq g$. Further, $\bst$ is in $C'_c{\setminus}C_a$, since $\bst$ in not in $C'_d$ as $\bst$ is neighboring to $\bsu_t$ in $C'_d$. Thus, $\bss$ is neighboring to $\bsv \in C_b$, $\bsu_t \in C'_d$, and $\bsv \in C'_c$, and so $\bss$ is in $C_a{\setminus}C'$. However, then $\bss$ is in $\{\bsk^{(q)}_{p-1},\bsk^{(q)}_p\}$ as it is neighboring to $\bst$ in $C_a{\setminus}C'_c$, and so $\bss$ is in $P^{(q)}$. This contradicts the assumption that $\bss$ is in $C^{(g)}$, since it is contained in at most one path component of $\overrightarrow{\mco}_t$ by Lemma~\ref{lemma:vertexonecoloringclass}. Thus, $\bsu_t\prec_{C^{(g)}}\bsv$.
    
    Finally, if $\bsu_t$ has an additional neighbor $\bst$ in $C_b{\setminus}C^{(r)}$, i.e., $\bsu_t$ is not the endpoint of a path component of $G[C_b{\oplus}C'_d]$), then $\bst$ is in $\bsk_{p-1}^{(q)}$ for $q<r$, however then $\bsu_t$ is also in $\{\bsj_{p-1}^{(q)},\bsj_p^{(q)}\}$. This contradicts the assumption that $\bsu_t$ is not in $C^{(g)}$ by Corollary~\ref{corollary:vertexonepathcomponent}. Thus, $\bsu_t\prec_{C^{(g)}}\bsv$, and $\bsu_t$ is the endpoint of a path component $P$ of $G[C_b{\oplus}C'_d]$ if $\bsu_t$ is not in $W{\setminus}\{\bsj\}$.
    
    If there is a vertex $\bsv \in P \cap C_b$ that is not in $C^{(g)}$, then $\bsv=\bsk_{p-1}^{(q)}$ for $q \leq g$ and $P^{(q)}$ is a component of $G[C_b{\oplus}C'_d]$. However, then $\bsu_t=\bsj_{0}^{(q)}$ and this contradicts the assumption that $\bsu_t$ is not in $C^{(g)}$ by Corollary~\ref{corollary:vertexonepathcomponent}. Thus, $P \cap C_b$ is a subset of $C^{(g)}$. Let $P=\bsj_0^{(g+1)}\md\bsk_0^{(g+1)}\md\dots$ with $\bsu_t=\bsj_0^{(g+1)}$ in $C'_d{\setminus}C_b$. Suppose that $t>1$ is the smallest index such that $\bsk_{t-1}^{(g+1)}$ is not the clone to $\bsj_{t-1}^{(g+1)}$ in $C^{(g+1,t-1)}$ (we have proven this for $t=1$). Then $\bsj_{t-1}^{(g+1)}$ is not in $C^{(g)}$, since it is not neighboring to $\bsj_{t-2}^{(g+1)}$, however it is neighboring to its clone $\bsk_{t-2}^{(g+1)}$ in $C^{(g-1,t-2)}$.
    
    If $\bsj_{t-1}^{(g+1)}$ is an endpoint of $P$, i.e., $\bsk_{t-1}^{(g+1)}$ does not exist, then $\bsk^{(g+1)}_{t-2}\md\bss\md\bst $ is a subset of $\Gamma_{C^{(g)}}(\bsj_{t-1}^{(g+1)})$. Since $\bss$ and $\bst$ are neighboring, they both neighbor $\bsj_{t-1}^{(g+1)} \in C'_d$, and neither is contained in $C_b$, then precisely one of them is contained in $C_a$, and the other in $C'_c$. Thus, precisely one of these vertices $\bsv$ is contained in $C'_c{\setminus}C$, since it is neighboring to a vertex in $C_a$ and is not in $C_b$. Similarly, the other vertex $\bsv'$ is contained in $C_a{\setminus}C'$, since it is neighboring to $\bsj_{t-1}^{(g+1)} \in C'_d$ and $\bsv \in C'_c$. Then $\bsv=\bsj_{p-1}^{(q)}$ for a component $P^{(q)}$ of $G[C_a{\oplus}C'_c]$ in $\overrightarrow{\mco}_t$ and $q \leq g$. Thus, $\bsv'$ is also in $P^{(q)}$, and $\bsv'$ is in $\{\bsk_{p-2}^{(q)},\bsk_{p-1}^{(q)}\}$. However, $\bsv'$ is contained in at most one path component of $\overrightarrow{\mco}_t$ by Lemma~\ref{lemma:vertexonecoloringclass}, which contradicts the requirement that $\bsv'$ is in $C^{(g)}$. Thus, $\bsj_{t-1}^{(g+1)}$ has two neighbors $\{\bsk_{t-2}^{(g+1)},\bsk_{t-1}^{(g+1)}\}$ in $C_b$.
    
    If $\bsj_{t-1}^{(g+1)}$ has four neighbors in $C^{(g)}$, but $\bsj_{t-1}^{(g+1)}\nprec_{C^{(g+1,t-1)}} \bsk_{t-1}^{(g+1)}$, then $\Gamma_{C^{(g)}}(\bsj_{t-1}^{(g+1)})=\bsk_{t-2}^{(g+1)}\md\bss\md\bst\md\bsk_{t-1}^{(g+1)}$. By a similar argument, precisely one of $\bss$ and $\bst$ are contained in $C_a$, and the other is contained in $C'_c$, since they are neighboring, both have a neighbor in $C_b$, and neighbor $\bsj_{t-1}^{(g+1)} \in C'_d$. Without loss of generality, suppose that $\bss$ is in $C_a$ and $\bst$ is in $C'_c$. Thus, $\bss$ is in $C_a{\setminus}C'$, since it is neighboring $\bsj_{t-1}^{(g+1)} \in C'_d$ and $\bst \in C'_c$, and $\bst \in C'_c{\setminus}C$ as it is neighboring $\bss \in C_a$ and $\bsk_{t-1}^{(g+1)} \in C_b$. Then $\bst=\bsj_{p-1}^{(q)}$ for a component $P^{(q)}\in\overrightarrow{\mco}_t$ of $G[C_a{\oplus}C'_c]$ with $q \leq g$. However, then $\bss$ is in $\{\bsk_{p-2}^{(q)},\bsk_{p-1}^{(q)}\}$. This contradicts the assumption that $\bss$ is in $C^{(g)}$ by Lemma~\ref{lemma:vertexonecoloringclass}. Thus, if $\bsj_{t-1}^{(g+1)}\nprec_{C^{(g+1,t-1)}}\bsk_{t-1}^{(g+1)}$, then it has three neighbors in $C^{(g)}$, two of which are in $C_b$, i.e., $\bsj_0^{(g+1)}$ is tethered to $\bsj_{t-1}^{(g+1)}$ relative to $C^{(g)}$.
    
    Suppose then that $\bsj_0^{(g+1)}$ is tethered to $\bsj_{t-1}^{(g+1)}$ relative to $C^{(g)}$, and assume that $P\cap(W{\setminus}\{\bsj\})=\varnothing$. For simplicity, let $\bsu=\bsj_{t-1}^{(g+1)}$, then $\bsu$ is distinct from $\bsj$ by the same argument that $\bsu_t$ is distinct from $\bsj$, since because $\bsu$ is not in $C^{(g)}$. Thus, $\bsu$ is not in $W$, and so there is a path operator $P^{(q)}$ with $q \leq g$ such that $h_{\bsu}$ and $h_{P^{(q)}}$ anticommute. If $\bsu=\bsj_0^{(q)}$, then $P^{(q)}$ is a component of $G[C_b{\oplus}C'_d]$, since $\bsu$ is in $C'_d{\setminus}C_b$ by construction of $\overrightarrow{\mco}_t$. However, $\bsu_t$ is also in $P^{(q)}$, which contradicts the requirement that $\bsu_t$ is the endpoint of this path in $C'_d$. Thus, by Lemma~\ref{lemma:pairingrelations}~$(a)$, $\bsu$ is in $U^{(q,p)}$ for some $p$ with $\bsu\prec_{C^{(q)}}\bsv$ for some vertex $\bsv$ in $C_b$. Since $\bsu$ has two neighbors in $C_b$, it has an additional neighbor $\bsv'=\bsk_{u-1}^{(v)}$ in $C_b{\setminus}C'_d$ for a component$P^{(v)}\in\overrightarrow{\mco}_t$ of $G[C_b{\oplus}C'_d]$ with $v<q$. However, then $P^{(v)}=P$, which contradicts our requirement that $\bsu_t$ is not in $C^{(g)}$. Thus, $\bsj_{t-1}^{(g+1)}$ has four neighbors in $C^{(g)}$, and two of which are in $C_b$.
    
    Therefore, $\bsj_{t-1}^{(g+1)}\prec_{C^{(g+1,t-1)}}\bsk_{t-1}^{(g+1)}$ for all $t\in\{1,\dots,\ell_{g+1}\}$. The search process thereby updates $\overrightarrow{\mco}_t$ to $(P^{(j)})_{j=0}^{g+1}$, which is a valid fixed-pairing-type deformation. This completes the proof of statement~(i).
\end{proof}

\subsection{Proof of Statement~(ii)}

\begin{proof}[Proof of Lemma~\ref*{lemma:searchprocessbasecase}~$(ii)$]
    We consider the case where there is a vertex in $P\cap(W{\setminus}\{\bsj\})$ at some step of our obstruction search process. At the initial step, if $P\cap(W{\setminus}\{\bsj\})\neq\varnothing$, then $\bsj$ is tethered to another vertex $\bsj'$ relative to $C$. The search process then returns the empty sequence $\overrightarrow{\mco}_{\bsj}$ with $(\bsj\rightarrow\bsj')$ in $D$. Since $\bsj'$ is also tethered to $\bsj$ with respect to $C$, then $(\bsj'\rightarrow\bsj)$ is in $D$, and the deformations $\overrightarrow{\mco}_{\bsj}$ and $\overrightarrow{\mco}_{\bsj'}$ have opposite pairing type.
    
    Let $\overrightarrow{\mco}_{\bsj,g}=(P^{(j)})_{j=0}^g$ correspond to $\overrightarrow{\mco}_t$ at step $g\geq0$ of the obstruction search process, and assume that $\overrightarrow{\mco}_t$ is $(a,c)$ pairing. If $\bsu_t$ is defined, then $\bsu_t$ is in $U^{(r,s)}(\mco_t){\setminus}C^{(g)}$. Assume without loss of generality that $P^{(r)}\in\overrightarrow{\mco}_t$ is a component of $G[C_a{\oplus}C'_c]$. Thus, $\bsu_t$ is in $C'_d{\setminus}C_b$ by the proof of Corollary~\ref{corollary:vertexonepathcomponent}. If $\bsu_t$ is in $W{\setminus}\{\bsj\}$, then $\bsu_t$ is in $W$. Let $P$ be the unique component of $G[C_b{\oplus}C'_d]$ containing $\bsu_t$, then there is a unique closest vertex in $W{\setminus}\{\bsj\}$ to $\bsu_t$ in $P$ as required, i.e, $\bsu_t$ itself. By Lemma~\ref{lemma:pairingrelations}, $\bsu_t\prec_{C}\bsv$ with opposite pairing type to $\overrightarrow{\mco}_{\bsj}$. Thus, the search process terminates with $\overrightarrow{\mco}_f=\overrightarrow{\mco}_t$, $U_f=\{\bsu_t\}$, and $V_f=\{\bsu_t\}$. Let $\bsu_t=\bsj'$, then $(\bsj\rightarrow\bsj')$ is in $D$, and $\overrightarrow{\mco}_{\bsj}$ has opposite pairing type to $\overrightarrow{\mco}_{\bsj'}$.
    
    If $\bsu_t$ is not in $W{\setminus}\{\bsj\}$, then our proof follows that given previously. We have that $\bsu_t\prec_{C^{(g)}}\bsv$ with $\bsv$ in $C_b$, and $\bsu_t$ is the endpoint of a path component $P$ of $G[C_b{\oplus}C'_d]$. Thus, if there is a vertex in $P\cap(W{\setminus}\{\bsj\})$, then there is a unique closest vertex $\bsw$ to $\bsu_t$ in $P$ as required. Let $P=\bsj_0^{(g+1)}\md\bsk_0^{(g+1)}\md\dots$, and assume that $\bsj_{t-1}^{(g+1)}$ is the closest vertex to $\bsu_t$ in $P$ such that $\bsj_{t-1}^{(g+1)}\nprec_{C^{(g+1,t-1)}}\bsk_{t-1}^{(g+1)}$. As in the proof of statement~(i), $\bsj_{t-1}^{(g+1)}$ has three neighbors in $C^{(g)}$, and two of which are in $C_b$. There is a contradiction if $\bsj_{t-1}^{(g+1)}$ is not in $W{\setminus}\{\bsj\}$, and so $\bsj_{t-1}^{(g+1)}$ is in $W$. If $\bsj_{t-1}^{(g+1)}=\bsw$, then $\bsw\prec_C\bsv$ has opposite pairing type to $\overrightarrow{\mco}_t$, since $\bsw \in C'_d{\setminus}C$ has two neighbors in $C_b$. If $\bsw$ is closer to $\bsu_t$ than $\bsj_{t-1}^{(g+1)}$, then $\bsw$ has two neighbors in $C_b$, and so $\bsw\prec_C\bsv$ has opposite pairing type to $\overrightarrow{\mco}_t$. In either case, the search process terminates with $\overrightarrow{\mco}_f=\overrightarrow{\mco}_t$, $U_f=\{\bsu_t\}$, and $V_f=\{\bsw\}$. Let $\bsw=\bsj'$, then $(\bsj\rightarrow\bsj')$ is in $D$, and $\overrightarrow{\mco}_{\bsj}$ has opposite pairing type to $\overrightarrow{\mco}_{\bsj'}$. This completes the proof of statement~(ii).
\end{proof}

\subsection{Proof of Statement~(iii)}

\begin{proof}[Proof of Lemma~\ref*{lemma:searchprocessbasecase}~$(iii)$]
    Let $\overrightarrow{\mco}_{\bsj}=(P^{(j)}_{\bsj})_{j=0}^m$, $\overrightarrow{\mco}_{\bsj'}=P^{(j)}_{\bsj'})_{j=0}^{m'}$, and let $r$ be the smallest index such that $P_{\bsj}^{(r)}$ is in $\mco_{\bsj}\cap\mco_{\bsj'}$. Further, suppose without loss of generality that $\overrightarrow{\mco}_{\bsj}$ and $\overrightarrow{\mco}_{\bsj'}$ are both $(a,c)$ pairing deformations with $\bsj$ distinct from $\bsj'$.
    
    If $\bsj_0^{(r)}=\bsj$, then $\bsj_0^{(r)}$ is distinct from $\bsj'$. Thus, there is a path $P_{\bsj'}^{(g)}$ such that $h_{\bsj_0^{(r)}}$ and $h_{P_{\bsj'}^{(g)}}$ anticommute with $\bsj_0^{(r)}$ in $U^{(g,t)}(\mco_{\bsj',t})$. However, this cannot be the case if $\bsj_{0}^{(r)}$ is in $W$, since then $\bsj_0^{(r)}\prec_C\bsk_0^{(r)}$ has opposite pairing type to $\overrightarrow{\mco}_{\bsj'}$, and we have assumed that $\overrightarrow{\mco}_{\bsj}$ and $\overrightarrow{\mco}_{\bsj'}$ are both $(a,c)$ pairing. A similar argument holds if $\bsj_0^{(r)}=\bsj'$. Thus, $\bsj_{0}^{(r)}$ is not in $\{\bsj,\bsj'\}$. Then $\bsj_0^{(r)}$ is in $U^{(g,t)}(\mco_{\bsj,t}) \cap U^{(p,q)}(\mco_{\bsj',t})$, and $h_{\bsj_0^{(r)}}$ anticommutes with $h_{P_{\bsj}^{(g)}}$ and $h_{P_{\bsj'}^{(p)}}$ for $g<r$. This is a contradiction to Corollary~\ref{corollary:vertexanticommutingdistinctpaths} unless $P_{\bsj}^{(g)}=P_{\bsj'}^{(p)}$, however this contradicts the assumption that $r$ is the smallest index. Therefore, there is no path component in the intersection $\overrightarrow{\mco}_{\bsj}\cap\overrightarrow{\mco}_{\bsj'}$, and $\overrightarrow{\mco}_{\bsj}$ and $\overrightarrow{\mco}_{\bsj'}$ are disjoint as collections of induced paths. This completes the proof of statement~(iii).
\end{proof}

\subsection{Proof of Statement~(iv)}

\begin{proof}[Proof of Lemma~\ref*{lemma:searchprocessbasecase}~$(iv)$]
    In the setting of statement~(iii), assume that $\bsj''$ obstructs $\overrightarrow{\mco}_{\bsj}$ and $\overrightarrow{\mco}_{\bsj'}$. Let $C_{\bsj}^{(r,s)}$ correspond to the deformation until $\bsj_{s-1}^{(r)}\prec_{C^{(r,s-1)}}\bsk_{s-1}^{(r)}$ of $C$ by $\overrightarrow{\mco}_{\bsj}$, and similarly for $C_{\bsj'}^{(r,s)}$.
    
    If $\bsu$ is in $U_{\bsj} \cup U_{\bsj'}$, then $h_{\bsu}$ anticommutes with $h_{P_{\bsj}^{(r)}}$ and $h_{P_{\bsj'}^{(g)}}$, and $\bsu$ is in $U_{\bsj}^{(r,s)} \cap U_{\bsj'}^{(g,t)}$. Since $P_{\bsj}^{(r)}$ and $P_{\bsj'}^{(g)}$ are distinct by statement~(iii), this contradicts Corollary~\ref{corollary:vertexanticommutingdistinctpaths}. Thus, $\bsj''$ is not in $ U_{\bsj} \cap U_{\bsj'}$ as $U_{\bsj} \cap U_{\bsj'} = \varnothing$.
    
    If $\bsj''$ is not in $U_{\bsj} \cup U_{\bsj'}$, then let $U_{\bsj}=\{\bsu_{\bsj}\}$ and $U_{\bsj'}=\{\bsu_{\bsj'}\}$. Since $\bsu_{\bsj}$ and $\bsu_{\bsj'}$ are distinct, and neither vertex is in $W$, then both vertices are the endpoints of the same path component $P$. Since $\bsj''$ is the closest vertex in $P \cap W$ to both $\bsu_{\bsj}$ and $\bsu_{\bsj'}$, then $\bsj''$ is the only vertex in $P \cap W$. Without loss of generality, let $P=\bsj^{(m+1)}_0\md\bsk^{(m+1)}_0\md\dots\md\bsk^{(m+1)}_{\ell_{m+1}-1}\md\bsj^{(m+1)}_{\ell_{m+1}}$ be a component of $G[C_b{\oplus}C'_d]$, where $\{\bsj_{s-1}^{(m+1)}\}_{s=1}^{\ell_{m+1}+1}$ is a subset of $C'_d{\setminus}C_b$, $\{\bsk_{s-1}^{(m+1)}\}_{s=1}^{\ell_{m+1}}$ is a subset of $C_b{\setminus}C'_d$, $\bsj^{(m+1)}_0=\bsu_{\bsj}$, and $\bsj^{(m+1)}_{\ell_{m+1}}=\bsu_{\bsj'}$. As in the proof of statement~(ii), $P \cap C_b$  is a subset of $C^{(m)}_{\bsj} \cap C^{(m')}_{\bsj'}$. Let $C_{\bsj}^{(m+1,s)}=C_{\bsj}^{(m)}{\oplus}P_{2s-1}$. $P$ cannot have both endpoints in $C'_d$ unless there is a vertex $\bsj_{s-1}^{(m+1)}\nprec_{C_{\bsj}^{(m+1,s-1)}}\bsk_{s-1}^{(m+1)}$. If $\bsj_{s-1}^{(m+1)}$ is the closest such vertex to $\bsj_{0}^{(m+1)}$, then, as in the proof of statements~(i) and (ii), this vertex is $\bsj''$ and has three neighbors in $C_{\bsj}^{(m)}$, and two of which are in $C_b$. Similarly, $\bsu_{\bsj'}$ is tethered to $\bsj''$ relative to $C_{\bsj'}^{(m')}$, and two of the three neighbors to $\bsj''$ in $C_{\bsj'}^{(m')}$ are in $C_b$.
    
    If $\bsj''$ has the same clone in both $C_{\bsj}^{(m)}$ and $C_{\bsj'}^{(m')}$, then let $\bsj''\prec_{C_{\bsj}^{(m)}}\bsv$ denote this clone, and suppose that $\bsj''=\bsj_{p-1}^{(m+1)}$ with $1<p\leq\ell_{m+1}$. Then $\{\bsv,\bsj_{p-2}^{(m+1)},\bsj_{p-1}^{(m+1)},\bsj_{p}^{(m+1)}\}$ induces a claw. Let $\bsj''\prec_{C_{\bsj}^{(m)}}\bsv$ and $\bsj''\prec_{C_{\bsj'}^{(m')}}\bsv'$ for $\bsv$ distinct from $\bsv'$. Let $C$ be such that $\Gamma_C(\bsj'')=\bsb_0\md\bsa_0\md\bsb_1$, i.e., $\bsb_0=\bsk_{p-2}^{(m+1)}$, $\bsb_1=\bsk_{p-1}^{(m+1)}$, and $\bsj''\prec_C\bsa_0$. Then $\bsj''$ is contained in at most one path component in $\mco_{\bsj}\cup\mco_{\bsj'}$ by Corollary~\ref{corollary:vertexonepathcomponent}. Further, $\bsj''=\bsj_{s-1}^{(r)}$ is in $P_{\bsj}^{(r)}$ or $\bsj''=\bsj_{s-1}^{(r)}$ is in $P_{\bsj'}^{(r)}$ for some $r$ and $s$, as $\bsj''$ is in $C'_d{\setminus}C$. Thus, $\bsj''$ is not contained in any path component of $\mco_{\bsj}\cup\mco_{\bsj'}$, since this contradicts the requirement that $\bsj''$ is not in $C_{\bsj}^{(m)} \cup C_{\bsj'}^{(m')}$. If $\bsj''$ anticommutes with the corresponding operator to any path component of $\mco_{\bsj}\cup\mco_{\bsj'}$, then $\bsj''$ is in $U^{(r,s)}_{\bsj}$ or $\bsj''$ is in $U^{(r,s)}_{\bsj'}$ by Lemma~\ref{lemma:pairingrelations}~$(a)$, which contradicts the requirement that $\bsj''$ has three neighbors in both $C_{\bsj}^{(m)}$ and $C_{\bsj'}^{(m')}$ by Lemma~\ref{lemma:pairingrelations} and Corollary~\ref{corollary:vertexanticommutingdistinctpaths}. Therefore $h_{\bsj''}$ commutes with the operator corresponding to any path component in $\mco_{\bsj}\cup\mco_{\bsj'}$ and $\bsj''$ is not contained in any such path. Thus, $\bsj''$ has three neighbors in every even hole $C_{\bsj}^{(r,s)}$ and $C_{\bsj'}^{(r,s)}$ for all $r$ and $s$. Similarly, $\bsb_0$ and $\bsb_1$ are not contained in any path component in $\mco_{\bsj}\cup\mco_{\bsj'}$ as $\bsj''$ is in this same component. If $r$ is the smallest index such that this vertex belongs to $P^{(r)}$, then the vertex is $\bsk_{s-1}^{(r)}$ for some $s$ since it is in $C_b$. Thus, two of the three neighbors to $\bsj''$ in $C_{\bsj}^{(r,s)}$ and $C_{\bsj'}^{(r,s)}$ are $\bsb_0$ and $\bsb_1$ for all $r$ and $s$. In particular, $\Gamma_{C_{\bsj}^{(m)}}=\bsb_0\md\bsv\md\bsb_1$ and $\Gamma_{C_{\bsj'}^{(m')}}=\bsb_0\md\bsv'\md\bsb_1$.
    
    At least one of $\bsv$ and $\bsv'$ is distinct from $\bsv''$, and so $\bsv''$ is contained in a path in either $\overrightarrow{\mco}_{\bsj}$ or $\overrightarrow{\mco}_{\bsj'}$. Without loss of generality, assume that $\bsv''=\bsk_{s-1}^{(r)}$ for a component $P_{\bsj}^{(r)}\in\overrightarrow{\mco}_{\bsj}$ of $G[C_a{\oplus}C'_c]$. If $\bsj''$ is in $U^{(r,s')}$ for any $s'$, then this contradicts Lemma~\ref{lemma:pairingrelations}, since $h_{\bsj''}$ and $h_{P_{\bsj}^{(r)}}$ commute. If $\bsj''$ is neighboring to $\bsk_{s'-1}^{(r)}$ for any $s' \neq s$, then this contradicts the requirement that $\bsj''$ has only one neighbor in $C_a$, however, if $\bsj''$ is neighboring to $\bsj_{s'-1}^{(r)}$ for any $s' \neq s$, then $\bsj''$ is in $U^{(r,s')}$. Thus, $\Gamma_{P_{\bsj}^{(r)}}(\bsj'')=\bsk_{s-1}^{(r)}\md\bsj_{s-1}^{(r)}$ for $h_{\bsj''}$ and $h_{P_{\bsj}^{(r)}}$ to commute. Then $\Gamma_{C^{(r)}}(\bsj'')=\bsb_0\md\bsj_{s-1}^{(r)}\md\bsb_1$, and so $\bsj_{s-1}^{(r)}$ is in $C'_c{\setminus}C$, since it has a neighbor in both $C_a$ and $C_b$. Thus, $\bsj_{s-1}^{(r)}$ is not contained in any other path component in $\overrightarrow{\mco}_{\bsj}$, and $\bsj''\prec_{C_{\bsj}^{(m)}}\bsj_{s-1}^{(r)}$, i.e., $\bsj_{s-1}^{(r)}=\bsv$. Similarly, $\bsv''$ is not contained in any path component of $\overrightarrow{\mco}_{\bsj'}$, since this component would be $P_{\bsj}^{(r)}$, which contradicts statement~(iii). Then $\bsj''\prec_{C_{\bsj}^{(m)}}\bsv$ and $\bsj''\prec_{C_{\bsj'}^{(m')}}\bsv''$. Further, $\bsj_{p-2}^{(m+1)}$ is neighboring to $\bsv$ and not to $\bsv''$, and $\bsj_{p}^{(m+1)}$ is neighboring to $\bsv''$ and not to $\bsv$ by construction. Since $\bsu_{\bsj}$ is tethered to $\bsj''$ relative to $C_{\bsj}^{(m)}$, then $\bsj_{p-2}^{(m+1)}$ is neighboring to $\bsv$, and since $\bsu_{\bsj'}$ is tethered to $\bsj''$ relative to $C_{\bsj'}^{(m')}$, then $\bsj_{p}^{(m+1)}$ is neighboring to $\bsv''$. If either of these vertices neighbors both $\bsv$ and $\bsv''$, then this induces a claw with $\{\bsv,\bsj_{p-2}^{(m+1)},\bsj_{p-1}^{(m+1)},\bsj_{p}^{(m+1)}\}$. Further $\bsv$ is the mutual neighbor to $\bsj_{p-2}^{(m+1)}$ and $\bsj''$ in $C'_c$.
    
    Now, note that $\bsj''$ has an additional neighbor $\bsu$ in $C'_c$. $\bsu$ is not in $C$, since $\Gamma_C(\bsj'')=\bsb_0\md\bsv''\md\bsb_1$, and these vertices all neighbor $\bsv \in C'_c$. Further, $\bsu$ does not neighbor $\bsv''$, since then $\bsu$ is in $P_{\bsj}^{(r)}$, which contradicts the requirement that $\Gamma_{P_{\bsj}^{(r)}}(\bsj'')=\bsv''\md\bsv$. By Lemma~\ref{lemma:evenholecloneneighbor}, $\bsu$ neighbors at least one vertex in $\{\bsb_0,\bsb_1\}$. If $\bsu$ neighbors $\bsb_0$, then it also neighbors $\bsj_{p-2}^{(m+1)}$, otherwise $\{\bsb_0,\bsj_{p-1}^{(m+2)},\bsu,\bsv''\}$ induces a claw. However, $\bsj_{p-2}^{(m+1)}\md\bsv\md\bsj''\md\bsu\md\bsj_{p-2}^{(m+1)}$ induces a hole of length four, which contradicts the requirement that $\abs{C'_d}\geq3$, since $\{\bsu_{\bsj},\bsu_{\bsj'},\bsj''\}$ is a subset of $C'_d$. Thus, $\bsu$ is neighboring to $\bsb_1$ and not to $\bsb_0$, and so $\Gamma_{C'}(\bsb_1)=\bsv\md\bsj''\md\bsu\md\bsj_{p}^{(m+1)}$ by Lemma~\ref{lemma:vertexcyclerelations} and since $\Gamma_{C'_d}(\bsv)=\{\bsj_{p-2}^{(m+1)},\bsj''\}$ and $\bsj_{p-2}^{(m+1)}$ is not neighboring to $\bsb_1$ by construction.
    
    The vertex $\bsj_{p}^{(m+1)}$ has an additional neighbor $\bsu'$ in $C'_c$, and $\bsu'$ is not in $\{\bsv'',\bsb_1\}$, since both of these vertices have neighbors in $C'_d$. Then $\bsu'$ is neighboring to $\bsv''$, otherwise $\{\bsj_p^{(m+1)},\bsu,\bsu',\bsv''\}$ induces a claw. Further, $\Gamma_{C'}(\bsv'')=\{\bsv\md\bsj'',\bsj_{p}^{(m+1)}\md\bsu'\}$ and $\bsu'$ is in $P_{\bsj}^{(r)}$. Additionally, $\bsu'$ is neighboring to $\bsb_0$ by Corollary~\ref{corollary:vertexcycleneighboring}, and $\bsb_1$ has four neighbors in $C'$. Since $\Gamma_{C'}(\bsb_0)=\{\bsj_{p-2}^{(m+2)}\md\bsv\md\bsj'',\bsu'\}$, then $\bsu'$ is neighboring to $\bsj_{p-2}^{(m+1)}$ by Lemma~\ref{lemma:vertexcyclerelations} and since $\bsj''$ is neighboring to $\bsv$ and $\bsu$ in $C'_c$. Then $k'=3$, where $\abs{C'}=2k'$, and $\Gamma_{C'}(\bsb_0)=\bsu'\md\bsj_{p-2}^{(m+2)}\md\bsv\md\bsj''$. Thus, $P=\bsj_{p-2}^{(m+1)}\md\bsb_0\md\bsj''\md\bsb_1\md\bsj_{p}^{(m+1)}$, since there are no additional vertices in $C'_d$ that are contained in this path.
    
    Let $\bsu''$ be the additional neighbor to $\bsb_1$ in $C$. $\bsu''$ is not in $C'$, since all neighbors to $\bsb_1$ in $C'$ are neighboring to $\bsv''$ in $C_a$ with the exception of $\bsu$ which is not in $C$. Then $\bsu''$ is neighboring to $\bsu$, otherwise $\{\bsb_1,\bsv'',\bsu'',\bsu\}$ induces a claw, and $\bsu''$ is neighboring to $\bsj_{p}^{(m+1)}$ by Corollary~\ref{corollary:vertexcycleneighboring} and the fact that $\bsj''$ does not have any additional neighbors in $C$. By applying Lemma~\ref{lemma:evenholecloneneighbor} to $\bsj''\prec_{C}\bsv''$, $\bsu$ has at most one additional neighbor in $C$. If $\bsu$ has an additional neighbor in $C$, then $\bsu$ is in $W$ and this initializes a deformation with $(a,c)$ pairing. However, the deformation path $P^{(g)}$ initialized by $\bsu$ is not contained in either $\overrightarrow{\mco}_{\bsj}$ or $\overrightarrow{\mco}_{\bsj'}$ as then $\bsj''$ is in $U^{(g,1)}$, which contradicts our previous argument. Then $\overrightarrow{\mco}_{\bsj}=(P_{\bsj}^{(r)})$ and $\overrightarrow{\mco}_{\bsj'}$ is the empty sequence. Then $\bsj_{p}^{(m+1)}=\bsj'$ is in $W$, and so $\bsj'$ and $\bsj''$ are tethered with respect to $C$. If $\bsv$ does not initialize $P_{\bsj}^{(r)}$, then it has four neighbors in $C$. If $k=2$, then $\bsv$ is neighboring to $\bsu''$ as it is neighboring every neighbor in $C$. If $k>2$, then $\bsv$ and $\bsu'$ have only two mutual neighbors, $\bsb_0$ and $\bsv''$, and so $\bsv$ is neighboring to $\bsu''$ by Lemma~\ref{lemma:vertexcyclerelations}. This contradicts our assumption that $\bsu$ is not in $P^{(r)}$, since $\bsu$ is a neighbor to $\bsu''$ in $C'_c$. Thus, $\bsv=\bsj$ as it initializes $P^{(r)}_{\bsj}$.
    
    Finally, let $\bsw$ be the additional neighbor to $\bsb_0$ in $C_a$. This vertex must exist, since if $k=2$, then $\bsu'$ neighbors every vertex in $C$ as it does not initialize $P_{\bsj}^{(r)}$, however, then $\bsb_1$ has five neighbors in $C'$. Further, $\bsw$ is not in $C'$, since every neighbor to $\bsb_0$ in $C$ is neighboring to $\bsv''$ in $C_a$ with the exception of $\bsj_{p-2}^{(m+2)}$, however this vertex is neighboring to $\bsv$ and not neighboring to its clone $\bsv''$ in $C$. Then $\bsw$ is neighboring to $\bsj_{p-2}^{(m+1)}$, otherwise $\{\bsb_0,\bsj_{p-2}^{(m+1)},\bsw,\bsj''\}$ induces a claw. By Corollary~\ref{corollary:vertexcycleneighboring}, $\bsw$ is neighboring to $\bsu'$, since $\bsv$ has no additional neighbors in $C$. Then $\bsw$ is in $P_{\bsj}^{(r)}$ and $\bsu'\prec_{C_{\bsj}^{(r,1)}}\bsw$. Thus, $\bsw$ is the endpoint of $P_{\bsj}^{(r)}$. Further, $\bsj_{p-2}^{(m+1)}$ is in $U^{(r,1)}$, and so $\bsj_{p-2}^{(m+1)}$ has no additional neighbors in $C$. Let $\bsw'$ be the additional neighbor to $\bsw$ in $C_b$. This vertex is not in $C'$, since both of the neighbors to $\bsw$ in $C'$ have neighbors in $C_b$, and $\bsw$ has no additional neighbors in $C'$ other than those specified. Further, $\bsw'$ is neighboring to $\bsu'$ for $\bsu'\prec_{C_{\bsj}^{(r,1)}}\bsw$. However, this induces the claw $\{\bsu',\bsj_{p-2}^{(m+1)},\bsw'',\bsv''\}$. This claw exists since $\bsj_{p-2}^{(m+1)}$ does not have any additional neighbors in $C$, and $\bsv''$ has two neighbors, $\bsb_0$ and $\bsb_1$, in $C$. This contradicts our assumption that $\bsj''$ is not in $U_{\bsj} \cup U_{\bsj'}$, and so $\bsj''$ is in $U_{\bsj} \cup U_{\bsj'}$.
    
    We have that $\bsj''$ is in $U_{\bsj} \cup U_{\bsj'}$ and $U_{\bsj} \cap U_{\bsj'} = \varnothing$. Therefore, $\bsj''$ is the only member of exactly one of $U_{\bsj}$ or $U_{\bsj'}$.
    If any vertex $\bsj''$ has three or more incoming arcs in the obstruction graph, then there is a pair of deformations $\overrightarrow{\mco}_{\bsj}$ and $\overrightarrow{\mco}_{\bsj'}$ with arcs incoming to $\bsj''$, such that $\bsj''$ is not in $U_{\bsj} \cup U_{\bsj'}$ or $\bsj''$ is in $U_{\bsj} \cap U_{\bsj'}$. Therefore, no such vertex exists. This completes the proof of statement~(iv).
\end{proof}

\section{Proof of Lemma~\ref*{lemma:searchprocessinduction}}
\label{section:SearchProcessInductionProof}

\SearchProcessInduction*

\begin{proof}
    We prove this lemma in four statements corresponding to the statements of Lemma~\ref{lemma:searchprocessbasecase} as follows.
    \begin{enumerate}
        \item[(i)] The deformation $\overrightarrow{\mco}^{(i+1)}_{\bsj}$ is a possibly empty fixed-pairing-type deformation for all $\bsj$ in $W^{(i+1)}$
        \item[(ii)] The obstruction graph $\mcd^{(i+1)}$ is bipartite with coloring classes given by the pairing types.
        \item[(iii)] The deformations $\{\overrightarrow{\mco}^{(i+1)}_{\bsj}\}_{\bsj \in W^{(i+1)}}$ of a given pairing type are pairwise disjoint as sets of induced paths.
        \item[(iv)] Every vertex in the obstruction graph $\mcd^{(i+1)}$ has at most two incoming arcs. If $(\bsj\rightarrow\bsj'')$ and $(\bsj'\rightarrow\bsj'')$ are in $D^{(i+1)}$, then $\bsj''$ is the only member of exactly one of $U_{\bsj}^{(i+1)}$ or $U_{\bsj'}^{(i+1)}$.
    \end{enumerate}
    We assume that these statements hold at step $i$ of the search process. The proof for each of these statements closely follows that of the corresponding statement in Lemma~\ref{lemma:searchprocessbasecase}.

    Clearly the statement (i) holds if $\overrightarrow{\mco}^{(i+1)}_{\bsj}=\overrightarrow{\mco}^{(i)}_{\bsj}$ as output by Eq.~(\ref{equation:searchprocesstrivialupdate}). We shall prove that $\overrightarrow{\mco}_{\bsj}^{(i+1)}$ as output by Eq.~(\ref{equation:searchprocessupdate}) is a fixed-pairing-type deformation. We first prove this for the concatenation of sequences $\overrightarrow{\mco}_{(\bsj',\bsj)}^{(i)}=(\overrightarrow{\mco}^{(i)}_{\bsj'}\|\overrightarrow{\mco}^{(i)}_{\bsj})$. Let $\overrightarrow{\mco}^{(i)}_{\bsj}=(P^{(j)}_{\bsj})_{j=0}^m$, $\overrightarrow{\mco}^{(i)}_{\bsj'}=(P^{(j)}_{\bsj'})_{j=0}^{m'}$, and $\overrightarrow{\mco}_{(\bsj',\bsj)}^{(i)}=(P_{(\bsj', \bsj)}^{(j)})_{j=0}^{m+m'+1}$, where
    \begin{equation}
    P_{(\bsj',\bsj)}^{(j)} = 
        \begin{cases}
            P^{(j)}_{\bsj'} & j\in\{0,\dots,m'\}, \\
            P^{(j-m'-1)}_{\bsj} & j\in\{m'+1,\dots,m+m'+1\}.
        \end{cases}
    \end{equation}
    If $\overrightarrow{\mco}^{(i)}_{\bsj}$ or $\overrightarrow{\mco}^{(i)}_{\bsj'}$ is empty, then let $m=-1$ or $m'=-1$, respectively. Further, let $\overrightarrow{\mco}_{(\bsj',\bsj),g}^{(i)}=(P_{(\bsj',\bsj)}^{(j)})_{j=0}^{g}$. Finally, let $C_{\bsj}^{(r,s)}$, $C_{\bsj'}^{(r,s)}$, and $C_{(\bsj',\bsj)}^{(r,s)}$ denote the respective even holes upon deforming $C$ by $\overrightarrow{\mco}^{(i)}_{\bsj}$, $\overrightarrow{\mco}^{(i)}_{\bsj'}$, and $\overrightarrow{\mco}^{(i)}_{(\bsj',\bsj)}$ until step $(r,s)$.
    
    By assumption, $\overrightarrow{\mco}_{(\bsj',\bsj),m'}$ is a fixed-pairing-type deformation, and we shall show that $\bsj_{s-1}^{(r)}\prec_{C_{(\bsj',\bsj)}^{(r,s-1)}}\bsk_{s-1}^{(r)}$ for all $s\in\{1,\dots,\ell_r\}$ and all $r\in\{m'+1,\dots,m+m'+1\}$. Suppose this is true for all $g<r$ with $r>m'$. Let 
    \begin{equation}
        P^{(r)}_{(\bsj',\bsj)} = \bsj_0^{(r)}\md\bsk_0^{(r)}\md\dots \bsj_{\ell_r-1}^{(r)}\md\bsk_{\ell_r-1}^{(r)}
    \end{equation}
    be a component of $G[C_a{\oplus}C'_c]$ so that all deformations are $(a,c)$ pairing. Either $\bsj_0^{(r)}$ is in $W$ or $h_{\bsj_{0}^{(r)}}$ and $h_{P^{(q)}}$ anticommute for $\bsj_0^{(r)}$ not in $P^{(q)}$ and $P^{(q)}$ in $\overrightarrow{\mco}_{\bsj}^{(i)}$. In either case $\bsj_0^{(r)}$ is in $C'_c{\setminus}C$ as in the proof of Corollary~\ref{corollary:vertexonepathcomponent}, and so $\bsj_0^{(r)}$ is not contained in any path component in $\overrightarrow{\mco}^{(i)}_{(\bsj',\bsj),r-1}$ by Lemma~\ref{lemma:vertexonecoloringclass} and the assumption that $\overrightarrow{\mco}_{\bsj'}^{(i)}$ and $\overrightarrow{\mco}_{\bsj}^{(i)}$ are disjoint.
    
    If $\bsj_0^{(r)}$ is in $W$, then $\bsj_0^{(r)}\prec_{C}\bsk_0^{(r)}$ by the assumption that $P^{(r)}_{(\bsj',\bsj)}$ is an induced path component of $G[C_a{\oplus}C'_c]$ with $\bsj_0^{(r)}$ as an endpoint. Thus, $h_{\bsj_0^{(r)}}$ does not anticommute with any operator corresponding to a path in $\overrightarrow{\mco}_{\bsj'}^{(i)}$, since $\bsj_0^{(r)}$ is not contained in this path, and this contradicts Lemma~\ref{lemma:pairingrelations}, since $\bsj_0^{(r)}\prec_{C}\bsk_0^{(r)}$ has the same pairing type as $\overrightarrow{\mco}^{(i)}_{\bsj'}$. If $h_{\bsj_{0}^{(r)}}$ and $h_{P^{(q)}}$ anticommute for $\bsj_0^{(r)}$ not in $P^{(q)}$ and $P^{(q)}$ in $\overrightarrow{\mco}_{\bsj}^{(i)}$, then $h_{\bsj_{0}^{(r)}}$ does not anticommute with any operator corresponding to a path in $\overrightarrow{\mco}_{\bsj'}^{(i)}$, since $\bsj_0^{(r)}$ is not contained in this path, and this contradicts Corollary~\ref{corollary:vertexanticommutingdistinctpaths}. Thus, $h_{\bsj_0^{(r)}}$ and $h_{C_{(\bsj',\bsj)}^{(r-1)}}$ anticommute by the assumption that $\overrightarrow{\mco}_{\bsj}$ is a fixed-pairing-type deformation, and $h_{\bsj_0^{(r)}}$ commutes with every operator corresponding to a path in $\overrightarrow{\mco}_{\bsj'}^{(i)}$.
    
    If $\bsk_0^{(r)}$ is not in $C_{(\bsj',\bsj)}^{(r-1)}$, then $\bsk_0^{(r)}=\bsk_{t-1}^{(q)}$ for some $q<r$, however, then $P^{(r)}=P^{(r)}$, which contradicts either the assumption that $\overrightarrow{\mco}^{(i)}_{\bsj}$ is a fixed-paring-type deformation in which path components do not repeat by Corollary~\ref{corollary:pathcomponentsdistinct}, or the assumption that $\overrightarrow{\mco}^{(i)}_{\bsj}$ and $\overrightarrow{\mco}^{(i)}_{\bsj'}$ are disjoint. Thus, $\bsj_0^{(r)}$ is in $C_{(\bsj',\bsj)}^{(r-1)}$. If $\bsk_0^{(r)}$ is not the clone to $\bsj_0^{(r)}$ in $C_{(\bsj',\bsj)}^{(r-1)}$, then $\Gamma_{C_{(\bsj',\bsj)}^{(r-1)}}(\bsj_0^{(r)})=\bsk_0^{(r)}\md\bss\md\bst$. Since $\bst$ is in $\Gamma_{C_{(\bsj',\bsj)}^{(r-1 )}}(\bsj_0^{(r)}){\setminus}\Gamma_{C_{(\bsj',\bsj)}^{(r-1)}}(\bsk_0^{(r)})$, then $\bst $ is in $C_{(\bsj',\bsj)}^{(r-1)}{\setminus}C_{\bsj}^{(r-m'-2)}$. Thus, $\bst=\bsj_{t-1}^{(q)}$ for $P_{\bsj'}^{(q)}$ in $\overrightarrow{\mco}_{\bsj'}$. Since $\bst$ is neighboring to $\bsj_0^{(r)}$ in $C'_c{\setminus}C_a$, then $P_{\bsj'}^{(q)}$ is a component of $G[C_b{\oplus}C'_d]$ with $\bst$ in $C'_d{\setminus}C_b$. Thus, $\bss$ is neighboring to $\bsk_0^{(r)} \in C_a$, $\bsj_0^{(r)} \in C'_c$, and $\bst \in C'_d$, and so $\bss$ is in $C_b{\setminus}C'$, however, then $\bss$ is also in $P_{\bsj'}^{(q)}$. This contradicts the assumption that $\bss$ is in $C_{(\bsj',\bsj)}^{(r-1)}$. Thus $\bsj_0^{(r)}\prec_{C_{(\bsj',\bsj)}^{(r-1)}}\bsk_0^{(r)}$. 
    
    If $s>1$ is such that $\bsk_{s-1}^{(r)}$ is not in $C_{(\bsj',\bsj)}^{(r-1)}$, then $\bsk_{s-1}^{(r)}$ is in $C_{\bsj}^{(r-m'-2)}{\setminus}C_{(\bsj',\bsj)}^{(r-1)}$. Thus $\bsk_{s-1}^{(r)}=\bsk_{t-1}^{(q)}$ for $P_{\bsj'}^{(q)}\in\overrightarrow{\mco}_{\bsj'}$. However, then $P^{(r)}_{(\bsj',\bsj)}=P^{(q)}_{\bsj'}$, which contradicts the assumption that $\overrightarrow{\mco}^{(i)}_{\bsj}$ and $\overrightarrow{\mco}_{\bsj'}^{(i)}$ are disjoint. Thus, $P^{(r)}_{(\bsj',\bsj)} \cap C_a$ is a subset of $C^{(r-1)}_{(\bsj',\bsj)}$.
    
    Let $s>1$ be the smallest index such that $\bsj^{(r)}_{s-1}\nprec_{C_{(\bsj',\bsj)}^{(r,s-1)}}\bsk_{s-1}^{(r)}$. If $\bsj^{(r)}_{s-1}$ has four neighbors in $C_{(\bsj',\bsj)}^{(r-1)}$, then $\Gamma_{C_{(\bsj',\bsj)'}^{(r-1)}}(\bsj_{s-1}^{(r)})=\bsk_{s-2}^{(r)}\md\bss\md\bst\md\bsk_{s-1}^{(r)}$. Since $\bss$ is a neighbor to $\bsk_{s-2}^{(r)}$ that is not a neighbor to $\bsk_{s-1}^{(r)}$, then $\bss$ is in $C_{(\bsj',\bsj)}^{(r-1)}{\setminus}C_{\bsj}^{(r-m'-2)}$. Thus, $\bss=\bsj_{t-1}^{(q)}$ in $C'_d{\setminus}C_b$ for $P_{\bsj'}^{(q)}$ a component of $G[C_b{\oplus}C'_d]$. Then $\bst$ is neighboring to $\bss \in C'_d$, $\bsj_{s-1}^{(r)} \in C'_c$, and $\bsk_{s-1}^{(r)} \in C_a$, and so $\bst$ is in $C_b{\setminus}C'$. However, then $\bst$is also in $P^{(q)}$, which contradicts the assumption that $\bst$ is in $C_{(\bsj',\bsj)}^{(r-1)}$. Thus, if $s>1$ is the smallest index such that $\bsj^{(r)}_{s-1}\nprec_{C_{(\bsj',\bsj)}^{(r,s-1)}}\bsk_{s-1}^{(r)}$, then $\bsj_{s-1}^{(r)}$ has three neighbors in $C_{(\bsj',\bsj)}^{(r-1)}$, and two of which are in $C_b$. That is, $\bsj_{0}^{(r)}$ is tethered to $\bsk_{s-1}^{(r)}$ with respect to $C_{(\bsj',\bsj)}^{(r-1)}$.
    
    Let $\Gamma_{C_{(\bsj',\bsj)}^{(r-1)}}(\bsj_{s-1}^{(r)})=\bsk_{s-2}^{(r)}\md\bss\md\bsk_{s-1}^{(r)}$. Then $\bsj_{s-1}^{(r)}$ anticommutes with an operator corresponding to some path $P^{(q)}_{\bsj'}$ in $\overrightarrow{\mco}_{\bsj'}^{(i)}$. If $\bsj_{s-1}^{(r)}$ is in $P_{\bsj'}^{(q)}$, then $\bsj_{s-1}^{(r)}$ is an endpoint of $P_{\bsj'}^{(q)}$, and thus $\bsj_{s-1}^{(r)}=\bsj_{0}^{(q)}$ is in $C'_c{\setminus}C$ as in the proof of Corollary~\ref{corollary:pathuniqueendpoints}. This contradicts the assumption that $\overrightarrow{\mco}_{\bsj}$ and $\overrightarrow{\mco}_{\bsj'}$ are disjoint. Thus, $\bsj_{s-1}^{(r)}$ is in $U^{(q,t)}(\mco_{\bsj'})$, and $\bsj_{s-1}^{(r)}$ and $\bsj_{t-1}^{(q)}$ are in $C'{\setminus}C$, and their mutual neighbor $\bsv \in C_{\bsj'}^{(q,t-1)}$ is in $C{\setminus}C'$ as in the proof of Corollary~\ref{corollary:vertexonepathcomponent}. Since $P^{(r)}$ is a subset of $G[C_a{\oplus}C'_c]$, then $\bsj_{t-1}^{(q)}$ is in $C'_d$ and $\bsv$ is in $\{\bsk_{s-2}^{(r)},\bsk_{s-1}^{(r)}\}$ with $\bsv$ in $C_a$. Thus, $\bsj_{t-1}^{(q)}$ is in $C_{(\bsj',\bsj)}^{(r-1)}$, since it is not contained in another path component of either $\overrightarrow{\mco}_{\bsj}$ or $\overrightarrow{\mco}_{\bsj'}$, which would contradict the assumption that these deformations are disjoint. Since $\bsj_{t-1}^{(q)}$ is a mutual neighbor to $\bsj_{s-1}^{(r)}$ and $\bsv$ in this hole, then $\bss=\bsj_{t-1}^{(q)}$, and so $\bss$ is in $C'_d$. If $\bsk_{s-1}^{(r)}$ is not in $C_{\bsj'}^{(q)}$, i.e., $\bsv=\bsk_{s-2}^{(q)}$), then $\bsk_{s-1}^{(r)}=\bsj_{v-1}^{(u)}$ is in $P_{(\bsj',\bsj)}^{(u)}$ for $q<u<r$, which requires that $\bsk_{s-1}^{(r)}$ is in $C'$. However, this is a contradiction, since $\bsk_{s-1}^{(r)}$ neighbors $\bsj_{s-1}^{(r)} \in C'_c$ and $\bss \in C'_d$. If $\bsk_{s-1}^{(r)}$ is in $C_{\bsj'}^{(q)}$, then $\bsj_{s-1}^{(r)}$ is not in $U^{(q,t)}(\mco_{\bsj'})$ unless $k=2$ for $|C_{\bsj'}^{(q,t-1)}|=2k$ by Corollary~\ref{corollary:evenholecloneexactlyoneneighbor}. Then $\bsj_{s-1}^{(r)}$ neighbors a mutual neighbor $\bsw$ to $\bsk_{s-1}^{(r)}$ and $\bsj_{s-2}^{(r)}$ in $C_{\bsj'}^{(q,t-1)}$. If $\bsw$ is in $C_{\bsj'}^{(q)}$, then $\Gamma_{C_{\bsj'}^{(q)}}(\bsj_{s-1}^{(r)})=\bsw\md\bsk_{s-2}^{(r)}\md\bss\md\bsk_{s-1}^{(r)}\md\bsw$. In order for $\bsj_{s-1}^{(r)}$ to have three neighbors in $C_{(\bsj',\bsj)}^{(r-1)}$, it must anticommute with the operator corresponding to another path $P_{(\bsj',\bsj)}^{(u)}$ with $q<u<r$, however, then $\bsj_{s-1}^{(r)}=\bsj_{0}^{(u)}$, since $\bsj_{s-1}^{(r)}$ is in $C'_c{\setminus}C$ and by Corollary~\ref{corollary:vertexanticommutingdistinctpathsendpoint}. However, this contradicts the requirement that $\bsj_{s-1}^{(r)}$ is not in $C_{(\bsj',\bsj)}^{(r-1)}$, since $\bsj_{s-1}^{(r)}$ is contained in at most one path component between $\overrightarrow{\mco}_{\bsj}$ and $\overrightarrow{\mco}_{\bsj'}$. If $\bsw$ is not in $C_{\bsj'}^{(q)}$, then $\bsw=\bsk_{t}^{(q)}$. Further, $\{\bsj_{t-1}^{(q)},\bsk_t^{(q)}\}$ is a subset of $\Gamma_{P_{\bsj'}^{(q)}}(\bsj_{s-1}^{(r)})$, which requires that $\bsj_{s-1}^{(r)}$ is neighboring to $\bsj_t^{(q)}$ for $h_{\bsj_{s-1}^{(r)}}$ and $h_{P_{\bsj'}^{(q)}}$ to anticommute. However, then $\Gamma_{C_{\bsj'}^{(q)}}(\bsj_{s-1}^{(r)})=\bsj_{t}^{(q)}\md\bsk_{s-2}^{(r)}\md\bss\md\bsk_{s-1}^{(r)}\md\bsj_t^{(q)}$, and the argument follows as in the case where $\bsw$ is in $C_{\bsj'}^{(q)}$. Therefore, $\bsj_{s-1}^{(r)}$ is not tethered to $\bsj_{0}^{(r)}$ with respect to $C_{(\bsj',\bsj)}^{(r-1)}$, and $\overrightarrow{\mco}_{(\bsj',\bsj)}^{(i)}$ is a fixed-pairing-type deformation.
    
    We now show that $\overrightarrow{\mco}_{\bsj}^{(i+1)}$ is a fixed-pairing-type deformation. To achieve this, we consider a tree $\mct^{(i)}$ whose vertices are labeled by deformations $\overrightarrow{\mco}^{(j)}_{\bsj}$, and $\overrightarrow{\mco}^{(j)}_{\bsj}$ has children $\overrightarrow{\mco}^{(j-1)}_{\bsj}$ and $\overrightarrow{\mco}^{(j-1)}_{\bsj'}$ for $j \leq i$ if these deformations satisfy Eq.~(\ref{equation:searchprocessupdate}). $\overrightarrow{\mco}^{(j)}_{\bsj}$ has only one child $\overrightarrow{\mco}^{(j-1)}_{\bsj}$ for $j \leq i$ if they are related by Eq.~(\ref{equation:searchprocesstrivialupdate}).
    
    Proceeding with the update to $\overrightarrow{\mco}_{\bsj}^{(i+1)}$, recall that $\bsu_{\bsj,f}^{(i)}$ is constructed such that $\bsu_{\bsj,f}^{(i)}$ is in $W$ or $\bsu_{\bsj,f}^{(i)}$ is in $U^{(r,s)}(\mco_{\bsj}^{(i)}){\setminus}C^{(m)}_{\bsj}$ such that $h_{\bsu_{\bsj,f}^{(i)}}$ and $h_{P_{\bsj}^{(r)}}$ anticommute. Suppose that the latter case is true, then if $\bsu_{\bsj,f}^{(i)}$ is also in $W$, then $\bsu_{\bsj,f}^{(i)}$ is not in $W^{(i)}$ by our assumption that $\bsu_{\bsj, f}^{(i)}$ is not the obstructing vertex to $\overrightarrow{\mco}_{\bsj}^{(i)}$. By Lemma~\ref{lemma:pairingrelations}~$(a)$, the pairing type of $\bsu_{\bsj,f}^{(i)}$ with its clone in $C$ is opposite to that of $\overrightarrow{\mco}_{\bsj}^{(i)}$, and so $\bsu_{\bsj,f}^{(i)}$ is contained in the obstruction set of $W$. However, a vertex in the obstruction set of $W$ can only be removed if there is a deformation $\overrightarrow{\mco}_{\bsv}^{(j)}$ such that $\bsu_{\bsj,f}^{(i)}$ is in $U^{(q,t)}(\mco_{\bsv}^{(j)})$ where $h_{\bsu_{\bsj,f}^{(i)}}$ and $h_{P_{\bsv}^{(q)}}$ anticommute for some $j<i$. If $\overrightarrow{\mco}_{\bsv}^{(j)}$ is not a descendant of $\overrightarrow{\mco}_{\bsj}^{(i)}$ in $\mct^{(i)}$, then $P_{\bsv}^{(q)}$ is not in $\overrightarrow{\mco}_{(\bsj',\bsj)}^{(i)}$ by the assumption that the deformations $\overrightarrow{\mco}_{\bsj}^{(i)}$ are pairwise disjoint for all $\bsj$ in $W^{(i)}$. Thus, we have a contradiction to Corollary~\ref{corollary:vertexanticommutingdistinctpaths}. If $\overrightarrow{\mco}_{\bsv}^{(j)}$ is a descendant of $\overrightarrow{\mco}_{\bsj}^{(i)}$ in $\mct^{(i)}$, then this contradicts Corollary~\ref{corollary:vertexanticommutingdistinctpaths} unless $P_{\bsj}^{(r)}=P_{\bsv}^{(q)}$. Then there is no vertex $\bsu$ in $U^{(g,f)}(\mco_{\bsj}^{(i)}){\setminus}C_{\bsj}^{(m)}$ such that $h_{\bsu}$ and $h_{P_{\bsj}^{(g)}}$ anticommute for $\mcl_{g,f}>\mcl_{q,t}$. This contradicts the assumption that $\bsu_{\bsj,f}^{(i)}$ is not in $C_{\bsj}^{(m)}$, since $\bsu_{\bsj,f}^{(i)}$ is in $P_{\bsj}^{(r+1)}$, and $\bsu_{\bsj,f}^{(i)}$ is contained in at most one path component of $\overrightarrow{\mco}^{(i)}_{\bsj}$ by Corollary~\ref{corollary:vertexonepathcomponent}. If $r=m$, then $\bsu^{(i)}_{\bsj,f}$ is in $W^{(i)}$. Thus, if $\bsu_{\bsj,f}^{(i)}$ is in $U^{(r,s)}(\mco_{\bsj}^{(i)}){\setminus}C^{(m)}_{\bsj}$, and $h_{\bsu_{\bsj,f}^{(i)}}$ and $h_{P_{\bsj}^{(r)}}$ anticommute, then $\bsu_{\bsj,f}^{(i)}$ is not in $W$. Then $\bsu_{\bsj,f}^{(i)}=\bsj$, since this is the only instance in the search process in which $\bsu_{\bsj, f}^{(i)}$ is not assigned to the only member of $U^{(r,s)}(\mco_{\bsj}^{(i)}){\setminus}C_{\bsj}^{(m)}$.
    
    The remainder of the proof follows the proof of Lemma~\ref{lemma:searchprocessbasecase} with the observation that if there is a vertex $\bsu$ whose operator anticommutes with a path operator in $\mco^{(i)}_{t}$ and is not an element of that path, then if $\bsu$ is not in $W^{(i)}$, and so $\bsu$ is not in $W$. This completes the proof.
\end{proof}

\section{Proof of Lemma~\ref*{lemma:generalizedcharacteristicpolynomial}}
\label{section:GeneralizedCharacteristicPolynomialProof}

\GeneralizedCharacteristicPolynomial*

\begin{proof}
    By definition,
    \begin{align}
        Z_{G}(-u^2) &= T_G(u)T_G(-u) \\
        &= \left(\sum_{s=0}^{\alpha(G)}(-u)^s\qkg{s}{G}\right)\left(\sum_{t=0}^{\alpha(G)}u^t\qkg{t}{G}\right) \\
        &= \sum_{s,t=0}^{\alpha(G)}(-1)^su^{s+t}\qkg{s}{G}\qkg{t}{G}.
    \end{align}
    If $s+t=1\pmod2$, then, by Theorem~\ref{theorem:conservedcharges},
    \begin{equation}
        (-1)^s\qkg{s}{G}\qkg{t}{G}+(-1)^t\qkg{t}{G}\qkg{s}{G} = (-1)^s[\qkg{s}{G},\qkg{t}{G}] = 0.
    \end{equation}
    Hence,
    \begin{align}
        Z_{G}(-u^2) &= \sum_{\substack{s,t=0 \\ s+t=0\pmod2}}^{\alpha(G)}(-1)^su^{s+t}\qkg{s}{G}\qkg{t}{G} \\
        &= \sum_{\substack{S,T\in\mcs_{G} \\ \abs{S}+\abs{T}=0\pmod2}}(-1)^\abs{S}u^{\abs{S}+\abs{T}}h_Sh_T \\
        &= \sum_{\substack{S,T\in\mcs_{G} \\ \abs{S}+\abs{T}=0\pmod2}}(-1)^\abs{S}u^{\abs{S}+\abs{T}}\left(h_{S \cap T}\right)^2h_{S{\setminus}T}h_{T{\setminus}S}.
    \end{align}
    As in the proof of Theorem~\ref{theorem:conservedcharges}~(i), we use the fact that every factor $h_{\bsj}$ with $\bsj \in S \cap T$ commutes with every factor in $h_Sh_T$. Commuting a given factor of $h_{\bsj}$ for $\bsj \in S{\setminus}T$ through $h_{T{\setminus}S}$ gives a factor of $(-1)$ for every neighbor to $\bsj$ in $T{\setminus}S$. Using
    \begin{equation}
        S{\oplus}T = (S{\setminus}T)\cup(T{\setminus}S),
    \end{equation}
    gives
    \begin{equation}
        h_{S{\setminus}T}h_{T{\setminus}S} = (-1)^{\abs{E[S{\oplus}T]}}h_{T{\setminus}S}h_{S{\setminus}T},
    \end{equation}
    since $G[S{\oplus}T]$ is bipartite by Lemma~\ref{lemma:clawfreesymmetricdifference}. Thus, if $\abs{E[S{\oplus}T]}=1\pmod2$, then $h_{S{\setminus}T}h_{T{\setminus}S}+h_{T{\setminus}S}h_{S{\setminus}T}=0$. Hence,
    \begin{align}
        Z_{G}(-u^2) &= \sum_{\substack{S,T\in\mcs_{G} \\ \abs{S}+\abs{T}=0\pmod2 \\ \abs{E[S{\oplus}T]}=0\pmod2}}(-1)^\abs{S}u^{\abs{S}+\abs{T}}\left(h_{S \cap T}\right)^2h_{S{\setminus}T}h_{T{\setminus}S} \\
        &= \sum_{\substack{S,T\in\mcs_{G} \\ \abs{S}+\abs{T}=0\pmod2 \\ \abs{E[S{\oplus}T]}}=0\pmod2}(-1)^\abs{S}u^{\abs{S}+\abs{T}}h_Sh_T.
    \end{align}
    By Lemma~\ref{lemma:clawfreesymmetricdifference}, $G[S{\oplus}T]$ is a disjoint union of paths and even holes. Suppose $G[S{\oplus}T]$ contains a path component $P$. Since $P$ either has odd-many vertices or odd-many edges, it cannot be the only component of $G[S{\oplus}T]$. Define
    \begin{align}
        \wt{S} &= S{\oplus}P, \\
        \wt{T} &= T{\oplus}P.
    \end{align}
    This gives distinct independent sets $\wt{S}$ and $\wt{T}$ for which $S{\oplus}T=\wt{S}{\oplus}\wt{T}$ and $\abs{S}+\abs{T}=|\wt{S}|+|\wt{T}|$. This gives
    \begin{align}
        (-1)^{|\wt{S}|}u^{|\wt{S}|+|\wt{T}|}h_{\wt{S}}h_{\wt{T}} &= (-1)^{|\wt{S}|}u^{|\wt{S}|+|\wt{T}|}h_{S{\setminus}P}h_{T \cap P}h_{S \cap P}h_{T{\setminus}P} \\ 
        &= (-1)^{|\wt{S}|+\abs{E[P]}}u^{|\wt{S}|+|\wt{T}|}h_{S{\setminus}P}h_{S \cap P}h_{T \cap P}h_{T{\setminus}P}, \\
        &= -(-1)^{\abs{S}}u^{\abs{S}+\abs{T}}h_{S}h_{T}. 
    \end{align}
    Hence,
    \begin{equation}
        (-1)^{\abs{S}}u^{\abs{S}+\abs{T}}h_{S}h_{T}+(-1)^{|\wt{S}|}u^{|\wt{S}|+|\wt{T}|}h_{\wt{S}}h_{\wt{T}} = 0,
    \end{equation}
    where we have used the fact that, if $P$ has odd length, then $|\wt{S}|=\abs{S}$, and if $P$ has even length, then $|\wt{S}|=\abs{S} \pm 1$. Thus, $(-1)^{|\wt{S}|+\abs{E[P]}} = -(-1)^{\abs{S}}$. For a given collection of pairs $(S,T)$ such that $G[S{\oplus}T]$ is fixed, we can choose a path component by which to pair terms to cancel. Therefore, $G[S{\oplus}T]$ contains no path components, so it must be a collection of disjoint and non-neighboring even holes. In this case $\abs{S}=\abs{S}$, and we have
    \begin{align}
        Z_{G}(-u^2) &= \sum_{\substack{S, T \in \mcs_{G} \\ S{\oplus}T=\partial\mcx \\ \mcx\in\mathscr{C}^{\text{(even)}}_G}}(-u^2)^{\abs{S}}h_Sh_T, \label{equation:setsymmetric} \\
        Z_{G}(-u^2) &= \sum_{\mcx\in\mathscr{C}^{\text{(even)}}_{G}}(-u^2)^{\abs{\partial\mcx}/2}2^{\abs{\mcx}}I_{G{\setminus}\Gamma[\mcx]}(-u^2)\prod_{C\in\mcx}h_C.
    \end{align}
    By Lemma~\ref{lemma:evenholeneighbor}, this gives
    \begin{equation}
        Z_{G}(-u^2) = \sum_{\avg{\mcx}\in\avg{\mathscr{C}_{G}^{\text{(even)}}}}(-u^2)^{\abs{\partial\avg{\mcx}}/2}2^{\abs{\mcx}}I_{G{\setminus}\Gamma[\avg{\mcx}]}(-u^2)\prod_{\avg{C_0}\in\avg{\mcx}}\jkg{C_0}{G},
    \end{equation}
    completing the proof.
\end{proof}

\section{Proof of Lemma~\ref*{lemma:fundamentalidentity}}
\label{section:FundamentalIdentityProof}

\FundamentalIdentity*

To prove Lemma~\ref{lemma:fundamentalidentity}, we first require the following lemma.
\begin{lemma}
    \label{lemma:kjrecursion}
    Let $K_s$ be a simplicial clique in $G$, and define $K_{\bsj} = \Gamma[\bsj]{\setminus}(K_s{\setminus}\{\bsj\})$ to be the clique such that $\Gamma[\bsj] = K_s \cup K_{\bsj}$. Then the following recursion relation holds.
    \begin{equation}
        \qkg{k}{G} = \qkg{k}{G{\setminus}K_s}+\sum_{\bsj \in K_s}\qkg{k-1}{G{\setminus}K_{\bsj}}h_{\bsj}. \notag
    \end{equation}
\end{lemma}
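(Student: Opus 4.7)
The strategy is to derive Lemma~\ref{lemma:kjrecursion} by applying the clique recursion Eq.~(\ref{equation:chargecliquerecursion}) twice, once in $G$ to the clique $K_s$ and once in $G\setminus K_{\bsj}$ to the sub-clique $K_s\setminus\{\bsj\}$, and then canceling the residual terms pairwise via the anticommutation relations within $K_s$. First, I would observe that for each $\bsj\in K_s$ the support of $\qkg{k-1}{G\setminus\Gamma[\bsj]}$ is disjoint from $\Gamma[\bsj]$, so it commutes with $h_{\bsj}$. Applying Eq.~(\ref{equation:chargecliquerecursion}) to $K_s$ then gives
\begin{align}
    \qkg{k}{G} &= \qkg{k}{G\setminus K_s}+\sum_{\bsj\in K_s}\qkg{k-1}{G\setminus\Gamma[\bsj]}h_{\bsj}, \notag
\end{align}
so the lemma reduces to the identity
\begin{align}
    \sum_{\bsj\in K_s}\Bigl[\qkg{k-1}{G\setminus K_{\bsj}}-\qkg{k-1}{G\setminus\Gamma[\bsj]}\Bigr]h_{\bsj} = 0. \notag
\end{align}

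I would next invoke the recursion Eq.~(\ref{equation:chargecliquerecursion}) inside the graph $G\setminus K_{\bsj}$ with respect to the clique $K_s\setminus\{\bsj\}$. Because $\Gamma[\bsj]=K_{\bsj}\cup(K_s\setminus\{\bsj\})$, the ``empty-intersection'' term in this secondary recursion is exactly $\qkg{k-1}{G\setminus\Gamma[\bsj]}$, and the bracketed difference simplifies to $\sum_{\bsk\in K_s\setminus\{\bsj\}}h_{\bsk}\,\qkg{k-2}{G\setminus(K_{\bsj}\cup\Gamma_{G\setminus K_{\bsj}}[\bsk])}$, where $\Gamma_{G\setminus K_{\bsj}}[\bsk]=\Gamma[\bsk]\setminus K_{\bsj}$ so that $K_{\bsj}\cup\Gamma_{G\setminus K_{\bsj}}[\bsk]=K_{\bsj}\cup\Gamma[\bsk]$. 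The key set-theoretic identity, which I expect to be the main bookkeeping step, is that whenever $\bsk\in K_s\setminus\{\bsj\}$ one has $K_s\setminus\{\bsj\}\subseteq\Gamma[\bsk]$ (because $K_s$ is a clique), hence $K_{\bsj}\cup\Gamma[\bsk]=\Gamma[\bsj]\cup\Gamma[\bsk]$, which is symmetric in $\bsj$ and $\bsk$.

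With this symmetry in hand, I would pair the $(\bsj,\bsk)$ term with the $(\bsk,\bsj)$ term in the resulting double sum. The common factor $\qkg{k-2}{G\setminus(\Gamma[\bsj]\cup\Gamma[\bsk])}$ is supported on vertices outside both $\Gamma[\bsj]$ and $\Gamma[\bsk]$, so it commutes with $h_{\bsj}$ and $h_{\bsk}$. The pair therefore collapses to $\qkg{k-2}{G\setminus(\Gamma[\bsj]\cup\Gamma[\bsk])}\{h_{\bsj},h_{\bsk}\}$, which vanishes because $\bsj$ and $\bsk$ are distinct neighboring vertices of the clique $K_s$ and so the corresponding Pauli terms anticommute. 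This forces the entire residual sum to zero and completes the proof. Beyond verifying the set identity above, I do not foresee a significant obstacle; everything else reduces to standard commutation bookkeeping in which simpliciality of $K_s$ enters only insofar as it guarantees that $K_{\bsj}$ is a well-defined clique.
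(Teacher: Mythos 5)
Your proposal is correct and follows essentially the same route as the paper's proof: apply the clique recursion Eq.~(\ref{equation:chargecliquerecursion}) once to $K_s$ in $G$ and once to $K_s{\setminus}\{\bsj\}$ in $G{\setminus}K_{\bsj}$, then cancel the residual double sum using the symmetry of $\qkg{k-2}{G{\setminus}(\Gamma[\bsj]\cup\Gamma[\bsk])}$ in $\bsj$ and $\bsk$ together with $\{h_{\bsj},h_{\bsk}\}=0$. The only cosmetic difference is that you organize the argument as a vanishing difference of the two recursions rather than substituting one into the other, and your set-theoretic identity $K_{\bsj}\cup\Gamma[\bsk]=\Gamma[\bsj]\cup\Gamma[\bsk]$ is exactly the paper's $K_s\cup K_{\bsj}\cup K_{\bsk}$ in different notation.
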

\begin{proof}
    By Eq.~(\ref{equation:chargecliquerecursion}), we have
    \begin{equation}
        \qkg{k}{G} = \qkg{k}{G{\setminus}K_s}+\sum_{\bsj \in K_s}\qkg{k-1}{G{\setminus}(K_s \cup K_{\bsj})}h_{\bsj},
    \end{equation}
    and
    \begin{equation}
    \qkg{k-1}{G{\setminus}K_{\bsj}} = \qkg{k-1}{G{\setminus}(K_s \cup K_{\bsj})}+\sum_{\bsk \in K_s{\setminus}\{\bsj\}}\qkg{k-2}{G{\setminus}(K_s \cup K_{\bsj} \cup K_{\bsk})}h_{\bsk}.
    \end{equation}
    This gives
    \begin{align}
        \qkg{k}{G} &= \qkg{k}{G{\setminus}K_s}+\sum_{\bsj \in K_s}\left(\qkg{k-1}{G{\setminus}K_{\bsj}}-\sum_{\bsk \in K_s{\setminus}\{\bsj\}}\qkg{k-2}{G{\setminus}(K_s \cup K_{\bsj} \cup K_{\bsk})}h_{\bsk}\right)h_{\bsj} \\
        &= \qkg{k}{G{\setminus}K_s}+\sum_{\bsj \in K_s}\qkg{k-1}{G{\setminus}K_{\bsj}}h_{\bsj}-\sum_{\substack{\bsj,\bsk \in K_s \\ \bsj\neq\bsk}}\qkg{k-2}{G{\setminus}(K_s \cup K_{\bsj} \cup K_{\bsk})}h_{\bsk}h_{\bsj}.
    \end{align}
    The third term vanishes since $\qkg{k-2}{G{\setminus}(K_s \cup K_{\bsj} \cup K_{\bsk})}$ is symmetric in $\bsj$ and $\bsk$, but $h_{\bsj}$ and $h_{\bsk}$ anticommute for $\bsj,\bsk \in K_s$ and $\bsj\neq\bsk$. Therefore
    \begin{equation}
        \qkg{k}{G} = \qkg{k}{G{\setminus}K_s}+\sum_{\bsj \in K_s}\qkg{k-1}{G{\setminus}K_{\bsj}}h_{\bsj},
    \end{equation}
    completing the proof.
\end{proof}

\begin{proof}[Proof of Lemma~\ref*{lemma:fundamentalidentity}]
    By Def.~\ref{definition:generalizedcharacteristicpolynomial}, it is sufficient to show that
    \begin{equation}
        \left(1+u\sum_{\bsj \in K_s}h_{\bsj}\right)\chi T_{G}(-u) = T_{G}(-u)\left(1-u\sum_{\bsj \in K_s}h_{\bsj}\right)\chi.
    \end{equation}
    By equating coefficients of $u^k$, this is equivalent to showing that
    \begin{equation}
        \chi \qkg{k}{G}+\sum_{\bsj \in K_s}h_{\bsj}\chi\qkg{k-1}{G} = \left(\qkg{k}{G}-\qkg{k-1}{G}\sum_{\bsj \in K_s}h_{\bsj}\right)\chi.
    \end{equation}
    We expand the left-hand side by applying Eq.~(\ref{equation:chargecliquerecursion}) to the clique $K_s$ in the first term and the clique $K_{\bsj}=\Gamma[G]{\setminus}(K_s{\setminus}\{\bsj\})$ in the second term. This gives
    \begin{equation}
        \chi\qkg{k}{G}+\sum_{\bsj \in K_s}h_{\bsj}\chi\qkg{k-1}{G} = 
        \chi\left(\qkg{k}{G{\setminus}K_s}+\sum_{\bsj' \in K_s}h_{\bsj'}\qkg{k-1}{G{\setminus}(K_s \cup K_{\bsj'})}\right)+\sum_{\bsj \in K_s}h_{\bsj}\chi\left(\qkg{k-1}{G{\setminus}K_{\bsj}}+\sum_{\bsj' \in K_{\bsj}}h_{\bsj'}\qkg{k-2}{G{\setminus}\Gamma[\bsj']}\right).
    \end{equation}
    For $\bsj \in K_s$, we see that $h_{\bsj}\chi$ only anticommutes with $h_{\bsk}$ if $\bsk$ is in $K_{\bsj}$. Thus
    \begin{align}
        \chi\qkg{k}{G}+\sum_{\bsj \in K_s}h_{\bsj}\chi\qkg{k-1}{G} &= 
        \left(\qkg{k}{G{\setminus}K_s}-\sum_{\bsj' \in K_s}h_{\bsj'}\qkg{k-1}{G{\setminus}(K_s \cup K_{\bsj'})}\right)\chi+\sum_{\bsj \in K_s}\left(\qkg{k-1}{G{\setminus}K_{\bsj}}-\sum_{\bsj' \in K_{\bsj}}h_{\bsj'}\qkg{k-2}{G{\setminus}\Gamma[\bsj']}\right)h_{\bsj}\chi \\
        &= 
        \left(\qkg{k}{G{\setminus}K_s}+\sum_{\bsj \in K_s}\qkg{k-1}{G{\setminus}K_{\bsj}}h_{\bsj}\right)\chi-\sum_{\bsj \in K_s}\left(\qkg{k-1}{G{\setminus}(K_s \cup K_{\bsj})}+\sum_{\bsj' \in K_{\bsj}}h_{\bsj'}\qkg{k-2}{G{\setminus}\Gamma[\bsj']}\right)h_{\bsj}\chi \\
        &= 
        \qkg{k}{G}\chi-\sum_{\bsj \in K_s}\left(\qkg{k-1}{G{\setminus}(K_s \cup K_{\bsj})}+\qkg{k-1}{G}-\qkg{k-1}{G{\setminus}K_{\bsj}}\right)h_{\bsj}\chi,
    \end{align}
    and
    \begin{equation}
        \qkg{k-1}{G{\setminus}(K_s \cup K_{\bsj})}-\qkg{k-1}{G{\setminus}K_{\bsj}} = -\sum_{\bsj' \in K_s{\setminus}\{\bsj\}}\qkg{k-2}{G{\setminus}(K_s \cup K_{\bsj} \cup K_{\bsj'})}h_{\bsj'}.
    \end{equation}
    This gives
    \begin{equation}
        \chi\qkg{k}{G}+\sum_{\bsj \in K_s}h_{\bsj}\chi\qkg{k-1}{G} = \qkg{k}{G}\chi-\qkg{k-1}{G}\sum_{\bsj \in K_s}h_{\bsj}\chi+\sum_{\substack{\bsj, \bsj' \in K_s \\ \bsj\neq\bsj'}}\qkg{k-2}{G{\setminus}(K_s \cup K_{\bsj} \cup K_{\bsj'})}h_{\bsj'}h_{\bsj}\chi.
    \end{equation}
    In the last term, $\qkg{k-2}{G{\setminus}(K_s \cup K_{\bsj} \cup K_{\bsj'})}$ is symmetric in $\bsj$ and $\bsj'$, but $h_{\bsj}$ and $h_{\bsj'}$ anticommute for $\bsj$, $\bsj' \in K_s$ with $\bsj\neq\bsj'$. Thus, this term vanishes, and we have
    \begin{equation}
        \chi\qkg{k}{G}+\sum_{\bsj \in K_s}h_{\bsj}\chi\qkg{k-1}{G} = \left(\qkg{k}{G}-\qkg{k-1}{G}\sum_{\bsj \in K_s}h_{\bsj}\right)\chi,
    \end{equation} 
    completing the proof.
\end{proof}

\section{Proof of Lemma~\ref*{lemma:ladderanticommutationrelations}}
\label{section:LadderAnticommutationRelationsProof}

\LadderAnticommutationRelations*

\begin{proof}
    We have
    \begin{align}
        T_{G}(u)\psi_{\mcj,+j}T_{G}(-u) &= \frac{1}{N_{\mcj,j}}T_{G}(u)\left[\Pi_{\mcj}T_{G}(-u_{\mcj,j}) \chi T_{G}(u_{\mcj,j})\right]T_{G}(-u) \\
        &= \frac{\Pi_{\mcj}}{N_{\mcj,j}}T_{G}(-u_{\mcj,j})\left[T_{G}(u) \chi T_{G}(-u)\right]T_{G}(u_{\mcj,j}) \\
        &= \frac{\Pi_{\mcj}}{N_{\mcj,j}}T_{G}(-u_{\mcj, j})\left[Z_{G}(-u^2)\left(1-u\sum_{\bsj \in K_s}h_{\bsj}\right)\chi-T_{G}(u)\left(u\sum_{\bsj \in K_s}h_{\bsj}\right) \chi T_{G}(-u)\right]T_{G}(u_{\mcj,j}), \notag
    \end{align}
    where we have applied Lemma~\ref{lemma:fundamentalidentity} in the last line. From our proof of Lemma~\ref{lemma:commutationrelation}, we have
    \begin{align}
        T_{G}(u)\psi_{\mcj,+j}T_{G}(-u) &= \frac{1}{N_{\mcj,j}}\Pi_{\mcj}T_{G}(-u_{\mcj,j})\left\{Z_{G}(-u^2)\chi-\frac{u}{2}Z_{G}(-u^2)[H,\chi]-\frac{u}{2}T_{G}(u)[H,\chi]T_{G}(-u)\right\}T_{G}(u_{\mcj,j}) \\
        &= Z_{G}(-u^2)\psi_{\mcj,+j}-\frac{u}{2}Z_{G}(-u^2)[H,\psi_{\mcj,+j}]-\frac{u}{2}T(u)[H,\psi_{\mcj,+j}]T_{G}(-u) \\
        &= Z_{G}(-u^2)\psi_{\mcj,+j}-\frac{u}{u_{\mcj,j}}Z_{G}(-u^2)\psi_{\mcj,+j}-\frac{u}{u_{\mcj,j}}T(u)\psi_{\mcj, +j}T_{G}(-u) \\
        &= \frac{1}{u_{\mcj,j}}\left[Z_{G}(-u^2)\left(u_{\mcj,j}-u\right)\psi_{\mcj,+j}-u T_{G}(u)\psi_{\mcj,+j}T_{G}(-u)\right].
    \end{align}
    Now, by rearranging and applying $Z^{-1}_{G}(-u^2)T_G(u)$ to the right on both sides, we have
    \begin{equation}
        (u_{\mcj,j}+u)T_{G}(u)\psi_{\mcj,+j} = (u_{\mcj,j}-u)\psi_{\mcj,+j}T_{G}(u).
    \end{equation}
    This requires choosing $u \neq \pm u_{\mcj,j}$ for any pair $(\mcj,j)$, as $Z_{G}(-u_{\mcj,j})$ is not invertible. Applying this allows us to write
    \begin{equation}
        \{\psi_{\mcj,+j},T_{G}(u)\chi T_{G}(-u)\} = \frac{u_{\mcj,j}+u}{u_{\mcj,j}-u}T_{G}(u)\{\psi_{\mcj,+j},\chi\}T_{G}(-u).
    \end{equation}
    We now compute the anticommutation relation $\{\psi_{\mcj,+j},\chi\}$. By applying Eq.~(\ref{equation:transfermatrixrecursion}), we obtain
    \begin{align}
        \{\psi_{\mcj,+j},\chi\} &= \frac{\Pi_{\mcj}}{N_{\mcj,j}}\left[T_G(-u_{\mcj,j}) \chi T_G(u_{\mcj,j}) \chi + \chi T_G(-u_{\mcj,j}) \chi T_G(u_{\mcj,j})\right] \\
        &= \frac{2\Pi_{\mcj}}{N_{\mcj,j}}\left[Z_{G\setminus K}(-u_{\mcj, j}^2)-\left(-u_{\mcj,j}\sum_{\bsj \in K}h_{\bsj}T_{G{\setminus}\Gamma[\bsj]}(-u_{\mcj,j})\right)\left(u_{\mcj,j}\sum_{\bsj \in K}h_{\bsj}T_{G{\setminus}\Gamma[\bsj]}(u_{\mcj,j})\right)\right].
    \end{align}
    Now, by applying Eq.~(\ref{equation:transfermatrixrecursion}) again, we have
    \begin{align}
        \{\psi_{\mcj,+j},\chi\} &= \frac{2\Pi_{\mcj}}{N_{\mcj,j}}\left[2Z_{G\setminus K}(-u_{\mcj, j}^2)-Z_{G}(-u_{\mcj, j}^2)\right] \\
        &= \frac{4\Pi_{\mcj}}{N_{\mcj,j}}Z_{G\setminus K}(-u_{\mcj, j}^2).
    \end{align}
    By setting
    \begin{equation}
        N_{\mcj,j} = 4u_{\mcj,j}\left(Z_{G{\setminus}K_s}(-u_{\mcj, j}^2)\frac{\partial Z_{G}(x)}{\partial x}\Bigr|_{x=-u_{\mcj,j}^2}\right)^\frac{1}{2},
    \end{equation}
    we have
    \begin{equation}
        \{\psi_{\mcj,+j}, \psi_{\mcj',-k}\} = \delta_{\mcj,\mcj' }\delta_{jk}\Pi_{\mcj},
    \end{equation}
completing the proof.
\end{proof}

\section{Proof of Lemma~\ref*{lemma:diagonalrelation}}
\label{section:DiagonalRelationProof}

\DiagonalRelation*

\begin{proof}
    By following a similar analysis to the proof of Lemma~\ref{lemma:ladderanticommutationrelations}, the commutator $[\psi_{+j},\psi_{-j}]$ may be expressed as
    \begin{align}
        [\psi_{\mcj,+j},\psi_{\mcj,-j}] &= \frac{1}{N_{\mcj,j}}\Pi_{\mcj}\lim_{u \to u_{\mcj,j}}\left(\frac{u_{\mcj,j}+u}{u_{\mcj,j}-u}T(u)[\psi_{\mcj,+j},\chi]T(-u)\right). \\
        &= -\frac{2u_{\mcj,j}}{N_{\mcj,j}}\Pi_{\mcj}\left(\frac{\partial T_{G}(u_{\mcj,j})}{\partial u_{\mcj,j}}[\psi_{\mcj,+j},\chi]T(-u_{\mcj,j})+T(u_{\mcj,j})[\psi_{\mcj,+j},\chi]\frac{\partial T_{G}(-u_{\mcj,j})}{\partial u_{\mcj,j}}\right) \\
        &= -\frac{2u_{\mcj,j}}{N_{\mcj,j}}\Pi_{\mcj}\left(\frac{\partial T_{G}(u_{\mcj,j})}{\partial u_{\mcj,j}}\psi_{\mcj,+j}\chi T(-u_{\mcj,j})-T(u_{\mcj,j})\chi\psi_{\mcj,+j}\frac{\partial T_{G}(-u_{\mcj,j})}{\partial u_{\mcj,j}}\right).
    \end{align}
    Now, by using the anticommutation relation from the proof of Lemma~\ref{lemma:ladderanticommutationrelations}, we obtain
    \begin{align}
        [\psi_{\mcj,+j},\psi_{\mcj,-j}] &= -\frac{8u_{\mcj,j}}{N_{\mcj,j}^2}\Pi_{\mcj}\left(T(-u_{\mcj,j})\frac{\partial T_{G}(u_{\mcj,j})}{\partial u_{\mcj,j}}-T(u_{\mcj,j})\frac{\partial T_{G}(-u_{\mcj,j})}{\partial u_{\mcj,j}}\right) \\
        &= -\frac{1}{2u_{\mcj,j}}\Pi_{\mcj}\left(\frac{\partial Z_{G}(x)}{\partial x}\Bigr|_{x=-u_{\mcj,j}^2}\right)^{-1}\left(T(-u_{\mcj,j})\frac{\partial T_{G}(u_{\mcj,j})}{\partial u_{\mcj,j}}-T(u_j)\frac{\partial T_{G}(-u_{\mcj,j})}{\partial u_{\mcj,j}}\right).
    \end{align}
    Then,
    \begin{equation}
        \sum_{j=1}^{\alpha(G)}\varepsilon_{\mcj,j}[\psi_{\mcj,+j},\psi_{\mcj,-j}] = -\sum_{j=1}^{\alpha(G)}\frac{1}{2u_{\mcj,j}^2}\Pi_{\mcj}\left(\frac{\partial Z_{G}(x)}{\partial x}\Bigr|_{x=-u_{\mcj,j}^2}\right)^{-1}\left(T_{G}(-u_{\mcj,j})\frac{\partial T_{G}(u_{\mcj,j})}{\partial u_{\mcj,j}}-T_{G}(u_{\mcj,j})\frac{\partial T_{G}(-u_{\mcj,j})}{\partial u_{\mcj,j}}\right). \notag
    \end{equation}
    By using
    \begin{equation}
       \frac{\partial Z_{G}(x)}{\partial x}\Bigr|_{x=-u_{\mcj,j}^2} = \frac{1}{u_{\mcj,j}^2}\prod_{\substack{k=1 \\ k \neq j}}^{\alpha(G)}\left(\frac{u_{\mcj,k}^2-u_{\mcj,j}^2}{u_{\mcj,k}^2}\right),
    \end{equation}
    we have
    \begin{equation}
        \sum_{j=1}^{\alpha(G)}\varepsilon_{\mcj,j} [\psi_{\mcj+j},\psi_{\mcj-j}] = -\frac{1}{2}\Pi_{\mcj}\sum_{j=1}^{\alpha(G)}\left(T_{G}(-u_{\mcj,j})\frac{\partial T_{G}(u_{\mcj,j})}{\partial u_{\mcj,j}}-T_{G}(u_{\mcj,j})\frac{\partial T_{G}(-u_{\mcj,j})}{\partial u_{\mcj,j}}\right)\prod_{\substack{k=1 \\ k \neq {\mcj,j}}}^{\alpha(G)}\left(\frac{-u_{\mcj,k}^2}{u_{\mcj,j}^2-u_{\mcj,k}^2}\right).
    \end{equation}
    Finally, by the Lagrange interpolation formula,
    \begin{equation}
        \sum_{\mcj}\left(\sum_{j=1}^{\alpha(G)}\varepsilon_{\mcj,j}[\psi_{\mcj,+j},\psi_{\mcj,-j}]\right)\Pi_{\mcj} = \left(T_{G}(u)\frac{\partial T_{G}(u)}{\partial u}\right)\Bigr|_{u=0} = H,
    \end{equation}
    completing the proof.
\end{proof}

\twocolumngrid

\bibliography{bibliography}

\end{document}